\definecolor{darkgreen}{rgb}{0,.5,0}
\definecolor{darkblue}{rgb}{0,0,.6}
\definecolor{darkred}{rgb}{.6,0,0}
\definecolor{lightgray}{rgb}{.8,.8,.8}
\definecolor{jred}{rgb}{0.8,0,0}
\definecolor{jgreen}{rgb}{0,0.7,0}
\definecolor{jblue}{rgb}{0,0,0.8}
\tikzstyle{c} =	[coordinate]
\tikzstyle{v} = [circle, draw=black, line width=.2pt, fill=black, inner sep=0pt, minimum size=1.5mm]
\tikzstyle{vb} = [circle, inner sep=0.1pt, fill=jred, minimum size=1mm]
\tikzstyle{e} =	[draw=jred,line width=1pt]
\tikzstyle{eb}= [draw=jgreen,line width=1pt]
\tikzstyle{edge} = [black,line width=.4mm]
\tikzstyle{vertex}=[circle,minimum size=2.5mm, draw=black, line width=.35mm,fill=black, inner sep=0mm]
\tikzstyle{smallvertex}=[circle,minimum size=1.8mm, draw=black, fill=black, inner sep=0mm]
\tikzstyle{tikzbrace}= [decorate,
\numberwithin{equation}{section}
\newtheorem{theorem}{Theorem}
\newtheorem{example}[theorem]{Example}
\newtheorem{definition}[theorem]{Definition}
\newtheorem{lemma}[theorem]{Lemma} 
\newtheorem{conjecture}[theorem]{Conjecture} 
\newtheorem{proposition}[theorem]{Proposition}
\renewcommand\[{\begin{equation}}
\renewcommand\]{\end{equation}}
\renewcommand{\O}{\textrm{O}}
\newcommand{\R}{ {\mathbb{R}} }
\newcommand{\dual}{{G^\star}}
\newcommand{\abs}[1]{\left\lvert #1 \right\rvert}
\renewcommand{\d}{\textnormal{d}}
\newcommand{\ren} {{ \mathcal R}}
\newcommand{\period}{\mathcal P}
\newcommand{\Aut}{\operatorname{Aut}}
\newcommand{\prim}{\operatorname{prim}}
\newcommand\scalemath[2]{\scalebox{#1}{\mbox{\ensuremath{\displaystyle #2}}}}
\newcommand{\melon}{\begin{tikzpicture}[x=4ex,y=4ex,baseline={([yshift=-.6ex]current bounding box.center)}] 
  \node[v](v1) at (-.8,0){};
  \node[v](v2) at (0,0){};
  \draw[edge,bend angle=25,bend left] (v1) to (v2);
  \draw[edge,bend angle=60,bend left] (v1) to (v2);
  \draw[edge,bend angle=25,bend right] (v1) to (v2);
  \draw[edge,bend angle=60,bend right] (v1) to (v2);
\end{tikzpicture}}
\newcommand{\pmelon}{\begin{tikzpicture}[x=4ex,y=4ex,baseline={([yshift=-.6ex]current bounding box.center)}] 
  \node[v](v1) at (-.8,0){};
  \node[v](v2) at (0,0){};
  \node at (-.4,.1){$.$};
  \node at (-.4,-.05){$.$};
  \node at (-.4,-.2){$.$};
  \draw[edge,bend angle=40,bend left] (v1) to (v2);
  \draw[edge,bend angle=75,bend left] (v1) to (v2);
	\draw[edge,bend angle=75,bend right] (v1) to (v2);
\end{tikzpicture}}
\newcommand{\fish}{\begin{tikzpicture}[x=2ex,y=2ex,baseline={([yshift=-.6ex]current bounding box.center)}] 
  \coordinate(x0);
	\node[v](v1) at ($(x0) +(-1.2,0)$){};
	\node[v](v2) at ($(x0) +(0,0)$){};
	\draw[edge,bend angle=55,bend left] (v1) to (v2);
	\draw[edge,bend angle=55,bend right] (v1) to (v2);
	\draw[edge] (v1) -- +(135:.7);
	\draw[edge] (v1) -- +(225:.7);
	\draw[edge] (v2) -- +(45:.7);
	\draw[edge] (v2) -- +(-45:.7);
\end{tikzpicture}}
\newcommand{\tadpole}{
\begin{tikzpicture}[x=4ex,y=4ex,baseline={([yshift=-.6ex]current bounding box.center)}] 
    \node[v] (v1) {};   
    \draw[edge] (v1) -- +(135:.4);
	\draw[edge] (v1) -- +(225:.4);
    \coordinate [right=.5 of v1] (c);
    \path
    (c) edge [edge, bend right=75] (v1)
    (c) edge [edge, bend left=75] (v1);
\end{tikzpicture}}
\newcommand{\propbubble}{
\begin{tikzpicture}[x=4ex,y=4ex,baseline={([yshift=-.6ex]current bounding box.center)}] 
    \node[v] (v1) {};   
    \draw[edge] (v1) -- +(135:.4);
    \draw[edge] (v1) -- +(225:.4);
    \node [v, right=2.1 of v1] (v4){};
    \foreach \i/\j in {1/2,2/3,4/5}{
      \node [v, right=.5 of v\i] (v\j){} ;
      \draw[edge,bend angle=55,bend left] (v\i) to (v\j);
      \draw[edge,bend angle=55,bend right] (v\i) to (v\j);
    }
    \draw[edge, dotted] (v3) -- (v4);
    \coordinate [right=.5 of v5] (c);
    \path
    (c) edge [edge, bend right=75] (v5)
    (c) edge [edge, bend left=75] (v5);
\end{tikzpicture}}
\newcommand{\vacbubble}{\,
\begin{tikzpicture}[x=4ex,y=4ex,baseline={([yshift=-.6ex]current bounding box.center)}] 
    \coordinate (v);
    \node[v, right=.5 of v] (v1) {};   
    \node [v, right=2.3 of v1, label=left: ...] (v4){};
    \foreach \i/\j in {1/2,2/3,4/5}{
      \node [v, right=.5 of v\i] (v\j){} ;
      \draw[edge,bend angle=55,bend left] (v\i) to (v\j);
      \draw[edge,bend angle=55,bend right] (v\i) to (v\j);
    }
    \coordinate [right=.5 of v5] (c);
    \path
    (v) edge [edge, bend right=75] (v1)
    (v) edge [edge, bend left=75] (v1)
    (c) edge [edge, bend right=75] (v5)
    (c) edge [edge, bend left=75] (v5);
\end{tikzpicture}}
\newcommand{\necklace}{\,
\begin{tikzpicture}[x=4ex,y=4ex,baseline={([yshift=-.6ex]current bounding box.center)}] 
    \node [v] (v1) at (.8,0) {};  
    \foreach \i/\j in {1/2,2/3,3/4,4/5,5/6,6/7}{
      \begin{scope}[rotate=\i *49]
      \node [v] (v\j) at (.8,0) {};
      \draw[edge,bend angle=45,bend left] (v\i) to (v\j);
      \draw[edge,bend angle=45,bend right] (v\i) to (v\j);
      \end{scope}
    }
    \path (v1) edge [edge, dotted, bend left=30] (v7);
\end{tikzpicture}}
\newcommand{\vtxbubble}{
\begin{tikzpicture}[x=4ex,y=4ex,baseline={([yshift=-.6ex]current bounding box.center)}] 
    \node[v] (v1) {};   
    \draw[edge] (v1) -- +(135:.4);
	\draw[edge] (v1) -- +(225:.4);
    \node [v, right=2.3 of v1, label=left: ...] (v4){};
    \foreach \i/\j in {1/2,2/3,4/5}{
      \node [v, right=.5 of v\i] (v\j){} ;
      \draw[edge,bend angle=55,bend left] (v\i) to (v\j);
      \draw[edge,bend angle=55,bend right] (v\i) to (v\j);
    }  
    \draw[edge] (v5) -- +(45:.4);
	\draw[edge] (v5) -- +(-45:.4);
\end{tikzpicture}}
\begin{document}

\title{Primitive asymptotics in $\phi^4$ vector theory}

\author{Paul-Hermann Balduf\footnote{Mathematical Institute, University of Oxford, Andrew Wiles Building, Woodstock Road, Oxford, OX2 6GG, United Kingdom}, 
Johannes Thürigen%
\footnote{Institute for Analysis and Numerics, University of M\"unster, Orl\'eans-Ring 10, 48149 M\"unster, Germany}
}
\maketitle

\begin{abstract}
A longstanding conjecture in $\phi^4_4$ theory is that primitive graphs dominate the beta function asymptotically at large loop order in the minimal-subtraction scheme. 
Here we investigate this issue by exploiting additional combinatorial structure coming from an extension to vectors with $\O(N)$ symmetry.
For the 0-dimensional case, we calculate the $N$-dependent generating function of primitive graphs and its asymptotics, including arbitrarily many subleading corrections.
We find that the leading  asymptotic growth rate becomes visible only above $\approx 25$ loops, while data at lower order is suggestive of a wrong asymptotics.  Our results also yield the symmetry-factor weighted sum of 3-connected cubic graphs, and the exact asymptotics of Martin invariants. 

For individual Feynman graphs, we give bounds on their degree in $N$ depending on their coradical degree, and  construct the primitive graphs of highest degree explicitly. 
We calculate the 4D primitive beta function numerically up to 17 loops, 
and find its behaviour to be qualitatively similar to the 0D case. The locations of zeros quickly approach their large-loop asymptotics at negative integer $N$, while the growth rate of the beta function differs from the asymptotic prediction even at 17 loops.

\end{abstract}

\newpage

\setcounter{tocdepth}{2}
\tableofcontents

\newpage

\section{Introduction}

\subsection{Background and Motivation}

In the renormalization of perturbative quantum field theory (QFT), 
one can view \emph{primitive} graphs \cite{kreimer_overlapping_1999} as the building blocks of the theory.
At a fixed spacetime dimension, a graph is primitive when its Feynman integral is superficially divergent and has no subdivergence. In the present work, our use of the term primitive always refers to four spacetime dimensions.  One can generate all other divergent Feynman graphs by recursive insertion of
primitive graphs into primitive or  convergent ones, potentially modulo Ward identities \cite{kreimer_anatomy_2006, vansuijlekom_renormalization_2007}.  
In typical renormalizable QFTs, the number of Feynman graphs grows factorially with the loop order,  and  primitive graphs asymptotically represent a non-vanishing fraction of all Feynman graphs \cite{borinsky_graphs_2018}.
In the case of scalar $\phi^4_4$ theory in four dimensions, asymptotically $e^{-\frac{15}{4}}\approx  2.3\%$ of all 1PI graphs are primitive, compare \cref{tab:primitive_asymptotic_coefficients}. At the same time, it is conjectured that primitive graphs dominate the large-loop asymptotics of the beta function in the minimal subtraction scheme in $\phi^4$ theory \cite{mckane_instanton_1978, mckane_perturbation_2019}. 
This conjecture has never been proven, and numerical calculations up to 18 loops are inconclusive \cite{balduf_statistics_2023}\footnote{Very recently, more numerical data has appeared in \cite{borinsky_tropicalized_2025}, and the model of \emph{tropical field theory} has   been analyzed with the result that primitives  likely do not dominate the MS beta function there \cite{balduf_renormalized_2025}.}.

\medskip
One way to improve the control of scalar field theory is to extend the field to an $N$-dimensional vector and demand invariance of the theory under $\O(N)$ symmetry transformations; this setup is called \emph{vector model}.
It allows for another perturbative expansion, using  
$\frac 1 N$ as expansion parameter.
This large-$N$ expansion of $\O(N)$ symmetric scalar theories  has been studied extensively starting from the 1970s \cite{dolan_symmetry_1974,schnitzer_hartree_1974,vasilev_simple_1981,abbott_bound_1976,vasiliev_1_1981,vasiliev_1_1982}, reviews include \cite{zinn-justin_vector_1998,moshe_quantum_2003}. 
The limit is dominated by \enquote{bubble} graphs, which are chains of 1-loop multiedges (\emph{fish} graphs, \cref{fig:fish}). Analogous large-$N$ expansions  have also been fruitful for example in the Gross-Neveu model \cite{boehmer_large_2009,gracey_large_2018}, in   gauge theories \cite{thooft_planar_1974}, or in matrix models  \cite{bessis_quantum_1980,difrancesco_2d_2004}.  Specifically for   higher rank tensors, the large-$N$ limit is dominated by the so-called melonic graphs \cite{gurau_notes_2022,jepsen_rg_2024}. Melonic graphs have an iterative structure, where insertion of melonic graphs does not change the degree \cite{gurau_random_2016} and reducing all melonic subgraphs leads to so-called schemes \cite{gurau_regular_2016}. However, we stress that this recursive structure is distinct from the one of primitives; in particular, schemes are not necessarily primitive.

\medskip 

Being subleading in the large-$N$ limit of $O(N)$ symmetric $\phi^4_4$ theory, primitive graphs have not received  particular attention in the large-$N$ literature yet. The present article is meant to fill that gap. There are a number of reasons to undertake this endeavour: 

Firstly, for the perturbative expansion of $\phi_4^4$ theory in small coupling constant, the asymptotics at large loop order $L \rightarrow \infty$ has been determined from instanton calculations \cite{mckane_perturbation_2019}. It is a nontrivial question whether this large-order asymptotics constitutes a numerically accurate prediction at a given finite loop order $L$.  If this prediction is accurate, we will say that this $L$ lies in the \enquote{asymptotic regime}. For   $\phi^4$ theory, various quantities related to the number  of graphs are in the asymptotic regime already for $L\geq 10$ loops, while the growth rate of the Feynman integrals of primitive graphs even at $L=18$ loops is not   \cite{balduf_statistics_2023}. By studying the $N$ dependence of such quantities, potential further insights can be gained regarding their large-order asymptotics.

Secondly, the large-$N$ limit of the vector model is given by chains of 1-loop \emph{fish} graphs. The 1-loop graph, at the same time, determines the leading term of the log-expansion in perturbative renormalization and certain classes of chained and nested 1-loop  graphs can be solved analytically to all orders \cite{delbourgo_dimensional_1996,broadhurst_exact_2001,delbourgo_dimensional_1997,kreimer_etude_2006}. 
On the Hopf algebra side, it is well understood how the subleading orders of the log-expansion are determined by primitive graphs of higher loop order \cite{kruger_filtrations_2015,kruger_log_2020,courtiel_nexttok_2020,courtiel_terminal_2017}. 
The $N$ dependence of primitive graphs therefore has direct bearing on the $N$ dependence of the leading-log expansion.

Thirdly,  the \emph{Martin invariant} \cite{panzer_feynman_2025}, which is the (analytic continuation of the) $\O(N)$~symmetry factor evaluated at $N=-2$, is strongly correlated with the numerical value of the Feynman integral of primitive graphs  \cite{balduf_predicting_2024}.
It therefore appears that the $\O(N)$~symmetry factors of primitive graphs have a meaning for Feynman integrals \emph{as such}, beyond being a weighting factor in the perturbative expansion.

Fourthly,   the global $\O(N)$ symmetry  
is a combinatorially tractable example of a symmetry group that can serve as an illustrative case for the interplay of renormalization and symmetries in the form of Hopf ideals \cite{kreimer_anatomy_2006,kreimer_remark_2008,kissler_offshell_2021}. Factorization relations of $\O(N)$ symmetry factors in that sense are Ward identities which guarantee renormalizability.

\subsection{Outline and results}\label{sec:outline}
In the present subsection, we give a brief summary of all major findings of the article, with references to the corresponding later sections. The subsequent subsection (\cref{sec:discussion}) is a discussion of the takeaways and remaining open questions.

\medskip 

\Cref{sec:setup} reviews $\O(N)$-symmetric $\phi^4$ theory and the central objects to be examined in the article:
Each Feynman graph has not only a symmetry factor of automorphisms $\frac{1}{\abs{\Aut}}$, but also  an $\O(N)$-symmetry factor $T(G,N)$ (\cref{def:TGN}), which   is a polynomial in $N$. $T(G,N)$ can be construct graphically by decomposing every vertex of the graph, and it factorizes under insertion of subgraphs (\cref{sec:symmetric_theory}). 
The \emph{beta function} $\beta$  can be computed from a sum of all one-particle irreducible (1PI)
vertex-type Feynman integrals, while the \emph{primitive beta function} $\beta^\text{prim}$ is this sum restricted to primitive vertex-type graphs (\cref{betaprim_expansion}), called \emph{periods} $\period(G)\in \mathbb R$ (\cref{sec:periods}).
The $L$-loop coefficients $\beta_L$ of the beta function and its primitive contribution $\beta^\text{prim}_L$ grow factorially in $L$.
An open question (\cref{beta_conjecture}~\cite{mckane_instanton_1978}) is whether asymptotically $\beta^\text{prim}_L\rightarrow \beta_L$ as $L\rightarrow \infty$.
Finally,   the \emph{Martin invariant} $M^{[1]}(G)\in \mathbbm N$ is a simple transformation of $T(G,-2)$ (\cref{sec:Martin_invariant}).

\medskip 

\noindent
In \cref{sec:0dim}, we use a method known as \emph{zero-dimensional QFT} to enumerate the Feynman graphs of $\O(N)$-symmetric $\phi^4$ theory. Our use of \enquote{primitive}, \enquote{subdivergence} etc. in this section   refers to the power counting of the 4-dimensional theory.

\begin{itemize}
	\item The  path integral in 0-dimensional QFT is an ordinary integral. Its power series solution for arbitrary (not necessarily integer) $N$ can be found exactly (\cref{sec:0dim_vacuum}). This series enumerates   graphs with a given number of vertices and external edges, weighted by $T(G,N)$ and $\frac{1}{\abs{\Aut(G)}}$ (\cref{fig:Z_graphs_vaccum,fig:Z_graphs_2legs,fig:Z_graphs_4legs}).
    Algebraic transformations 
    allow to enumerate connected and  1PI graphs (\cref{sec:0dim_external}), and primitive graphs and their sums of Martin invariants (\cref{sec:0dim_counterterms})\footnote{Data files and implementations are available from the first author's website \href{https://www.paulbalduf.com/research}{www.paulbalduf.com}, and some of them as ancillary files to the electronic version of the preprint.\label{fn:dat}}.
	\item For the sum $p_L$ of primitive graphs at   $L\rightarrow \infty$ loops, we prove the  asymptotics  \cref{primitive_asymptotics}:
	\begin{align*} 
		p_L &\sim \frac{ 3^{\frac{N-1}{2} } e^{-\frac{12+3N}{4}} }  {\frac 4 3 \sqrt{2\pi} \Gamma   \big(  \frac{N+4}{2} \big) } \left( \frac 2 3 \right) ^{ L + \frac{N+5}{2} } \Gamma \Big( 
        { L+ \frac{N+5}{2}}\Big)  \left( 36- \frac{9(3 N^2 - 4N -80)}{4 (L + \frac{N+3}{2})} +\ldots \right).
	\end{align*}
	Arbitrarily many subleading terms can be computed, the sequence of subleading terms again grows factorially.  As a corollary, we obtain the all-order asymptotics of the sum of Martin invariants  (\cref{sec:0dim_asymptotics_Martin}).
	\item When the asymptotic expansion is used as an approximation at finite $L$, the numerical reliability varies substantially, depending on the class of graphs. For the sum of all (not necessarily connected) vertex-type graphs, the leading asymptotics differs from the true value by 1\% at order 16, and including the first subleading term reduces the error to $\ll 1\%$ already at order 2(!) (\cref{fig:Z_asymptotics_both}). Conversely, for the primitive graphs ($p_L$), the discrepancy  is larger than 20\% even at loop order 20 when the leading asymptotics is used, and   the first subleading asymptotic correction reaches 1\% only at 9 loops (\cref{fig:prim_asymptotics_both}). 
	\item  Additionally, for fixed $L$ the obtainable accuracy strongly depends on the value of $N$. We examine this dependence in detail for  the example of $p_L(N)$ in \cref{fig:minimum_loop_order}.
	\item For   $p_L(N)$, the ratio of successive loop orders behaves asymptotically as \cref{primitives_ratio},
	\begin{align}\label{primitives_ratio_2} 
		r_L(N) &:= \scalemath{.9}{\frac 1 L \frac{p_{L+1}(N)}{p_L(N)} \sim \frac 2 3 + \frac{N+5}{3}\frac 1 L - \frac{3N^2-4N-80}{24}\frac{1}{L^2}  +\ldots}, \qquad L \rightarrow \infty.
	\end{align}
	The limiting value $\frac 2 3$ is independent of $N$, and the first correction is linear in $N$. If one tries to measure $r_L(N)$ from the data at  $L\leq 18$, this measurement suggests an $N$-independent limiting value, and linear correction term, but these do \emph{not} coincide with the true asymptotics   (\cref{fig:0dim_primitive_ratio}). Conversely, the correct limiting value $\frac 2 3$ can only be measured from data of at least $L\geq 25$ loops.  
	\item A transformation of variables $U=N \hbar$ leads to power series in $U$ instead of $\hbar$, which represent the large-$N$ limit   (\cref{sec:0dim_largeN}). At fixed $L$,   $p_L(N)$  is a polynomial in $N$. The degree of this polynomial grows like $\frac 2 3 L$ (\cref{lem:TGN_bound}), but the average     degree of the monomials       grows like ~$\frac 1 2 \ln(L)$ (\cref{average_order_asymptotics}). This illustrates why the large-$N$ expansion is convergent while the large-$L$ expansion is not: At fixed $L$, the    leading    coefficients in $N$ of the polynomials $p_L(N)$ are vanishingly small compared to the low-order coefficients in $N$. 
\end{itemize}

\medskip 

\noindent

In \cref{sec:leading} we turn to the symmetry factors $T(G,N)$ of individual graphs at a fixed loop order $L$.
We identify those primitive graphs where $T(G,N)$ has largest degree in $N$, and give bounds on the degree for non-primitive graphs.
\begin{itemize}
	\item \Cref{sec:dual} introduces certain \emph{dual} graphs $\dual$. These are the Feynman graphs of the field variable $\sigma$ introduced through the Hubbard-Stratonovich transformation in \cref{sec:0dim_external}. Every vertex in the dual graph corresponds to a factor of $N$. 
	\item In general, the map from original graphs to dual graphs is not invertible.  An exception are the primitive graphs of leading degree in $N$ with $L=1\mod 3$ loops. They are in bijection with  3-connected cubic  graphs (\cref{thm:leading_3n+1}) and the \emph{line graph} construction (\cref{def:line_graph}) defines a unique inverse. This allows to compute the sum of 3-connected cubic graphs, weighted by symmetry factors (\cref{thm:cubic_count}).
	\item When $L=0\mod 3$ (\cref{sec:leading_2}) or $L=2\mod 3$ (\cref{sec:leading_3}), the leading-degree primitive graphs $G$ are constructed algorithmically by a slight variation of the line graph.  We have generated these graphs  for $L\leq 25$   (\cref{tab:leading_graphs_count})\textsuperscript{\ref{fn:dat}}.
	\item For an $L$-loop primitive vertex-type graph, the degree in $N$ is bounded by $\lfloor \frac 23L-\frac 23 \rfloor$ (\cref{lem:TGN_bound}). The $L$-loop \enquote{zigzag} graph, i.e. (1,2)-circulant, has degree $\lfloor \frac 12 L+2 \rfloor$, and therefore does not contribute at leading order in $N$ 
    (\cref{lem:zigzag}).   
	\item In \cref{tab:loworder_leading_graphs_count}, we give the number of primitive graphs of fixed degree in $N$  up to 16 loops.   The average degree grows slower than the degree bound $\lfloor \frac 2 3 L -\frac 2 3\rfloor$ (\cref{fig:mean_max_order}); that is, at large loop order most graphs do not saturate the bound. 
	\item By factorization properties, the degree bounds of primitive graphs  imply degree bounds for all other graphs (\cref{sec:nonprimitive}).
\end{itemize}

\medskip 

\noindent
In \cref{sec:4d}, we examine numerical results for the 4-dimensional theory, where each primitive graph obtains a non-trivial value, given by its Feynman period $\period(G)$ (\cref{def:period}). 
\begin{itemize}
	\item  In \cref{sec:4d_coefficients}, numerical values for the coefficients in $N$ of the primitive beta function $\beta^\text{prim}_L$, and evaluations at integer $N$, are given for $L\leq 16$   (\cref{tab:coefficients,tab:evaluations}).
	\item   At finite $L$, the $N$ dependence of the 4-dimensional $\beta^\text{prim}_L(N)$  is reasonably well described by that of the zero-dimensional  $p_L(N)$ (\cref{sec:0dim_4dim_relation}). Both functions have zeros close to negative even integer $N\leq -4$  (\cref{fig:largest_roots}), and the ratio $\beta^\text{prim}_L(N) / p_L(N)$ is a slowly and monotonically growing function of $N$ for $N>-4$ (\cref{fig:beta_N_ratio}).
	\item  The growth ratio $r_L(N)$ of $\beta^\text{prim}_L(N)$ is defined analogously to \cref{primitives_ratio_2}.  
    For $L\leq 18$, it  appears to converge towards an $N$-independent value,   which however is different from the expected asymptotics  (\cref{fig:4dim_beta_ratio}).     In view of the similar situation in zero dimension (\cref{fig:0dim_primitive_ratio}), we conclude that our numerical data does not confirm nor refute \cref{beta_conjecture}, that primitive graphs dominate at large $L$. An extrapolation of numerical data suggests that one will need at least 25 loops to measure the correct leading asymptotics from numerical data (\cref{sec:4d_asymptotics}).
\end{itemize}

\subsection{Discussion and Outlook}\label{sec:discussion}

We have not been able to clearly confirm or refute \cref{beta_conjecture} stating that primitive graphs dominate the beta function in $\phi^4_4$ theory. Rather, we found that for primitive graphs in the 0-dimensional theory, the numerical values are in reasonable agreement with the leading large-order asymptotics (i.e. below $20\%$ difference at $N=1$) only upwards of $\approx 25$ loops   (\cref{fig:prim_asymptotics_correction,fig:0dim_primitive_ratio}). For the 4-dimensional theory, an extrapolation of the low-order numerical data meets the leading asymptotics at around 25 loops  (\cref{fig:4dim_beta_ratio}), which suggests that again, the leading large-order asymptotics becomes a reliable estimate for the value at finite $L$  only upwards of 25 loops.

On the other hand, the $N$ dependence (for $N$ not too large) of these growth rates $r_L(N)$ is close to the leading-asymptotic pattern   already at $L \approx 10$ loops. In particular, the zeros of the $N$-dependent beta function converge towards their asymptotic values $N\in \left \lbrace -4, -6, \ldots \right \rbrace $ very rapidly already at  $L<10$ (\cref{fig:largest_roots}). 
This shows that the $N$-dependence of a quantity is not a reliable indicator for whether the absolute value of that same quantity is close to the leading large-order asymptotics.

It is furthermore remarkable that the reliability of the asymptotic expansion is very different for different classes of graphs: while the growth rate $r_L(N)$ of primitive graphs is captured with $20\%$ relative error by the leading asymptotics only beyond 25 loops, the sum of all graphs, or all connected graphs, are very close to the leading asymptotics already at much lower loop order (\cref{fig:Z_asymptotics_both}).  
We have seen in \cref{sec:0dim_4dim_relation} that for primitive graphs, the 4-dimensional theory behaves quite similar to the 0-dimensional one. However, it is not clear if that should be expected for non-primitive generating functions, too, because the renormalization of non-primitive graphs introduces an additional degree of freedom, which has no counterpart in the 0-dimensional theory.

By \cref{lem:circuit_polynomial_zeros}, a generating function of graphs vanishes at $N=-2j$ if it contains a sum of all orientations (channels) of a $(2j+2)$-valent subgraph. Asymptotically, the  generating functions of \cref{sec:0dim_asymptotics} for graphs with $2k$ legs have such zeros for $j\geq k$, in particular the primitive generating function $\prim(\hbar_\ren)$ of \cref{primitive_generating_function} has zeros for $j\geq 2$. 
It would be interesting to understand to what extent the opposite direction of \cref{lem:circuit_polynomial_zeros} holds asymptotically, that is, if we know a generating function vanishes asymptotically at $N=-2j$, does this imply that asymptotically, almost all orientations of $(2j+2)$-valent subgraphs are present?  For the sum of \emph{all} graphs, this statement is trivial: A sum of all graphs in particular contains all possible orientations of every subgraph. However,   in the case of primitive graphs, inserting a given 6-valent 1PI subgraph in one specific orientation might give rise to a primitive graph, while inserting the same subgraph in the same location in a different orientation might yield a non-primitive graph. Does the fact that the generating function asymptotically has a zero at $N=-4$ imply that asymptotically, almost all such insertions of 6-valent subgraphs yield primitives, and/or that a non-vanishing fraction of all 1PI graphs is primitive?

Regarding the classification of the $N$ dependence of the symmetry factors $T(G,N)$ of non-primitive graphs (\cref{sec:nonprimitive}), we reproduce the known fact that \enquote{bubble} graphs (fish chains) dominate at $N\rightarrow \infty$. Conversely, the fish-free graphs of highest degree in $N$ at every $L$ are indeed primitive, because inserting any subgraph that is not a multiedge effectively lowers the degree in a large-$N$ expansion. In that respect, our results are analogous to the dominance of \emph{melonic} graphs   \cite[Sect.~4]{gurau_random_2016}  \cite{bonzom_critical_2011} in tensor theories.
The   classification of $N$ dependence in terms of Gurau degree identifies \emph{schemes} as building blocks, analogously to the primitive graphs in our case, but details are different: Primitiveness and the coradical degree only implies bounds on the $N$ dependence (\cref{thm:degree_bound}), but not an exact classification. Conversely, the coradical degree amounts to the exact classification in terms of leading-log expansions. One of the main results of the present work is that these two, seemingly unrelated, expansions in fact have a lot in common. We leave a precise classification of the $N$ dependence of the leading-log expansion for future work.

Regarding the role of degree bounds, it is worth stressing that the exact asymptotics of the average order in $N$, $\left \langle k \right \rangle _L$ from \cref{average_order_asymptotics}, gives a very nice and explicit understanding of the interplay between loop expansion and large-$N$ expansion: For the large-$N$ expansion, one is concerned with degree bounds of classes of graphs, which are linear in the loop order $L$. However, the average order is only logarithmic in $L$, which implies that at large fixed $L$, the overwhelming majority of terms in the $\O(N)$ symmetry factors is of very low order in $N$, much lower than the degree bound. Both the loop expansion and the large-$N$ expansion ultimately include the same infinite set of graphs, and it is the smallness of the large-order coefficients in $N$ in the polynomials $T(G,N)$ that makes it possible for the large-$N$ expansion converge, while the loop expansion does not.

The fact that the $\O(N)$ symmetry factor $T(G,N)$ factorizes for insertion of subgraphs of \emph{arbitrary} valence (\cref{lem:factorization_T}), not just 4-valent ones, is an example of how the symmetry of the theory is reflected in an infinite set of identities. In fact, one could include arbitrary interaction terms of the form $(\vec \phi)^{2k}$ into the Lagrangian without spoiling the factorization. 
Such vertices would render the theory non-renormalizable by power counting (unless the spacetime dimension is changed), but the counterterms of higher valence would still have the same $T(G,N)$ as the vertices. Moreover, the vertices of higher valence would be \enquote{proportional} to products of vertices of lower valence in the sense that the sum of trees produces the same $T(G,N)$ as a single vertex. This behaviour is a very simple, but thereby instructive, example of how a symmetry imposes relation between arbitrarily many terms, analogous to the mechanism that has been suggested for gravity \cite{kreimer_anatomy_2006,kreimer_remark_2008,kissler_offshell_2021}. The case of gravity is of course more complicated since for a \emph{local} symmetry, the identities involve derivatives and thereby the kinematics of the vertices, which is not the case for the present global $\O(N)$ symmetry.

\newpage

\section{Setup}\label{sec:setup}

\subsection{$\O(N)$-symmetric $\phi^4$ theory} \label{sec:symmetric_theory}
We consider a Euclidean scalar $\phi^4$ theory where the scalar field $\vec \phi$ has an internal $\O(N)$ symmetry. For now, we assume $N$ to be a positive integer such that $\vec \phi$ is an $N$-component vector $(\phi_1, \ldots, \phi_N)$, and the Lagrangian density reads
\begin{align}\label{lagrangian_phi4}
	\mathcal L &= -\frac 12 \left( \partial_\mu \phi_1 \partial^\mu \phi_1 + \partial_\mu \phi_2 \partial^\mu \phi_2 + \ldots + \partial_\mu \phi_N \partial^\mu \phi_N \right)- \frac{\lambda}{4!} \left( \phi_1^2 + \phi_2^2 + \ldots + \phi_N^2 \right) ^2.
\end{align}
For each positive integer $N$, \cref{lagrangian_phi4} represents a different universality class of critical phenomena.
More background and a review of the physical significance of $\O(N)$ symmetry can be found in \cite{pelissetto_critical_2002}, perturbation theory of $\O(N)$-symmetric $\phi^4$ theory, with example calculations and further generalizations, for example in \cite{nickel_compilation_1977,kleinert_critical_2001}.

Feynman graphs of $\phi^4$ theory are 4-regular up to vertices which are connected to external edges. 
The structure of the $\O(N)$-symmetric interaction term in \cref{lagrangian_phi4} implies that each 4-valent vertex gives rise to a sum of three terms in Feynman amplitudes, corresponding to the three different possibilities to partition the four edges into pairs. 
Graphically, such a partition can be interpreted as a \emph{decomposition} of the vertex into two lines connecting the four adjacent edges
\footnote{These vertex decompositions have also been called \emph{vertex states} in  \cite{ellis-monaghan_new_1998}, and \emph{transitions} in \cite{panzer_feynman_2025}.
Alternatively, one can view each such decomposition as a strand graph \cite{thurigen_renormalization_2021, thurigen_renormalization_2021a, hock_combinatorial_2024}, that is a graph with the additional structure of pairings (green lines) between edges,
$
\begin{tikzpicture}[scale=.6,baseline={([yshift=-.59ex]current bounding box.center)}] 
    \node[v](v) at (0,0){};
    \foreach \i in {1,2,3,4}{
    \begin{scope}[rotate=\i*90]
      \node[c] (e\i) at (.5,.5){};
      \draw[e] (v) -- (e\i);
		\end{scope}
    }
		
		\draw[->, line width=.3mm] (v)++(1,0) -- ++(.8,0);

    \begin{scope}[xshift = 3cm]
    \node[v](v) at (0,0){};
    \foreach \i in {1,2,3,4}{
    \begin{scope}[rotate=\i*90]
    \node[vb] (e\i) at (.5,.5){};
    \draw[e] (v) -- (e\i);
		\end{scope}
    }
    \draw[eb] (e1) -- (e2);
    \draw[eb] (e3) -- (e4);
    \end{scope}
		
		\node at (4.25,0){$+$};
 
		\begin{scope}[xshift = 5.5cm]
    \node[v](v) at (0,0){};
    \foreach \i in {1,2,3,4}{
    \begin{scope}[rotate=\i*90]
    \node[vb] (e\i) at (.5,.5){};
    \draw[e] (v) -- (e\i);
		\end{scope}
    }
    \draw[eb] (e1) -- (e4);
    \draw[eb] (e3) -- (e2);
    \end{scope}
		
		\node at (6.75,0){$+$};
		
		\begin{scope}[xshift = 8cm]
    \node[v](v) at (0,0){};
    \foreach \i in {1,2,3,4}{
    \begin{scope}[rotate=\i*90]
    \node[vb] (e\i) at (.5,.5){};
    \draw[e] (v) -- (e\i);
		\end{scope}
    }
    \draw[eb] (e1) to[bend angle=30,bend left] (e3);
    \draw[eb] (e4) to[bend angle=30,bend right] (e2);
    \end{scope}
\end{tikzpicture}
$.
The resulting circuits define faces such that a strand graph becomes a 2-complex. The case of vector fields gives rise to a single strand adjacent to each edge such that the 2-complex is only 0-connected, i.e.~faces meet only at vertices. In this sense, the stranding is a ``decomposition'' in this case.
In contrast, matrix and order-$r$ tensor multi-scalar fields have $2$ or $r>2$ strands per edge resulting in 2-complexes which are connected surfaces \cite{thooft_planar_1974} or $r$-dimensional pseudo manifolds \cite{gurau_lost_2010}, respectively. 
} 
as shown in \cref{fig:vertex_decomposition}. 

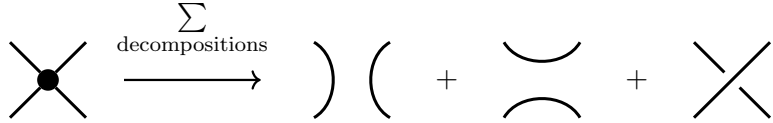
\begin{figure}[htb]
	\centering 
	\begin{tikzpicture}
		\node[vertex](v1) at (0,0){};
		\draw[edge] (v1) -- +(.5,.5);
		\draw[edge] (v1) -- +(.5,-.5);
		\draw[edge] (v1) -- +(-.5,.5);
		\draw[edge] (v1) -- +(-.5,-.5);
		
		\draw[->, line width=.3mm] (v1)++(1,0) -- node[label=above: $\sum\limits_{\text{decompositions}}$]{} ++(1.8,0);
		
		\node (v2) at (4,0){};
		\draw[edge] (v2)+(.5,.5) to[bend angle=60,bend right] ($(v2)+(.5,-.5) $);
		\draw[edge] (v2)+(-.5,.5) to[bend angle=60,bend left] ($(v2)+(-.5,-.5) $);
		
		\node at (5.25,0){$+$};
		\node (v3) at (6.5,0){};
		\draw[edge] (v3)+(.5,.5) to[bend angle=60,bend left] ($(v3)+(-.5,.5) $);
		\draw[edge] (v3)+(.5,-.5) to[bend angle=60,bend right] ($(v3)+(-.5,-.5) $);
		
		\node at (7.75,0){$+$};
		
		\node (v4) at (9,0){};
		\draw[edge] (v4)+(.5,.5) -- ($(v4)+(-.5,-.5) $);
		\draw[edge] (v4)+(-.5,.5) -- ($(v4)+(-.1,.1) $);
		\draw[edge] (v4)+(.5,-.5) -- ($(v4)+(.1,-.1) $);

	\end{tikzpicture}
	\caption{A 4-valent vertex allows for three non-isomorphic \emph{decompositions}, each of which consists of a pair of edges. To compute a Feynman amplitude, all three decompositions need to be summed. }
	\label{fig:vertex_decomposition}
\end{figure}

Then, a decomposition of a graph $G$ is a choice of decomposition of all of its vertices. 
A decomposition of $G$ yields a set of disjoint circuits and paths, where the paths begin and end at external edges of $G$. 

\begin{definition}\label{def:circuit_partition_polynomial} 
	The \emph{circuit partition polynomial} $J$ assigns a factor $N$ to each circuit in the sum of all decompositions of a graph,
	\begin{align}
		J(G,N):=\sum_{\substack{\textnormal{decompositions}\\\textnormal{of }G}} N^{\# \textnormal{circuits}}. 
\end{align}
\end{definition}
If $f(N)$ is a formal power series in $N$, we use the notation $[N^k]f(N)$ to denote extraction of the coefficient of $N^k$. 
For the circuit partition polynomial, $[N^k]J(G,N)$ counts how many different ways there are to obtain exactly $k$ circuits from the graph $G$.

The polynomial $J(G,N)$ yields a symmetry factor for each Feynman graph $G$. Due to the factor $\frac{1}{4!}$ in the interaction term in the Lagrangian density (\cref{lagrangian_phi4}), ordinary $\phi^4$ theory is restored in the case $N=1$ since the sum of all three decompositions of a vertex is scaled by $\frac 1 3$:
\begin{definition}\label{def:TGN}
	Let $J(G,N)$ be the circuit partition polynomial (\cref{def:circuit_partition_polynomial}) of a Feynman graph $G$ with vertices $V_G$. 	The \emph{$\O(N)$-symmetry factor} of $\phi^4$ theory is defined as
	\begin{align*}
		T(G,N) &:=\frac{J(G,N)}{J(G,1)}= \frac{1}{3^{\abs{V_G}}} J(G,N). 
	\end{align*}
\end{definition}
\noindent
For the latter equality, we have used that in $\phi^4$ theory, every vertex is 4-valent. One can set up an analogous sum over decompositions for arbitrary even vertex valence $2p$. In that case,   the factor 3 in the denominator of \cref{def:TGN}  needs to be replaced by $(2p-1)!!$, which is the number of non-isomorphic vertex decompletions.

\begin{figure}[htbp]
	\centering 
	\begin{tikzpicture}[scale=.9]
	
		\coordinate(x0) at (-3.5,.5);
		\node[vertex](v1) at ($(x0) +(-1.2,0)$){};
		\node[vertex](v2) at ($(x0) +(0,0)$){};
		\draw[edge,bend angle=45,bend left] (v1) to (v2);
		\draw[edge,bend angle=45,bend right] (v1) to (v2);
		\draw[edge] (v1) -- +(135:.7);
		\draw[edge] (v1) -- +(225:.7);
		\draw[edge] (v2) -- +(45:.7);
		\draw[edge] (v2) -- +(-45:.7);
		
		\draw[->, line width=.3mm] (v2)++(1,0) -- ++(.8,0);
		
		\node (v2) at (0,.5){};
		\node at ($(v2) + (-1,0)$) {$4\times $};
		\draw[edge] (v2)+(.5,.5) to[bend angle=60,bend right] ($(v2)+(.5,-.5) $);
		\draw[edge] (v2)+(-.5,.5) to[bend angle=60,bend left] ($(v2)+(-.5,-.5) $);

		\node (v3) at (2.5,.5){};
		\node at($(v3) + (-1,0)$){$+2 \times $};
		\draw[edge] (v3)+(.5,.5) to[bend angle=60,bend left] ($(v3)+(-.5,.5) $);
		\draw[edge] (v3)+(.5,-.5) to[bend angle=60,bend right] ($(v3)+(-.5,-.5) $);
		
		\node (v4) at (5,.5){};
		\node at ($(v4) + (-1,0)$){$+2 \times $};
		\draw[edge] (v4)+(.5,.5) -- ($(v4)+(-.5,-.5) $);
		\draw[edge] (v4)+(-.5,.5) -- ($(v4)+(-.1,.1) $);
		\draw[edge] (v4)+(.5,-.5) -- ($(v4)+(.1,-.1) $);
		
		\node (v5) at (8,.5){};
		\node at ($(v5) + (-1.3,0)$){$+1 \times $};
		\draw[edge] (v5)+(.8,.5) to[bend angle=60,bend right] ($(v5)+(.8,-.5) $);
		\draw[edge] (v5)+(-.8,.5) to[bend angle=60,bend left] ($(v5)+(-.8,-.5) $);
		\draw[edge ] (v5) circle(.4);

	\end{tikzpicture}
	\caption{The \emph{fish} graph (also known as double edge, 1-loop multiedge, banana, or bubble) has two vertices. In a sum over all decompositions, both vertices produce three terms according to \cref{fig:vertex_decomposition} which yields nine terms in total, only one of which contains a circuit. 
    Thus, the circuit partition polynomial of this graph is $N+8$. Notice that the circuit partition polynomial depends on whether the four external edges are open or pairwise connected, compare  \cref{ex:fish_chain}.}
	\label{fig:fish}
\end{figure}
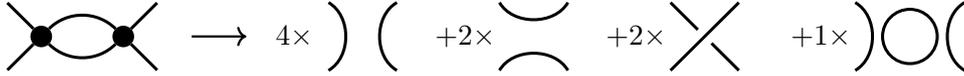

\begin{example}\label{ex:fish_chain}
	A Feynman graph that will be relevant in many computations later on is the \emph{fish}, shown in \cref{fig:fish}. 
	As an example for how \cref{def:TGN,def:circuit_partition_polynomial} allow to compute $J(G,N)$ by repeatedly decomposing vertices, consider chains of fish graphs. These chains are also called \enquote{bubble} graphs. We consider four different ways to terminate the two ends of the chain. In all cases, the number of loops is $L$.
\begin{enumerate}
\item For a chain of $(L-1)$ fish graphs with the two edges on one of the sides joined into a self loop,
\begin{align}\label{eq:bubble-tadpole}
	J\big(\propbubble,N \big)&= \left( {N+2} \right)^L.
\end{align}
This follows by induction as 
$J(\tadpole,N)=N+2$  and extending $\propbubble$ from $L$ to $L+1$ loops yields an additional factor $N+2$ (decompose the last vertex and find that in two cases the decomposition yields the same as for $L$ loops while one case yields a new disconnected loop, thus a factor $N$).
\item Closing also the other two open edges into a self loop simply yields a factor $N$, thus such a
vacuum graph with $L$ loops has
\begin{align}\label{eq:bubble-vacuum}
	J\big(\vacbubble, N \big)&= N \left( {N+2} \right) ^{L-1}.
\end{align}
\item A ring consisting of $(L-1)$ fish graphs is an $L$-loop vacuum graph with
\begin{align}\label{eq:bubble-ring}
	J\Big(\necklace,N \Big) &=   2^{L-2}(N-1)(N+2) + (N+2)^{L-1}.
\end{align}
The argument is again by induction, starting with $J(\melon,N)=3N(N+2)$ (see also \cref{eq:melon polynomial}) and noting that decomposition of any one vertex in the ring yields either the tadpole chain \eqref{eq:bubble-vacuum} or in the other two cases again the ring polynomial \eqref{eq:bubble-ring} of one loop less.
\item A chain of $L$ fish graphs, where the edges on both ends are open, has 
\begin{align}\label{eq:bubble-vertex}
	J\big(\vtxbubble,N\big)&= \frac{(N-1) 2^{L+1} + (N+2)^{L+1}}{N}.
\end{align}
    In this case, induction starts with $J(\fish)=N+8$ (\cref{fig:fish})
    and decomposing the last vertex gives two copies of $J(\vtxbubble,N)$ with $L-1$ loops and one tadpole chain \cref{eq:bubble-tadpole}.
\end{enumerate}

\end{example}

In the present article, we typically let $N$ be arbitrary, but \cref{ex:fish_chain} already makes it clear that one can eliminate specific graphs from the theory by choosing certain values of~$N$. In particular, all self-loops (tadpoles) vanish at $N=-2$ and this value determines the \emph{Martin invariant}, see \cref{sec:Martin_invariant}.

\bigskip 
The circuit partition polynomial, and the related \emph{interlace polynomial} and \emph{Martin polynomial} (to be discussed later in \cref{sec:Martin_invariant}), have long been known in graph theory \cite{martin_enumerations_1977,arratia_interlace_2004,bollobas_evaluations_2002,ellis-monaghan_new_1998,ellis-monaghan_exploring_2004,ellis-monaghan_identities_2004}. In the following, we briefly introduce some further definitions, as well as elementary combinatorial properties of the polynomials $T(G,N)$ and $J(G,N)$.   
Although we are ultimately only interested in $\phi^4$ theory, it is useful to formulate these statements for graphs with vertices of arbitrary even valence $(2p)$ because, as shown below, the insertion of a subgraph with $2p$ external edges in $\phi^4$ theory factorizes in the same way as the insertion of a $2p$-valent vertex in a more general theory. Concretely, a vertex of even valence $2p$ can be decomposed into $(2p-1)!!$ matchings of its legs analogously to \cref{fig:vertex_decomposition}, and the polynomial $J(G,N)$ for such a graph is defined as the sum over all decompositions as in \cref{def:circuit_partition_polynomial}.  Further details, examples, and proofs for the below statements can be found in \cref{sec:combinatorial_properties}.
\begin{lemma}\label{lem:multiedge_J}
	A  $(2p-1)$-loop multiedge graph without external edges has 
	\begin{align*}
		J\big(\pmelon,N \big) &= (2p-1)!!\prod \nolimits_{j=0}^{p-1}(N+2j), \qquad T\big(\pmelon,N \big) = \frac {\prod \nolimits_{j=0}^{p-1}(N+2j)} {(2p-1)!!}.
	\end{align*}
\end{lemma}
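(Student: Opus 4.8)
The plan is to compute $J\big(\pmelon,N\big)$ by induction on $p$, decomposing one vertex at a time and tracking how circuits are created. Recall the multiedge graph here has two vertices joined by $2p-1$ parallel edges and no external legs; its loop number is $2p-1$. Actually, a cleaner inductive variable is the number $m$ of parallel edges: let $M_m$ be the two-vertex graph with $m$ parallel edges and no external edges (so $M_{2p-1}$ is the graph in the statement), and let me also introduce $P_m$, the same graph but with the two "free ends" of one extra edge left open — equivalently, a chain-type object needed to close the recursion. The base cases are $J(M_1,N)$ (a single self-loop, giving $N$) and the two-vertex double edge $J(M_2,N)=3N(N+2)$ from \cref{eq:melon polynomial}, which also appears as $J(\melon,N)$ in \cref{ex:fish_chain}.

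The key step is the decomposition move. Pick a vertex $v$ of $M_m$ (valence $2(2p-1)$ in the worst case, but for $M_m$ each vertex has valence $m$... wait, $m$ must be even for a vertex decomposition into pairs to exist, so the right object to induct on is really the two-vertex graph where each vertex is $4$-valent). Let me instead phrase the induction directly in terms of $\phi^4$ Feynman graphs: the $(2p-1)$-loop multiedge is a chain of $(p-1)$ fish graphs with both end-pairs of edges joined together (a "necklace"-like closure but with the two ends identified rather than cyclically arranged). So I would use \cref{eq:bubble-vacuum} and \cref{eq:bubble-ring} from \cref{ex:fish_chain} as the real engine. Concretely: decompose the terminal vertex of such a multiedge. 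By the same case analysis used to prove \cref{eq:bubble-ring}, two of the three decompositions reduce the configuration to a $(2p-1)$-loop multiedge-type object with one fewer fish (i.e., the $p\mapsto p-1$ case), while the third decomposition splits off a new disjoint circuit, contributing a factor of $N$ times the remaining $(p-2)$-fish object. Solving this linear two-term recursion with the base case $J(M_2,N)=3N(N+2)$ yields the closed form $(2p-1)!!\prod_{j=0}^{p-1}(N+2j)$; one checks $p=1$ gives $1!!\cdot N = N$ (a single loop) and $p=2$ gives $3!!\cdot N(N+2)=3N(N+2)$, matching the base case. The statement for $T$ is then immediate from $\abs{V_G}$ and \cref{def:TGN}: the graph has $2(2p-1)/2$... more carefully, the multiedge with $2p-1$ internal edges on two vertices each of degree $2p-1$ is not $\phi^4$; rather the $(2p-1)$-loop $\phi^4$ multiedge is built from $2(p-1)$ vertices arranged in a chain with closed ends, giving $3^{\abs{V_G}}=3^{2(p-1)}$, and indeed $J/3^{2(p-1)} = (2p-1)!!\prod(N+2j)/3^{2(p-1)}$; since $(2p-1)!! = \prod_{k=1}^{p-1}(2k+1)$ and $3^{2(p-1)}$ cancels against the normalization of the $2(p-1)$ vertices this collapses to $\prod_{j=0}^{p-1}(N+2j)/(2p-1)!!$. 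I would double-check the vertex count and normalization arithmetic here, since that is where factors of $3$ can go astray.

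The main obstacle I anticipate is bookkeeping rather than anything conceptual: correctly identifying which two-vertex/chain graph the problem's $\pmelon$ picture denotes (number of $\phi^4$ vertices versus number of parallel edges versus loop number), and then verifying that the decomposition recursion genuinely has the two-term form claimed — in particular that the "third case" of each vertex decomposition really produces exactly one new circuit and leaves behind the next-smaller multiedge rather than something topologically different. Once the recursion $a_p = 3(N+2p-2)\,a_{p-1}$-type relation (or whatever precise coefficient emerges) is pinned down with its base case, the closed form and the $T$ formula follow by a routine telescoping product, and the remaining examples/proofs deferred to \cref{sec:combinatorial_properties} can reuse the same decomposition lemma.
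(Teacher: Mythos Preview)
You have misidentified the graph $\pmelon$. In the paper it is \emph{not} a $\phi^4$ chain of fish graphs: it is a two-vertex graph in which the two vertices are each $(2p)$-valent and are joined by $2p$ parallel edges (so the loop number is $2p-2+1=2p-1$). The comment below \cref{def:TGN} explicitly allows vertices of arbitrary even valence, with the normalization $3$ replaced by $(\operatorname{val}(v)-1)!!$; this is why $T(\pmelon,N)$ carries a $(2p-1)!!$ in the denominator rather than a power of $3$. Your sentence ``the $(2p-1)$-loop $\phi^4$ multiedge is built from $2(p-1)$ vertices arranged in a chain'' describes a different graph entirely, and the recursion you set up (``two of the three decompositions reduce\ldots'') presupposes a $4$-valent vertex that is not present here.

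The paper's proof is much shorter once the graph is correctly identified. One first proves \cref{lem:completion_general}: for any $(2p)$-valent graph $g$ with completion $G=g\uplus\{v\}$, one has $J(G,N)=\big(\prod_{j=0}^{p-1}(N+2j)\big)\,J(g,N)$. The induction there is on $p$, peeling off one pair of half-edges at the completion vertex at each step (see \cref{fig:proof_completion_general}); the inductive factor is $(N+2(p-1))$, not the three-way fish decomposition you attempted. Then one simply takes $g$ to be a \emph{single} $(2p)$-valent vertex, for which $J(g,N)=(2p-1)!!$ because each of the $(2p-1)!!$ pairings produces no circuits and hence contributes $N^0=1$. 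The formula for $T$ follows from $J(\pmelon,1)=\big((2p-1)!!\big)^2$. Your plan can be salvaged, but only after you discard the fish-chain picture and work directly with one $(2p)$-valent vertex and its completion.
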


\begin{definition}\label{def:uplus}
	Let $g_1$ and $g_2$ be graphs, each of which has exactly $(2p)$ external edges. The \emph{joining} operation $g_1 \uplus g_2$ is defined as the average of all $(2p)!$ matchings between the external edges of $g_1$ and $g_2$,  
	\begin{align*}
		g_1 \uplus g_2 := \frac{1}{(2p)!}\sum_{G \textnormal{ matching of } (g_1,g_2)}G.
	\end{align*}
\end{definition}
A graph is called \emph{$n$-valent} if it has $n$ external edges. In the present article, being concerned with  $\phi^4$ theory, we use the terms \emph{vertex-type graphs} for 4-valent graphs,  \emph{propagator-type graphs} for   2-valent graphs, and   \emph{vacuum graphs} for   0-valent graphs. Vacuum graphs in $\phi^4$ theory are 4-regular, which means that every vertex is 4-valent.

\begin{definition}\label{def:completion_decompletion}
	Let $g$ be a connected $n$-valent graph. The \emph{completion} $G
	= g\uplus \left \lbrace v \right \rbrace $ of $g$ is the graph where the $n$ external edges of $g$ have been joined to a new $n$-valent vertex~$v$. 
	Conversely, if $g$ arises from $G$ upon removing any one vertex  $v\in V_G$, then $g=G\setminus \left \lbrace v \right \rbrace   $ is called a \emph{decompletion} of $G$. 
\end{definition}

Notice that the sum of matchings in \cref{def:uplus} contains exactly $(2p)!$ graphs, which may be isomorphic. If $g_2=v$ is a single vertex, then all  $(2p)!$ graphs are isomorphic and the completion $g \uplus \left \lbrace v \right \rbrace=G $ is a single graph. Viewed as a vacuum graph, a completion has larger loop order than the corresponding decompletion. Later in the article, we will only be interested in completions as a way to organize vertex-type graphs in $\phi^4$ theory. To avoid shifts in the loop number, we define   the loop number of a completion to refer to  that of its decompletions. In particular, in $\phi^4$ theory a \emph{$L$-loop completion}  is a graph with $(L+2)$ 4-valent vertices and no external legs, such that its decompletions are vertex-type graphs of $L$ loops, see \cref{fig:completion}.

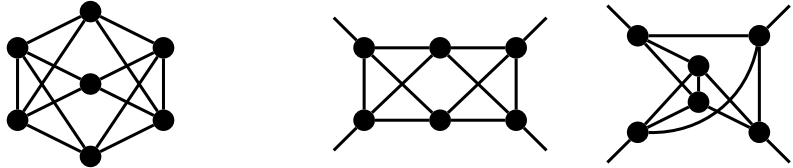
\begin{figure}[htbp]
	\centering 
	\begin{tikzpicture}[scale=.8]

		\node[vertex](v1) at (0,-1.2){};
		\node[vertex](v2) at (0,0){};
		\node[vertex](v3) at (0,1.2){};
		\node[vertex](v4) at (-1.2,.6){};
		\node[vertex](v5) at (-1.2,-.6){};
		\node[vertex](v6) at (1.2,.6){};
		\node[vertex](v7) at (1.2,-.6){};
		\draw[edge] (v1) -- (v4);
		\draw[edge] (v2) -- (v4);
		\draw[edge] (v3) -- (v4);
		\draw[edge] (v1) -- (v5);
		\draw[edge] (v2) -- (v5);
		\draw[edge] (v3) -- (v5);
		\draw[edge] (v4) -- (v5);
		
		\draw[edge] (v1) -- (v6);
		\draw[edge] (v2) -- (v6);
		\draw[edge] (v3) -- (v6);
		\draw[edge] (v1) -- (v7);
		\draw[edge] (v2) -- (v7);
		\draw[edge] (v3) -- (v7);
		\draw[edge] (v6) -- (v7);

		\node[vertex](v1) at (4.5,-.6 ){};
		\node[vertex](v2) at (4.5,.6){};
		\node[vertex](v3) at (5.75,-.6){};
		\node[vertex](v4) at (5.75,.6){};
		\node[vertex](v5) at (7,-.6){};
		\node[vertex](v6) at (7,.6){};
		\draw[edge] (v1) -- (v2);
		\draw[edge] (v1) -- (v3);
		\draw[edge] (v1) -- (v4);
		\draw[edge] (v2) -- (v4);
		
		\draw[edge] (v2) -- (v3);
		\draw[edge] (v3) -- (v5);
		\draw[edge] (v3) -- (v6);
		\draw[edge] (v4) -- (v6);
		\draw[edge] (v5) -- (v6);
		
		\draw[edge] (v4) -- (v5);
		\draw[edge] (v1) --+(-.5,-.5);
		\draw[edge] (v2) --+(-.5,.5);
		\draw[edge] (v5) --+(.5,-.5);
		\draw[edge] (v6) --+(.5,.5);
		
		\node[vertex](v1) at (9,.8 ){};
		\node[vertex](v2) at (9,.-.8){};
		\node[vertex](v3) at (10,.3){};
		\node[vertex](v4) at (10,-.3){};
		\node[vertex](v5) at (11,.8){};
		\node[vertex](v6) at (11,-.8){};
		
		\draw[edge] (v1) -- (v3);
		\draw[edge] (v1) -- (v4);
		\draw[edge] (v2) -- (v3);
		\draw[edge] (v2) -- (v4);
		\draw[edge] (v3) -- (v4);
		\draw[edge] (v1) -- (v5);
		\draw[edge, bend angle=40, bend right] (v2) to (v5);
		\draw[edge] (v6) -- (v5);
		\draw[edge] (v3) -- (v6);
		\draw[edge] (v4) -- (v6);
		\draw[edge] (v1) --+(-.5,.5);
		\draw[edge] (v2) --+(-.5,-.5);
		\draw[edge] (v5) --+(.5,.5);
		\draw[edge] (v6) --+(.5,-.5);

	\end{tikzpicture}
	\caption{A 5-loop completion (left) has seven vertices. Removing one of the vertices yields a 5-loop decompletion. In this case, only two of the seven possible decompletions are non-isomorphic.}
	\label{fig:completion}
\end{figure}

The automorphism symmetry factor $\frac{1}{\abs{\Aut(G)}}$ is compatible with  completion and decompletion (\cref{def:completion_decompletion}). Let $G$ be a   $L$-loop completion, then 
\begin{align}\label{decompletion_symmetry_factor}
	\frac{L+2}{\abs{\Aut(G)}}=	\sum_{\substack{g \text{ decompletion of } G  }} \frac{1}{\abs{\Aut(g)}}.
\end{align}
The sum  extends over non-isomorphic decompletions $g$ of $G$, and the automorphisms $\Aut(g)$ allow permutations of the external legs. To obtain symmetry factors where external edges are fixed, but different channels are summed over (which is the usual convention in QFT), both sides of \cref{decompletion_symmetry_factor} need to be multiplied by $4!$.

\begin{example}\label{ex:symmetry_factors}
	For the graphs in \cref{fig:completion} from left to right, \cref{decompletion_symmetry_factor} reads $\frac{7}{48}=\frac{1}{16}+\frac{1}{12}$. 
	In the traditional QFT convention, the graph in the middle of \cref{fig:completion} has three non-isomorphic channels, and a symmetry factor (with external legs fixed) of $\frac 12$. Indeed,  $3\cdot \frac 12 = \frac{24}{16}= 4! \cdot \frac{1}{16}$, that is, we can reproduce the QFT convention by multiplying with $4!$. The graph at the right has four channels and a leg-fixed symmetry factor $\frac 12$, and $4 \cdot \frac 12 = 2 = 4!\cdot \frac{1}{12}$.
\end{example}

The operation $g_1\uplus g_2$ of joining along external edges (\cref{def:uplus}) can equivalently be interpreted as insertion of $g_1$ into the completion $G_2=g_2\cup \left \lbrace v_2 \right \rbrace  $ in place of a $(2p)$-valent vertex~$v_2$, denoted by $ g_1 \circ_{v_2} G_2:=g_1 \uplus g_2$. This notation can be extended to insertion of 2-valent subgraphs into edges with the convention that an edge is formally identified with a 2-valent vertex of the graph. 
In \cref{sec:combinatorial_properties}, we show that the circuit partition polynomial $J(G,N)$ factorizes under joining of graphs (\cref{lem:factorization_J}).   Recall from \cref{def:TGN} that    we divide by $ (2p-1)!!$ for each $(2p)$-valent vertex to go from $J(G,N)$ to $T(G,N)$.

\begin{lemma}\label{lem:factorization_T}
	\begin{enumerate}
		\item Let $g_1$ be a $(2p)$-valent graph, $G_1=g_1 \uplus \left \lbrace v \right \rbrace $ its completion (\cref{def:completion_decompletion}),  and  $\pmelon$ the multiedge graph with $(2p-1)$ loops (\cref{lem:multiedge_J}). Then
		\begin{align*}
		T(G_1,N)&=  T\big(\pmelon,N\big)\cdot T(g_1,N) =  \frac {\prod \nolimits_{j=0}^{p-1}(N+2j) } {(2p-1)!!}\cdot T(g_1,N).
		\end{align*} 
		\item Let $g_1$ be a  $(2p)$-valent graph, and let $v_2$ be a $(2p)$-valent vertex in a graph $g_2$ (which may or may not have external edges).  Then $g_1 \circ_{v_2} g_2$  is a sum of graphs which have the same valence as $g_2$, and
		\begin{align*}
			T \big(g_1 \circ_{v_2} g_2,N \big) &= T\big(\pmelon,N\big)\cdot T(g_1,N)\cdot T(g_2\setminus\left \lbrace v_2 \right \rbrace   ,N)= T(g_1,N) \cdot T\left( g_2,N \right).
		\end{align*}
	\end{enumerate}
\end{lemma}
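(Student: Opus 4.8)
The plan is to deduce both parts from the factorization of the circuit partition polynomial under the joining operation $\uplus$ (\cref{lem:factorization_J}), together with the normalization $J(G,1)=\prod_{v\in V_G}(\operatorname{val}(v)-1)!!$ that enters $T=J/J(\cdot,1)$ (\cref{def:TGN}). The key reduction is that a single $(2p)$-valent vertex $v$ has $J(\{v\},N)=(2p-1)!!$, since each of its $(2p-1)!!$ decompositions produces only open paths and no circuit, so $T(\{v\},N)=1$. Hence part~\emph{(1)} is precisely the special case $h_2=\{v\}$ of the symmetric statement
\begin{align*}
	T(h_1\uplus h_2,N)=T\big(\pmelon,N\big)\,T(h_1,N)\,T(h_2,N)\qquad\text{for any two $(2p)$-valent graphs }h_1,h_2,
\end{align*}
which I will prove in this form; part~\emph{(2)} then follows from it.

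For the symmetric statement I will use that a decomposition of $h_1\uplus h_2$ is the same datum as a pair $(d_1,d_2)$ of decompositions of $h_1$ and $h_2$ together with the matching $\mu$ of the $2p$ external legs that $\uplus$ uses, and that its circuits split into the internal circuits of $d_1$, those of $d_2$, and the circuits built across $\mu$ from the open paths --- the last of which depend only on $\mu$ and on the two perfect matchings $\pi_1,\pi_2$ of the external legs induced by $d_1,d_2$. Weighting the internal circuits by $N$ and summing over $(d_1,d_2)$ gives $J(h_1,N)\,J(h_2,N)$, since $J$ only weights circuits, not paths. Averaging the cross-matching contribution over the $(2p)!$ choices of $\mu$ yields a factor independent of $\pi_1,\pi_2$: the $S_{2p}$-orbit of any perfect matching has size $2^p p!=(2p)!/(2p-1)!!$, so the $\mu$-average collapses to $\tfrac{1}{(2p-1)!!}\sum_{\pi}N^{\#\textnormal{cycles}(\pi_1\cup\pi)}$, and this last sum is the normalized decomposition sum $J(\pmelon,N)/(2p-1)!!$ of the $(2p)$-edge (i.e.\ $(2p-1)$-loop) multiedge on two vertices. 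This is the combinatorial content of \cref{lem:factorization_J}, giving $J(h_1\uplus h_2,N)=\tfrac{J(\pmelon,N)}{((2p-1)!!)^2}\,J(h_1,N)\,J(h_2,N)$. Since $((2p-1)!!)^2=J(\pmelon,1)$ by \cref{lem:multiedge_J}, the prefactor is $T(\pmelon,N)$, and dividing through by $J(h_1\uplus h_2,1)=J(h_1,1)\,J(h_2,1)$ --- an equality because $\uplus$ changes no vertex degree, hence no local factor $(\operatorname{val}(v)-1)!!$ --- turns this into the symmetric $T$-identity; the closed form $T(\pmelon,N)=\prod_{j=0}^{p-1}(N+2j)/(2p-1)!!$ is then read off from \cref{lem:multiedge_J}.

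To get part~\emph{(2)}, note that by definition $g_1\circ_{v_2}g_2=g_1\uplus(g_2\setminus\{v_2\})$, and that removing the $(2p)$-valent vertex $v_2$ leaves a $(2p)$-valent graph; the symmetric statement with $h_1=g_1$, $h_2=g_2\setminus\{v_2\}$ is exactly the first equality. For the second equality, observe that joining the $2p$ legs created by the deletion to a single new $(2p)$-valent vertex reconstructs $g_2$ up to isomorphism --- all such reattachments are isomorphic, cf.\ the remark after \cref{def:uplus} --- so $g_2$ is the completion of $g_2\setminus\{v_2\}$, and part~\emph{(1)} applied to $g_2\setminus\{v_2\}$ gives $T(g_2,N)=T(\pmelon,N)\,T(g_2\setminus\{v_2\},N)$; substituting yields $T(g_1\circ_{v_2}g_2,N)=T(g_1,N)\,T(g_2,N)$. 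That $g_1\circ_{v_2}g_2$ is a sum of graphs of the same valence as $g_2$ is immediate from \cref{def:uplus}: the joining consumes exactly the $2p$ new legs and leaves the remaining external edges of $g_2$ untouched. Since $J$, and hence $T$, is multiplicative over connected components, none of the graphs involved needs to be connected.

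The only genuine combinatorial work --- that the cross-matching circuits, once averaged over all $(2p)!$ gluings, behave exactly like the decompositions of a multiedge --- is packaged in \cref{lem:factorization_J}, which I treat as available. Granting it, the remaining steps are bookkeeping: verifying that the factors $(2p-1)!!$ and the vertex normalizations collapse to precisely $T(\pmelon,N)$, plus the small observation that $g_2$ is the completion of $g_2\setminus\{v_2\}$. I do not anticipate any further obstacle.
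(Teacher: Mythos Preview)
Your approach is the paper's own route --- factorize $J$ under $\uplus$ and then normalize to $T$ --- and your identification of the averaged cross-matching factor with $T(\pmelon,N)$ via the orbit size $(2p)!/(2p-1)!!$ is a clean repackaging of what the paper obtains from \cref{lem:completion_general} plus \cref{lem:permutations_external_structure}.

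There is one slip in part~\emph{(2)}. You assert that $g_2\setminus\{v_2\}$ is $(2p)$-valent and that $g_2$ is its completion, but this holds only when $g_2$ is itself a vacuum graph. When $g_2$ has $k>0$ external edges (the lemma explicitly allows $g_2$ not to be a completion), $g_2\setminus\{v_2\}$ has $k+2p$ external legs, and the identity $g_1\circ_{v_2}g_2=g_1\uplus(g_2\setminus\{v_2\})$ does not even type-check, since $\uplus$ is defined only for two graphs of equal valence. You notice the ``remaining external edges of $g_2$'' in your last sentence, so you see the issue, but the argument as written does not handle it. The paper's fix is to first complete $g_2$ by adjoining a vertex $v$ to its $k$ external legs, perform the insertion at $v_2\neq v$ (where your argument applies verbatim because $(g_2\uplus\{v\})\setminus\{v_2\}$ really is $(2p)$-valent), and finally remove $v$; two applications of part~\emph{(1)} then cancel the multiedge factors introduced by $v$. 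Alternatively, your own cross-matching computation extends directly once you observe that only the $2p$ legs at $v_2$ participate in the gluing, so the averaged cross factor is still $T(\pmelon,N)$ while the other $k$ legs remain open paths throughout and contribute nothing to the circuit count.
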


The factorization of $\O(N)$-symmetry factors upon insertion of subgraphs (\cref{lem:factorization_T}) can also be interpreted \enquote{in the opposite direction}: The $(2p-1)$-loop multiedge has vertices of valence $(2p)$, and its circuit partition polynomial has zeros at negative even integers $N\in \left \lbrace 0,-2,\ldots 2p-2 \right \rbrace $ (\cref{lem:multiedge_J}). Namely, the presence of such zeros indicates the presence of either a vertex of valence $(2p)$, or of a sum of $(2p)$-valent subgraphs.
\begin{lemma}\label{lem:circuit_polynomial_zeros}
	\begin{enumerate} 
		\item 	Let $G$ be a completion which contains a vertex of valence $2p$. Then $J(G,N)$ and $T(G,N)$ have zeros at $N\in\left \lbrace 0, -2, \ldots, -2p+2 \right \rbrace  $.
		\item Let $g_1, g_2$ be $(2p)$-valent, and let $g=G\setminus \left \lbrace v \right \rbrace  $ be a sum of vertex-type graphs, where the corresponding sum of completions $G=g_1 \uplus g_2$ (\cref{def:uplus}) arises from joining  $g_1$ and  $g_2$ in all possible orientations (channels). Then $J(g,N)$ and $T(g,N)$ have zeros at $N\in \left \lbrace -4, -6, \ldots, -2p+2 \right \rbrace  $.
	\end{enumerate}

\end{lemma}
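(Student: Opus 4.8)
The plan is to reduce both claims to the factorization of the $\O(N)$-symmetry factor (\cref{lem:factorization_T}) together with the explicit multiedge formula (\cref{lem:multiedge_J}), using throughout that $J(\cdot,N)$ and $T(\cdot,N)$ have the same zeros because $J(\cdot,N)=J(\cdot,1)\,T(\cdot,N)$ with $J(\cdot,1)=\prod_{v}(\operatorname{val}(v)-1)!!>0$ a strictly positive constant. (Should an auxiliary graph below be disconnected, one applies the factorization to the connected component carrying the relevant vertex and uses that $J$ is multiplicative over connected components.)

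For part (1), let $v$ be a vertex of valence $2p$ in the completion $G$. Then $G=(G\setminus\{v\})\uplus\{v\}$ is the completion of the $(2p)$-valent graph $G\setminus\{v\}$ at $v$, so by \cref{lem:factorization_T}(1) and \cref{lem:multiedge_J},
\[
  T(G,N)=T\big(\pmelon,N\big)\cdot T(G\setminus\{v\},N)=\frac{\prod_{j=0}^{p-1}(N+2j)}{(2p-1)!!}\,T(G\setminus\{v\},N),
\]
with $\pmelon$ the $(2p-1)$-loop multiedge. The prefactor vanishes exactly at $N\in\{0,-2,\dots,-2p+2\}$, hence so do $T(G,N)$ and $J(G,N)$.

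For part (2), write the sum of completions as $G=g_1\uplus g_2$, the channel sum of joining the $(2p)$-valent graphs $g_1$ and $g_2$, and let $v$ be the removed $4$-valent vertex, which necessarily sits inside $g_1$ or $g_2$; then $g=G\setminus\{v\}$ is the corresponding sum of vertex-type graphs. The key idea is to evaluate $T(G,N)$ in two ways. First, since re-attaching four legs to a single new vertex is unambiguous, $G$ is the completion of $g$ at the $4$-valent vertex $v$, so \cref{lem:factorization_T}(1) (with $2p=4$) gives $T(G,N)=\tfrac{N(N+2)}{3}\,T(g,N)$. Second, by the definition of insertion, $g_1\uplus g_2=g_1\circ_{v'}\big(g_2\uplus\{v'\}\big)$ where $v'$ is the $(2p)$-valent completing vertex of $g_2$, so \cref{lem:factorization_T}(2) and \cref{lem:multiedge_J} give $T(G,N)=\tfrac{\prod_{j=0}^{p-1}(N+2j)}{(2p-1)!!}\,T(g_1,N)\,T(g_2,N)$. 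Equating the two expressions, splitting off $N(N+2)$ from the product, and cancelling the common polynomial factor $\tfrac{N(N+2)}{3}$ (legitimate since $\mathbb Q[N]$ is an integral domain) yields
\[
  T(g,N)=\frac{3}{(2p-1)!!}\Big(\prod_{j=2}^{p-1}(N+2j)\Big)\,T(g_1,N)\,T(g_2,N),
\]
whose prefactor vanishes precisely at $N\in\{-4,-6,\dots,-2p+2\}$; the same holds for $J(g,N)$.

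I expect no serious obstacle, as \cref{lem:factorization_T} does all the work; the points needing care are purely bookkeeping — that deleting the internal $4$-valent vertex $v$ commutes with the channel sum defining $\uplus$ (so that $g=G\setminus\{v\}$ is genuinely the claimed sum of vertex-type graphs), and that the two factorization identities apply termwise to the formal sum $G=g_1\uplus g_2$ (they do, by linearity of $T$ in its graph argument and because all summands have the same vertex set, hence the same $J(\cdot,1)$). The conceptual content is just that the vacuum graph $G$ carries both a ``complete at a $4$-valent vertex'' and a ``join two $(2p)$-valent pieces'' description; the first of these divides off exactly the two lowest multiedge factors $N$ and $N+2$, which explains the two ``missing'' zeros in part (2) compared with part (1).
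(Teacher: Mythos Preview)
Your proof is correct and follows the same route as the paper. The paper states only that \cref{lem:circuit_polynomial_zeros} is ``an immediate corollary'' of the factorization lemmas (\cref{lem:factorization_T}/\cref{lem:factorization_J}) together with \cref{lem:permutations_external_structure}; you have simply unpacked that corollary, using \cref{lem:factorization_T}(1) to exhibit the factor $\prod_{j=0}^{p-1}(N+2j)$ for part~(1), and combining \cref{lem:factorization_T}(1) at the $4$-valent decompletion vertex with \cref{lem:factorization_T}(2) at the $(2p)$-valent join to strip off $N(N+2)$ in part~(2).
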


\subsection{Renormalization, primitive graphs, and periods}\label{sec:periods}

The renormalization of Feynman graphs has the structure of a Hopf algebra \cite{kreimer_hopf_1998,kreimer_overlapping_1999,connes_renormalization_2001}, where the \emph{primitive} elements are the Feynman integrals without subdivergences, and the coproduct of the Hopf algebra realizes the Zimmermann forest formula \cite{zimmermann_convergence_1969} of extracting and renormalizing subdivergences. 
We remind the reader of the following facts which will be used in the rest of the paper:
\begin{itemize}
\item As $\phi^4$ theory is renormalizable in four dimensions, a graph is superficially divergent if and only if it has four or less external edges. 
\item  A graph is primitive if it is \emph{cyclically 6-edge connected}, that is, if it has no  proper subgraphs of valence less than 6, except for trees.
\item Every propagator-type graph except the 2-loop multiedge (the "sunrise" graph) has a vertex-type subgraph upon removing a vertex adjacent to an external edge. 
\item Consequently, except for the 2-loop multiedge, all primitive graphs of $\phi^4$ theory are vertex-type graphs. 
\item The 2-loop multiedge contains three distinct  1-loop multiedges (\emph{fish}, \cref{fig:fish}) as divergent subgraphs. Contracting any one of theses subgraphs produces a 1-loop tadpole  (case $L=1$ of \cref{eq:bubble-tadpole}). In the massless theory, and in a massive theory with kinematic renormalization conditions, tadpole Feynman integrals vanish. In that case, the 2-loop multiedge is primitive.  
\end{itemize}

The Feynman integral of primitive vertex-type graphs depends  logarithmically on the momentum scale. 
The coefficient of this dependence is the \emph{Feynman period} \cite{broadhurst_knots_1995,schnetz_quantum_2010,brown_periods_2010,balduf_statistics_2023}. One possible representation of the Feynman period in a 4-dimensional theory is  by the parametric Feynman integral of  the first Symanzik polynomial  $\psi_g$ of $g$:
\begin{align}\label{def:period}
	\period (g)&:= \Big( \prod_{e\in E_g} \int \nolimits_0^\infty \d a_e \Big)  \; \delta \Big( 1-\sum_{e\in E_G} a_e \Big) \frac{1} {\psi_g^{ 2}}.
\end{align} 
The period is defined for vertex-type graphs $g$, but the periods of all decompletions of the same completion $G$ (\cref{def:completion_decompletion}) have the same value, denoted by $\period(G):=\period(g)$. 

\subsection{The beta function}\label{sec:beta_function}

\subsubsection{Definition and asymptotics}\label{sec:beta_function_asymptotics}
The beta function describes the running of the effective coupling with the energy scale. It is related to critical exponents   \cite{kleinert_critical_2001,vasilev_field_2004,zinn-justin_quantum_2021}.
For $\phi^4$ theory regularized in $D=4-2\epsilon$ dimensions, it is given by the derivative of the  counterterm $z^{(\alpha)}= z^{(4)}/(z^{(2)})^2$ of the invariant charge, where $z^{(4)}$ is the vertex- and $z^{(2)}$ the propagator counterterm,
\begin{align}\label{def:beta_function}
	\beta(\alpha,\epsilon,N) &:= \frac{-\epsilon}{\partial_\alpha \ln \left( \alpha \cdot z^{(\alpha)}(\alpha,\epsilon,N ) \right) } = \sum_{L\geq 1} (-\alpha)^{L+1} \beta_L(\epsilon,N).
\end{align}
The expansion parameter is $\alpha := -\lambda \mu^{-2\epsilon} (4\pi)^{-2}$, where $\lambda$ is the coupling constant  of our Lagrangian (\cref{lagrangian_phi4}), and $\mu$ is an energy reference scale. In the following, we will only be interested in the beta function at $\epsilon=0$. 
For every fixed order $L$, the coefficient $\beta_L(N):=\beta_L(0,N)$ is a polynomial in~$N$. The  counterterm in \cref{def:beta_function} is determined by 1PI  graphs, the beta function receives contributions from primitive (\cref{sec:periods}) and from non-primitive graphs, $\beta_L(N) = \beta^{\text{prim}}_L(N) + \beta^{\text{nonprim}}_L(N)$. The second summand depends on the renormalization scheme, for a recent review see the first author's PhD thesis \cite{balduf_dyson_2024}. 
The full beta function in the minimal-subtraction scheme is known up to $L=7$ loops \cite{kompaniets_minimally_2017,schnetz_numbers_2018,schnetz_phi4_2023}, and starts with 
\begin{align}\label{beta_MS_coefficients}
 \beta(\alpha,N) &= -2\epsilon \alpha + \frac{N+8}{3}\alpha^2 - \frac{3N+14}{3}\alpha^3 \\
&\qquad + \frac{96(5N+22)\zeta_3 + 33N^2 + 922N + 2960}{216}\alpha^4 + \ldots, \nonumber 
\end{align}

Recall that an \emph{asymptotic  expansion} of a function $f(t)$  around $t=\infty$ is a sequence of continuous functions $f_k(L)$ such that
\begin{align}\label{def:asymptotic_expansion}
f(t) &= \sum_{k=0}^K f_k(t) + R_K(t), \quad \text{where} \quad \lim_{t\rightarrow \infty}  \abs{\frac{R_K(t)}{f_k(t)}}=0 \quad \text{for every fixed $K$}.
\end{align} 
A series expansion is convergent for a  fixed finite $t$ if, in the same setting, $\lim_{K \rightarrow \infty} \abs{R_K(t)}=0$. An   asymptotic series is not necessarily convergent, and therefore there is no guarantee that a truncated asymptotic series produces numerically accurate values for any value of its parameter $t$ other than the expansion point. A central question of the present work is to establish whether asymptotic expansions around loop order $L=\infty$ in $\phi^4$ theory produce numerically useful predictions for finite $L$. If that is the case, we call this range of finite loop orders $L$ an 	\emph{asymptotic regime}. The asymptotic regime depends on the quantity in question, on the required numerical accuracy, and on the number of subleading asymptotic terms to be included, i.e. on the cutoff $K$ in \cref{def:asymptotic_expansion}.

The leading asymptotics of $\beta_L$ for $L\rightarrow \infty$ has been obtained from an instanton computation in \cite{lipatov_divergence_1977,brezin_perturbation_1977a,mckane_nonperturbative_1984,komarova_asymptotic_2001}. 
The review \cite{mckane_perturbation_2019} gives a consolidated result, with slightly different conventions, namely in $D=4-\varepsilon$ dimensions for an interaction term $\frac{g}{4}\phi^4$. There, $\beta=-\varepsilon g + \frac{(N+8)}{8\pi^2}g^2 + \ldots = \sum_k g^k b_k$, and the leading asymptotics of $b_k$ as $k\rightarrow \infty$ is $\bar b_k$,  
\begin{align*}
	\bar b_k &= C_\beta (-1)^k ~ k! ~ k^{\frac {7}2} \left( \frac{3}{8 \pi^2} \right) ^k, \qquad C_\beta = 2^{\frac {13}2} \cdot 3^{\frac{2}{2}} e^{\frac{3 \zeta'(2)}{\pi^2} - \frac 7 2 \gamma_E - \frac {15}4}.
\end{align*}
The $N$ dependence of this asymptotics can be found in \cite{mckane_nonperturbative_1984}.
One can write $e^{\frac{\zeta'(2)}{\pi^2}}= (2\pi e^{\gamma_E})^{\frac{1}{6}} / A^2$ with the Glaisher-Kinkelin constant \cite{kinkelin_ueber_1860} $A\approx 1.28242713$. In our convention (\cref{def:beta_function}),   $\bar b_k$ amounts to the leading asymptotics of $\beta_{k-1}$.
Hence, $\beta_L$ grows asymptotically as $\beta_L(N) \sim \tilde \beta_L (N)\left(1 + \mathcal O \left( \frac 1 L \right) \right)   $, where  
\begin{align}\label{beta_asymptotics}
	\tilde \beta_L &=(L+1)! ~(L+1)^{ \frac{N+6}{2}} \frac{36 \cdot 3^{ \frac{N+1}{2}} }{\pi  \Gamma \! \left( 2+\frac{N}{2} \right) A^{2N+4} } e^{-\frac 32 - \frac{N+8}{3}\left( \frac 3 4 + \gamma_\text{E} \right)  }  .
\end{align}

Factorial asymptotic growth can be expressed in many equivalent forms by redefining subleading coefficients. It turns out that \cref{beta_asymptotics} is not the most suitable form for our purpose. Instead, we will use  the notation of \cite{borinsky_generating_2018,borinsky_renormalized_2017} throughout the article:
\begin{align}\label{beta_asymptotic_ansatz}
	\beta_L &\underset{L \rightarrow \infty}\sim   \sum_{r=0}^\infty \Gamma \left( L + c_s -r \right)a^{-L-c_s+r} c_r \\
& = \left( \frac 1 a \right) ^{L+c_s} \cdot \Gamma \big(L+c_s\big) \left( c_0 + \frac{ac_1}{(L+c_s-1)}+\frac{a^2c_2}{(L+c_s-1)(L+c_s-2)}+ \ldots \right) . \nonumber 
\end{align}
Here, the parameters $a,c_s,\left \lbrace c_j \right \rbrace _{j\in \mathbb N_0} $ are independent of $L$, but possibly functions of $N$.
In order to determine the growth parameters of a factorially growing quantity, it is often useful to consider a growth ratio which has a finite limit as $L\rightarrow\infty$. We use
\begin{align}\label{def:rL}
	r_L(\beta)&:= \frac{\beta_{L+1}}{L\cdot \beta_L}\underset{L \rightarrow \infty}\sim \scalemath{.9}{ \frac 1 a + \frac{ c_s}{a}\frac 1 L - \frac{c_1}{c_0 }\frac{1}{L^2}+ \left( (c_s-1) \frac{c_1}{c_0} + a \left( \frac {c_1}  {c_0} \right)^2 - 2a \frac{c_2}{c_0}\right) \frac{1}{L^3}
	+ \ldots}.
\end{align}
By construction, $r_L(\beta)$ is invariant under rescaling of $\beta$ with an overall factor; $r_L(\beta)$ depends on $a$ and $c_s$ individually, but on the subleading corrections only through the ratios $\frac{c_j}{c_0}$. Notice in particular that knowing the \emph{leading} factorial asymptotics amounts to knowing the limit \emph{and} the $\frac 1 L$ correction of the growth ratio. 

For the asymptotic growth of the beta function in \cref{beta_asymptotics}, a simple computation yields
\begin{align*}
	\frac{\tilde \beta_{L+1}}{L\cdot \tilde \beta_L}&= \frac{(L+2)! (L+2)^{\frac{N+6}{2}} }{L(L+1)! (L+1)^{\frac{N+6}{2}} } = \frac{(L+2)}{L} \left( \frac{L+2}{L+1} \right) ^{\frac{N+6}2}=\left( 1+\frac 2 L \right)  \left( 1+ \frac{1}{L+1} \right) ^{\frac{N+6}{2}} .
\end{align*}
Expanding the last factor in orders of $\frac 1 L$ and comparing with \cref{def:rL}, we can read off the growth parameters that correspond to \cref{beta_asymptotics}: The beta function in minimal subtraction is expected to grow according to \cref{beta_asymptotic_ansatz}, with
\begin{align}\label{beta_asymptotics_predicted}
	a &= 1, \qquad 
    c_s = \frac {N+10}2, \qquad 
    c_0 = \frac{ 36 \cdot 3^{ \frac{N+1}{2}} e^{-\frac 3 2 - \frac{N+8}{3}\left( \frac 3 4 + \gamma_E \right) } }{\pi \Gamma(2+\frac N2) A^{2N +4} }.
\end{align}
It is important to realize that \cref{beta_asymptotics} only specifies the \emph{leading} asymptotic growth, and \cref{beta_asymptotics_predicted} represents the same leading asymptotic growth, but they differ in terms $\propto \frac 1 L$. For example, for $N=1$ at $L=13$, \cref{beta_asymptotics_predicted} is larger than \cref{beta_asymptotics} by approximately $65\%$. Without information about subleading coefficients in \cref{beta_asymptotics}, it is impossible to know which of the two conventions is numerically closer to the true value at finite $L$.

\subsubsection{Primitive beta function}\label{sec:primitive_beta_function}

The \emph{primitive} beta function $\beta^\text{prim}$ consists of the contributions of primitive vertex-type graphs~$g$ only, which is given by their Feynman periods $\period(g)$ (\cref{def:period}). Owing to the completion invariance of the period and \cref{lem:factorization_T,decompletion_symmetry_factor},  $\beta^\text{prim}$  can equivalently be written as a sum over completions:
\begin{align}\label{betaprim_expansion}
\beta^\text{prim}(N)&= 2\sum_{g \text{ decomp.}} \frac{4!~T(g,N) }{\abs{\Aut(g)}}\cdot \period (g)
= 2\sum_{G \text{ comp.}}  \frac{ 4!(L_G+2)}{\abs{\Aut(G)}} \frac{3T(G,N)}{N(N+2)} \period (G).
\end{align}

$\beta^\text{prim}_L$ is independent of the renormalization scheme, while the full $\beta_L$ is not. It has been conjectured \cite{mckane_perturbation_2019} that the leading asymptotic growth of $\beta^\text{prim}_L$ coincides, for $L\rightarrow\infty$, with that of $\beta_L$ in minimal subtraction:
\begin{conjecture}\label{beta_conjecture}
	The leading growth of the \emph{primitive} beta function is given by \cref{beta_asymptotics},
	\begin{align*}
		\beta^\textnormal{prim}_L &\sim \bar \beta_L \scalemath{.8}{\left( 1+ \mathcal{O}\left( \frac 1 L \right)   \right) } , \quad \bar \beta_L:= \Gamma \!\left( L + \frac{N+10}{2} \right)   \frac{36 \cdot 3^{ \frac{N+1}{2}} }{\pi  \Gamma \! \left( 2+\frac{N}{2} \right) A^{2N+4} } e^{-\frac 32 - \frac{N+8}{3}\left( \frac 3 4 + \gamma_\text{E} \right)  }   .
	\end{align*} 
\end{conjecture}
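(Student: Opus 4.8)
The plan is to take the leading asymptotics $\bar\beta_L$ of the full minimal-subtraction beta function (the right-hand side of \cref{beta_asymptotics}, coming from the Lipatov--Br\'ezin instanton) as input, and to reduce \cref{beta_conjecture} to a single estimate on the scheme-dependent remainder. Writing $\beta_L = \beta^{\mathrm{prim}}_L + \beta^{\mathrm{nonprim}}_L$ as in \cref{sec:beta_function}, it suffices to prove
\[
	\beta^{\mathrm{nonprim}}_L = \mathcal{O}\!\left( \bar\beta_L / L \right), \qquad L \to \infty,
\]
since then $\beta^{\mathrm{prim}}_L = \beta_L - \beta^{\mathrm{nonprim}}_L \sim \bar\beta_L\left(1+\mathcal O(1/L)\right)$. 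The whole content of the conjecture thus sits in showing that the renormalization subtractions --- the only feature distinguishing $\beta^{\mathrm{nonprim}}$ from a symmetry-weighted sum of primitive periods --- are asymptotically negligible \emph{in the MS scheme}; the fact that this fails in a generic scheme means the argument must exploit that the MS counterterm is a pure pole in $\epsilon$ with coefficients rigidly fixed by lower orders.

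First I would move to the Borel plane: \cref{beta_conjecture} is equivalent to the statement that the Borel transforms of $\beta$ and $\beta^{\mathrm{prim}}$ share the same nearest singularity --- located, in the normalization of \cref{beta_asymptotic_ansatz}, at $t=a=1$ with branch type fixed by $c_s$, $c_0$ of \cref{beta_asymptotics_predicted} --- while $\beta^{\mathrm{nonprim}}$ has a strictly weaker (by one unit of branching, hence the $1/L$) singularity there. For $\beta$ itself this is the instanton statement; in $\phi^4_4$ the instanton is expected to be the unique singularity closest to the origin, so this input is plausibly reachable by combining the saddle-point expansion with the large-order estimates of the constructive literature.

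Next I would bring in the Hopf-algebraic structure. Every non-primitive contribution arises from primitive skeletons by the recursive insertion of vertex- and propagator-type subgraphs, so $\beta^{\mathrm{nonprim}}$ is an explicit forest-formula / Dyson--Schwinger functional of $\beta^{\mathrm{prim}}$ and the lower-order counterterms; on the Borel plane, insertion becomes a composition (convolution-type) operation. The technical heart is to show this operation is \enquote{singularity non-increasing}: it does not pull the dominant singularity inward from $t=1$, and it lowers the order of branching, because a forest carrying a genuine subdivergence comes with one extra convergent integration relative to its skeleton --- an extra logarithm, i.e.\ a factor $1/L$. Here the $\O(N)$-factorization of \cref{lem:factorization_T} is what keeps the $N$-dependence transparent: the symmetry factor of an inserted graph is the product of the factors of the pieces, so $\beta^{\mathrm{nonprim}}_L(N)$ is a polynomial in $N$ of no higher degree or faster growth than $\beta^{\mathrm{prim}}$ already produces, with the degree bounds of \cref{sec:nonprimitive} making this quantitative and uniform in $N$.

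The hard part will be exactly this last step: controlling iterated MS subtractions uniformly in loop order. Unlike the $0$-dimensional model of \cref{sec:0dim}, renormalization introduces a genuinely new degree of freedom, and the subtraction terms are alternating sums of products of factorially divergent series, so any bound that discards signs is hopeless --- the cancellations are essential. The \enquote{bottom-up} alternative, estimating $\beta^{\mathrm{prim}}_L$ directly as (the exact $0$-dimensional count $p_L(N)$ of \cref{sec:0dim}) times the symmetry-weighted average Feynman period, is no easier, since one then has to prove that this average period grows precisely like $\bar\beta_L/p_L \sim \mathrm{const}\cdot(3/2)^L L^{5/2}$, a statement about periods well beyond present techniques --- which is consistent with this paper resorting to numerics and finding that the genuine growth rate is not visible below roughly $25$ loops (\cref{sec:4d_asymptotics}).
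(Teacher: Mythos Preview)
The statement you are addressing is labelled a \emph{conjecture} in the paper, and the paper does not attempt a proof. On the contrary, the authors state explicitly (\cref{sec:discussion}, \cref{sec:4d_asymptotics}) that their numerical data up to $L=18$ neither confirm nor refute it, and that the asymptotic regime appears to set in only beyond $L\approx 25$. There is therefore no paper proof to compare your proposal against.

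Your proposal is not a proof either, and to your credit you say so: you identify the key reduction $\beta^{\mathrm{nonprim}}_L=\mathcal O(\bar\beta_L/L)$, move to the Borel plane, and then name the step you cannot complete --- controlling iterated MS subtractions uniformly in $L$. That is an honest and reasonable roadmap, but two points deserve flagging. First, you take the instanton asymptotics of the \emph{full} MS beta function as \enquote{input}, yet that statement is itself not a theorem: the Lipatov--Br\'ezin computation is a saddle-point argument for the functional integral, and turning it into a rigorous large-order statement for the MS coefficients is part of the same circle of open problems. You are therefore reducing one conjecture to another of comparable difficulty, not to something established. Second, your heuristic that each forest insertion costs a factor $1/L$ (\enquote{one extra convergent integration \ldots\ an extra logarithm}) is suggestive but not an argument: the forest formula is an alternating sum over nested and overlapping forests, and the number of forests itself grows combinatorially, so a per-forest suppression does not automatically survive the sum. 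The paper's own analysis of the 0-dimensional case (\cref{sec:0dim_asymptotics}) shows that even there the subleading corrections to the primitive count are large enough that the leading asymptotics is invisible below $\sim 25$ loops; in four dimensions the period weights add a further, independent source of slow convergence (\cref{fig:4dim_mean_ratio}), which your Borel-plane picture would have to accommodate.
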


By \cref{betaprim_expansion}, the conjecture would imply a certain leading growth of the \emph{average} period. 
Let $p_L(N)=\sum_G \frac{T(G,N)}{\abs{\Aut(G)}}$, where the sum is over $L$-loop primitives $G$, see \cref{primitive_generating_function} below. Equivalently, $\beta^\text{prim}_L \big|_{\period \mapsto 1}=2p_L$.
Dividing $\beta^\text{prim}_L(N)$ by $2p_L(N)$, we obtain the average period per graph:
\begin{align}\label{def:average_period}
	\left \langle \period \right \rangle_{\frac{T}{\Aut},L}(N) &:= \frac{\sum_{G ~\text{compl}} \frac{ T(G,N) \period(G)}{\abs{\Aut(G)} \, }}{ \sum_{G ~\text{compl}} \frac{ T(G,N)} {\abs{\Aut(G)} \, } } = \frac{\beta^\text{prim}_L (N)}{2 \, p_L(N)}.
\end{align}
Notice that $\left \langle \period \right \rangle _{T/\Aut,L}$ is the average of a sample weighted proportional to $\frac{T}{\Aut}$, which is not exactly the same as the average of $\frac{T\cdot \period}{\Aut}$ in a uniform sample, see the appendix of~\cite{balduf_statistics_2023}. 
In \cref{sec:0dim_primitive_asymptotics} below, we compute the $L\rightarrow \infty$ asymptotics of $p_L(N)$ (\cref{primitive_asymptotics}).
Consequently,  \cref{beta_conjecture} amounts to an asymptotics of $\left \langle \period \right \rangle _{T/\Aut}$ according to
\begin{align}\label{mean_period_asymptotics}
\left \langle \period \right \rangle_{\frac{T}{\Aut},L} &\sim \scalemath{.9}{\left( \frac 3 2 \right) ^L L^{\frac 5 2}  \cdot  \delta^N \cdot\frac{ 9 \left( \frac 3 2 \right) ^{\frac{1}{2}} e^{-\frac 12-\frac {8\gamma_E  }3 }} {A^{4}\sqrt{2\pi} }    \left( 1 +  \mathcal O \left( \frac{1}{L} \right)  \right) },\qquad  \delta = \frac{\sqrt{\frac 3 2} e^{\frac 12 - \frac {\gamma_E}3}}{A^2}.
\end{align}
Notice\footnote{Erik Panzer drew our attention to the numerical value of this quantity.} that $\delta \approx 1.013$ such that the $N$ dependence of the leading asymptotics is relatively gentle. $\left \langle \period \right \rangle_{\frac{T}{\Aut},L}$ grows  exponentially, not factorially, with $L$.  To investigate its growth, we consider a ratio similar to \cref{def:rL}, but without the denominator $L$, namely
\begin{align}\label{def:fL}
f_L\big( \left \langle \period \right \rangle _{\frac{T}{\Aut}} \big)  &:= \frac{\left \langle \period \right \rangle_{\frac{T}{\Aut},L+1} (N) }{\left \langle \period \right \rangle_{\frac{T}{\Aut},L} (N) } = \frac 3 2 + \frac{15}{4} \frac 1 L + \mathcal O \left( \frac 1 {L^2} \right) . 
\end{align} 
Thereby, if \cref{beta_conjecture} is true, the first two coefficients of $f_L$ in \cref{def:fL} should be independent of $N$. 
In the following sections, we will first examine $p_L(N)$ and related quantities in detail from the perspective of 0-dimensional $\phi^4$ theory. 
We return to numerical values of $\beta^\text{prim}_L$ in the 4-dimensional theory in \cref{sec:4d_asymptotics}.

\subsection{Martin invariant}\label{sec:Martin_invariant}

The \emph{Martin polynomial} \cite{martin_enumerations_1977} is a simple transformation of the circuit partition polynomial (\cref{def:circuit_partition_polynomial}), $M(G,N) := \frac{J(G,N-2)}{N-2}$. Using \cref{lem:factorization_T}, the Martin polynomial can be related to the $\O(N)$ symmetry factor of the decompletion $g=G\setminus \left \lbrace v \right \rbrace $:
\begin{align}\label{def:Martin_polynomial}
	T \big( g , N \big) &=\frac{1}{3^{\abs{V_G}-1}} \frac{ M(G,N+2)}{N+2}, \quad \text{where $G$ is a primitive completion}.
\end{align}
The linear coefficient of the Martin polynomial is  the (first) \emph{Martin invariant} \cite{panzer_feynman_2025},
\begin{align}\label{def:Martin_invariant}
M^{[1]} &:= \frac 1 6 \frac{\partial}{\partial N}M(G,N) \Big|_{N=0}=\frac{3^L}2 \frac{\partial}{\partial N} \Big( N\cdot T \left( g , N-2 \right)  \Big) \Big|_{N=0} = \frac{3^L}{2}T \left( g,-2 \right) .
\end{align}
	At a fixed loop order, we use \cref{decompletion_symmetry_factor} to relate the sum of all Martin invariants, weighted by symmetry factor, to the sum of all primitive decompletions: 
\begin{align}\label{Martin_sum}
M^{[1]}_L:=	\sum_{\textnormal{completions }G} \frac{(L+2)~ M^{[1]}(G)}{\abs{\Aut(G)}} &= \frac{3^L}2 \sum_{\textnormal{decomp. }g} \frac{T \left( g , -2 \right) }{\abs{\Aut(g)}} .
\end{align}

\begin{example}\label{ex:Martin_invariants}
	At $L=1$ loop, the unique decompletion $g$ (the fish graph from \cref{fig:fish}) has $T(g ,N)=\frac{N+8}9$. 
    Thus, its completion has the Martin polynomial 
	\begin{align*}
	M(G,N)= 3^2 \cdot N\cdot T(G\setminus \left \lbrace v \right \rbrace , N-2)= N(N+6),
	\end{align*}
	 and the Martin invariant is $M^{[1]}=1$. The completion has symmetry factor $\abs{\Aut(G)}=3! \cdot 2^3=48$, and the decompletion (the fish) has $\abs{\Aut(g)}=2^4=16$. The sums in \cref{Martin_sum} at $L=1$ contain only one term each, namely
	\begin{align*}
	\frac{(1+2) \cdot 1}{48}&= \frac{3^1}{2} \Big( \frac{N+8}{9\cdot 16}\Big)\Big|_{N=-2}=\frac{1}{16}.
	\end{align*}
	For $L\in \left \lbrace 1, \ldots, 6 \right \rbrace $, explicit computation yields the sums of Martin invariants:
	\begin{align*}
	M^{[1]}_L\in \left \lbrace \frac{1}{16}, \quad 0, \quad \frac 14, \quad \frac 74, \quad \frac{89}4, \quad \frac{1255}{4} \right \rbrace  .
	\end{align*} 

\end{example}

\section{QFT in zero dimensions}\label{sec:0dim}
In \cref{sec:setup}, we have introduced the $\O(N)$~symmetry factors $T(G,N)$ in a \enquote{bottom up} fashion, from their definition as sums over circuits (\cref{def:circuit_partition_polynomial}) and their behaviour upon inserting and joining graphs. 

A complimentary perspective on the $N$ dependence of the coefficients of the perturbation series, which is \enquote{top down} in the sense that it only considers the sum of all graphs in question, is provided by \emph{QFT in zero dimensions}. This term is motivated from formally taking the limit $D \rightarrow 0$ in the path integral, which eliminates the kinetic term and replaces each Feynman integral by a constant independent of kinematics, so that the Green functions of 0-dimensional QFT become the generating functions of the corresponding graphs \cite{argyres_zerodimensional_2001,bessis_quantum_1980}. Hence, despite the name, 0-dimensional QFT is not a field theory in a physical sense.   For a recent overview with emphasis on enumeration of graphs, we refer to \cite{borinsky_graphs_2018}.

\subsection{Vacuum graphs}\label{sec:0dim_vacuum}
The vacuum path integral of $N$-dependent 0-dimensional $\phi^4$ theory is the $N$-fold ordinary integral (understood through its formal power series expansion in $\hbar$. Analytic properties for $\hbar<0$ as function of $N$ have been studied in \cite{benedetti_smalln_2024})
\begin{align}\label{Z0_definition}
	Z(\hbar)&:=\int_{\R^N} \frac{\d^N \vec \phi}{(2\pi \hbar)^{\frac N2}}\; e^{\frac{1}{\hbar}\left( -\frac{\vec \phi ^2}{2}+ \frac{(\vec \phi^2)^2}{4!} \right) }.
\end{align}
For fixed integer $N$, a power series expansion of this integral can in principle be computed by expanding all dot products $\vec \phi^2=\vec \phi \cdot \vec \phi =\sum_{j=1}^N \phi_j^2$ in terms of their components, and repeatedly using the Gaussian integral formula
\begin{align}\label{gaussian_integral}
	\int_\R \frac{\d x}{\sqrt{2\pi \hbar } }  e^{ - \frac{x ^2}{2\hbar}  } x^{2n}&= (2n-1)!! \, \hbar^n .
\end{align}
However, it is more convenient to make use of the identity 
\begin{align}\label{Hubbard_Stratonovich_transformation}
	\int_\R \frac{\d \sigma}{\sqrt{2\pi a}} \; e^{\frac 1 a \left(-\frac{\sigma ^2}{2} + b \sigma \right)} &= e^{\frac{ b^2}{2a}}.
\end{align}
Choosing $b=\vec \phi^2/\sqrt{12}$ and $a=\hbar$, this transformation eliminates the $(\vec \phi^2)^2$ interaction term from \cref{Z0_definition}, and the remaining integral over $\d^N \vec \phi$ is Gaussian. \Cref{Hubbard_Stratonovich_transformation} is called \emph{Hubbard-Stratonovich transformation} \cite{stratonovich_method_1958,hubbard_calculation_1959,byczuk_generalized_2023}, and it can be interpreted physically as introducing an auxiliary \enquote{mean field} $\sigma=\vec \phi^2$, which turns the Lagrangian of $\phi^4$ theory into that of a $\sigma$-model \cite{novikov_twodimensional_1984} (see also e.g. \cite{gracey_progress_1997,moshe_quantum_2003,gracey_large_2018}). We discuss the perturbative expansion of $\sigma$ in \cref{sec:dual}. For an arbitrary constant $c$, \cref{Hubbard_Stratonovich_transformation}  yields
\begin{align}\label{phi4_gaussian}
	\int_{\R^N} \frac{\d^N \vec \phi}{(2\pi \hbar)^{\frac N 2}} \; e^{\frac{1}{\hbar}\left( -\frac{\vec \phi^2}{2c}+ \frac{(\vec \phi^2)^2}{4!} \right) }	
  &=\int_\R \frac{\d \sigma}{\sqrt{2\pi \hbar}} \frac{1}{\left( \frac 1 c-\frac{\sigma}{\sqrt 3} \right) ^{\frac N 2}} e^{-\frac{\sigma^2}{2\hbar}}.
\end{align}
To expand this as a power series, use \cref{gaussian_integral} and the binomial series
\begin{align}\label{geometric_series}
	\left( 1 -t \right) ^{-\frac N2} 
  &= \sum_{n=0}^\infty \binom{-\frac N2}{n}(-t)^n
  = \sum_{n=0}^\infty \binom{\frac N2 + n-1}{n}t^n
  = \sum_{n=0}^\infty \frac{ \Gamma\left(n+\frac N 2\right)}{\Gamma(n+1) \Gamma \left( \frac N 2 \right) } t ^n.
\end{align}
The vacuum path integral \cref{Z0_definition} amounts to $c=1$ and becomes
\begin{align}\label{Z0_series} 
	Z(\hbar)
	&= \sum_{n=0}^\infty \frac{ \Gamma\left(2n+\frac N 2\right)  } { \Gamma \left( \frac N 2 \right) \Gamma(n+1) 6^n } \hbar^n= \sum_{n=0}^\infty \frac{ \hbar^n } {  \Gamma(n+1) 24^n } \prod_{k=0}^{2n-1} (N+2k).
\end{align}
Note that the $N$ dependence in \cref{Z0_series} is extremely simple. In fact, this finding is the \enquote{analogue} of \cref{lem:factorization_T}: $Z(\hbar)$ enumerates the sum of \emph{all} vacuum graphs of a given loop order. Those graphs can be viewed as arising from gluing subgraphs in all possible ways, where each gluing produces factors of the form $(N+2j)$ for $j\in \mathbb N_0$.

\subsection{External edges, connected and 1PI graphs}\label{sec:0dim_external}

To enumerate graphs with external edges, we need a source term. We will only be interested in $\O(N)$-invariant observables
where two external edges share the same index and the index is being summed over. To this end, it is convenient to introduce a (scalar) source $\eta$ for~$ \vec \phi^2$ (instead of one for each individual component $\phi_k$, $k\in \left \lbrace 1, \ldots, N \right \rbrace $):
\begin{align}\label{Z_eta_definition}
	Z(\hbar, \eta) &:= \int_{-\infty}^\infty \frac{\d^N \vec \phi}{(2\pi \hbar)^{\frac N2}} e^{\frac{1}{\hbar}\left( -\frac{\vec \phi^2}{2}+ \frac{(\vec \phi^2)^2}{4!} +\frac{\eta}{\hbar} \vec \phi^2 \right) }
	= \int_{-\infty}^\infty \frac{\d^N \vec \phi}{(2\pi \hbar)^{\frac N2}} e^{\frac{1}{\hbar}\left( -\frac{\vec \phi^2}{2}\left( 1-2\frac \eta \hbar \right) + \frac{(\vec \phi^2)^2}{4!} \right) }.
\end{align}
We have included an extra factor $\hbar^{-1}$ to be consistent with the ordinary power counting, namely the power of $\hbar$ in the series expansion of $Z(\hbar, \eta)$ counts (vertices minus edges), and a 1-particle source term is one source vertex, the $\eta$ source term must therefore count like two vertices and have an overall factor $\hbar^{-2}$ in the exponent. 
Of course, having a source term for $\vec \phi^2$ is equivalent to marking exactly one of the propgators in a graph. To solve the integral in \cref{Z_eta_definition}, we apply the Hubbard-Stratonovich transformation \cref{Hubbard_Stratonovich_transformation}, use \cref{phi4_gaussian} with $c=\frac{1}{1-2\frac{\eta}{\hbar}}$, and finally use \cref{geometric_series}:
\begin{align}\label{Z_eta_series}
	Z(\hbar, \eta)&= \int_\R \frac{\d \sigma}{\sqrt{2\pi \hbar}} \frac{1}{\left( 1-2\frac \eta \hbar -\frac{\sigma}{\sqrt 3} \right) ^{\frac N 2}} e^{-\frac{\sigma^2}{2\hbar}}= \sum_{n=0}^\infty \sum_{k=0}^\infty \frac{ \Gamma\left(k+2n+\frac N 2\right) 2^k }{ \Gamma\left( \frac N2 \right) \Gamma(n+1)\Gamma(k+1)  6^n} \eta^k \hbar^{n-k}.
\end{align}
By construction, the $\eta$-derivatives of the generating function $Z(\hbar, \eta)$ in \cref{Z_eta_series} enumerate Green functions with pairs of external edges, whose indices are summed. 
This is a common setup in field theory, compare for example the so-called \emph{generalized effective actions} \cite{cornwall_effective_1974,berges_nparticle_2004,carrington_techniques_2011}. 
However, for the following steps it is useful if we can directly compare to the examination of $\phi^4$ theory (without $\O(N)$ symmetry) in \cite{borinsky_renormalized_2017}, therefore we   prefer a source term $j$ that generates a \emph{single} external line. This is possible by replacing $\eta^k$ in \cref{Z_eta_series}, but one needs to carefully adjust combinatorial factors since now a $2k$\textsuperscript{th} derivative with respect to $j$ replaces a $k$\textsuperscript{th} derivative with respect to $\eta$. Furthermore, we divide by the $N$ dependence at $n=0$ to cancel summation over vector indices of external edges that was implied in the $\eta$-formalism.
All in all, the required replacement is 
\begin{align}\label{eta_replacement}
	\eta^k &\mapsto \frac{\Gamma \left( \frac N 2 \right) \Gamma(k+\frac 12)}{\Gamma \left( \frac 12 \right) \Gamma\left( k+\frac N 2 \right) }\frac{\Gamma(k+1)}{\Gamma(2k+1)}j^{2k} .
\end{align}
We remark that this is a purely formal replacement, $j$ is not a \enquote{source term} for any field that exists in this theory. Conceptually, $j$ would be a source term for $\sqrt{  \phi^2}$, which is not the same as a (vector-valued) source $\vec j$ for the field $\vec \phi$.  We redefine the summation index $n$ to expose the dependence on $\hbar$, the resulting transformation of \cref{Z_eta_series} is 
\begin{align}\label{Z_j_series}
	Z(\hbar, j) &=  \sum_{k=0}^\infty\sum_{n=-k}^\infty  \frac{  \left( k+\frac N 2 \right) _{2n+2k} \left( \frac 12 \right) _k }{  6^n 3^k ~  (n+k)! (2k)! } j^{2k} \hbar^{n} \, ,
\end{align}
where we have used Pochhammer symbols $(x)_k = x (x+1) ... (x+k-1)= \Gamma(x+k)/ \Gamma(x)$ to illustrate that the dependence on $N$ is a polynomial for finite $n$ and $k$. 

Now, the coefficient of $Z$ at $j^{2k}$ is a formal power series that enumerates graphs with exactly $2k$ external edges, for example
\begin{align}\label{Z_derivatives}
	\scalemath{.9}{\partial^0_j Z(\hbar, j)\Big|_{j=0}} &\scalemath{.9}{= Z(\hbar, 0)=1 + \frac{N(N+2)}{24}\hbar + \frac{N(N+2)(N+4)(N+6)}{1152} \hbar^2 + \ldots } \\
	\scalemath{.9}{ \partial^2_j Z(\hbar, j)\Big|_{j=0} }& \scalemath{.9}{=\frac 1 \hbar + \frac{ (N+2)(N+4)}{24} + \frac{ (N+2)(N+4)(N+6)(N+8)}{1152}\hbar + \ldots} \nonumber \\
	\scalemath{.9}{\partial^4_j Z(\hbar, j)\Big|_{j=0}} & \scalemath{.9}{= \frac{3}{\hbar^2}+ \frac{ (N+4)(N+6)}{8 \hbar } } \scalemath{.9}{+ \frac{ (N+4)(N+6)(N+8)(N+10)}{384} + \ldots.} \nonumber
\end{align}
Here, the power of $\hbar$ counts (edges minus vertices), where each external edge counts as a 1-valent $\phi$-vertex.

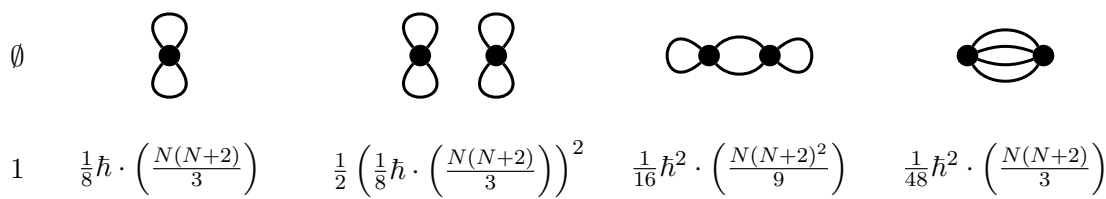
\begin{figure}[htb]
	\begin{tikzpicture}
		\coordinate(x0) at (0,0);
		\node at (x0) {$\emptyset$};
		\node at ($(x0) +(0,-1.5)$){$1$};
		
		\coordinate(x0) at (2,0);
		\node[vertex,fill=black] (v1) at ($(x0) +(0,0)$){};
		\draw[edge ] (v1) ..controls +(-.6,.7) and +(.6,.7) ..(v1);
		\draw[edge ] (v1) ..controls +(-.6,-.7) and +(.6,-.7) .. (v1);
		\node at ($(x0) +(0,-1.5)$){$\frac 1 8 \hbar \cdot \left( \frac{N(N+2)}{3} \right) $};

		\coordinate(x0) at (5.8,0);
		\node[vertex,fill=black] (v1) at ($(x0) +(-.5,0)$){};
		\draw[edge ] (v1) ..controls +(-.6,.7) and +(.6,.7) ..(v1);
		\draw[edge ] (v1) ..controls +(-.6,-.7) and +(.6,-.7) .. (v1);
		\node[vertex,fill=black] (v2) at ($(x0) +(.5,0)$){};
		\draw[edge ] (v2) ..controls +(-.6,.7) and +(.6,.7) ..(v2);
		\draw[edge ] (v2) ..controls +(-.6,-.7) and +(.6,-.7) .. (v2);
		\node at ($(x0) +(0,-1.5)$){$\frac 12 \left( \frac 1 8 \hbar \cdot \left( \frac{N(N+2)}{3} \right) \right) ^2 $};
		
		\coordinate(x0) at (9.5,0);
		\node[vertex,fill=black] (v1) at ($(x0) +(-.4,0)$){};
		\draw[edge ] (v1) ..controls +(-.7,.6) and +(-.7,-.6) ..(v1);
		\node[vertex,fill=black] (v2) at ($(x0) +(.4,0)$){}; 
		\draw[edge, bend angle =50,bend right] (v1) to (v2);
		\draw[edge, bend angle =50,bend left] (v1) to (v2);
		\draw[edge ] (v2) ..controls +(.7,-.6) and +(.7,.6) .. (v2);
		\node at ($(x0) +(0,-1.5)$){$\frac{1}{16}\hbar^2 \cdot \left( \frac{N(N+2)^2}{9} \right) $};
		
		\coordinate(x0) at (13,0);
		\node[vertex,fill=black] (v1) at ($(x0) +(-.5,0)$){};
		\node[vertex,fill=black] (v2) at ($(x0) +(.5,0)$){}; 
		\draw[edge, bend angle =60,bend right] (v1) to (v2);
		\draw[edge, bend angle =20,bend left] (v1) to (v2);
		\draw[edge, bend angle =20,bend right] (v1) to (v2);
		\draw[edge, bend angle =60,bend left] (v1) to (v2);
		\node at ($(x0) +(0,-1.5)$){$\frac{1}{48}\hbar^2 \cdot \left( \frac{N(N+2)}{3} \right) $};

	\end{tikzpicture}
	\caption{Feynman graphs representing the first three summands of $Z(\hbar, 0)$. This partition function enumerates vacuum graphs. The first line shows the graph, where the solid dot is a vertex to be decomposed according to \cref{fig:vertex_decomposition}. The second line is the automorphism symmetry factor and the power of $\hbar$, multiplied by the circuit partition polynomial. }
	\label{fig:Z_graphs_vaccum}
\end{figure}

The derivation of $Z(\hbar, \eta)$ did not make explicit reference to circuit partition polynomials (\cref{def:circuit_partition_polynomial}), but the resulting series is consistent with the Feynman graph expansion. We give the correspondence for the first few terms of the series expansion. The vacuum partition function $Z(\hbar, 0)$ is a sum of vacuum Feynman graphs as shown in \cref{fig:Z_graphs_vaccum}, the polynomials $J(G,N)$ for some of the graphs had been computed in \cref{ex:fish_chain}. 

\begin{example}
	
	The sum of the terms at order $\hbar^2$ in \cref{fig:Z_graphs_vaccum} is
	\begin{align*}
		\scalemath{.9}{\frac 12 \left( \frac 1 8  \left( \frac{N(N+2)}{3} \right) \right) ^2 +\frac{1}{16} \left( \frac{N(N+2)^2}{9} \right) +\frac{1}{48} \left( \frac{N(N+2)}{3} \right) =\frac{N(N+2)(N+4)(N+6)}{1152}},
	\end{align*}
	which coincides with the first line of \cref{Z_derivatives} as claimed. This short calculation already illustrates that the circuit partition polynomials of individual graphs will often appear much more \enquote{random} than the $\O(N)$-dependence of the sum of all graphs, as remarked below \cref{Z0_series}.
\end{example}

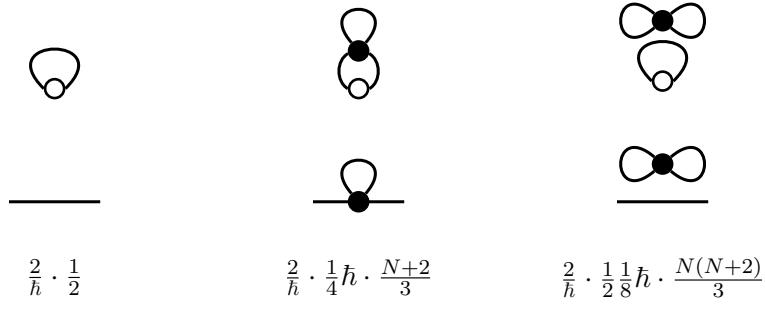
\begin{figure}[htb]
	\centering
	\begin{tikzpicture}
		
		\coordinate(x0) at (2,0);
		\node[vertex,fill=white] (v1) at ($(x0) +(0,-.5)$){};
		\draw[edge] (v1.west) ..controls +(-.7,.7) and +(.7,.7) ..(v1.east);
		
		\node (v1) at ($(x0) +(0,-2)$){};
		\draw[edge ] ($(v1) +(-.6,0)$) -- ($(v1) +(.6,0)$);
		
		\node at ($(x0) +(0,-3)$){$\frac 2 \hbar \cdot \frac 12  $};

		\coordinate(x0) at (6,0);
		\node[vertex,fill=black] (v1) at ($(x0) +(0,0)$){};
		\node[vertex,fill=white] (v2) at ($(x0) +(0,-.5)$){};
		\draw[edge, bend angle =60, bend left ] (v1) to (v2.east);
		\draw[edge, bend angle=60,bend right ] (v1)to (v2.west);
		\draw[edge ] (v1) ..controls +(-.6,.7) and +(.6,.7) .. (v1);

		\node [vertex](v1) at ($(x0) +(0,-2)$){};
		\draw[edge ] ($(v1) +(-.6,0)$) -- ($(v1) +(.6,0)$);
		\draw[edge ] (v1) ..controls +(-.6,.7) and +(.6,.7) .. (v1);
		
		\node at ($(x0) +(0,-3)$){$\frac 2 \hbar \cdot \frac 14 \hbar \cdot \frac{ N+2 }{3} $};
		
		\coordinate(x0) at (10,0);
		\node[vertex,fill=black] (v1) at ($(x0) +(0,.4)$){};
		\draw[edge ] (v1) ..controls +(-.7,.6) and +(-.7,-.6) ..(v1);
		\draw[edge ] (v1) ..controls +(.7,-.6) and +(.7,.6) .. (v1);
		\node[vertex,fill=white] (v2) at ($(x0) +( 0,-.4)$){}; 
		\draw[edge] (v2.west) ..controls +(-.7,.7) and +(.7,.7) ..(v2.east);
		
		\node[vertex,fill=black] (v1) at ($(x0) +(0,-1.5)$){};
		\draw[edge ] (v1) ..controls +(-.7,.6) and +(-.7,-.6) ..(v1);
		\draw[edge ] (v1) ..controls +(.7,-.6) and +(.7,.6) .. (v1);
		\node (v2) at ($(x0) +( 0,-2)$){}; 
		\draw[edge ] ($(v2) +(-.6,0)$) -- ($(v2) +(.6,0)$); 
		
		\node at ($(x0) +(0,-3)$){$ \frac 2 \hbar \cdot \frac{1}{2} \frac{1}{8}\hbar \cdot \frac{N(N+2)}{3} $};

	\end{tikzpicture}
	\caption{Feynman graphs representing the first two summands of $\partial^2_j Z(\hbar, j)$. These are vacuum graphs with one marked edge, indicated with a white 2-valent vertex in the first row. Equivalently, upon cutting the white vertex, one obtains graphs with exactly two external edges, shown in the second row. The replacement \cref{eta_replacement} guarantees that there is no sum over the $\O(N)$ index $j$ of the external $\phi_j$. }
	\label{fig:Z_graphs_2legs}
\end{figure}

The power series $\partial^2_jZ \big|_{j=0}$ enumerates graphs with two external $\phi$-edges, or equivalently vacuum graphs with one marked internal edge. In the latter perspective, the marking amounts to a 2-valent vertex with Feynman rule $\frac 2 \hbar$, and cycles containing 2-valent vertices do not give rise to a factor $N$. Note that this 2-valent vertex influences the automorphism symmetry factor. Of course, the outcome in both perspectives is equivalent. \Cref{fig:Z_graphs_2legs} shows the first terms of $\partial^2_j Z(\hbar, j)\big|_{j=0}$.

\begin{example}\label{ex:Z_second_derivative}
	The sum at order $\hbar^0$ in \cref{fig:Z_graphs_2legs} reproduces the term in the second line of \cref{Z_derivatives},
	\begin{align}
		\frac{2}{\hbar} \cdot \frac 1 4 \hbar \frac{(N+2)}{3} + \frac{2}{\hbar} \frac 1{16}\hbar \cdot \frac{N(N+2)}{3}&= \frac{(N+2)(N+4)}{24}.
	\end{align}
\end{example} 
Analogously, the higher derivatives $\partial^{2k}_j Z(\hbar, j)$ enumerate graphs with $2k$ external edges, or with $k$ 2-valent vertices. To be consistent with usual definitions, these 2-valent vertices may not be exchanged by automorphisms, which can be realized by giving an extra factor $k!$ to the graph in question. \Cref{fig:Z_graphs_4legs} shows the first graphs of $\partial^4_j Z(\hbar, j)$. 

\begin{figure}[htb]
	\centering
	\begin{tikzpicture}
		
		\coordinate(x0) at (0,0);
		\node[vertex,fill=white] (v1) at ($(x0) +(-.4,-.3)$){};
		\node[vertex,fill=white] (v2) at ($(x0) +(.4,-.3)$){};
		\draw[edge] (v1.north) ..controls +(.1,.3) and +(-.1,.3) ..(v2.north);
		\draw[edge] (v1.south) ..controls +(.1,-.3) and +(-.1,-.3) ..(v2.south);

		\node (v1) at ($(x0) +(-.6,-1.8)$){};
		\node (v2) at ($(x0) +(-.6,-2.2)$){};
		\draw[edge ] ($(v1) +(-.4,0)$) -- ($(v1) +(.4,0)$);
		\draw[edge ] ($(v2) +(-.4,0)$) -- ($(v2) +(.4,0)$);
		
		\node (v1) at ($(x0) +(.6,-2)$){};
		\node (v2) at ($(x0) +(.6,-2)$){};
		\draw[edge ] ($(v1) +(-.4,.2)$) -- ($(v1) +(.4,-.2)$);
		\draw[edge ] ($(v2) +(-.4,-.2)$) -- ($(v2) +(.4,.2)$);
		
		\node at ($(x0) +(0,-3)$){$2\frac 4 {\hbar^2} \cdot \frac 14 $};

		\coordinate(x0) at (3,0);
		\node[vertex,fill=white] (v1) at ($(x0) +(-.4,-.3)$){};
		\node[vertex,fill=white] (v2) at ($(x0) +(.4,-.3)$){};
		\draw[edge] (v1.west) ..controls +(-.7,.7) and +(.7,.7) ..(v1.east);
		\draw[edge] (v2.west) ..controls +(-.7,.7) and +(.7,.7) ..(v2.east);

		\node (v1) at ($(x0) +(-.3,-2)$){};
		\node (v2) at ($(x0) +(.3,-2)$){};
		\draw[edge, bend angle=80,bend left ] ($(v1) +(0,.3)$) to ($(v1) +(0,-.3)$);
		\draw[edge, bend angle=80,bend right ] ($(v2) +(0,.3)$) to ($(v2) +(0,-.3)$);
		
		\node at ($(x0) +(0,-3)$){$2\frac 4 {\hbar^2} \cdot \frac 12\left( \frac 12 \right) ^2 $};

		\coordinate(x0) at (8,0);
		\node[vertex,fill=black] (v1) at ($(x0) +(0,0)$){};
		\node[vertex,fill=white] (v2) at ($(x0) +(-.2,-.5)$){};
		\node[vertex,fill=white] (v3) at ($(x0) +(.2,-.5)$){};
		\draw[edge, bend angle =60, bend left ] (v1) to (v3.east);
		\draw[edge, bend angle =10, bend right ] (v2) to (v3);
		\draw[edge, bend angle=60,bend right ] (v1)to (v2.west);
		\draw[edge ] (v1) ..controls +(-.6,.7) and +(.6,.7) .. (v1);

		\node [vertex](v1) at ($(x0) +(0,-1.9)$){};
		\node (v2) at ($(x0) +(0,-2.2)$){};
		\draw[edge ] ($(v1) +(-.6,0)$) -- ($(v1) +(.6,0)$);
		\draw[edge ] (v1) ..controls +(-.6,.7) and +(.6,.7) .. (v1);
		\draw[edge ] ($(v2) +(-.6,0)$) -- ($(v2) +(.6,0)$);
		
		\node at ($(x0) +(0,-3)$){$2\frac 4 {\hbar^2} \cdot \frac 14  \cdot \frac{ N+2 }{3}\hbar $};

		\coordinate(x0) at (12,0);
		
		\node[vertex,fill=black] (v1) at ($(x0) +(-.4,0)$){};
		\node[vertex,fill=white] (v2) at ($(x0) +(-.4,-.5)$){};
		\node[vertex,fill=white] (v3) at ($(x0) +(.4,-.3)$){};
		\draw[edge, bend angle =60, bend left ] (v1) to (v2.east);
		\draw[edge, bend angle=60,bend right ] (v1)to (v2.west);
		\draw[edge ] (v1) ..controls +(-.6,.7) and +(.6,.7) .. (v1);
		\draw[edge] (v3.west) ..controls +(-.7,.7) and +(.7,.7) ..(v3.east);

		\node [vertex](v1) at ($(x0) +(0,-1.9)$){};
		\node (v2) at ($(x0) +(0,-2.2)$){};
		\draw[edge ] ($(v1) +(-.6,0)$) -- ($(v1) +(.6,0)$);
		\draw[edge ] (v1) ..controls +(-.6,.7) and +(.6,.7) .. (v1);
		\draw[edge ] ($(v2) +(-.6,0)$) -- ($(v2) +(.6,0)$);
		
		\node at ($(x0) +(0,-3)$){$2\frac 4 {\hbar^2} \cdot \frac 12  \cdot\frac 14 \frac{ N+2 }{3}\hbar $};

		\coordinate(x0) at (1,-6);
		\node[vertex,fill=black] (v1) at ($(x0) +(0,.4)$){};
		\draw[edge ] (v1) ..controls +(-.7,.6) and +(-.7,-.6) ..(v1);
		\draw[edge ] (v1) ..controls +(.7,-.6) and +(.7,.6) .. (v1);
		\node[vertex,fill=white] (v2) at ($(x0) +( -.4,-.5)$){}; 
		\node[vertex,fill=white] (v3) at ($(x0) +(.4,-.5)$){}; 
		\draw[edge] (v2.west) ..controls +(-.7,.7) and +(.7,.7) ..(v2.east);
		\draw[edge] (v3.west) ..controls +(-.7,.7) and +(.7,.7) ..(v3.east);
		
		\node[vertex,fill=black] (v1) at ($(x0) +(0,-1.5)$){};
		\draw[edge ] (v1) ..controls +(-.7,.6) and +(-.7,-.6) ..(v1);
		\draw[edge ] (v1) ..controls +(.7,-.6) and +(.7,.6) .. (v1);
		\node (v2) at ($(x0) +( 0,-2)$){}; 
		\node (v3) at ($(x0) +( 0,-2.3)$){}; 
		\draw[edge ] ($(v2) +(-.6,0)$) -- ($(v2) +(.6,0)$);
		\draw[edge ] ($(v3) +(-.6,0)$) -- ($(v3) +(.6,0)$); 
		
		\node at ($(x0) +(0,-3)$){$ 2\frac 4 {\hbar^2} \cdot \frac{1}{2} \left( \frac 12  \right) ^2\cdot \frac{1}{8} \cdot \frac{ N(N+2)}{3} \hbar $};

		\coordinate(x0) at (6,-6);
		\node[vertex,fill=black] (v1) at ($(x0) +(0,.4)$){};
		\draw[edge ] (v1) ..controls +(-.7,.6) and +(-.7,-.6) ..(v1);
		\draw[edge ] (v1) ..controls +(.7,-.6) and +(.7,.6) .. (v1);
		\node[vertex,fill=white] (v2) at ($(x0) +(-.4,-.3)$){};
		\node[vertex,fill=white] (v3) at ($(x0) +(.4,-.3)$){};
		\draw[edge] (v2.north) ..controls +(.1,.3) and +(-.1,.3) ..(v3.north);
		\draw[edge] (v2.south) ..controls +(.1,-.3) and +(-.1,-.3) ..(v3.south);
		
		\node[vertex,fill=black] (v1) at ($(x0) +(0,-1.5)$){};
		\draw[edge ] (v1) ..controls +(-.7,.6) and +(-.7,-.6) ..(v1);
		\draw[edge ] (v1) ..controls +(.7,-.6) and +(.7,.6) .. (v1);
		\node (v2) at ($(x0) +( 0,-2)$){}; 
		\node (v3) at ($(x0) +( 0,-2.3)$){}; 
		\draw[edge ] ($(v2) +(-.6,0)$) -- ($(v2) +(.6,0)$);
		\draw[edge ] ($(v3) +(-.6,0)$) -- ($(v3) +(.6,0)$); 
		
		\node at ($(x0) +(0,-3)$){$ 2\frac 4 {\hbar^2} \cdot \frac{1}{4}  \cdot \frac{1}{8} \cdot \frac{N(N+2)}{3} \hbar $};

		\coordinate(x0) at (11,-6);
		\node[vertex,fill=black] (v1) at ($(x0) +(0,0)$){};
		\node[vertex,fill=white] (v2) at ($(x0) +(-.7,0)$){};
		\node[vertex,fill=white] (v3) at ($(x0) +(.7,0)$){};
		\draw[edge] (v1) ..controls +(.1,.3) and +(-.1,.3) ..(v3.north);
		\draw[edge] (v1) ..controls +(.1,-.3) and +(-.1,-.3) ..(v3.south);
		\draw[edge] (v1) ..controls +(-.1,.3) and +(.1,.3) ..(v2.north);
		\draw[edge] (v1) ..controls +(-.1,-.3) and +(.1,-.3) ..(v2.south);
		
		\node[vertex,fill=black] (v1) at ($(x0) +(0,-2)$){}; 
		\draw[edge ] ($(v1) +(-.6,.3)$) -- ($(v1) +(.6,-.3)$);
		\draw[edge ] ($(v1) +(-.6,-.3)$) -- ($(v1) +(.6,.3)$); 
		
		\node at ($(x0) +(0,-3)$){$ 2\frac 4 {\hbar^2} \cdot \frac{1}{8}  \hbar $};

	\end{tikzpicture}
	\caption{Feynman graphs representing the first two summands of $\partial^4_j Z(\hbar, j)$. Notice that the three decompositions of the vertex (\cref{fig:vertex_decomposition}) are generated from two topoplogically distinct vacuum graphs; together they form the first term of the third line of \cref{Z_derivatives}. A similar effect occurs at higher loop order, but we do not draw all permutations of external edges explicitly. Only the very last one of the shown graph is connected and contributes as leading summand \enquote{1} to the series expansion of $W$ (\cref{W_derivatives}).} 
	\label{fig:Z_graphs_4legs}
\end{figure}
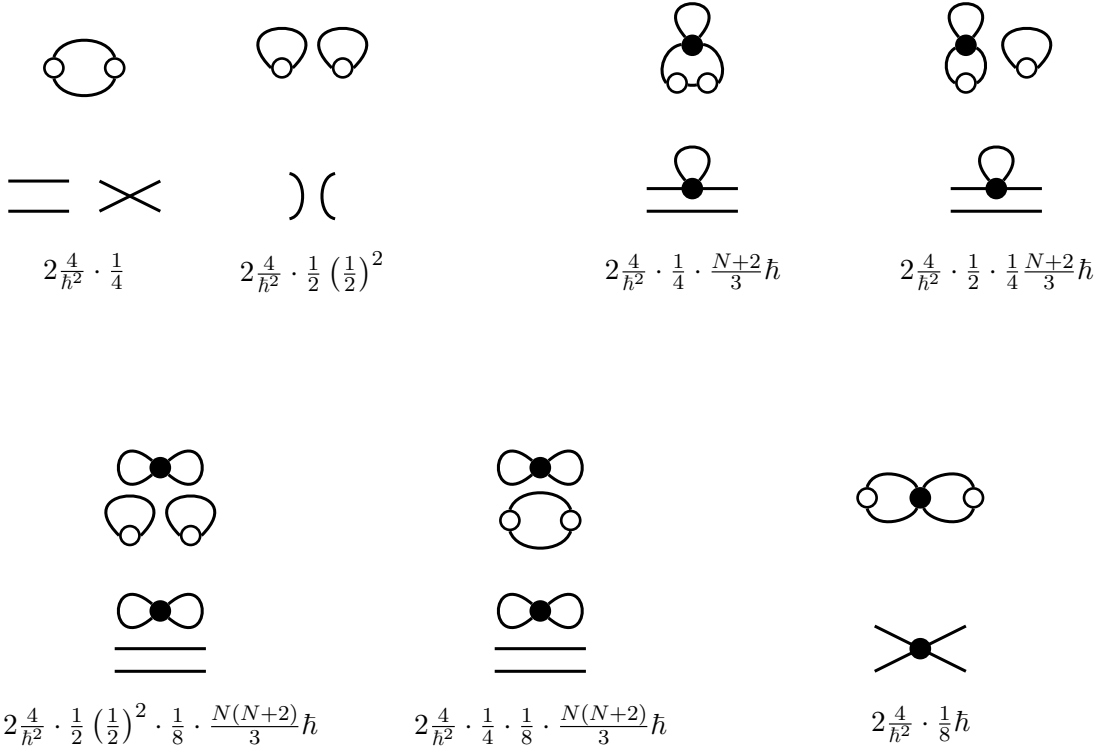

While the partition function $Z$ enumerates \emph{all} Feynman graphs, the free energy $W:= \hbar \ln (Z)$ enumerates the \emph{connected} Feynman graphs. Notice that connectedness here refers to the graphs with external edges, not the vacuum graphs with marked vertices, and that the extra factor $\hbar$ ensures that the order in $\hbar$ coincides with the loop order of the graphs. The series start with
\begin{align}\label{W_derivatives}
	\scalemath{.9}{\partial^0_j W \Big|_{j=0}}&	\scalemath{.9}{=  \frac{N(N+2)}{24}\hbar^2 + \frac{N(N+2)(N+3)}{144}\hbar^3 + \frac{N(N+2)( 60+34N+5N^2)}{2592}\hbar^4 + \ldots } \nonumber \\
	\scalemath{.9}{\partial^2_j W \Big|_{j=0}}&	\scalemath{.9}{= 1 + \frac{ N+2 }{6}\hbar + \frac{ (N+2)(N+3)}{18}\hbar^2 + \frac{(N+2)( 60+34N+5N^2)}{216} \hbar^3 + \ldots} \\
	\scalemath{.9}{\partial^4_j W \Big|_{j=0}}&	\scalemath{.9}{=1+ \frac{16+5N}{6}\hbar + \frac{92+50N+7N^2}{12}\hbar^2 + \frac{2784+2000N+493N^2+42N^3 }{108}\hbar^3 + \ldots .} \nonumber 
\end{align}

\begin{example}
	The first terms of $\partial^0_j W$ in \cref{W_derivatives} correspond to the three \emph{connected} graphs shown in \cref{fig:Z_graphs_vaccum}, where one confirms that
	\begin{align*}
		\frac{1}{16}\hbar^3 \cdot \left( \frac{N(N+2)^2}{9} \right) +\frac{1}{48}\hbar^3 \cdot \left( \frac{N(N+2)}{3} \right) &=\frac{N(N+2)(N+3)}{144} \hbar^3.
	\end{align*}
	Similarly, the connected graphs from \cref{fig:Z_graphs_2legs,fig:Z_graphs_4legs} yield the expected terms of \cref{W_derivatives}. Note that of the vertex-type graphs (\cref{fig:Z_graphs_4legs}), only the very last one is connected, and it corresponds to the leading term $1 \cdot \hbar^0$ of $\partial^4_j W$ in \cref{W_derivatives}.
\end{example}

\medskip
In order to enumerate 1PI graphs, one would conventionally introduce the \enquote{classical} field variable $\varphi:= \partial_j W(\hbar, j)$, invert that series to obtain $j(\hbar, \varphi)$, and define the generating function $	G(\hbar, \varphi):= W \big(\hbar, j(\hbar, \varphi)\big)- j(\hbar, \varphi)\varphi$. 
However, it is computationally faster to directly relate the derivatives of $G$ to those of $W$, namely 
\begin{align}\label{G_W_derivatives}
	\partial^0_\varphi G \big|_{\varphi=0} &= \partial^0_j W \big|_{j=0},  \hspace{2cm} \partial^2_\varphi G \big|_{\varphi=0}= \frac{-1} {\partial^2_j W \big|_{j=0}}, \\
	\scalemath{.9}{\partial^4_\varphi G \big|_{0} =\frac{ \partial^4_j W \big|_{0} } {\left( \partial^2_j W \big|_{0} \right) ^4} }~&\scalemath{.9}{= 1 + \frac{N+8}{6}\hbar + \frac{140 + 46N + 3N^2}{36}\hbar^2 + \frac{(N+4)(384+97N+5N^2)}{108}\hbar^3 +\ldots .} \nonumber 
\end{align}

\subsection{Counterterms, primitive graphs, Martin invariants} \label{sec:0dim_counterterms}

In the 0-dimensional theory, there are no divergent Feynman integrals. Nevertheless, one can use this setting to enumerate the graphs which would be divergent, or primitive, in four dimensions because these graphs can be identified purely combinatorially as stated in \cref{sec:periods}: A graph is declared to be divergent if and only if it has four or less external edges.
Using the 1PI Green functions from \cref{G_W_derivatives}, we first compute the invariant charge, 
\begin{align*}
	\scalemath{.9}{	Q(\hbar)} &\scalemath{.9}{:= \frac{ \partial^4_\varphi G \big|_{\varphi=0} }{\left( \partial^2_\varphi G \big|_{\varphi=0}\right)^2 } = 1 + \frac{N+4}{2}\hbar + \frac{5(N+4)(N+5)}{18}\hbar^2 + \frac{7(N+4)(148+54N+5N^2)}{216}\hbar^3+\ldots. }
\end{align*}
The invariant charge defines the running coupling. 
In our setup, the expansion parameter~$\hbar$ takes the role of the coupling, hence we obtain a \enquote{renormalized} expansion parameter by $\hbar_\ren(\hbar) := \hbar \cdot Q(\hbar)$. Inverting this power series, we find 
\begin{align}\label{hbar_ren}
	\scalemath{.9}{	\hbar(\hbar_\ren)} &= \scalemath{.9}{\hbar_\ren - \frac{N+4}{2}\hbar^2_\ren + \frac{(N+4)(4N+11)}{18}\hbar^3_\ren + \frac{(N+4)(49+27N+5N^2)}{54}\hbar^4_\ren + \ldots. }
\end{align}
Finally, the vertex counterterm is the inverse (under multiplication) of the 1PI four-point function (\cref{G_W_derivatives}), where we replace $\hbar$ by \cref{hbar_ren},
\begin{align} 
	\scalemath{.9}{	z^{(4)}(\hbar_\ren) } &\scalemath{.9}{:=\frac{1}{\partial^4_\varphi G \big|_{\varphi=0}} = 1 - \frac{N+8}{6}\hbar_\ren + \frac{20+6N+N^2}{36}\hbar_\ren^2 - \frac{ 224 +64N + 8N^2 + N^3}{216}\hbar^3_\ren + \ldots. }
\end{align}
The vertex counterterm enumerates the graphs with four external edges which, in four dimensions, are either primitive or contain an overlapping divergence. 
As established in \cite[Sec.~6]{borinsky_graphs_2018}, for $\phi^4$ theory the only overlapping divergences are the $L$-loop chains of multiedge graphs. 
These chains have an automorphism symmetry factor $\frac{1}{2^L}$ 
and we have computed the $\O(N)$-symmetry factor in \cref{ex:fish_chain}. 
Therefore the generating function of primitive graphs is
\begin{align}\label{primitive_generating_function}
	&\text{prim}(\hbar_\ren)=\sum_L p_L \hbar^L_\ren :=\scalemath{.95}{ 1-z^{(4)}(\hbar_\ren)+\sum_{L=2}^\infty (-1)^L \frac{(N-1) 2^{L+1} + (N+2)^{L+1}}{6^{L}N}  \hbar^L_\ren } \\
	&= \frac{N+8}{6}\hbar_\ren + \frac{22+5N}{27}\hbar^3_\ren + \frac{186+55N+2N^2}{81}\hbar^4_\ren + \frac{5440+1946N+147N^2}{486}\hbar^5_\ren + \ldots .\nonumber 
\end{align}

The coefficients $p_L=p_L(N)$ are polynomials in $N$, and $p_2$  vanishes since there is no primitive two-loop graph in $\phi^4$ theory. Setting $N=1$ in \cref{primitive_generating_function}, we reproduce the sum of automorphism symmetry factors given in \cite{borinsky_renormalized_2017}, which had been called $N_L^{(\text{Aut})}$ in \cite{balduf_statistics_2023}:
\begin{align}\label{NLAut}
	\operatorname{prim}(\hbar_\ren)\big|_{N=1}=\frac 3 2 \hbar_\ren + \hbar^3_\ren + 3 \hbar^4_\ren + \frac{31}{2}\hbar^5_\ren + \frac{529}{6}\hbar^6_\ren + \ldots .
\end{align}
We give further coefficients ~$3^{L+1}p_L$ in \cref{tab:0dim_primitive_coefficients} in the appendix. 
These numbers are in general not integers because the automorphism symmetry factors give rise to a denominator. 
Conversely, in \cref{tab:leading_decompositions_count}, we give the sum of $\O(N)$-symmetry factors \emph{without} the weighting by automorphism symmetry factors.
In both cases, we observe that the coefficients of low order in $N$ are typically larger than the coefficient of the higher orders in~$N$. 
To quantify this, we define the \emph{average order}
\begin{align}\label{def:average_order}
	\left \langle k \right \rangle _L := \frac{1}{p_L\big|_{N=1}}\sum_{k=0}^\infty k \left[ N^k \right] p_L(N)
    = \frac{ \frac{\partial}{\partial \,N} p_L(N) }{p_L(N) }\Big|_{N=1}
    = \frac{\partial \log p_L(N)}{\partial \log N}\Big|_{N=1}.
\end{align}
This quantity is positive by construction, but it grows with loop order only slowly, staying below $\frac 12$ for all $L\leq 9$. This indicates that $p_L(N)$ is, for small $N$, mostly determined by the coefficients of the low-order monomials $N^0, N^1$, and not by the highest-order monomial (which, of course, dominates as $N \rightarrow \infty$). 

\medskip 

The generating function of primitives (\cref{primitive_generating_function}) equals, up to a factor $4!$, the sum on the right hand side of \cref{Martin_sum}. Consequently, we can read off the sum of the Martin invariants at $L$ loops, 
\begin{align*}
	M_L^{[1]}:=& \sum_{\substack{\text{completion }G\\ \text{with }L \text{ loop dec.}}} \frac{(L+2)}{\abs{\Aut(G)}} M^{[1]}(G) = \frac{3^L}{2\cdot 4!} \  [\hbar_\ren^L]\operatorname{prim}(\hbar_\ren) \Big|_{N=-2} = \frac{3^L}{48} p_L(-2).
\end{align*}
The factor $3^L$ can be absorbed into the argument of $\operatorname{prim}(\hbar_\ren)$, namely 
\begin{align}\label{Martin_generating_function}
	\sum_{L=0}^\infty M_L^{[1]} \hbar_\ren^L = \frac{1}{48} \operatorname{prim}(3\hbar_\ren) \big|_{N=-2}.
\end{align}
We reproduce the explicit computation from \cref{ex:Martin_invariants}, and find for $L\in \left \lbrace 1, \ldots, 10 \right \rbrace $: 
\begin{align*}
	M_L^{[1]}\in \left \lbrace \frac{1}{16}, \ 0, \quad \frac{1}{4}, \quad \frac 7 4 , \quad\frac{89}{4}, \quad \frac{1255}{4}, \quad \frac{20405}{4}, \quad \frac{372139}{4}, \quad \frac{7510709}{4}, \quad \frac{166012747}{4} \right \rbrace . 
\end{align*}

\subsection{Asymptotics}\label{sec:0dim_asymptotics}
So far, we have obtained generating functions for certain sums of graphs, which deliver their exact counts at low loop order. Now, we turn to the asymptotic expansion around infinite order. One way to obtain the asymptotics of the 0-dimensional path integral $Z(\hbar,j)$ is to start from closed formula \cref{Z_j_series} for the $n$\textsuperscript{th} coefficient, 
\begin{align}\label{Z_hj_coefficient}
	\left[ \hbar^n  j^{2k}\right] Z(\hbar, j) &= \frac{\Gamma \!\left( 2n+3k+ \frac N 2 \right) }{\Gamma(n+k+1)} \frac{  \Gamma \!\left( k+\frac 12 \right)  }{  6^n 3^k ~ \Gamma \!\left( k+\frac N 2 \right) \Gamma \! \left( \frac 12 \right)  \Gamma(2k+1) }.
\end{align}
Here, $k$ is a fixed number, while the asymptotic expansion concerns $n\rightarrow \infty$. Using the multiplication identity for the gamma function, $\Gamma(2n+A) = \Gamma \left( n+\frac A 2 \right) \Gamma \left( n+ \frac A 2 + \frac 1 2 \right) \frac{4^n 2^A } { 2 \sqrt \pi }$, we can directly apply the known asymptotic expansion \cite{olver_asymptotic_1995}
\begin{align*}
	\frac{\Gamma(n+x)\Gamma(n+y)}{\Gamma(n+z)}&\underset{n \rightarrow \infty}\sim \sum_{r=0}^\infty \Gamma \left(n+ x + y-z -r \right) (-1)^r \frac{(z-x)_r (z-y)_r}{r!}. 
\end{align*}
In our case, $x= \frac 3 2 k + \frac N 4$, ~ $y= \frac 3 2 k + \frac{N+2}{4}$, and $z= k+1$.
The Pochhammer symbols $(A)_r:=  \Gamma(A+r)/\Gamma(A)$ in the formula can be simplified, 
\begin{align*}
	\left( 1-\frac 12 k - \frac N 4 \right) _r 	\left( \frac 12 - \frac 12 k - \frac N 4 \right) _r &= \frac{1}{16^r} \prod_{t=1}^{2r} \left( N+2k-2t \right) = \frac{ \Gamma \!\left( 1-k - \frac N 2 + 2r \right) }{4^r ~\Gamma \!\left( 1-k - \frac N 2 \right) }.
\end{align*}
Combining all parts, the right hand side of \cref{Z_hj_coefficient} has the expansion
\begin{align}\label{Z_hj_coefficient2}
	\hspace*{-.5cm}\underset{n\rightarrow \infty}\sim  \sum_{r=0}^\infty  \Gamma \! \left(n+ 2k + \frac {N-1} 2 -r \right) \frac{ (-1)^r~ 2^n 8^k 2^{ \frac N 2}\Gamma \!\left( 1-k - \frac N 2 + 2r \right) \Gamma \!\left( k+\frac 12 \right) } { 2 \pi r!~ 3^n 3^k 4^r \Gamma \!\left( 1-k - \frac N 2 \right)\Gamma \!\left( k+\frac N 2 \right) \Gamma(2k+1)}. 
\end{align}
We transform this expression to the notation of \cite{borinsky_generating_2018,borinsky_renormalized_2017} that we already used in \cref{beta_asymptotic_ansatz}:
\begin{align}\label{asymptotics_series}
	\left[ \hbar^n  j^{2k}\right] Z(\hbar, j) 
	&\underset{n\rightarrow \infty} \sim \sum_{r=0}^\infty c_r a^{-n-c_s+r} ~ \Gamma\! \left( n+c_s-r \right),
\end{align}
\begin{align}\label{Z_hj_asymptotics_coefficients}
	\text{where} &\qquad a=\frac 3 2, \qquad c_s = 2k+\frac{N-1}{2}, \\
	c_r &= \frac{(-1)^{k+r}~  3^k ~3^{ \frac{N-1}{2}} ~ \sin \left( \frac N 2 \pi  \right)  \Gamma \!\left( 1-k - \frac N 2 + 2r \right)  } { 6^r 2^k \sqrt 2 ~\pi^{\frac 3 2}~ r!~\Gamma( k+1)}  =c_0 \cdot \frac{(-1)^r}{6^r} \left( 1-k-\frac N 2 \right) _{2r}.\nonumber 
\end{align}
\begin{example}
	The coefficient $k=0$ of the path integral, hence the power series $Z(\hbar, 0)= \sum \hbar^n z_n$, has the asymptotics
	\begin{align*}
		\scalemath{.85}{z_n\sim\frac{3^{\frac{N-1}{2}} \left( \frac 3 2 \right) ^{-n-\frac{N-1}{2}} \Gamma \! \left( n+ \frac{N-1}{2} \right) }{ \sqrt {2\pi} \Gamma \!\left( \frac N 2 \right) } \left( 1 - \frac{(N-2)(N-4)} {24 (2n +N-3) } + \frac{(N-2)(N-4)(N-6)(N-8)}{1152 (2n+N-3)(2n+N-5)}+\ldots \right)} . 
	\end{align*}
\end{example}

A plot of the coefficients for $k=2$, that is, the 4-point function $[\hbar^n j^4]Z(\hbar, j)$, is shown in \cref{fig:Z_asymptotics}.   We see that already the leading asymptotics $c_0 A^{-n-c_s} \Gamma(n+c_s)$ matches the data points quite well, the deviation is $\approx 1\%$ at $n=15$. If one divides by this leading asymptotics, the remaining difference is captured well by the subleading terms in \cref{asymptotics_series}. Note that \cref{asymptotics_series} is an \emph{asymptotic} expansion: Including higher terms in the sum improves the description for $n\rightarrow \infty$, but the sum does not converge for any finite $n$, as is clearly visible for the small values of $n$ in \cref{fig:Z_asymptotics_correction}.

\begin{figure}[htbp]
	\begin{subfigure}{ .49 \linewidth}
		\centering
		{\small Coefficients $[\hbar^n j^{2k}]Z$ for $N=1$}\\
		\includegraphics[width=\linewidth]{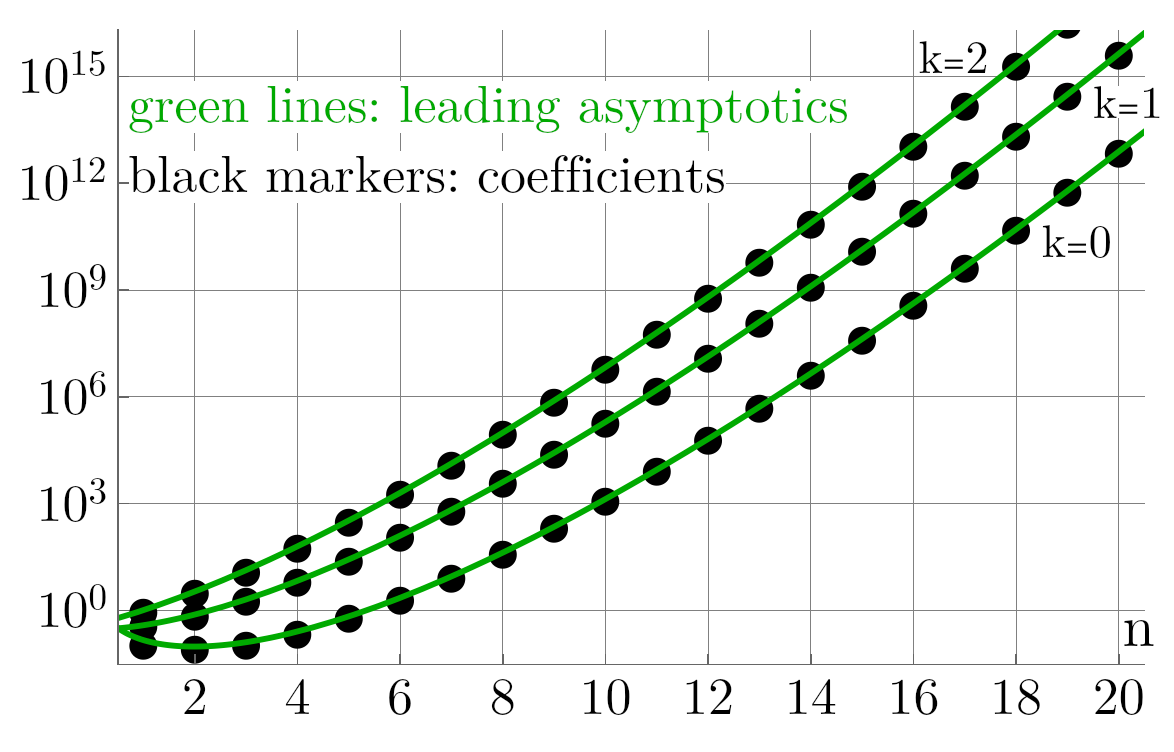}
		\subcaption{}
		\label{fig:Z_asymptotics}
	\end{subfigure}
	\begin{subfigure}{ .49 \linewidth}
		\centering
		{\small Subleading correctios to $[\hbar^n j^4]Z$ for $N=1$}\\
		\includegraphics[width=\linewidth]{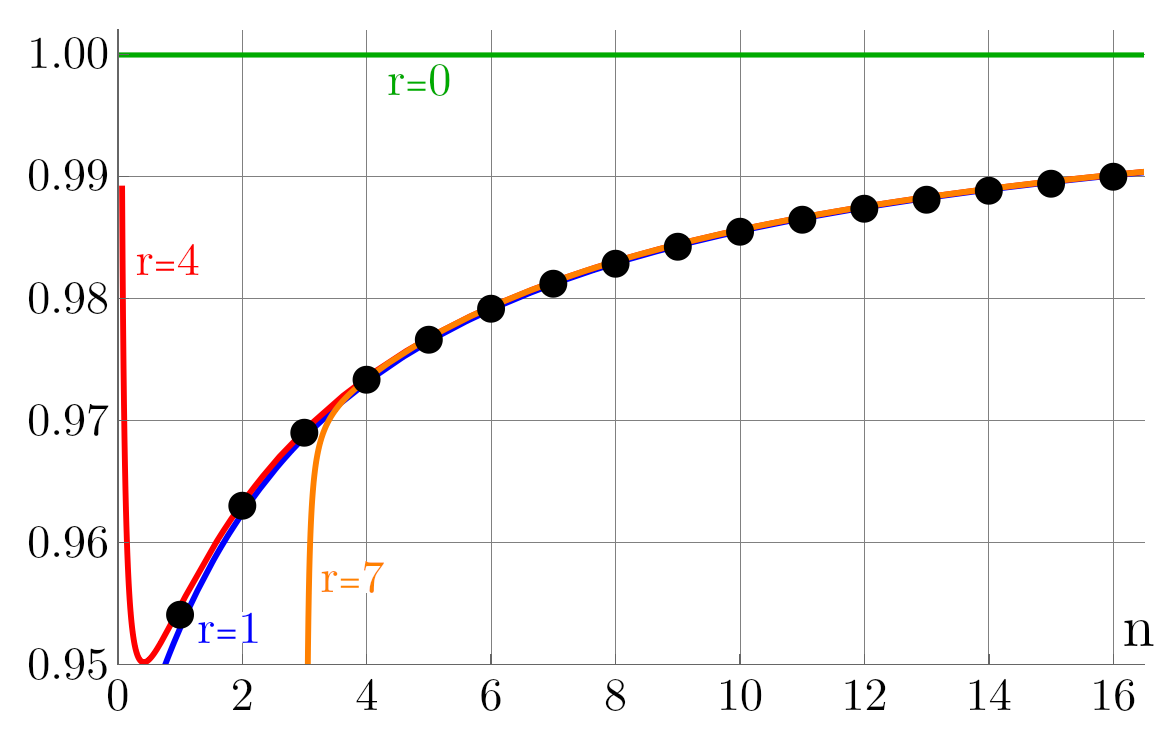}
		\subcaption{}
		\label{fig:Z_asymptotics_correction}
	\end{subfigure}
	\caption{\textbf{(a)} First 20 coefficients of $[\hbar^n j^{2k}]Z(\hbar, j)$, for $N=1$ and $k\in \left \lbrace 0,1,2 \right \rbrace $ , logarithmic plot. The green line shows the leading asymptotics according to \cref{asymptotics_series}, only including the term $r=0$  \textbf{(b)} The data points for $k=2$, scaled to the leading asymptotics. Colored lines show the subleading corrections up to the indicated $r$. Note that this asymptotic expansion matches the data remarkably well even at very low orders $n$. }
	\label{fig:Z_asymptotics_both}
\end{figure}

The mapping from $Z(\hbar, j)$, viewed as a formal power series in $\hbar$, to its asymptotics (\cref{asymptotics_series}) is the alien derivative operator $\mathcal A^a_\hbar$ of \cite{borinsky_generating_2018,borinsky_renormalized_2017},
\begin{align}\label{alien_derivative_series}
	\mathcal A^{\frac 3 2}_\hbar Z  (\hbar,j) &:= \hbar^{-c_s} \sum_{r=0}^\infty\sum_{k=0}^\infty c_r(k,N) \hbar^r j^{2k}.
\end{align}
This operator has well-defined product- and chain rules, and it commutes with the ordinary derivative $\partial_j$.
To compute the asymptotics of the generating function of connected graphs, $W(\hbar, j)= \hbar \ln Z(\hbar, j)$, we use the chain rule and note that the power series of $\ln(x)$ is not divergent, therefore $\mathcal A^{\frac 3 2}_x \ln(x)=0$ and
\begin{align*}
	\mathcal A^{\frac 32}_\hbar W(\hbar,j) &= \frac{\hbar}{Z(\hbar, j)} \mathcal A^{\frac 32}_\hbar Z(\hbar, j).
\end{align*}
The first four terms of this power series are given in \cref{tab:primitive_asymptotic_coefficients} in \cref{sec:tables}. Note that by \cref{alien_derivative_series}, the exponent of $\hbar$ in the prefactor encodes the constant $c_s$ (which thereby does not always take the value of \cref{Z_hj_asymptotics_coefficients}, whereas $a=\frac 3 2$ in all cases).
Setting $N=1$ reproduces the coefficients given in \cite{borinsky_renormalized_2017} for $\phi^4$ theory without $\O(N)$ symmetry.

The asymptotics of the generating function $G(\hbar, \varphi)$ of 1PI graphs follows from \cref{G_W_derivatives},
\begin{align*}
	\mathcal A^{\frac 3 2}_\hbar \partial^0 _\varphi  G\big|_{\varphi=0} &= \mathcal A^{\frac 3 2}_\hbar\partial^0_j W\big|_{j=0}, \qquad 
	\mathcal A^{\frac 3 2}_\hbar\partial^2 _\varphi G\big|_{\varphi=0} = \frac{1}{\left( \partial^2_j W \big|_{j=0} \right) ^2 }\mathcal A^{\frac 3 2}_\hbar \partial^2_j W\big|_{j=0} , \\
	\mathcal A^{\frac 3 2}_\hbar \partial^4_\varphi G \big|_{\varphi=0}&= \frac{1}{\left( \partial^2_j W \big|_{j=0} \right) ^4} \mathcal A^{\frac 3 2}_\hbar \partial^4_j W \big|_{j=0} - \frac{4 ~\partial^4_j W\big|_{j=0}}{\left( \partial^2_jW\big|_{j=0} \right) ^5} \mathcal A^{\frac 3 2}_\hbar \partial^2_j W \big|_{j=0}.
\end{align*}
Again, the first coefficients of these power series are given explicitly in \cref{tab:primitive_asymptotic_coefficients}. Similarly, we use the product rule to compute the asymptotics of the invariant charge:
\begin{align*}
	\mathcal A^{\frac 32}_\hbar Q &= \mathcal A^{\frac 32}_\hbar \left( \frac{\partial^4_j W \big|_{j=0}}{ \left( \partial^2_j W\big|_{j=0} \right) ^2} \right) = \frac{\mathcal A^{\frac 32}_\hbar \partial^4_j W \big|_{j=0}}{ \left( \partial^2_j W\big|_{j=0} \right) ^2} -\frac{2~\partial^4_j W \big|_{j=0}}{ \left( \partial^2_j W\big|_{j=0} \right) ^3} \mathcal A^{\frac 32}_\hbar\partial^2_j W\big|_{j=0}.
\end{align*}
This is the asymptotics of the renormalized coupling $\mathcal A^{\frac 3 2}_\hbar \hbar_\ren = \hbar \mathcal A^{\frac 3 2}_\hbar Q$. To obtain the asymptotics of the inverse function, let $x$ be the variable of the inverse series $\hbar_\ren^{-1}$, then
\begin{align*}
	\mathcal A^{\frac 3 2}_{\hbar_\ren} \hbar(\hbar_\ren) = \mathcal A^{\frac 32}_x \hbar_\ren^{-1}(x) &= - e^{\frac 32 \left( \frac 1 x - \frac 1 {\hbar_\ren^{-1}(x)} \right) } \frac{\mathcal A^{\frac 3 2}_\hbar \hbar_\ren(\hbar)}{\partial_\hbar \hbar_\ren(\hbar)} \big|_{\hbar= \hbar_\ren^{-1}(x)}.
\end{align*}
We provide their coefficients in \cref{tab:primitive_asymptotic_coefficients}. Finally, we need the asymptotics of the vertex counterterm $z^{(4)}$, expressed as a power series of $\hbar_\ren$. 
With 
$\mathcal A^{\frac 3 2}_\hbar z^{(4)}(\hbar) = - \big( \partial^4_\varphi G\big|_{ 0} \big) ^{-2} \mathcal A^{\frac 32}_\hbar \partial^4_\varphi G\big|_{ 0}$, and, since both $G(\hbar)$ and $\hbar(\hbar_\ren)$ have factorially divergent series, 
\begin{align}\label{z4_asymptotics}
	\mathcal A^{\frac 32}_{\hbar_\ren} z^{(4)}\big(\hbar(\hbar_\ren) \big) &= \partial_\hbar z^{(4)}\big|_{\hbar=\hbar(\hbar_\ren)} \cdot \mathcal A^{\frac 32}_{\hbar_\ren} \hbar(\hbar_\ren) + e^{\frac 3 2 \left( \frac{1}{\hbar_\ren}-\frac{1}{\hbar(\hbar_\ren)} \right) }\mathcal A^{\frac 32}_\hbar z^{(4)}\big|_{\hbar=\hbar(\hbar_\ren)}.
\end{align}

\subsection{Asymptotics of primitive graphs} \label{sec:0dim_primitive_asymptotics}

The asymptotics of the primitive graphs is unchanged by the addition of multiedge chains, and coincides with \cref{z4_asymptotics} as given in \cref{tab:primitive_asymptotic_coefficients}. Concretely, the asymptotic growth of $p_L$  as $L\rightarrow \infty$ in the series $\operatorname{prim}(\hbar_\ren)=:\sum p_L \hbar_\ren^L$ (\cref{primitive_generating_function}) is 
\begin{align}\label{primitive_asymptotics}
	p_L &\sim \frac{ 3^{\frac{N-1}{2} }}  {\frac 4 3 \sqrt{2\pi} \Gamma \! \left( \frac{N+4}{2} \right) } e^{-\frac{12+3N}{4}} \left( \frac 2 3 \right) ^{ L + \frac{N+5}{2} } \Gamma \!\left( L+ \frac{N+5}{2}\right) \\
	&\quad \cdot \left( 36- \frac{9(3 N^2 - 4N -80)}{4 (L + \frac{N+3}{2})} +\frac{9 ( 9 N^4 -40N^3-488N^2-416N+1664)}{128 (L+\frac{N+1}{2})(L+\frac{N+3}{2})}+\ldots \right). \nonumber
\end{align}

\begin{figure}[htbp]
	\begin{subfigure}{ .49 \linewidth}
		\centering
		{\small Coefficients $p_L$ for $N=1$}\\
		\includegraphics[width=\linewidth]{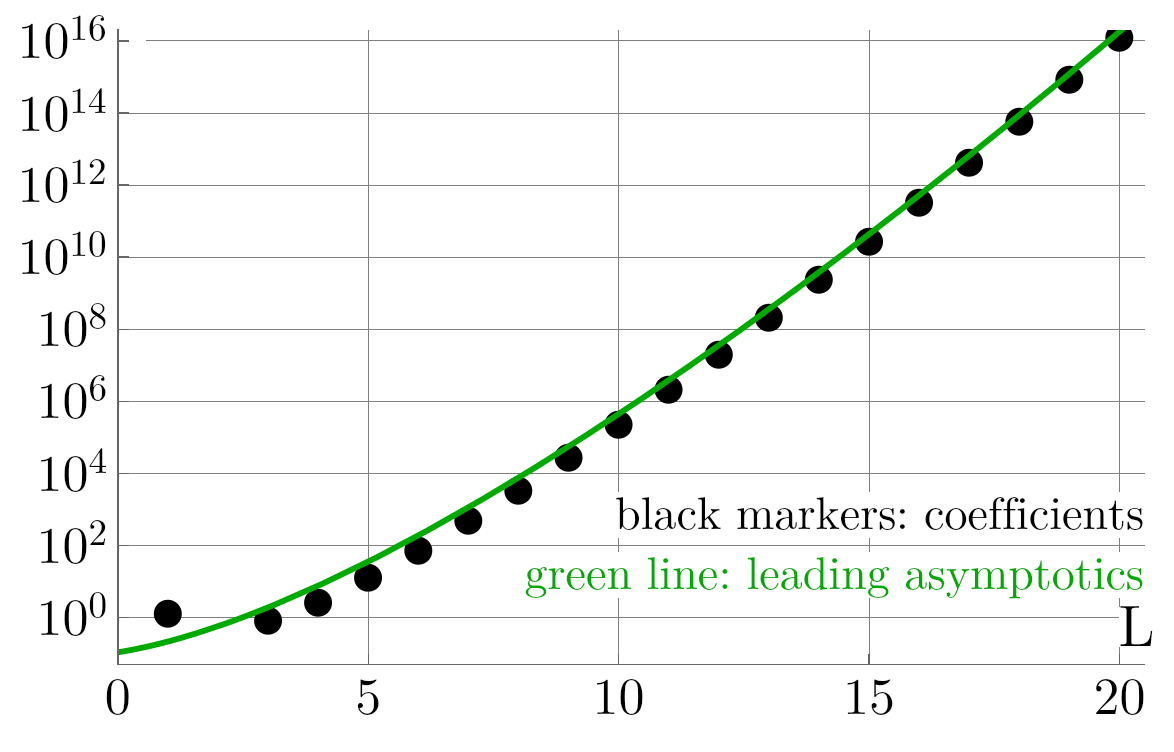}
		\subcaption{}
		\label{fig:prim_asymptotics}
	\end{subfigure}
	\begin{subfigure}{ .49 \linewidth}
		\centering
		{\small Subleading correction to $p_L$ for $N=1$}\\
		\includegraphics[width=\linewidth]{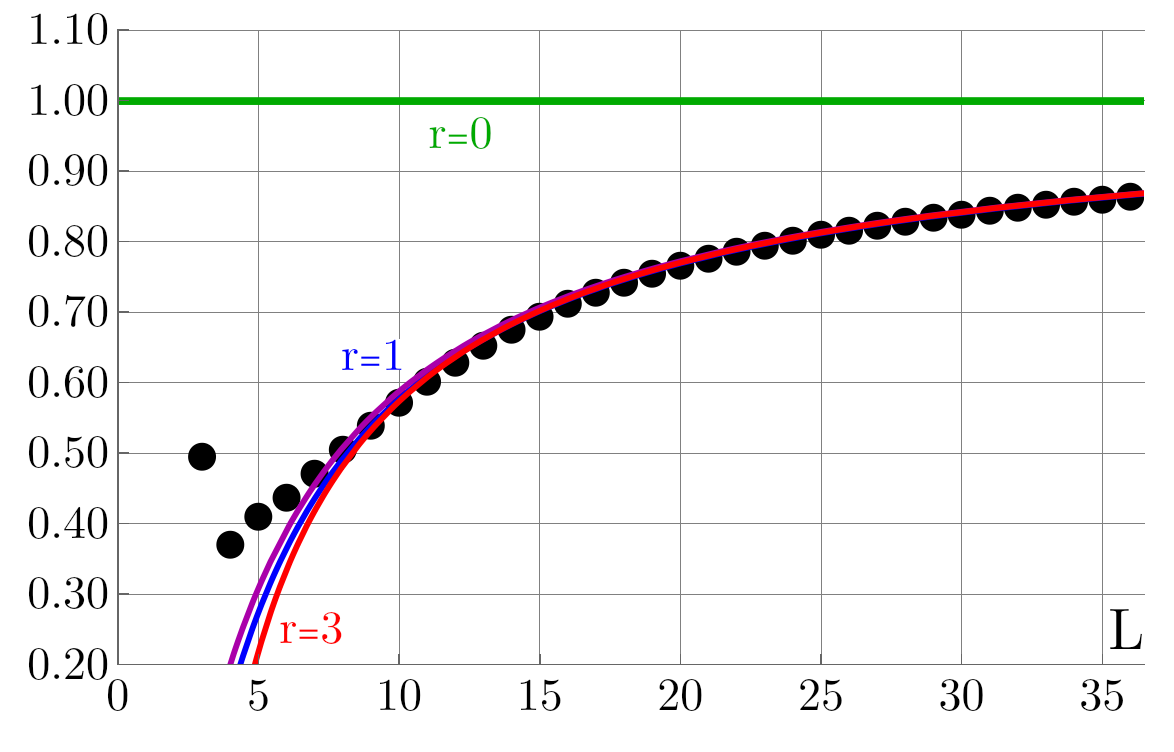}
		\subcaption{}
		\label{fig:prim_asymptotics_correction}
	\end{subfigure}
	\caption{\textbf{(a)} First 20 coefficients $p_L$ of $\text{prim}(\hbar_\ren)$, for $N=1$, logarithmic plot. The green line shows the leading asymptotics according to \cref{primitive_asymptotics}, only including the first term in the parenthesis.  \textbf{(b)} The same data points, but scaled to the leading asymptotics. They converge to unity as $L\rightarrow \infty$. Colored lines show the first three subleading corrections. Compare to \cref{fig:Z_asymptotics_correction} and note the different plot scale. }
	\label{fig:prim_asymptotics_both}
\end{figure}

For low loop orders, the exact terms $p_L$ from \cref{primitive_generating_function} are shown together with the leading term of \cref{primitive_asymptotics} in \cref{fig:prim_asymptotics}. Comparing this to the analogous plot for $\partial^4_j Z(\hbar, j)$ (\cref{fig:Z_asymptotics}), we note that the asymptotic expansion of primitive graphs is much less accurate in capturing the terms at low order, compared to the asymptotic expansion of $\partial_j^4 Z$. For example, at $L=16$ loops, the leading asymptotics for $p_{16}$ is roughly $30\%$ off the true value, while for $[\hbar^{16}]\partial^4_j Z$, the difference is only about 1\%. A similar picture holds for the subleading corrections to the asymptotics: From \cref{fig:Z_asymptotics_correction}, we see that the first subleading order, $r=1$, starts to capture the data only for $L >8$ loops. A closer investigation of the remaining residue indicates that the second subleading order, $r=2$, reliably improves the accuracy only at $L>25$. Both these thresholds depend on the value of $N$. In \cref{fig:minimum_loop_order}, we show the minimum loop order required to reach a certain accuracy with the leading, first subleading, 5\textsuperscript{th} subleading, or 9\textsuperscript{th} subleading order. The qualitative takeaway is that below roughly 25 loops, the asymptotic expansion is unreliable and accurate results are mostly coincidence of cancellations for particular values of $N$.

\begin{figure}[htb]
	\begin{subfigure}{ .49 \linewidth}
		\centering
		{\small $p_{20}$ relative to asymptotics}\\
		\includegraphics[width=\linewidth]{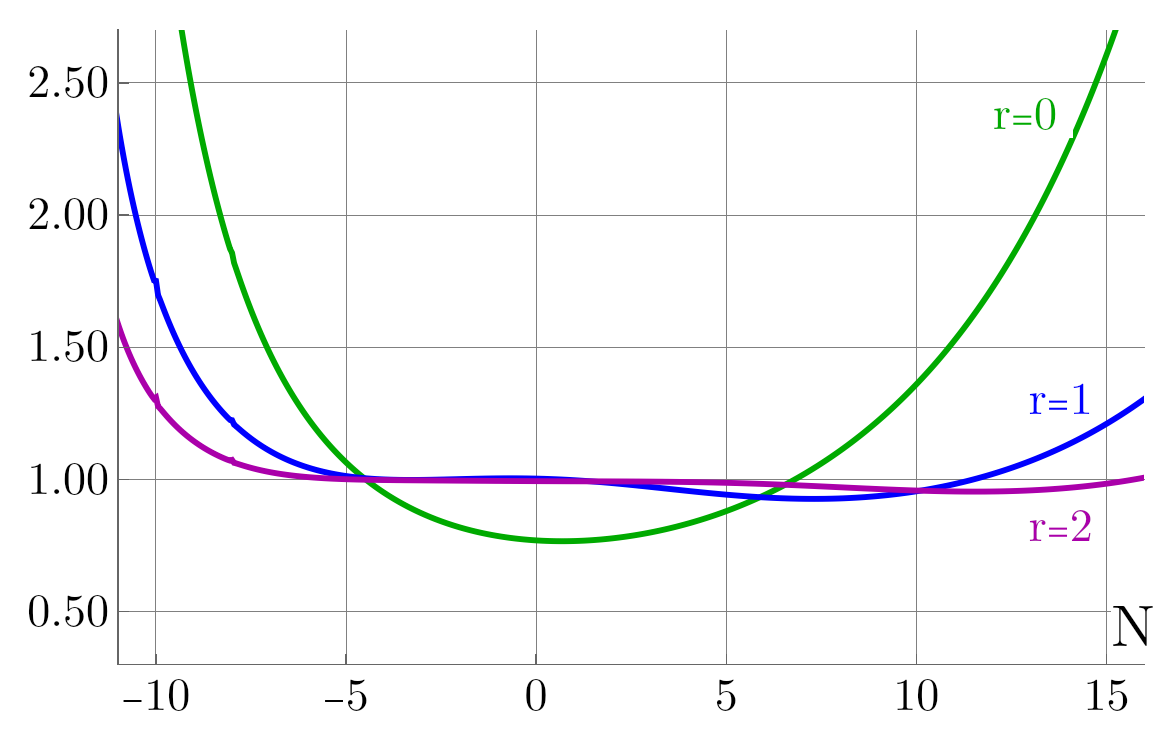}
		\subcaption{}
		\label{fig:prim_relative_N}
	\end{subfigure}
	\begin{subfigure}{ .49 \linewidth}
		\centering
		{\small Minimum loop order for asymptotics}\\
		\includegraphics[width=\linewidth]{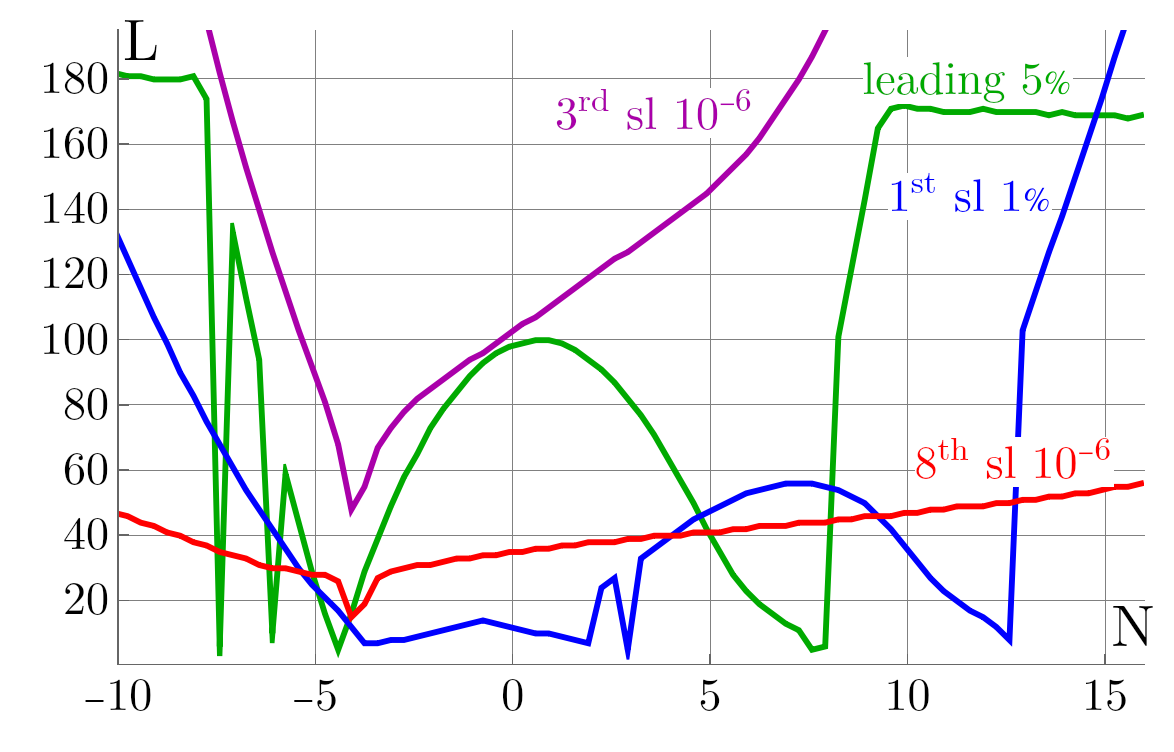}
		\subcaption{}
		\label{fig:minimum_loop_order}
	\end{subfigure}
	\caption{\textbf{(a)} $p_{20}(N)$ from \cref{primitive_generating_function} divided by the asymptotics \cref{primitive_asymptotics}, where both are functions of $N$. The leading asymptotics ($r=0$) captures the $N$ dependence up to a factor $\sim 2$ in this regime, higher corrections are close to unity for not too large $\abs{N}$. The accuracy might appear poor at first, but $p_L(N)$ is  strongly dependent on $N$, we have $p_{20}(15)/p_{20}(1)>10^3$. \textbf{(b)} Minimum loop order where the asymptotic expansion coincides with the true count of primitive graphs to a given accuracy. The leading asymptotics might need as much as $L\geq 180$ to reach 5\% accuracy. From $L\geq 30$ loops, one can expect that including the first subleading terms improves accuracy, as evidenced from the lines where the 3\textsuperscript{rd} respective 8\textsuperscript{th} corrections reach $10^{-6}$ accuracy.
	}
\end{figure}

To investigate the $N$ dependence, we divide $p_{20}(N)$ by the asymptotic series \cref{primitive_asymptotics} for $L=20$ up to a certain maximum order $r\in \left \lbrace 0,1,2 \right \rbrace $, see \cref{fig:prim_relative_N}. We see that the leading asymptotics, $r=0$, essentially fails at reproducing the actual $N$ dependence, while already $r=1$ gives a reasonable result for $-5 \leq N \leq 10$, and $r=2$ is even more accurate, but still differs visibly for $N \approx 10$. Despite the agreement, note that the limit $N\rightarrow \infty$ does not commute with the asymptotic expansion: Every term in the perturbation series is a polynomial, and therefore diverges as $N \rightarrow \infty$ (at fixed order $L$). On the other hand, the asymptotic expansion \cref{primitive_asymptotics} vanishes as $N\rightarrow \infty$ due to the factor $\frac{1}{\Gamma(\frac{N+4}{2})}$. 

The asymptotics \cref{primitive_asymptotics} has zeros at all negative integer $N$ smaller than $-2$, due to the factor $\Gamma \big(\frac{N+4}{2}\big)$ in the denominator. In view of \cref{lem:circuit_polynomial_zeros}, a zero at $N=-2j$ represent a summation over all channels of a $(2j+2)$-valent subgraph. A priori, not all such channels would give rise to a primitive graph, but in the limit $L\rightarrow \infty$, indeed almost all channels are present. This is a curious alternative perspective on the known fact that asymptotically, almost all simple graphs are primitive, that is, the leading asymptotic number of primitive graphs coincides with the leading asymptotic number of all simple graphs. Moreover, despite the rather slow convergence of the absolute value of $p_L$ to its asymptotics (\cref{fig:prim_asymptotics_correction}), the location of these zeroes converges rapidly towards negative even integers, see \cref{fig:largest_roots}. This effect was called \emph{superfast convergence} in \cite{pobylitsa_superfast_2008}. An alternative perspective is that $p_L(N)$ is not factorially divergent when $N$ is  exactly a negative even integer $\leq 4$ \cite{pobylitsa_anharmonic_2008}.

Later, when we compare to numerical data of 4-dimensional $\phi^4$ theory, we will consider the growth ratio $r_L$ defined in \cref{def:rL}. By \cref{primitive_asymptotics}, this ratio is an asymptotic power series in $\frac 1 L$, starting with
\begin{align}\label{primitives_ratio}
	r_L(p) &:= \frac1 L \frac{p_{L+1}}{p_L} \sim \frac 2 3 + \frac{N+5}{3}\frac 1 L - \frac{3N^2-4N-80}{24}\frac{1}{L^2} + \frac{5N^3+8N^2+40N+352}{48}\frac{1}{L^3}+\ldots
\end{align}
Consequently, if we plot the exact data (\cref{primitive_generating_function}) as a function of $\frac 1 L$, we expect to see a finite limit $\frac 2 3$ as $\frac 1 L \rightarrow 0$, and a linear convergence towards this limit with a slope $\frac{N+5}{3}$. This is indeed what we observe in \cref{fig:0dim_primitive_ratio}, but it is striking that this behaviour only sets in for $L \geq 25$. Conversely, for smaller $L$, the data points almost lie on a linear function, too, but with a \emph{wrong} slope and limit which does not reflect the true behaviour as $L\rightarrow \infty$. Not only that, 
but these wrong would-be asymptotics even mimics the $N$ dependence of \cref{primitives_ratio}: For various values of $N$, one obtains a slope that linearly depends on $N$, and the $y$-axis intersect is almost independent of $N$, close to $0.64$. This value is  $\approx 5\%$ smaller than the true limit $\frac 2 3$.
To illustrate this effect, we did linear fits to the data points $r_L$ for $10\leq L\leq 18$, (marked red in \cref{fig:0dim_primitive_ratio}), and extrapolated their slopes to $L=\infty$ (red lines). For comparison, the true asymptotics (\cref{primitives_ratio}) is drawn as green lines.

\begin{figure}[htbp]
	\begin{subfigure}{ .49 \linewidth}
		\centering
		{\small Growth rate $r_L$ of $p_L$,~ 0 dimensions}\\
		\includegraphics[width=\linewidth]{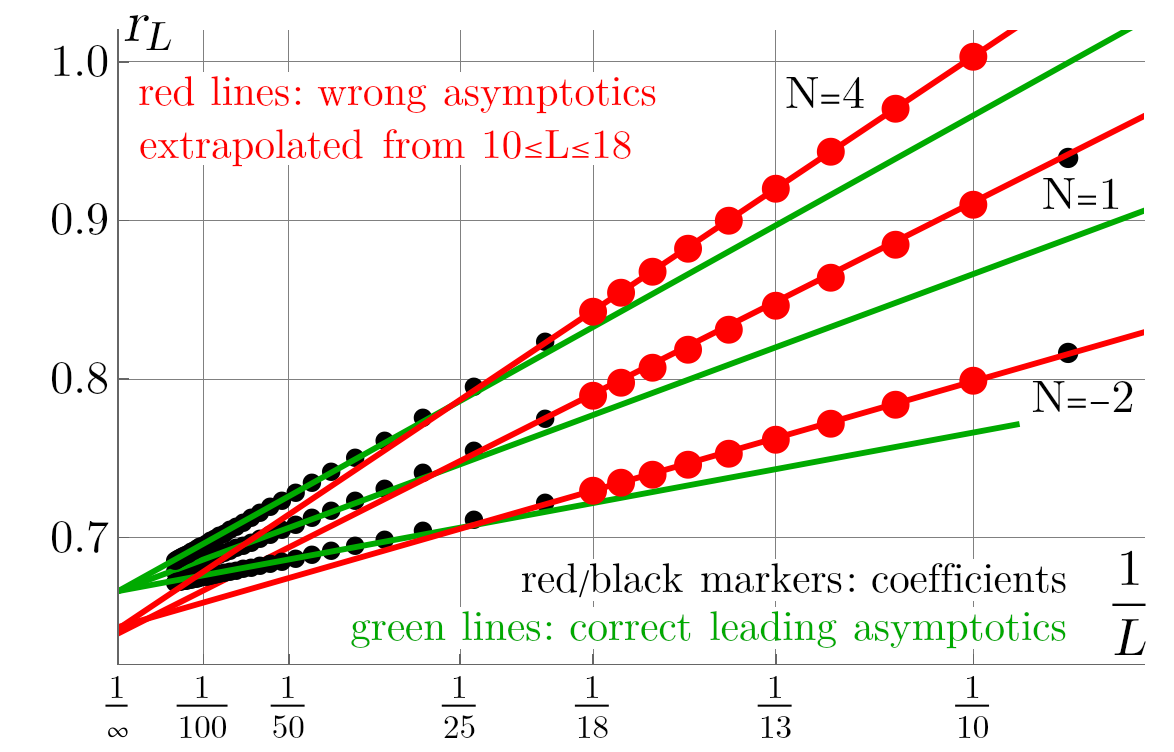}
		\subcaption{}
		\label{fig:0dim_primitive_ratio}
	\end{subfigure}
	\begin{subfigure}{ .49 \linewidth}
		\centering
		{\small Mean order $\left \langle k \right \rangle_L $ of $T(g,N)$}\\
		\includegraphics[width=\linewidth]{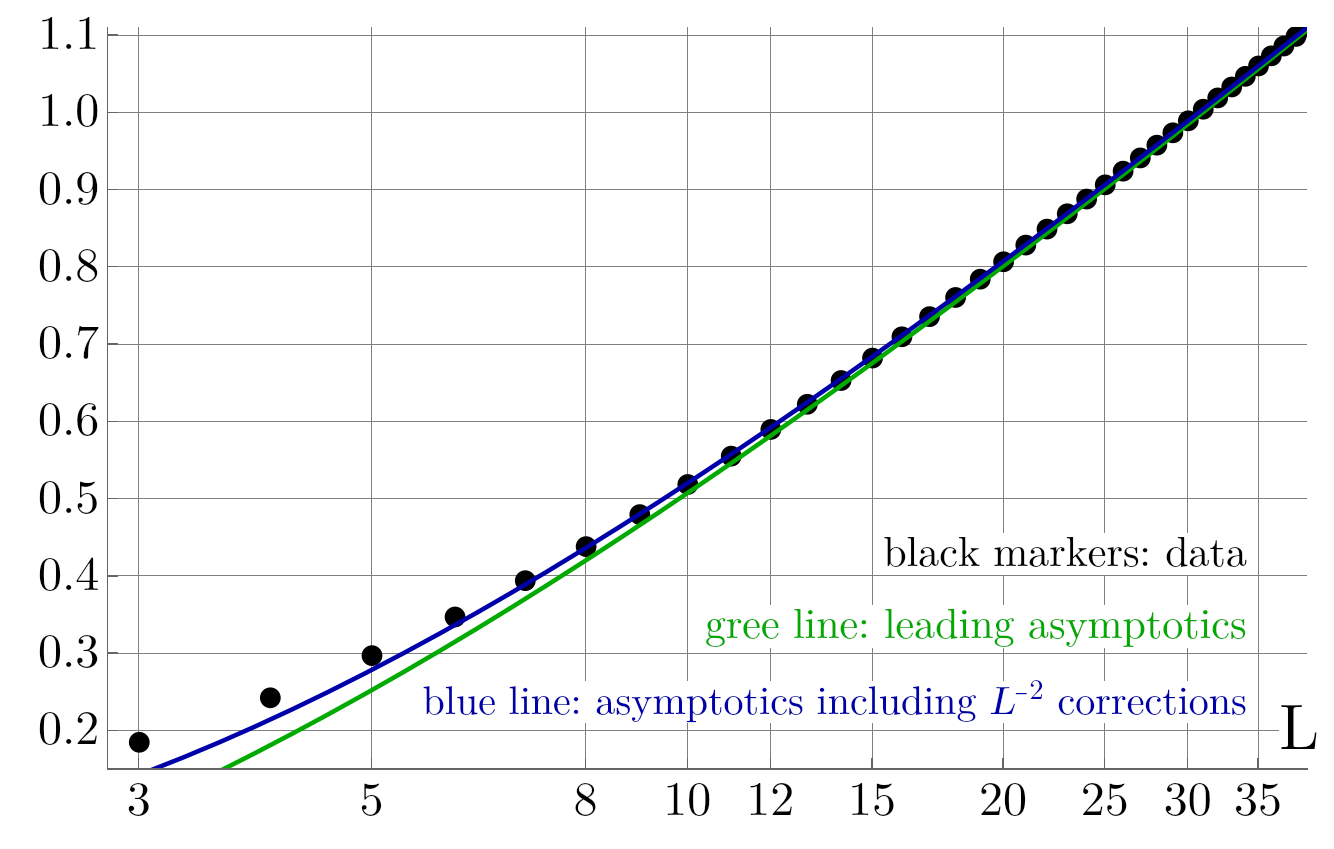}
		\subcaption{}
		\label{fig:mean_order}
	\end{subfigure}
	\caption{	\textbf{(a)} 	Growth ratio $r_L$ (\cref{primitives_ratio}) for $L\leq 150$, for three selected values of $N$. Green lines show the exact asymptotics including $\frac 1 L$ terms. Red lines show the wrong asymptotics obtained from a linear fit of $10 \leq L \leq 18$ loops (data points marked in red). Note that the red points give rise to the expected $N$ dependence (i.e. convergence to a $N$-independent limit at $L\rightarrow \infty$, with a slope that grows linearly in $N$, but the numerical value of these constants is not the true asymptotic one. The data starts to match the true asymptotics at $L\approx 25$. \textbf{(b)} Mean order $\left \langle k \right \rangle _L$ of the polynomial $p_L(N)$ (\cref{def:average_order}, together with its asymptotics (\cref{average_order_asymptotics}). The asymptotics grows logarithmically, which is a straight line in this log linear plot.}
\end{figure}

\subsection{Asymptotics of Martin invariants} \label{sec:0dim_asymptotics_Martin}
By \cref{Martin_generating_function}, the sum of all Martin invariants at $L$ loops, $M^{[1]}_L$, is a simple transformation of the generating function of primitive graphs. We can infer its asymptotics from \cref{primitive_asymptotics}, 
\begin{align*}
	M^{[1]}_L &\sim \scalemath{.9}{\frac{2^Le^{-\frac 32}}{24 \sqrt{\pi}} \Gamma \!\left( L+\frac 32 \right) \left( 1-\frac{15}{4(L+\frac 12)} + \frac{63}{32 (L+\frac 12)(L-\frac 12)} - \frac{1263}{128(L+\frac 12)(L-\frac 12) (L-\frac 32)}+\ldots \right) .}
\end{align*}
The growth rate is three times the one of primitives (\cref{primitives_ratio}) at $N=-2$,
\begin{align}\label{Martin_growth_rate}
	r_L\big( M^{[1]} \big)   &:= \frac{M^{[1]}_{L+1}}{L\cdot M^{[1]}_L}\sim 2 + 3 \frac 1 L + \frac{15}{2}\frac{1}{L^2}+ \frac{33}{2}\frac{1}{L^3}+ \frac{753}{8}\frac{1}{L^4} + \frac{3243}{4}\frac{1}{L^5}+ \ldots 
\end{align}
In \cref{fig:0dim_primitive_ratio}, we see that the sum of Martin invariants shows the same effect as the sum of primitive graphs, namely the growth rate $r_L$ assumes the true asymptotic behaviour only above 25 loops.
Dividing $M^{[1]}_L$ by $N^{(\text{Aut})}_L$, we obtain the average Martin invariant of an $L$-loop primitive, where the graphs are weighted by their automorphism symmetry factor. Its asymptotics at $L\rightarrow \infty$ is
\begin{align}\label{Martin_average}
	\frac{M^{[1]}_L}{N^{(\text{Aut})}_L } &= 3^L L^{-\frac 32} \frac{ e^{\frac 94} \sqrt \pi }{ 128 \sqrt 2} \left( 1-\frac{21}{16} \frac 1 L + \frac{673}{512}\frac{1}{L^2} + \frac{82809}{8192} \frac{1}{L^3} + \ldots \right) .
\end{align}

\subsection{The large-$N$ limit}\label{sec:0dim_largeN}

To examine the large-$N$ limit, introduce a variable $U=\hbar N$, so that each occurence of $\hbar $ in the generating functions in \cref{sec:0dim_external} is replaced by $\frac{U}{N}$. The coefficients of $Z(U, j)$ (\cref{Z_j_series}) still contain positive powers of $N$, but those are entirely due to disconnected graphs. In $W$, and all subsequent series, the coefficients are polynomials in $\frac 1 N$. In a large-$N$ limit (where $U$ is kept fixed), all subleading terms vanish and one obtains a power series in $U$. In particular, the coefficients of $\operatorname{prim}(\hbar)$ (\cref{primitive_generating_function}) have no leading contribution beyond order $U^1$.  This says that to a fixed order $\frac{1}{N^k}$, only finitely many primitive graphs will contribute, and all primitives beyond a certain loop order vanish to that order in $\frac{1}{N}$. 

\begin{example}
	By \cref{W_derivatives}, the series for connected vertex-type graphs starts with
	\begin{align*}
		\partial^4_j W\Big|_{j=0}\Big(U, \frac 1 N\Big) &= 1 + \left( \frac 5 6+\frac 8 3 \frac 1 N \right) U + \left( \frac{7}{12} + \frac{25}{6}\frac 1 N + \frac{23}{3}\frac{1}{N^2} \right) U^2 + \ldots. 
	\end{align*}
	The large-$N$ limit of this series is $1+\frac 5 6 U + \frac{7}{12}U^2 + \ldots$. Conversely, the generating series for primitive graphs starts with 
	\begin{align*}
		\operatorname{prim} \Big(U, \frac 1 N\Big)  &= \scalemath{.9}{\left( \frac 1 6 + \frac 4 3 \frac 1 N \right) U + \left( 0 +  \frac{5}{27} \frac{1}{N^2} + \frac{22}{27} \frac{1}{N^3} \right) U^3 +\left( 0+\frac{2}{81}\frac{1}{N^2}+\ldots \right) U^4 + \ldots .}
	\end{align*}
	In this case, the large-$N$ limit of higher-order terms vanishes, and $\prim(U,0)=\frac 1 6 U $. 
\end{example}

The replacement $\hbar \mapsto \frac U N$ leads to an extra factor $N^n$ in \cref{Z_hj_coefficient2}. The asymptotic growth coefficient $a=\frac 3 2$ of \cref{Z_hj_asymptotics_coefficients} then turns into $a=\frac 3 2 N$, which diverges in the limit $N\rightarrow \infty$. The asymptotic growth of series coefficients is $\propto a^{-n}\Gamma(n+c_s)$, hence, unbounded growth of $a$ suggests that the large-$N$ series does not diverge factorially\footnote{The growth coefficient $a$ refers to doing the the large-$L$ expansion first, before the large-$N$ expansion. In order to \emph{prove} convergence of the large-$N$ expansion at finite coupling, one would instead have to do the large-$N$ expansion first. }.
Convergence of the large-$N$ expansion means that with growing $L$, the coefficient of highest order in $N$ in the  polynomial $p_L(N)$ grows much slower than the sum of all terms (which is the evaluation of $p_L(N)$ at fixed $N$). This causes the sum to diverge factorially with growing $L$, while every finite order in the large-$N$ expansion is convergent.  

We can make this behaviour more explicit by considering the average order $\left \langle k \right \rangle _L$ of the   polynomials $p_L(N)$, introduced in \cref{def:average_order}. Computing the $N$-derivative of the asymptotics of $p_L(N)$ from \cref{primitive_asymptotics}, we find the asymptotics of $\left \langle k \right \rangle _L$: 
\begin{align}\label{average_order_asymptotics}
\left \langle k \right \rangle _L = \frac{\partial_N \, p_L}{p_L}\Big|_{N=1} &\sim \frac 12 \ln(L) + \frac 12 \gamma_E - \frac{25}{12} + \frac 3 2 \ln(2) + \frac {11} 8 \frac 1 L -\frac{145}{96} \frac{1}{L^2}+ \mathcal O \left( \frac{1}{L^3} \right) .
\end{align}
Here, $\gamma_E$ is the Euler-Mascheroni constant. This asymptotics together with data points for $L\leq 40$ is shown in \cref{fig:mean_order}. 

The average order $\left \langle k \right \rangle _L$ grows only logarithmically with the loop order, whereas the \emph{degree} of $T(G,N)$ grows linearly (see \cref{lem:TGN_bound} below). This means that, as $L$ increases, the polynomials are more and more dominated by terms of low order in $N$, that is, the coefficient of largest order in $N$ are vanishingly small compared to the coefficients at low order. 
It explains why it is possible that each fixed order in the large-$N$ expansion converges, while the full perturbative series (which would be the sum of all orders in the large-$N$ expansion) is factorially divergent. More details of the resummed perturbation series for the case of vacuum graphs  can be found in \cite{benedetti_smalln_2024}.

\section{Classification of leading $N$ dependence}\label{sec:leading}

\subsection{Dual graphs}\label{sec:dual}

In studying the 0-dimensional path integral in \cref{sec:0dim}, we have used the Hubbard-Stratonovich transformation \cite{stratonovich_method_1958,hubbard_calculation_1959,byczuk_generalized_2023} as a formal identity between certain integrals that share the same power series expansion (\cref{Hubbard_Stratonovich_transformation}). 
In the present section, we interpret the same transformation as a mapping between the $\O(N)$-symmetric $\phi^4$ theory, and a \enquote{dual} theory of a field $\sigma$ with interaction $\frac N 2 \hbar \ln \big( 1-\frac{\sigma}{\sqrt 3} \big) $ whose vacuum path integral (in the 0-dimensional case) is \cref{phi4_gaussian}. 
This is a common tool also used e.g. in tensor models \cite{bonzom_colored_2017,lionni_multicritical_2019, lionni_colored_2018}, loop vertex expansion \cite{rivasseau_loop_2018}, or for a duality between 3-dimensional $\phi^4$ theory and string theory \cite{klebanov_ads_2002}.

Both the original field $\phi$ and the dual field $\sigma$ admit a perturbative expansion in terms of Feynman graphs. These expansions produce the same power series when Feynman rules are applied, but the relevant graphs and their Feynman rules are different in both cases. In Feynman graphs of the original $\phi^4$ theory (\cref{fig:Z_graphs_vaccum}), all vertices are 4-valent and contribute $\hbar$, and the $N$ dependence arises from decompositions of these vertices (\cref{fig:vertex_decomposition}). 
The dual field $\sigma$ has interaction vertices of all valences $n\geq 1$ for which the Feynman rules give $\frac{(n-1)!}{2\cdot 3^{\frac n2}}N$. These vertex Feynman rules do not contain $\hbar$. Instead, each edge of $\sigma$ contributes a factor $\hbar$. The vertices of $\sigma$ correspond to the circuits in the decompositions of the graphs of $\phi$, and the edges of $\sigma$ correspond to the vertices of $\phi$. 
The mapping between graphs $G$ of $\phi^4$ theory and dual graphs $\dual$ of the $\sigma$ model is therefore one to many, in the sense that it does not map each graph $G$, but each decomposition of a graph $G$, to one dual graph $G^\star$.

\begin{definition}\label{def:dual}
	Let $G$ be a completion (\cref{def:completion_decompletion}). For a fixed decomposition of $G$ into circuits, the Hubbard-Stratonovich \emph{dual graph} $\dual$ is constructed as follows: Each of the circuits becomes a vertex of $\dual$ and for each vertex in $G$ adjacent to two (possibly identical) circuits there is an edge of $\dual$ connecting the corresponding vertices.
\end{definition}
Visually, each circuit in the decomposition of $G$ is being contracted to a vertex, preserving adjacency, see \cref{fig:dual}. In particular, 
\begin{align}\label{dual_edges_vertices}
	\abs{E_{\dual}}= \abs{V_G}.
\end{align}
\Cref{fig:dual} shows an example of non-isomorphic duals $G^\star$ that are obtained from distinct decompositions of the same $\phi^4$-graph $G$. 
The first few connected vacuum graphs of $\sigma$ are shown in \cref{fig:Z_graphs_vaccum_sigma}, they reproduce the terms of the connected graphs of $\phi$ in \cref{fig:Z_graphs_vaccum}.

Since the concept of \emph{planar} duality is an important ingredient to the large-$N$ expansion of many quantum field theories, we want to stress at this point that the duality of \cref{def:dual} is \emph{not} planar duality, and neither the graphs $G$ nor their dual graphs $G^\star$ are typically planar.

\begin{figure}[htb]
	\centering 
	\begin{tikzpicture}[scale=.75]
		
		\coordinate(x0) at (-4,3);
		\node[vertex](v1) at ($(x0) +(0,-1.2)$){};
		\node[vertex](v2) at ($(x0) +(0,0)$){};
		\node[vertex](v3) at ($(x0) +(0,1.2)$){};
		\node[vertex](v4) at ($(x0) +(-1.2,.6)$){};
		\node[vertex](v5) at ($(x0) +(-1.2,-.6)$){};
		\node[vertex](v6) at ($(x0) +(1.2,.6)$){};
		\node[vertex](v7) at ($(x0) +(1.2,-.6)$){};
		\draw[edge] (v1) -- (v4);
		\draw[edge] (v2) -- (v4);
		\draw[edge] (v3) -- (v4);
		\draw[edge] (v1) -- (v5);
		\draw[edge] (v2) -- (v5);
		\draw[edge] (v3) -- (v5);
		\draw[edge] (v4) -- (v5);
		\draw[edge] (v1) -- (v6);
		\draw[edge] (v2) -- (v6);
		\draw[edge] (v3) -- (v6);
		\draw[edge] (v1) -- (v7);
		\draw[edge] (v2) -- (v7);
		\draw[edge] (v3) -- (v7);
		\draw[edge] (v6) -- (v7);
		
		\draw[line width=.4mm, ->] ($(x0)+(1.7,0) $) --+(.7,0);
		
		\coordinate(x0) at (0,3);
		\coordinate (v1) at ($(x0) +(-1.2,.6)$){};
		\coordinate (v2) at ($(x0) +(-1.2,-.6)$){};
		\coordinate (v3) at ($(x0) +(0,1.2)$){};
		\coordinate (v4) at ($(x0) +(0,0)$){};
		\coordinate (v5) at ($(x0) +(0,-1.2)$){};
		\coordinate (v6) at ($(x0) +(1.2,.6)$){};
		\coordinate (v7) at ($(x0) +(1.2,-.6)$){};
		\node[vertex,lightgray] at (v1){};
		\node[vertex,lightgray] at (v2){};
		\node[vertex,lightgray] at (v3){};
		\node[vertex,lightgray] at (v4){};
		\node[vertex,lightgray] at (v5){};
		\node[vertex,lightgray] at (v6){};
		\node[vertex,lightgray] at (v7){};
		\draw[edge, rounded corners=8pt ] (v1) -- (v2)--(v4)-- cycle;
		\draw[edge, rounded corners=8pt ] (v4) -- (v6)--(v7) --cycle;
		\draw[edge, rounded corners=8pt ] (v1) -- (v3)--(v6)--(v5) --cycle;
		\draw[edge, rounded corners=8pt ] (v2) -- (v3)--(v7)--(v5) --cycle;
		
		\draw[line width=.4mm, ->] ($(x0)+(1.7,0) $) --+(.7,0);
		
		\coordinate(x0) at (4,3);
		\coordinate (v1) at ($(x0) +(-1.2,.6)$){};
		\coordinate (v2) at ($(x0) +(-1.2,-.6)$){};
		\coordinate (v3) at ($(x0) +(0,1.2)$){};
		\coordinate (v4) at ($(x0) +(0,0)$){};
		\coordinate (v5) at ($(x0) +(0,-1.2)$){};
		\coordinate (v6) at ($(x0) +(1.2,.6)$){};
		\coordinate (v7) at ($(x0) +(1.2,-.6)$){};
		\draw[thick, rounded corners=15pt, fill=red, fill opacity=.2] (v1) -- (v2)--(v4)-- cycle;
		\draw[thick, rounded corners=15pt, fill=green, fill opacity=.2] (v4) -- (v6)--(v7) --cycle;
		\draw[thick, rounded corners=15pt, fill=blue, fill opacity=.2] (v1) -- (v3)--(v6)--(v5) --cycle;
		\draw[thick, rounded corners=15pt, fill=yellow, fill opacity=.2] (v2) -- (v3)--(v7)--(v5) --cycle;
		
		\draw[line width=.4mm, ->] ($(x0)+(1.7,0) $) --+(.7,0);
		
		\coordinate(x0) at (8,3);
		\node[vertex,fill=red] (v1) at ($(x0) +(- .8,0)$){};
		\node[vertex,fill=blue] (v2) at ($(x0) +(0,.7)$){};
		\node[vertex,fill=yellow] (v3) at ($(x0) +(0,-.7)$){};
		\node[vertex,fill=green] (v4) at ($(x0) +(.8,0)$){};
		\draw[edge, bend angle =20,bend right] (v2) to (v3);
		\draw[edge, bend angle =20,bend left] (v2) to (v3);
		\draw[edge] (v1) to (v2);
		\draw[edge] (v1) to (v3);
		\draw[edge] (v1) to (v4);
		\draw[edge ] (v2) to (v4);
		\draw[edge ] (v3) to (v4);

		\coordinate(x0) at (-4,0);
		\node[vertex](v1) at ($(x0) +(0,-1.2)$){};
		\node[vertex](v2) at ($(x0) +(0,0)$){};
		\node[vertex](v3) at ($(x0) +(0,1.2)$){};
		\node[vertex](v4) at ($(x0) +(-1.2,.6)$){};
		\node[vertex](v5) at ($(x0) +(-1.2,-.6)$){};
		\node[vertex](v6) at ($(x0) +(1.2,.6)$){};
		\node[vertex](v7) at ($(x0) +(1.2,-.6)$){};
		\draw[edge] (v1) -- (v4);
		\draw[edge] (v2) -- (v4);
		\draw[edge] (v3) -- (v4);
		\draw[edge] (v1) -- (v5);
		\draw[edge] (v2) -- (v5);
		\draw[edge] (v3) -- (v5);
		\draw[edge] (v4) -- (v5);
		
		\draw[edge] (v1) -- (v6);
		\draw[edge] (v2) -- (v6);
		\draw[edge] (v3) -- (v6);
		\draw[edge] (v1) -- (v7);
		\draw[edge] (v2) -- (v7);
		\draw[edge] (v3) -- (v7);
		\draw[edge] (v6) -- (v7);
		
		\draw[line width=.4mm, ->] ($(x0)+(1.7,0) $) --+(.7,0);
		
		\coordinate(x0) at (0,0);
		\coordinate (v1) at ($(x0) +(-1.2,.6)$){};
		\coordinate (v2) at ($(x0) +(-1.2,-.6)$){};
		\coordinate (v3) at ($(x0) +(0,1.2)$){};
		\coordinate (v4) at ($(x0) +(0,0)$){};
		\coordinate (v5) at ($(x0) +(0,-1.2)$){};
		\coordinate (v6) at ($(x0) +(1.2,.6)$){};
		\coordinate (v7) at ($(x0) +(1.2,-.6)$){};
		\node[vertex,lightgray] at (v1){};
		\node[vertex,lightgray] at (v2){};
		\node[vertex,lightgray] at (v3){};
		\node[vertex,lightgray] at (v4){};
		\node[vertex,lightgray] at (v5){};
		\node[vertex,lightgray] at (v6){};
		\node[vertex,lightgray] at (v7){};
		\draw[edge, rounded corners=8pt] (v1) -- (v2)--(v4)-- (v6) -- (v3) -- cycle;
		\draw[edge, rounded corners=8pt] (v1) -- (v4)--(v7)-- (v5) --cycle;
		\draw[edge, rounded corners=8pt] (v2) -- (v5)--(v6)--(v7) -- (v3)--cycle;
		
		\draw[line width=.4mm, ->] ($(x0)+(1.7,0) $) --+(.7,0);
		
		\coordinate(x0) at (4,0);
		\coordinate (v1) at ($(x0) +(-1.2,.6)$){};
		\coordinate (v2) at ($(x0) +(-1.2,-.6)$){};
		\coordinate (v3) at ($(x0) +(0,1.2)$){};
		\coordinate (v4) at ($(x0) +(0,0)$){};
		\coordinate (v5) at ($(x0) +(0,-1.2)$){};
		\coordinate (v6) at ($(x0) +(1.2,.6)$){};
		\coordinate (v7) at ($(x0) +(1.2,-.6)$){};
		\draw[thick, rounded corners=15pt, fill=red, fill opacity=.2] (v1) -- (v2)--(v4)-- (v6) -- (v3) -- cycle;
		\draw[thick, rounded corners=15pt, fill=green, fill opacity=.2] (v1) -- (v4)--(v7)-- (v5) --cycle;
		\draw[thick, rounded corners=15pt, fill=blue, fill opacity=.2] (v2) -- (v5)--(v6)--(v7) -- (v3)--cycle;
		
		\draw[line width=.4mm, ->] ($(x0)+(1.7,0) $) --+(.7,0);
		
		\coordinate(x0) at (7.5,0);
		\node[vertex,fill=red] (v1) at ($(x0) +(- .2,.8)$){};
		\node[vertex,fill=blue] (v2) at ($(x0) +(.6,.3)$){};
		\node[vertex,fill=green] (v3) at ($(x0) +(.5,-.6)$){};
		\draw[edge, bend angle =50,bend left] (v1) to (v2);
		\draw[edge, bend angle =30,bend right] (v1) to (v2);
		\draw[edge, bend angle =10,bend left] (v1) to (v2);
		\draw[edge, bend angle =20,bend right] (v1) to (v3);
		\draw[edge, bend angle =50,bend right] (v1) to (v3);
		\draw[edge, bend angle =30,bend right] (v2) to (v3);
		\draw[edge, bend angle =30,bend left] (v2) to (v3);

		\coordinate(x0) at (-4,-3);
		\node[vertex](v1) at ($(x0) +(0,-1.2)$){};
		\node[vertex](v2) at ($(x0) +(0,0)$){};
		\node[vertex](v3) at ($(x0) +(0,1.2)$){};
		\node[vertex](v4) at ($(x0) +(-1.2,.6)$){};
		\node[vertex](v5) at ($(x0) +(-1.2,-.6)$){};
		\node[vertex](v6) at ($(x0) +(1.2,.6)$){};
		\node[vertex](v7) at ($(x0) +(1.2,-.6)$){};
		\draw[edge] (v1) -- (v4);
		\draw[edge] (v2) -- (v4);
		\draw[edge] (v3) -- (v4);
		\draw[edge] (v1) -- (v5);
		\draw[edge] (v2) -- (v5);
		\draw[edge] (v3) -- (v5);
		\draw[edge] (v4) -- (v5);
		
		\draw[edge] (v1) -- (v6);
		\draw[edge] (v2) -- (v6);
		\draw[edge] (v3) -- (v6);
		\draw[edge] (v1) -- (v7);
		\draw[edge] (v2) -- (v7);
		\draw[edge] (v3) -- (v7);
		\draw[edge] (v6) -- (v7);
		
		\draw[line width=.4mm, ->] ($(x0)+(1.7,0) $) --+(.7,0);
		
		\coordinate(x0) at (0,-3);
		\coordinate (v1) at ($(x0) +(-1.2,.6)$){};
		\coordinate (v2) at ($(x0) +(-1.2,-.6)$){};
		\coordinate (v3) at ($(x0) +(0,1.2)$){};
		\coordinate (v4) at ($(x0) +(0,0)$){};
		\coordinate (v5) at ($(x0) +(0,-1.2)$){};
		\coordinate (v6) at ($(x0) +(1.2,.6)$){};
		\coordinate (v7) at ($(x0) +(1.2,-.6)$){};
		\node[vertex,lightgray] at (v1){};
		\node[vertex,lightgray] at (v2){};
		\node[vertex,lightgray] at (v3){};
		\node[vertex,lightgray] at (v4){};
		\node[vertex,lightgray] at (v5){};
		\node[vertex,lightgray] at (v6){};
		\node[vertex,lightgray] at (v7){};
		\draw[edge, rounded corners=8pt] (v1) -- (v5)--(v7)-- (v3) -- (v6) --(v7) -- (v4) --(v2) --(v3) -- cycle;
		\draw[edge, rounded corners=8pt] (v1) -- (v2)--(v5)-- (v6) --(v4) --cycle;

		\draw[line width=.4mm, ->] ($(x0)+(1.7,0) $) --+(.7,0);
		
		\coordinate(x0) at (4,-3);
		\coordinate (v1) at ($(x0) +(-1.2,.6)$){};
		\coordinate (v2) at ($(x0) +(-1.2,-.6)$){};
		\coordinate (v3) at ($(x0) +(0,1.2)$){};
		\coordinate (v4) at ($(x0) +(0,0)$){};
		\coordinate (v5) at ($(x0) +(0,-1.2)$){};
		\coordinate (v6) at ($(x0) +(1.2,.6)$){};
		\coordinate (v7) at ($(x0) +(1.2,-.6)$){};
		\draw[thick, rounded corners=15pt, fill=red, fill opacity=.2](v1) -- (v5)--(v7)-- (v3) -- (v6) --(v7) -- (v4) --(v2) --(v3) -- cycle;
		\draw[thick, rounded corners=15pt, fill=green, fill opacity=.2] (v1) -- (v2)--(v5)-- (v6) --(v4) --cycle;

		\draw[line width=.4mm, ->] ($(x0)+(1.7,0) $) --+(.7,0);
		
		\coordinate(x0) at (7.5,-3);
		\node[vertex,fill=red] (v1) at ($(x0) +(.5,-.3)$){};
		\node[vertex,fill=green] (v2) at ($(x0) +(-.4,.3)$){};
		\draw[edge, bend angle =70,bend left] (v1) to (v2);
		\draw[edge, bend angle =30,bend left] (v1) to (v2);
		\draw[edge ] (v1) to (v2);
		\draw[edge, bend angle =30,bend right] (v1) to (v2);
		\draw[edge, bend angle =70,bend right] (v1) to (v2);
		\draw[edge ] (v1) ..controls +(.6,1.2) and +(1.5,0) ..(v1);
		\draw[edge ] (v1) ..controls +(1.3,-.4) and +(.2,-1.5) .. (v1);

		\coordinate(x0) at (-4,-6);
		\node[vertex](v1) at ($(x0) +(0,-1.2)$){};
		\node[vertex](v2) at ($(x0) +(0,0)$){};
		\node[vertex](v3) at ($(x0) +(0,1.2)$){};
		\node[vertex](v4) at ($(x0) +(-1.2,.6)$){};
		\node[vertex](v5) at ($(x0) +(-1.2,-.6)$){};
		\node[vertex](v6) at ($(x0) +(1.2,.6)$){};
		\node[vertex](v7) at ($(x0) +(1.2,-.6)$){};
		\draw[edge] (v1) -- (v4);
		\draw[edge] (v2) -- (v4);
		\draw[edge] (v3) -- (v4);
		\draw[edge] (v1) -- (v5);
		\draw[edge] (v2) -- (v5);
		\draw[edge] (v3) -- (v5);
		\draw[edge] (v4) -- (v5);
		
		\draw[edge] (v1) -- (v6);
		\draw[edge] (v2) -- (v6);
		\draw[edge] (v3) -- (v6);
		\draw[edge] (v1) -- (v7);
		\draw[edge] (v2) -- (v7);
		\draw[edge] (v3) -- (v7);
		\draw[edge] (v6) -- (v7);
		
		\draw[line width=.4mm, ->] ($(x0)+(1.7,0) $) --+(.7,0);
		
		\coordinate(x0) at (0,-6);
		\coordinate (v1) at ($(x0) +(-1.2,.6)$){};
		\coordinate (v2) at ($(x0) +(-1.2,-.6)$){};
		\coordinate (v3) at ($(x0) +(0,1.2)$){};
		\coordinate (v4) at ($(x0) +(0,0)$){};
		\coordinate (v5) at ($(x0) +(0,-1.2)$){};
		\coordinate (v6) at ($(x0) +(1.2,.6)$){};
		\coordinate (v7) at ($(x0) +(1.2,-.6)$){};
		\node[vertex,lightgray] at (v1){};
		\node[vertex,lightgray] at (v2){};
		\node[vertex,lightgray] at (v3){};
		\node[vertex,lightgray] at (v4){};
		\node[vertex,lightgray] at (v5){};
		\node[vertex,lightgray] at (v6){};
		\node[vertex,lightgray] at (v7){};
		\draw[edge, rounded corners=8pt] (v1) -- (v4)--(v2)-- (v3) -- (v7)  -- (v6) --(v3) --(v1) --(v5)--(v7) --(v4)--(v6)--(v5)--(v2)-- cycle;

		\draw[line width=.4mm, ->] ($(x0)+(1.7,0) $) --+(.7,0);
		
		\coordinate(x0) at (4,-6);
		\coordinate (v1) at ($(x0) +(-1.2,.6)$){};
		\coordinate (v2) at ($(x0) +(-1.2,-.6)$){};
		\coordinate (v3) at ($(x0) +(0,1.2)$){};
		\coordinate (v4) at ($(x0) +(0,0)$){};
		\coordinate (v5) at ($(x0) +(0,-1.2)$){};
		\coordinate (v6) at ($(x0) +(1.2,.6)$){};
		\coordinate (v7) at ($(x0) +(1.2,-.6)$){};
		\draw[thick, rounded corners=15pt, fill=red, fill opacity=.2] (v1) -- (v4)--(v2)-- (v3) -- (v7)  -- (v6) --(v3) --(v1) --(v5)--(v7) --(v4)--(v6)--(v5)--(v2)-- cycle;

		\draw[line width=.4mm, ->] ($(x0)+(1.7,0) $) --+(.7,0);
		
		\coordinate(x0) at (7.5,-6);
		\node[vertex,fill=red] (v1) at ($(x0) +(.2,0)$){};
		\draw[edge ] (v1) ..controls +(0:1.2) and +(50:1.2) ..(v1);
		\draw[edge ] (v1) ..controls +(51:1.2) and +(101:1.2) .. (v1);
		\draw[edge ] (v1) ..controls +(102:1.2) and +(152:1.2) ..(v1);
		\draw[edge ] (v1) ..controls +(154:1.2) and +(204:1.2) ..(v1);
		\draw[edge ] (v1) ..controls +(206:1.2) and +(256:1.2) ..(v1);
		\draw[edge ] (v1) ..controls +(257:1.2) and +(307:1.2) ..(v1);
		\draw[edge ] (v1) ..controls +(308:1.2) and +(358:1.2) ..(v1);

	\end{tikzpicture}
	\caption{Construction of four distinct duals $\dual$ for four possible choices of circuits of the completion $G$ shown in \cref{fig:completion}. For the same graph $G$, the duals $\dual$ can have different numbers of vertices, but in all cases the number of edges in $\dual$ equals the number of vertices in $G$ (\cref{dual_edges_vertices}). In particular, an Eulerian circuit in the original graph amounts to $\dual$ being a rose graph on a single vertex (bottom row).}
	\label{fig:dual}
\end{figure}
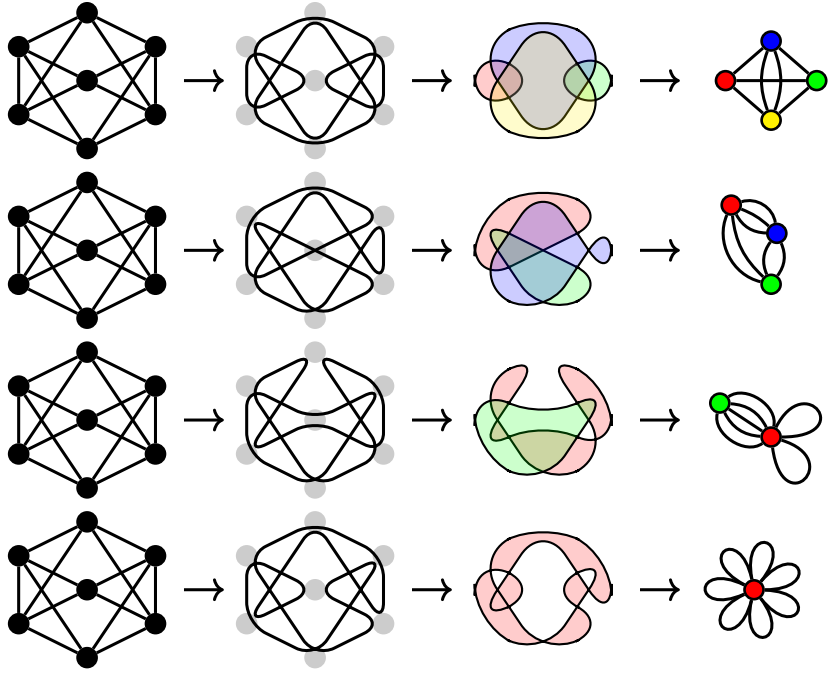

\begin{figure}[htb]
	\begin{tikzpicture}
		\coordinate(x0) at (0,0);
		\node at (x0) {$\emptyset$};
		\node at ($(x0) +(0,-1.5)$){$1$};

		\coordinate(x0) at (2,0);
		\node[vertex,fill=black] (v1) at ($(x0) +(0, .25)$){};
		\node[vertex,fill=black] (v2) at ($(x0) +(0, -.25)$){};
		\draw[edge ] (v1) --(v2); 
		\node at ($(x0) +(0,-1.5)$){$\frac 1 2 \hbar \left( \frac{N}{\sqrt{12}} \right)^2 $};

		\coordinate(x0) at (4,0);
		\node[vertex,fill=black] (v1) at ($(x0) +(0, -.2)$){}; 
		\draw[edge ] (v1) ..controls +(-.6,.7) and +(.6,.7) ..(v1); 
		\node at ($(x0) +(0,-1.5)$){$\frac 1 2 \hbar \frac{N}{6} $};
		
		\coordinate(x0) at (6.5,0);
		\node[vertex,fill=black] (v1) at ($(x0) +(0, .5)$){};
		\node[vertex,fill=black] (v2) at ($(x0) +(0, 0)$){};
		\node[vertex,fill=black] (v3) at ($(x0) +(0, -.5)$){};
		\draw[edge ] (v1) --(v2)--(v3); 
		\node at ($(x0) +(0,-1.5)$){$\frac 1 2 \hbar^2 \big( \frac{N}{\sqrt{12}} \big)^2 \frac N 6 $};

		\coordinate(x0) at (8.75,0);
		\node[vertex,fill=black] (v1) at ($(x0) +(0, 0)$){}; 
		\node[vertex,fill=black] (v2) at ($(x0) +(0, -.4)$){}; 
		\draw[edge ] (v1) ..controls +(-.6,.7) and +(.6,.7) ..(v1); 
		\draw[edge] (v1)--(v2);
		\node at ($(x0) +(0,-1.5)$){$\frac 1 2 \hbar^2 \frac{N}{\sqrt{27}} \frac{N}{\sqrt{12}} $};
		
		\coordinate(x0) at (11,0);
		\node[vertex,fill=black] (v1) at ($(x0) +(0, .4)$){}; 
		\node[vertex,fill=black] (v2) at ($(x0) +(0, -.4)$){}; 
		\draw[edge ] (v1) ..controls +(-.3,-.1) and +(-.3,.1) ..(v2); 
		\draw[edge ] (v1) ..controls +(.3,-.1) and +(.3,.1) ..(v2); 
		\node at ($(x0) +(0,-1.5)$){$\frac 1 4 \hbar^2 \big(\frac N 6\big)^2 $};

		\coordinate(x0) at (13,0);
		\node[vertex,fill=black] (v1) at ($(x0) +(0,0)$){};
		\draw[edge ] (v1) ..controls +(-.6,.7) and +(.6,.7) ..(v1);
		\draw[edge ] (v1) ..controls +(-.6,-.7) and +(.6,-.7) .. (v1);
		\node at ($(x0) +(0,-1.5)$){$\frac 1 8 \hbar^2 \frac N 3 $};

	\end{tikzpicture}
	\caption{Feynman graphs of the dual field $\sigma$, representing the first three terms of $\partial^0_jW|_0$ (\cref{W_derivatives}). The number of edges determines the order of $\hbar$, and the number of vertices is the order in $N$. Hence, this expansion produces the full $N$ dependence, whereas for the corresponding Feynman graphs of $\phi$ (the connected graphs in \cref{fig:Z_graphs_vaccum}), one needs to insert the $\O(N)$ symmetry factor $T(G,N)$ (\cref{def:TGN}) in addition. }
	\label{fig:Z_graphs_vaccum_sigma}
\end{figure}
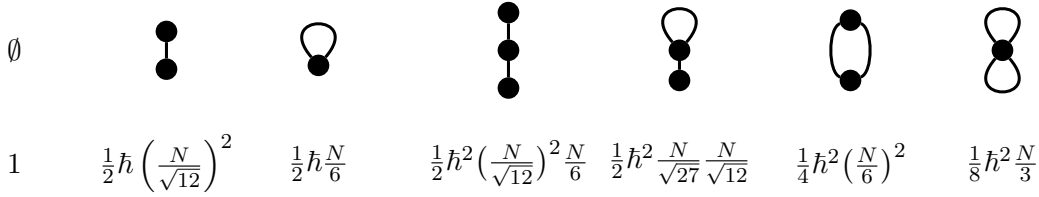

Our goal is to use the dual graphs $G^\star$ to understand the leading $N$ dependence of primitive graphs of $\phi^4$ theory. To that end, we need to understand how to reconstruct $G$ from a given $G^\star$, and under which conditions the resulting $G$ is  primitive (\cref{sec:periods}). We first establish two elementary lemmata. 

\begin{lemma}\label{lem:no_2valent}
	If $G$ is a primitive completion with $L>1$ loops, then the dual $\dual$ is connected and does not contain 1-valent or 2-valent vertices.
\end{lemma}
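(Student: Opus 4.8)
The plan is to reduce the statement to two purely graph-theoretic facts about the completion $G$ itself: that $G$ is connected, and that a primitive completion with $L>1$ loops has neither a self-loop nor a double edge. Everything else is bookkeeping about the construction in \cref{def:dual}, and the statement should be read as holding for the dual graph attached to \emph{any} decomposition of $G$.

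First I would record that the degree of a vertex of $\dual$ equals the length of the corresponding circuit. In a fixed decomposition, a circuit $C$ is a closed walk using a set $E_C$ of edges of $G$ (and the $E_C$ partition $E_G$), and $C$ makes $\abs{E_C}$ transitions at vertices of $G$, counted with multiplicity. By \cref{def:dual} the edge of $\dual$ attached to a vertex $v\in V_G$ joins the two circuits whose edge-pairs of the decomposition sit at $v$, so it is incident to $C$ once for every transition of $C$ at $v$ — and is a loop at $C$ when $C$ passes through $v$ twice; summing over $v$ gives $\deg_{\dual}(C)=\abs{E_C}$. Hence a $1$-valent vertex of $\dual$ corresponds to a circuit of length $1$, which forces a self-loop of $G$, while a $2$-valent vertex corresponds to a circuit of length $2$; since each edge of $G$ lies in exactly one circuit, a length-$2$ circuit is made of two distinct parallel edges between two (possibly coinciding) vertices, i.e.\ a double edge or a pair of self-loops of $G$. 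So it suffices to show that $G$ has no self-loop and no double edge. Connectedness I would treat separately and directly: if $\dual$ split into two non-empty classes of circuits $\mathcal A$, $\mathcal B$ with no edge between them, then at each vertex of $G$ both edge-pairs would lie in circuits of the same class, so the four edges at that vertex would all lie in $\bigcup_{C\in\mathcal A}E_C$ or all in $\bigcup_{C\in\mathcal B}E_C$; this partitions $V_G$ into two non-empty parts with no edge between them, contradicting connectedness of the completion $G$.

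The heart of the argument — and the step I expect to be the main obstacle — is ruling out self-loops and double edges in a primitive completion with $L>1$, using the characterization of primitivity as cyclic $6$-edge-connectivity recalled in \cref{sec:periods}. A self-loop at a vertex $z$ makes the two other edges at $z$ an edge cut of size $2$ whose small side (the loop together with $z$) contains a cycle; since $G$ is $4$-regular, the complementary side has $t\ge 1$ vertices, exactly $2$ legs, and therefore $e=2t-1>t-1$ internal edges, so it too contains a cycle, giving a cyclic $2$-edge cut and contradicting primitivity. For a double edge between $z_1$ and $z_2$, the edge boundary of $\{z_1,z_2\}$ has at most $4$ edges, the small side carries the $2$-cycle, and the only way the complementary side can be acyclic is if it is a single vertex carrying all remaining legs: a $4$-regular piece with $t$ vertices and at most $4$ legs has $e\ge 2t-2$ internal edges, so $e>t-1$ once $t\ge 2$, forcing $t\le 1$ — and $t=1$ is precisely the $L=1$ melon, which is excluded by hypothesis (while $t=0$ would force $V_G=2$, i.e.\ $L=0$). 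Thus for $L>1$ the complementary side again contains a cycle, yielding a cyclic $\le 4$-edge cut and contradicting primitivity. The delicacy here is exactly this interplay: the primitivity criterion tolerates a \emph{tree} on one side of a small cut, and the complement of a double edge in a $\phi^4$ completion is a tree only when it is a single vertex, which happens only at $L=1$; making that counting step precise (via the handshake identity $4t=2e+\ell$ with $\ell\le 4$) is where the $L>1$ hypothesis genuinely does its work. Combining the three parts: for $L>1$ every circuit of every decomposition has length $\ge 3$, so every vertex of $\dual$ has degree $\ge 3$, and $\dual$ is connected, which proves the lemma.
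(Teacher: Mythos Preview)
Your proof is correct and follows the same route as the paper's: identify the degree of a dual vertex with the length of the corresponding circuit, and then exclude circuits of length $\le 2$ by observing that a primitive completion with $L>1$ has neither self-loops nor double edges. The paper's argument is terser---it simply asserts that tadpoles and double edges ``are not allowed since $G$ is primitive'' and that the duality preserves connectedness---whereas you spell out both the connectedness argument and, more usefully, the counting that shows why the $L>1$ hypothesis is genuinely needed to exclude the melon completion at $L=1$; this last point is left implicit in the paper.
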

\begin{proof}
	By \cref{def:dual}, the Hubbard-Stratonovich transformation maps connected graphs to connected graphs. The valence of a vertex in $G^\star$ is larger than or equal to the number of vertices in the corresponding circuit in the decomposition of $G$. If a vertex of $G^\star$ had valence two, there would be a circuit in $G$ with at most two vertices. Such circuit is either a tadpole or a double edge, both of which are not allowed since $G$ is primitive. 
\end{proof}

\begin{lemma}\label{lem:TGN_bound}
	Let $G$ be a primitive completion (\cref{def:completion_decompletion}), where $L>1$ is the loop order of the decompletion $g=G\setminus \left \lbrace v \right \rbrace $. Then the degree of the $\O(N)$ symmetry factor (\cref{def:TGN}) is bounded in the following way:
	\begin{align*} 
		&\textnormal{Degree in $N$ of the polynomial } T(G,N) \leq N_\textnormal{max}:=  \frac{\abs{E_G}}{3} = \frac{2}{3}L + \frac 4 3 \\
		&\textnormal{Degree in $N$ of the polynomial } T(g,N) \leq N_\textnormal{max}-2 = \frac 2 3 L - \frac 2 3.
	\end{align*}
\end{lemma}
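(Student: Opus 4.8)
The degree in $N$ of $T(G,N) = 3^{-|V_G|} J(G,N)$ is, by \cref{def:circuit_partition_polynomial}, exactly the largest number of circuits produced by any decomposition of $G$, so the plan is to bound that number. First I would fix an arbitrary decomposition of $G$ and pass to its Hubbard--Stratonovich dual $G^\star$ (\cref{def:dual}): its vertices are the circuits, and by \cref{dual_edges_vertices} its edges are in bijection with $V_G$. Since $G$ is a $4$-regular vacuum graph whose decompletion $g$ has $L$ loops, $G$ has $|V_G| = L+2$ vertices and $|E_G| = 2(L+2)$ edges; hence $|E_{G^\star}| = |V_G| = \tfrac12 |E_G| = L+2$, and the number of circuits equals $|V_{G^\star}|$.

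The second ingredient is \cref{lem:no_2valent}: because $G$ is a primitive completion with $L>1$, every vertex of $G^\star$ has valence at least $3$ (a circuit through one or two vertices of $G$ would force a tadpole or a double edge, which primitivity forbids). Applying the handshake identity $\sum_{v \in V_{G^\star}} \deg(v) = 2|E_{G^\star}|$ together with $\deg(v) \geq 3$ gives $|V_{G^\star}| \leq \tfrac23 |E_{G^\star}| = \tfrac23 |V_G| = \tfrac{|E_G|}{3} = \tfrac23 L + \tfrac43$. Since the decomposition was arbitrary, this bounds $\deg_N T(G,N)$, which is the first claim. For the second claim I would invoke the completion case of \cref{lem:factorization_T} with $2p = 4$, i.e.\ the polynomial identity $T(G,N) = \tfrac{N(N+2)}{3}\, T(g,N)$; this exhibits $N(N+2)$ as a factor of $T(G,N)$ and yields $\deg_N T(g,N) = \deg_N T(G,N) - 2 \leq \tfrac{|E_G|}{3} - 2 = \tfrac23 L - \tfrac23$.

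There is no real analytic difficulty here: once the dual picture is in place, the whole argument is a short chain of counting identities. The one step that genuinely uses the hypothesis $L>1$ is the minimum-degree-$3$ property of $G^\star$, which is precisely what makes the handshake bound non-vacuous, and it is sharp in the sense that for $L=1$ the bound already fails (the fish has $T(g,N)=(N+8)/9$, of degree $1 > \tfrac23 - \tfrac23 = 0$). So the main point, such as it is, is recognizing that primitivity --- via \cref{lem:no_2valent} --- is exactly the structural input that promotes the trivial edge count $|E_{G^\star}| = |V_G|$ into the stated degree bound.
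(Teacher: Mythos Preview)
Your proof is correct and matches the paper's own argument essentially verbatim: the paper also bounds the number of circuits via the dual $G^\star$, invokes \cref{lem:no_2valent} for the minimum-degree-$3$ property, applies the handshake bound $|V_{G^\star}|\le \tfrac23|E_{G^\star}|$, and then deduces the decompletion bound from \cref{lem:factorization_T}. The paper additionally phrases the same count directly (no multiedges $\Rightarrow$ every circuit has length $\ge 3$, so at most $|E_G|/3$ circuits), but this is the same idea without the dual language.
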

\begin{proof}
	In the absence of multiedges, the smallest number of edges in a circuit is three. A completion with $L$ loops has $L+2$ vertices and $2(L+2)$ edges. Distributing these edges into circuits of length three gives the claimed formula. 
	The bound for the decompletion $g=G\setminus\left \lbrace v \right \rbrace $ follows from \cref{lem:factorization_T}. 
	
	Alternatively, consider the dual $\dual$ (\cref{def:dual}). The number of vertices in $\dual$ equals the order in $N$ of the corresponding term in $T(G,N)$.  By \cref{lem:no_2valent}, each vertex in $\dual$ has valence at least 3, therefore $\abs{V_{\dual}}\leq \frac 23 \abs{E_{\dual}}$. A 4-regular completion with $L$ loops has $L+2$ vertices, hence by \cref{dual_edges_vertices} $\abs{E_{\dual}}=L+2$.
\end{proof}

\begin{example}
	The graph in \cref{fig:dual} has $L=5$ and hence $\lfloor N_\textnormal{max}\rfloor = \lfloor \frac{14}{3} \rfloor = 4$. The top row in the picture shows one choice of circuits that exhausts this bound. 
	
	 \Cref{lem:TGN_bound} does not hold for the decompletion at $L=1$, which is the fish graph (\cref{fig:fish}). The completion of the fish is a ring of three fish, which, according to \cref{ex:fish_chain}, has degree $3$ in $N$. This is larger than $N_\text{max}=\frac 2 3 \cdot 1 + \frac 4 3 = 2$.
\end{example}

\bigskip

For general $\dual$, the mapping $G \mapsto \dual$ of \cref{def:dual} is not invertible. However, it is possible to construct an inverse mapping for certain restricted classes of dual graphs;these happen to be the ones of leading order in $N$ and we will show that in this case one can construct $G$ from the \emph{line graph} of $\dual$.

\begin{definition}\label{def:line_graph}
	Let $\dual$ be a graph. The \emph{line graph} $L(\dual)$ is a graph where every edge of $\dual$ is a vertex of $L(\dual)$, and two vertices of $L(\dual)$ are connected by an edge if and only if the corresponding edges in $\dual$ are adjacent to the same vertex in $\dual$.
\end{definition}

Note that for general $G^\star$, the line graph $L(G^\star)$ is not 4-regular.

\begin{example}
	Every 4-regular graph $G$ with $n$ vertices has at least one Eulerian circuit, that is, a circuit that traverses every edge exactly once. The dual $\dual$ of such a decomposition is a rose graph with $n$ petals, shown in the last row of \cref{fig:dual}. Consequently, every 4-regular graph $G$ has a rose graph $G^\star$ among its duals. From this $\dual$, it is not possible to infer any information about $G$ except for the number of vertices. 
	
	The line graph $L(\dual)$ of the rose on $n$ petals is the complete graph $K_n$, which is not 4-regular in general. 
\end{example}

\subsection{The duals of leading-order primitive graphs}\label{sec:duals_leading}

\subsubsection[Primitive graphs with 3n+1 loops]{Primitive graphs with $3n+1$ loops} \label{sec:leading_1}

We first consider those loop orders where the ratio $N_\text{max}$ in \cref{lem:TGN_bound} is an integer, namely $L \in \left \lbrace 4, 7, 10, 13, \ldots \right \rbrace $, or generally $L = 3n+1$ for $n\in \mathbb N$. To exhaust the bound $N_\text{max}$, a dual graph  $\dual$ (\cref{def:dual}) must have the maximum number of vertices for a given number of edges, 
\begin{align}\label{leading_EV_bound}
	\abs{V_{\dual}} &\overset != \frac{2 L_G + 4}{3} =\frac 2 3  \abs{E_{\dual}} .
\end{align}
In the absence of 2-valent vertices (\cref{lem:no_2valent}),  \cref{leading_EV_bound} implies that all vertices of $\dual$ are 3-valent. Hence, only such completions $G$ can contribute at leading order in $N$ which have a 3-regular dual graph $\dual$. The interesting part is the converse: Under which conditions can we obtain a primitive completion $G$ from a 3-regular $\dual$? We first show that the leading decompositions are unique:

\begin{lemma}\label{lem:leading_no_multiple}
	Let $L=3n+1$ for $n\in \mathbb N$. 
    Let $G$ be a  primitive completion that has a decomposition of leading order $N_\textnormal{max}$ (\cref{lem:TGN_bound}). 
    At $L=4$, or $n=1$,  $G$ is the six-vertex zigzag (the octahedron graph, \cref{fig:leading_no_multiple_2}), and it has two isomorphic decompositions. If $n>1$, there is exactly one decomposition at leading order.
\end{lemma}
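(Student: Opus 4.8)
The plan is to work entirely on the dual side. By Lemma~\ref{lem:TGN_bound} and the discussion around \cref{leading_EV_bound}, a primitive completion $G$ with $L=3n+1$ loops attains leading degree $N_\text{max}$ exactly when the corresponding decomposition yields a dual graph $\dual$ that is $3$-regular, and $\dual$ has $L+2=3n+3$ edges and $\frac{2}{3}(L+2)=2n+2$ vertices. Two leading decompositions of the same $G$ correspond to two $3$-regular duals $\dual_1,\dual_2$ built on the \emph{same} underlying 4-regular graph $G$ (whose vertices become the edges of each $\dual_i$). So the real statement is: the only 4-regular primitive completion admitting two distinct $3$-regular-dual decompositions is the octahedron. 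I would translate ``decomposition of $G$'' into the standard language of \emph{transition systems} (pairings of the four half-edges at each vertex into two pairs), and recall that a decomposition produces a dual with all vertices $3$-valent iff every circuit in the decomposition has length exactly $3$ --- i.e.\ the decomposition partitions $E_G$ into triangles (closed $3$-circuits in the transition sense). Thus a leading decomposition is a partition of the $2(L+2)$ edges of $G$ into $\frac{2(L+2)}{3}$ triangles, each triangle using two half-edges at each of its three vertices.

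The key combinatorial step is a local rigidity argument. Fix a vertex $v$ of $G$ with its four incident edges $e_1,e_2,e_3,e_4$. A decomposition pairs them; say $\{e_1,e_2\},\{e_3,e_4\}$. Each pair must be completed into a triangle, so $e_1$ and $e_2$ share a second common endpoint $w$, and the third edge of that triangle is the edge $vw$-complement --- concretely there must exist an edge joining the other endpoints of $e_1,e_2$, closing a triangle through $v$. I would argue that for a primitive (cyclically $6$-edge-connected) graph the local structure forces the triangle partition to be essentially forced once one triangle is chosen, by propagating: choosing the triangle on $e_1,e_2$ determines which pairing is used at $w$ and at the far vertices, hence determines the pairings at all neighbours, and by connectivity the whole decomposition. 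The only freedom is the initial choice at $v$ among the (at most three) ways to start a triangle; distinct starts give isomorphic decompositions unless the graph has enough symmetry to make them genuinely distinct as labelled objects. A short case analysis then shows: if $G$ is small enough that two triangle-partitions coexist, the edges get forced into a configuration where $G$ must be the octahedron $K_{2,2,2}$ (the $L=4$ zigzag), whose two triangulations into four triangles are the two ``opposite'' $\{ \text{triangle}, \text{antitriangle}\}$ 4-face colourings of the octahedron, and these are exchanged by an automorphism, hence isomorphic. For all other $G$, propagation from any starting triangle produces a unique partition, so there is exactly one leading decomposition.

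I expect the main obstacle to be making the ``propagation forces the whole decomposition'' step airtight: one must show that at a vertex $w$ that already has one half-edge-pair fixed by an incoming triangle, the remaining two half-edges can be completed into a triangle in at most one way, and that this never stalls or branches before exhausting $E_G$. This is where cyclic $6$-edge-connectivity (primitivity) enters --- it rules out the small separating structures (double edges, triangles that are proper subgraphs of valence $<6$, etc.) that would otherwise permit an alternative closure. The octahedron is precisely the boundary case because it is the smallest primitive completion at a loop order with $N_\text{max}\in\mathbb{Z}$, and in it the propagation ``wraps around'' globally rather than being pinned by the rest of the graph. I would handle the base case $L=4$ by direct inspection (the octahedron has exactly two triangle-decompositions, related by its automorphism group), and then run the propagation argument for $L\geq 7$ to get uniqueness, citing \cref{lem:no_2valent} to exclude $1$- and $2$-valent dual vertices throughout.
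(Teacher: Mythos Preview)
Your propagation strategy is a different route from the paper's, and it carries a real gap. Propagation shows that once the pairing at a single vertex is fixed, the triangle partition (if it exists) is determined. But this only bounds the number of leading decompositions by the number of \emph{valid} starting pairings at $v$---up to three---and you never explain why, for $G$ other than the octahedron, only one of those starts actually closes up into a global triangle partition. Your sentence ``distinct starts give isomorphic decompositions unless the graph has enough symmetry'' conflates equality of decompositions with isomorphism of decompositions; the lemma counts decompositions, so two starts giving two distinct-but-isomorphic partitions would already violate uniqueness. You would still need a separate argument ruling out a second valid start, and that is precisely the content of the lemma.

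The paper sidesteps propagation entirely with a purely local argument. If two leading decompositions exist they must differ at some vertex $v$, so at least two of the three pairings of the four edges at $v$ each close into a pair of triangles. Case one: all three pairings work. Then every pair among the four neighbours of $v$ is joined by an edge, forcing $G=K_5$, which has $L=3\neq 3n+1$. Case two: exactly two pairings work. Then the four neighbours of $v$ form a $4$-cycle (the red edges in \cref{fig:leading_no_multiple_2}); for both pairings to extend to full triangle partitions, the remaining half-edges at those four neighbours must themselves close into triangles in both configurations, and the only way to achieve this while keeping $G$ primitive is to attach all four to a single new vertex $v'$, giving the octahedron. No connectivity propagation, no global wrap-around analysis---the whole proof lives in the closed neighbourhood of one vertex.
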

\begin{proof}
	The decompositions at leading order are such that every circuit has length three. Hence, for there to be more than one such decomposition, it must be possible to decompose $G$ into triangles in more than one way. 
	
	Consider a vertex $v\in G$. Assume that all three decompositions (\cref{fig:vertex_decomposition}) of that vertex give rise to a term of leading order in $T(G,N)$. This means that in all three cases, the pairs of edges must belong to triangles. This is only possible if all four neighboring vertices to $v$ are also adjacent to each other, see \cref{fig:leading_no_multiple_1}. Hence, $G$ is $K_5$. But $K_5$ has five vertices and thus $L=3$ in contradiction to the assumption $L=3n+1$. Consequently, it is impossible that all three decompositions of any vertex contribute at leading order.
	
	Now assume that there are two compositions that contribute at leading order. Then, the four neighboring vertices of $v$ each have to be adjacent to two (but not all three) others. In order to be a decomposition at leading order for the whole graph, these vertices each have to be adjacent another triangle in each of the two configurations, see \cref{fig:leading_no_multiple_2}. The only way to realize this is to have all four vertices adjacent to one vertex $v'$. Hence, $G$ is the octahedron.
\end{proof}

\begin{figure}[htbp]
	\centering
	\begin{tikzpicture}

		\coordinate(x0) at (-4,0);
		
		\coordinate(v0) at ($(x0) $){};
		\coordinate(v1) at ($(x0) +(60:1)$){};
		\coordinate(v2) at ($(x0) +(120:1)$){};
		\coordinate(v3) at ($(x0) +(210:1)$){};
		\coordinate(v4) at ($(x0) +(330:1)$){}; 
		\draw[thick,red] (v1) -- (v2);
		\draw[thick,red] (v3) -- (v4);
		\node[vertex,draw=lightgray,fill=lightgray, label=below:$v$] at (v0){};
		\node[vertex,draw=lightgray,fill=lightgray] at (v1){};
		\node[vertex,draw=lightgray,fill=lightgray] at (v2){};
		\node[vertex,draw=lightgray,fill=lightgray] at (v3){};
		\node[vertex,draw=lightgray,fill=lightgray] at (v4){};
		
		\draw[edge, rounded corners=5pt ] (v1) -- (v0)--(v2);
		\draw[edge, rounded corners=7pt ] (v3) -- (v0)--(v4);

		\coordinate(x0) at (-1,0);
		
		\coordinate(v0) at ($(x0) $){};
		\coordinate(v1) at ($(x0) +(60:1)$){};
		\coordinate(v2) at ($(x0) +(120:1)$){};
		\coordinate(v3) at ($(x0) +(210:1)$){};
		\coordinate(v4) at ($(x0) +(330:1)$){};
		
		\draw[thick,red] (v1) -- (v4);
		\draw[thick,red] (v2) -- (v3); 
		\node[vertex,draw=lightgray,fill=lightgray, label=below:$v$] at (v0){};
		\node[vertex,draw=lightgray,fill=lightgray] at (v1){};
		\node[vertex,draw=lightgray,fill=lightgray] at (v2){};
		\node[vertex,draw=lightgray,fill=lightgray] at (v3){};
		\node[vertex,draw=lightgray,fill=lightgray] at (v4){};
		
		\draw[edge, rounded corners=5pt ] (v1) -- (v0)--(v4);
		\draw[edge, rounded corners=7pt ] (v2) -- (v0)--(v3);

		\coordinate(x0) at (2,0);
		
		\coordinate(v0) at ($(x0) $){};
		\coordinate(v1) at ($(x0) +(60:1)$){};
		\coordinate(v2) at ($(x0) +(120:1)$){};
		\coordinate(v3) at ($(x0) +(210:1)$){};
		\coordinate(v4) at ($(x0) +(330:1)$){}; 
		\draw[thick,red] (v1) -- (v3);
		\draw[thick,red] (v2) -- (v4);
		\node[vertex,draw=lightgray,fill=lightgray, label=below:$v$] at (v0){};
		\node[vertex,draw=lightgray,fill=lightgray] at (v1){};
		\node[vertex,draw=lightgray,fill=lightgray] at (v2){};
		\node[vertex,draw=lightgray,fill=lightgray] at (v3){};
		\node[vertex,draw=lightgray,fill=lightgray] at (v4){};
		
		\draw[edge, rounded corners=5pt ] (v1) -- (v0)--(v3);
		\draw[edge, rounded corners=7pt ] (v2) -- (v0)--(v4);

		\draw[edge, ->] (4,0)--(5,0);
		
		\coordinate(x0) at (6.5,0);
		
		\coordinate(v0) at ($(x0) $){};
		\coordinate(v1) at ($(x0) +(60:1)$){};
		\coordinate(v2) at ($(x0) +(120:1)$){};
		\coordinate(v3) at ($(x0) +(210:1)$){};
		\coordinate(v4) at ($(x0) +(330:1)$){};
		
		\draw[thick,red] (v1) -- (v4);
		\draw[thick,red] (v2) -- (v3); 
		\node[vertex, label=below:$v$] at (v0){};
		\node[vertex ] at (v1){};
		\node[vertex ] at (v2){};
		\node[vertex ] at (v3){};
		\node[vertex ] at (v4){};
		
		\draw[edge] (v1) -- (v0)--(v4)--(v1);
		\draw[edge] (v2) -- (v0)--(v3) -- (v2);
		\draw[edge ] (v1) -- (v2)--(v3)--(v4)--(v1);
		\draw[edge] (v1)--(v3);
		\draw[edge](v2)--(v4);

	\end{tikzpicture}
	\caption{Proof of \cref{lem:leading_no_multiple}: If we assume that each of the three decompositions of a vertex $v$ leads to  two triangles, the graph must contain all of the edges indicated in red. This implies that  the graph is necessarily $K_5$, which has $L=3$.}
	\label{fig:leading_no_multiple_1}
\end{figure}
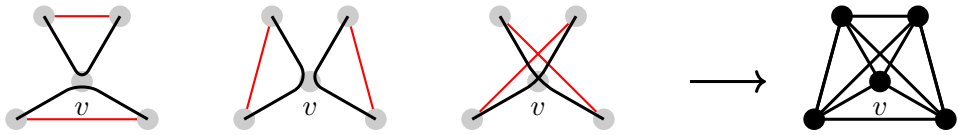

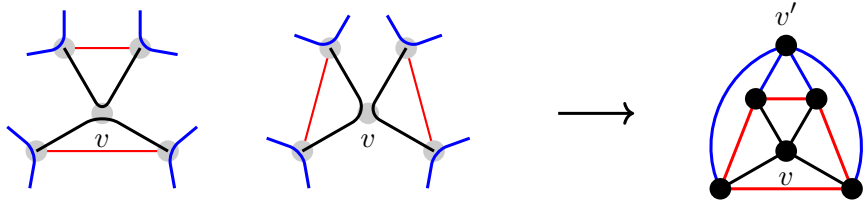
\begin{figure}[htbp]
	\centering
	\begin{tikzpicture}

		\coordinate(x0) at (-4,0);
		
		\coordinate(v0) at ($(x0) $){};
		\coordinate(v1) at ($(x0) +(60:1)$){};
		\coordinate(v2) at ($(x0) +(120:1)$){};
		\coordinate(v3) at ($(x0) +(210:1)$){};
		\coordinate(v4) at ($(x0) +(330:1)$){}; 
		\draw[thick,red] (v1) -- (v2);
		\draw[thick,red] (v3) -- (v4);
		\node[vertex,draw=lightgray,fill=lightgray, label=below:$v$] at (v0){};
		\node[vertex,draw=lightgray,fill=lightgray] at (v1){};
		\node[vertex,draw=lightgray,fill=lightgray] at (v2){};
		\node[vertex,draw=lightgray,fill=lightgray] at (v3){};
		\node[vertex,draw=lightgray,fill=lightgray] at (v4){};
		
		\draw[edge, rounded corners=5pt ] (v1) -- (v0)--(v2);
		\draw[edge, rounded corners=7pt ] (v3) -- (v0)--(v4);

		\draw[edge,blue, rounded corners=5pt] ($(v1)+(90:.5)$) --(v1) -- ($(v1) +(-10:.5) $);
		\draw[edge,blue, rounded corners=5pt] ($(v2)+(90:.5)$) --(v2) -- ($(v2) +(190:.5) $); 
		\draw[edge,blue, rounded corners=5pt] ($(v3)+(140:.5)$) --(v3) -- ($(v3) +(260:.5) $); 
		\draw[edge,blue, rounded corners=5pt] ($(v4)+( 40:.5)$) --(v4) -- ($(v4) +(280:.5) $); 
		
		\coordinate(x0) at (-.5,0);
		
		\coordinate(v0) at ($(x0) $){};
		\coordinate(v1) at ($(x0) +(60:1)$){};
		\coordinate(v2) at ($(x0) +(120:1)$){};
		\coordinate(v3) at ($(x0) +(210:1)$){};
		\coordinate(v4) at ($(x0) +(330:1)$){};
		
		\draw[thick,red] (v1) -- (v4);
		\draw[thick,red] (v2) -- (v3); 
		\node[vertex,draw=lightgray,fill=lightgray, label=below:$v$] at (v0){};
		\node[vertex,draw=lightgray,fill=lightgray] at (v1){};
		\node[vertex,draw=lightgray,fill=lightgray] at (v2){};
		\node[vertex,draw=lightgray,fill=lightgray] at (v3){};
		\node[vertex,draw=lightgray,fill=lightgray] at (v4){};
		
		\draw[edge, rounded corners=5pt ] (v1) -- (v0)--(v4);
		\draw[edge, rounded corners=7pt ] (v2) -- (v0)--(v3);
		\draw[edge,blue, rounded corners=5pt] ($(v1)+(120:.5)$) --(v1) -- ($(v1) +(20:.5) $);
		\draw[edge,blue, rounded corners=5pt] ($(v2)+(60:.5)$) --(v2) -- ($(v2) +(160:.5) $); 
		\draw[edge,blue, rounded corners=5pt] ($(v3)+(160:.5)$) --(v3) -- ($(v3) +(280:.5) $); 
		\draw[edge,blue, rounded corners=5pt] ($(v4)+(20:.5)$) --(v4) -- ($(v4) +(260:.5) $);

		\draw[edge, ->] (2,0)--(3,0);
		
		\coordinate(x0) at (5 ,-.5);
		
		\coordinate(v0) at ($(x0) $){};
		\coordinate(v1) at ($(x0) +(60:.8)$){};
		\coordinate(v2) at ($(x0) +(120:.8)$){};
		\coordinate(v3) at ($(x0) +(210:1)$){};
		\coordinate(v4) at ($(x0) +(330:1)$){};
		\coordinate(v5) at ($(x0) +(90:1.4)$){};
		
		\draw[edge,red] (v1) -- (v2) -- (v3) -- (v4)--(v1);
		\draw[edge] (v1) -- (v0) -- (v2) ;
		\draw[edge] (v3) -- (v0)--(v4);
		\draw[edge,blue ] (v1) -- (v5);
		\draw[edge,blue] (v2) -- (v5);
		\draw[edge,blue,bend angle=50,bend left] (v3) to (v5);
		\draw[edge,blue,bend angle=50,bend right] (v4) to (v5);
		
		\node[vertex, label=below:$v$] at (v0){};
		\node[vertex] at (v1){};
		\node[vertex] at (v2){};
		\node[vertex] at (v3){};
		\node[vertex] at (v4){};
		\node[vertex,label=above:$v'$] at (v5){};

	\end{tikzpicture}
	\caption{Proof of \cref{lem:leading_no_multiple}: If we assume that two decompositions of a vertex $v$ lead to  two triangles each, the graph must contain the red edges. When the four red edges are present, they, together with $v$, form a 4-valent subgraph. The only way to obtain a primitive (=internally six-edge connected) graph is to connect the four blue edges to a single vertex $v'$. The resulting graph  is the octahedron (=six-vertex zigzag).}
	\label{fig:leading_no_multiple_2}
\end{figure}

\begin{theorem}\label{thm:leading_3n+1}
	Let  $L = 3n+1$ for $n\in \mathbb N$. The primitive completed graphs $G$ which contribute at leading order in $N$ are the line graphs (\cref{def:line_graph}) of the 3-edge-connected 3-regular  graphs $\dual$ on $2n+2$ vertices.
\end{theorem}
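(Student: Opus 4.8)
The plan is to set up a bijection between, on one side, isomorphism classes of primitive completions $G$ with $L=3n+1$ loops for which $[N^{N_{\max}}]T(G,N)\neq 0$ (i.e.\ $G$ admits a decomposition with the maximal number $N_{\max}=\tfrac23 L+\tfrac43=2n+2$ of circuits allowed by \cref{lem:TGN_bound}), and, on the other side, isomorphism classes of $3$-regular simple $3$-edge-connected graphs $H$; the maps are $G\mapsto\dual$, the dual of the (by \cref{lem:leading_no_multiple} essentially unique) leading decomposition, and $H\mapsto L(H)$, the line graph (\cref{def:line_graph}).

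First I would identify the leading decompositions and their duals. Since a primitive $G$ is simple, every circuit in any decomposition has length $\geq 3$, so a decomposition realizing $N_{\max}$ circuits consists only of triangles; its dual $\dual$ then has $|V_{\dual}|=N_{\max}=2n+2$ and, by \cref{dual_edges_vertices}, $|E_{\dual}|=|V_G|=L+2=3(n+1)$, so its average degree is $3$, and since by \cref{lem:no_2valent} all degrees are $\geq 3$ the graph $\dual$ is $3$-regular. I would check it is also simple: a loop at a circuit $c$ would need a vertex of $G$ both of whose decomposition-pairs lie in the triangle $c$, impossible because a triangle uses exactly two edges at each of its three vertices; a multiedge between circuits $c_1,c_2$ would force two edge-disjoint triangles $c_1,c_2$ sharing two common vertices, hence a double edge in $G$, contradicting primitivity. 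Conversely, if $H$ is $3$-regular and simple, the line graph $G:=L(H)$ is $4$-regular and connected, and the $|V_H|$ triangles induced by the vertices of $H$ partition $E(G)$, giving a valid decomposition of $G$ whose dual is $H$; a vertex/edge count shows $G$ is an $L$-loop completion with $L=3n+1$ exactly when $|V_H|=2n+2$, and this decomposition has $2n+2=N_{\max}$ circuits, hence is leading. Finally, because in a primitive $G$ any two vertices lie on at most one common triangle (otherwise $G$ would contain a double edge) and every edge of $G$ lies in a triangle of the leading decomposition, the assignment ``vertex of $G$ $\leftrightarrow$ corresponding edge of $\dual$'' is a graph isomorphism $G\cong L(\dual)$; together with ``the dual of $L(H)$ is $H$'' this makes the two maps mutually inverse.

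The technical core, and the step I expect to be the main obstacle, is the equivalence: for $H$ $3$-regular and simple with $|V_H|\geq 4$, the completion $L(H)$ is primitive if and only if $H$ is $3$-edge-connected. The key translation is that any edge cut of $L(H)$ is induced by a partition $E(H)=A\sqcup B$, has size $2|W|$ with $W:=\{v\in V_H:v$ is incident to edges of both $A$ and $B\}$, and that $L(H)[A]$ contains a cycle iff the subgraph $(V_H,A)$ has a vertex of degree $3$ or contains a cycle. For the easy direction, an edge cut of $H$ of size $\leq2$ between $V_1,V_2$ yields, taking $A=$ edges inside $V_1$, a cut of $L(H)$ of size $2|W|\leq4<6$ whose two sides both contain a cycle, since each $V_i$ has $\geq3$ vertices and hence a vertex all of whose three incident edges stay on its side, producing a triangle in $L(H)$; so $L(H)$ is not primitive. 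For the hard direction, I would start from a non-primitive witness $A,B$ with $|W|\leq2$ and both sides cyclic, observe that $V_A^{\mathrm{pure}}$ and $V_B^{\mathrm{pure}}$ (vertices with all incident edges in $A$, resp.\ in $B$) are both nonempty and that $H$ has no edge between them, and then extract an edge cut of $H$ of size $\leq2$. The cases $|W|\in\{0,1\}$ are immediate; $|W|=2$ reduces to bounding $\min(a_1+a_2,\,6-a_1-a_2)$, where $a_i$ is the number of $A$-edges at the $i$-th vertex of $W$, and the only delicate case $a_1+a_2=3$ is handled by enlarging one cut side by a single $W$-vertex, distinguishing whether the two $W$-vertices are adjacent and, if so, on which side that edge lies.

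Putting the pieces together: if $G$ is a primitive completion at leading order with $L=3n+1$, then $\dual$ is $3$-regular and simple by the first part, and it must be $3$-edge-connected, for otherwise $L(\dual)$ would not be primitive while $G\cong L(\dual)$; conversely, for any $3$-regular simple $3$-edge-connected $H$, the graph $L(H)$ is a primitive completion possessing a leading decomposition, hence contributes at leading order in $N$. Since $L=3n+1\geq4$ throughout, \cref{lem:no_2valent} and the primitivity criterion always apply; the octahedron, dual to $K_4$, is the base case $n=1$.
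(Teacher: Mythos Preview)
Your proposal is correct and follows essentially the paper's strategy: identify leading decompositions as triangle partitions (forcing $\dual$ to be $3$-regular), establish $G\cong L(\dual)$ via the line-graph construction, and translate primitivity of $L(H)$ into $3$-edge-connectedness of $H$ by tracking how edge cuts of $L(H)$ correspond to ``split'' vertices of $H$. Your treatment is in fact more thorough than the paper's on two points: you verify explicitly that $\dual$ is simple, and you prove \emph{both} directions of the equivalence ``$L(H)$ primitive $\Leftrightarrow$ $H$ $3$-edge-connected'' with a clean $|W|$-case analysis, whereas the paper argues only that a small cyclic cut in $L(\dual)$ forces a small edge cut in $\dual$ and leaves the converse (needed to conclude that the dual of a primitive leading $G$ is indeed $3$-edge-connected) implicit.
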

\begin{proof}
	By construction and \cref{lem:TGN_bound}, the dual graphs $\dual$ of the leading contributions in $T(G,N)$ are 3-regular. We need to show two things: 1. When $\dual$ is furthermore three-connected, then its line graph is primitive. And 2. this construction produces all primitive $G$ that contribute at leading order in $N$. 
	
	1. If $\dual$ is 3-edge-connected, then every edge in $\dual$ is adjacent to two \emph{distinct} vertices. These, in turn, are adjacent to a total of four distinct other edges. Hence, the line graph $L(\dual)$ is 4-regular, and every vertex of $L(\dual)$ is adjacent to another four distinct vertices. In particular, there are no multiedges and no tadpoles. Furthermore, since every vertex of $\dual$ is 3-valent, it corresponds to a triangle of edges in $L(G^\star)$. Since every edge in $\dual$ is adjacent to four distinct edges, every vertex of $L(\dual)$ is adjacent to two triangles which have no edges in common (there may be more than two triangles in total if they share edges), as shown below:
	\begin{center}
		\begin{tikzpicture}
			\node at (-2,0){$\dual \supseteq $};
			\node[vertex] (v1) at (-0.4,0) {};
			\node[vertex] (v2) at (0.4,0) {};
			\node[vertex] (v3) at (-0.8,.4) {};
			\node[vertex] (v4) at (-0.8,-.4) {};
			\node[vertex] (v5) at (0.8,.4) {};
			\node[vertex] (v6) at (0.8,-.4) {};
			\draw[edge] (v1)--(v2);
			\draw[edge] (v3)--(v1) --(v4);
			\draw[edge] (v5)--(v2) --(v6);
			
			\node at (1.6,0){$\longleftrightarrow$};
			
			\node[vertex] (v1) at (3,0){};
			\node[vertex] (v2) at (2.5,-.3){};
			\node[vertex] (v3) at (2.5,.3){};
			\node[vertex] (v4) at (3.5,-.3){};
			\node[vertex] (v5) at (3.5,.3){};
			
			\draw[edge] (v1) -- (v2) -- (v3) -- (v1);
			\draw[edge] (v1) -- (v4) -- (v5) -- (v1);
			\draw[edge] (v2) -- +(-.2,-.1);
			\draw[edge] (v3) -- +(-.2,.1);
			\draw[edge] (v2) -- +(.1,-.2);
			\draw[edge] (v3) -- +(.1,.2);
			\draw[edge] (v4) -- +(.2,-.1);
			\draw[edge] (v5) -- +(.2,.1);
			\draw[edge] (v4) -- +(-.1,-.2);
			\draw[edge] (v5) -- +(-.1,.2);
			
			\node at (5,0){$\subseteq L(\dual)$};
			
		\end{tikzpicture}
	\end{center}
	
	When $L(\dual)$ is 4-regular, and every vertex is adjacent to at least two edge-disjoint triangles,  then every edge of $L(\dual)$ is part of at least one triangle. 
    Now assume that a subgraph $\gamma \subset L(\dual)$ is connected to its complement through four edges. Since each of the four edges must be part of some triangle, they must be arranged such that two of them share one triangle and the other two share another triangle. Hence, $\gamma$ is connected to its complement in $L(\dual)$ through two vertices as shown below:
    	\begin{center}
    	\begin{tikzpicture}
    		 
    		\node[circle,minimum size=1.1cm,fill=gray] (v1) at (0.0,0) {$\gamma$};
    		\node[vertex] (v3) at (-0.4,.4) {};
    		\node[vertex] (v4) at (-0.4,-.4) {};
    		\node[vertex] (v5) at (0.4,.4) {};
    		\node[vertex] (v6) at (0.4,-.4) {};
    		
    		\node[vertex] (v7) at (-1,0) {};
    		\node[vertex] (v8) at (1,0) {};
    		
    		\draw[edge] (v3)--(v7) --(v4);
    		\draw[edge] (v5)--(v8) --(v6);
    		\draw[edge](v3)--(v4);
    		\draw[edge](v5)--(v6);
    		\draw[edge] (v7) -- +(-.2,.2);
    		\draw[edge] (v7) -- +(-.2,-.2);
    		\draw[edge] (v8) -- +(.2,.2);
    		\draw[edge] (v8) -- +(.2,-.2);
    		
    		\draw[edge] (v3) -- +(.2,-.1);
    		\draw[edge] (v4) -- +(.2,.1);
    		\draw[edge] (v3) -- +(.25,.05);
    		\draw[edge] (v4) -- +(.25,-.05);
    		\draw[edge] (v5) -- +(-.2,-.1);
    		\draw[edge] (v6) -- +(-.2,.1);
    		\draw[edge] (v5) -- +(-.25,.05);
    		\draw[edge] (v6) -- +(-.25,-.05);

    		\node at (2.5,0){$\subseteq L(\dual)$};
    		
    	\end{tikzpicture}
    \end{center}
    
     Removing these two vertices disconnects the two parts.  
    Therefore, removing the two corresponding edges in $\dual$ disconnects $\dual$. Consequently,   $\dual$ is two-edge-reducible.  Analogously, a subgraph of valence two in $L(\dual)$ leads to $\dual$ being one-edge-reducible. We conclude that, if $\dual$ is 3-edge-connected, no such subgraphs can be present in~$L(\dual)$. Therefore $L(\dual)$ is primitive.

	2. By \cref{lem:leading_no_multiple}, apart from the special case, each  primitive graph $G$ has at most one decomposition at leading order. Conversely, knowing the dual $\dual$, we can reconstruct all triangles of the original graph $G$ and how they are connected. But for a triangle, knowing its adjacent vertices is enough to fix all edges in the triangle, consequently, all edges of $G$ are fixed. Hence, at leading order in $N$, knowing $\dual$ is enough to uniquely fix $G$, and therefore no two distinct $G$ lead to the same $\dual$.
	
	Note that the bijectivity rests on the  vertices of $\dual$ corresponding to \emph{triangles}: For a cycle of larger size, one would have multiple ways to arrange the edges while still being incident to the same set of vertices, compare \cref{fig:4vertex_decompositions} (a).
\end{proof}

For $L=4$ and $L=7$, the duals and their corresponding primitive decompositions are shown in \cref{fig:leading_dual_graphs}.
We have written a small program, based on \texttt{nauty} \cite{mckay_practical_2014}, to generate the   dual graphs and construct the leading primitives, the counts  are shown  in \cref{tab:leading_graphs_count} below.

\begin{figure}[htb]
	\centering
	\begin{tikzpicture}[scale=.7]
		\coordinate(x0) at (-4,3.2);
		\node[vertex,green,draw=black](v1) at ($(x0) $){};
		\node[vertex,yellow,draw=black](v2) at ($(x0) +(90:1)$){};
		\node[vertex,red,draw=black](v3) at ($(x0) +(210:1)$){};
		\node[vertex,blue,draw=black](v4) at ($(x0) +(330:1)$){};
		\draw[edge] (v1) -- (v4);
		\draw[edge] (v2) -- (v4);
		\draw[edge] (v3) -- (v4);
		\draw[edge] (v1) -- (v2);
		\draw[edge] (v2) -- (v3);
		\draw[edge] (v1) -- (v3);
		
		\coordinate(x0) at (1,3.2);
		\node[vertex,orange,draw=black](v1) at ($(x0) +(0:1) $){};
		\node[vertex,white,draw=black](v2) at ($(x0) +(60:.7)$){};
		\node[vertex,green,draw=black](v3) at ($(x0) +(120:.7)$){};
		\node[vertex,blue,draw=black](v4) at ($(x0) +(180:1)$){};
		\node[vertex,yellow,draw=black](v5) at ($(x0) +(240:.7)$){};
		\node[vertex,red,draw=black](v6) at ($(x0) +(300:.7)$){};
		\draw[edge] (v1) -- (v2);
		\draw[edge] (v2) -- (v3);
		\draw[edge] (v3) -- (v4);
		\draw[edge] (v4) -- (v5);
		\draw[edge] (v5) -- (v6);
		\draw[edge] (v6) -- (v1);
		\draw[edge] (v3) -- (v6);
		\draw[edge] (v2) -- (v5);
		\draw[edge] (v1) ..controls ($(v1) +(90:1.5) $) and ($(v4) +(90:1.5) $).. (v4);
		
		\coordinate(x0) at (5,3.2);
		\node[vertex,white,draw=black](v1) at ($(x0) +(0:1) $){};
		\node[vertex,orange,draw=black](v2) at ($(x0) +(60:.7)$){};
		\node[vertex, green,draw=black](v3) at ($(x0) +(120:.7)$){};
		\node[vertex, blue, draw=black](v4) at ($(x0) +(180:1)$){};
		\node[vertex, red, draw=black](v5) at ($(x0) +(240:.7)$){};
		\node[vertex, yellow, draw=black](v6) at ($(x0) +(300:.7)$){};
		\draw[edge] (v1) -- (v2);
		\draw[edge] (v2) -- (v3);
		\draw[edge] (v3) -- (v4);
		\draw[edge] (v4) -- (v5);
		\draw[edge] (v5) -- (v6);
		\draw[edge] (v6) -- (v1);
		\draw[edge] (v3) -- (v5);
		\draw[edge] (v2) -- (v6);
		\draw[edge] (v1) ..controls ($(v1) +(90:1.5) $) and ($(v4) +(90:1.5) $).. (v4);

		\coordinate(x0) at (-4,0);
		
		\coordinate(v1) at ($(x0) +(30:1.5)$){};
		\coordinate(v2) at ($(x0) +(150:1.5)$){};
		\coordinate(v3) at ($(x0) +(270:1.5)$){};
		\coordinate(v4) at ($(x0) +(90:.3)$){};
		\coordinate(v5) at ($(x0) +(210:.3)$){};
		\coordinate(v6) at ($(x0) +(330:.3)$){};
		\node[vertex,lightgray] at (v1){};
		\node[vertex,lightgray] at (v2){};
		\node[vertex,lightgray] at (v3){};
		\node[vertex,lightgray] at (v4){};
		\node[vertex,lightgray] at (v5){};
		\node[vertex,lightgray] at (v6){};
		
		\draw[edge, rounded corners=2pt, fill=green, fill opacity=.2] (v4) -- (v5)--(v6) -- cycle;
		\draw[edge, rounded corners=4pt, fill=yellow, fill opacity=.1 ] (v1) -- (v2)--(v4) -- cycle;
		\draw[edge, rounded corners=4pt, fill=blue, fill opacity=.1 ] (v1) -- (v3)--(v6) -- cycle;
		\draw[edge, rounded corners=4pt, fill=red, fill opacity=.3] (v2) -- (v3)--(v5) -- cycle;

		\coordinate(x0) at (1,0);
		\coordinate(v1) at ($(x0) +(90:1.5)$){};
		\coordinate(v2) at ($(x0) +(150:1.5)$){};
		\coordinate(v3) at ($(x0) +(210:1.5)$){};
		\coordinate(v4) at ($(x0) +(270:1.5)$){};
		\coordinate(v5) at ($(x0) +(330:1.5)$){};
		\coordinate(v6) at ($(x0) +(30:1.5)$){};
		\coordinate(v7) at ($(x0) +( 90:.5)$){};
		\coordinate(v8) at ($(x0) +(210:.5)$){};
		\coordinate(v9) at ($(x0) +(330:.5)$){};
		\node[vertex,lightgray] at (v1){};
		\node[vertex,lightgray] at (v2){};
		\node[vertex,lightgray] at (v3){};
		\node[vertex,lightgray] at (v4){};
		\node[vertex,lightgray] at (v5){};
		\node[vertex,lightgray] at (v6){};
		\node[vertex,lightgray] at (v7){};
		\node[vertex,lightgray] at (v8){};
		\node[vertex,lightgray] at (v9){};
		
		\draw[edge, rounded corners=4pt, fill=green, fill opacity=.1 ] (v1) -- (v8) -- (v2) -- cycle;
		\draw[edge, rounded corners=4pt, fill=red, fill opacity=.3] (v8) -- (v4) -- (v5) -- cycle;
		\draw[edge, rounded corners=4pt, fill=blue, fill opacity=.1] (v3) -- (v2) -- (v7) -- cycle;
		\draw[edge, rounded corners=4pt, fill=orange, fill opacity=.2] (v5) -- (v6) -- (v7) -- cycle;
		\draw[edge, rounded corners=4pt, fill=white, fill opacity=.15] (v1) -- (v6) -- (v9) -- cycle;
		\draw[edge, rounded corners=4pt, fill=yellow, fill opacity=.15] (v3) -- (v4) -- (v9) -- cycle;

		\coordinate(x0) at (5,0);

		\coordinate(v1) at ($(x0) +(90:1.5)$){};
		\coordinate(v2) at ($(x0) +(150:1.5)$){};
		\coordinate(v3) at ($(x0) +(210:1.5)$){};
		\coordinate(v4) at ($(x0) +(270:1.5)$){};
		\coordinate(v5) at ($(x0) +(330:1.5)$){};
		\coordinate(v6) at ($(x0) +(30:1.5)$){};
		\coordinate(v7) at ($(x0) +( 180:.7)$){};
		\coordinate(v8) at ($(x0) +(0:0)$){};
		\coordinate(v9) at ($(x0) +(0:.7)$){};
		\node[vertex,lightgray] at (v1){};
		\node[vertex,lightgray] at (v2){};
		\node[vertex,lightgray] at (v3){};
		\node[vertex,lightgray] at (v4){};
		\node[vertex,lightgray] at (v5){};
		\node[vertex,lightgray] at (v6){};
		\node[vertex,lightgray] at (v7){};
		\node[vertex,lightgray] at (v8){};
		\node[vertex,lightgray] at (v9){};
		
		\draw[edge, rounded corners=5pt, fill=green, fill opacity=.1 ] (v1) -- (v7) -- (v2) -- cycle;
		\draw[edge, rounded corners=5pt, fill=red, fill opacity=.3] (v3) -- (v4) -- (v7) -- cycle;
		\draw[edge, rounded corners=8pt, fill=blue, fill opacity=.1] (v2) -- (v8) -- (v3) -- cycle;
		\draw[edge, rounded corners=5pt, fill=orange, fill opacity=.2] (v1) -- (v9) -- (v6) -- cycle;
		\draw[edge, rounded corners=8pt, fill=white, fill opacity=.15] (v6) -- (v5) -- (v8) -- cycle;
		\draw[edge, rounded corners=5pt, fill=yellow, fill opacity=.15] (v4) -- (v5) -- (v9) -- cycle;

	\end{tikzpicture}
	\caption{The 3-regular three-connected graphs $\dual$ on four and six vertices. Below them are their line graphs, which are the leading-order decompositions of completions $G$ at four and seven loops.}
	\label{fig:leading_dual_graphs}
\end{figure}
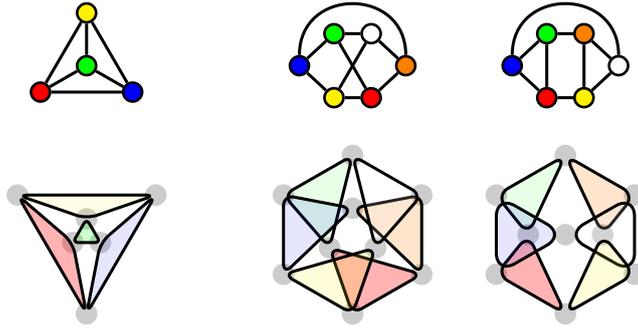

\subsubsection{Counts of  3-connected cubic graphs}

By \cref{thm:leading_3n+1},  the results regarding the $N$-dependent sums of primitive graphs obtained in \cref{sec:0dim} imply statements about sums of 3-connected cubic graphs. 
\begin{proposition}\label{thm:cubic_count}
	Let $2<n\in \mathbb{N}$, and let $C_{2n}$ be the set of 3-connected cubic graphs on   $2n$ vertices, and let $S_n$ be the sum of their symmetry factors. Then
	\begin{align*}
	S_{n}:=\sum_{G^\star\in C_{2n}} \frac{1}{\abs{\Aut(G^\star)}} &=	\frac{3^{3n-2}}{4! \; n} [N^{2n-1}]p_{3n-2}(N),
	\end{align*}
	where the polynomial $p_L(N)$ is the $L-$loop sum of primitives computed in \cref{primitive_generating_function}.
\end{proposition}
\begin{proof}
	By \cref{thm:leading_3n+1}, the 3-connected cubic graphs $G^\star$ on $2n$ vertices are dual to the leading-order decompositions  of completions $G$   whose decompletions $g$ have $L=3(n-1)+1=3n-2$ loops. According to \cref{lem:TGN_bound}, the leading coefficient of $T(g,N)$ has order $N^{\frac 2 3 L - \frac 2 3}=N^{2n-2}$, and according to \cref{lem:leading_no_multiple}, its coefficient is unity (except $n=1$). According to \cref{def:TGN}, the polynomial $T(g,N)$ includes a normalization factor $3^{-\abs{V_g}}$, where $\abs{V_g}=L+1=3n-1$ is the number of vertices of the decompletion. Combining all these statements, it follows that for a decompletion $g$ at $L=3n-2$,  the coefficient
	\begin{align*}
		3^{3n -1} \cdot 	[N^{2n-2}] T(g,N) &=\begin{cases}
			\frac{1}{\abs{\Aut(g)}} & \text{if $G$ is dual to a 3-connected cubic graph,}\\
			0 & \text{else}.
		\end{cases}
	\end{align*}
	In that expression, $\abs{\Aut(g)}^{-1}$ is the symmetry factor of the decompletion, in QFT convention (i.e. with 4 external legs fixed, but summed over all channels). The same convention is used in 
	our generating function $\prim(\hbar_\ren)$ of \cref{primitive_generating_function}. For the purpose of counting cubic graphs, we require the symmetry factor of the completion $G$.  
	According to \cref{decompletion_symmetry_factor}, it is obtained by dividing by $4!$ and by $(L+2)$, the latter being $3n$ in our case.

	What remains to be shown is that $\abs{\Aut(G)}=\abs{\Aut(G^\star)}$, where $G$ is the line graph of $G^\star$. For $n>1$, both $G$ and $G^\star$ are free of multi edges or self loops, and any potential automorphism thus arises from a permutation of vertices.  Such automorphisms of the line graph are known to be induced by automorphisms of $G^\star$  \cite{erdos_automorphisms_1980,lovasz_combinatorial_2007}.
\end{proof}

\begin{example}{}\label{ex:cubic_graphs_count}
	The number of 3-connected cubic graphs on $2n $ vertices is currently known for $2n\leq 28$ \cite[A204198]{oeis}. Starting from $n=3$, the sequence is
	\begin{align*} 
		\left \lbrace  2,~4,~14,~57,~341,~2828,~30468,~396150, ~5909292,~98101019,~1782392646,~\ldots \right \rbrace . 
	\end{align*}
	These numbers correspond to loop orders $L\leq 40$ of primitive decompletions, they are consistent with \cref{tab:leading_graphs_count}.
	The sum of symmetry factors   according to \cref{thm:cubic_count} is
		\begin{align*}
		\scalemath{.95}{S_{n\geq 3}=\left \lbrace  \frac 7 {72},~\frac 5{12}, ~ \frac {611}{240},~ \frac{159}{8},~\frac{5985}{32},~\frac{262185}{128},~\frac{44082575}{1728},~\frac{682544387}{1920}, ~\frac{2891383547}{528},~\ldots \right \rbrace.}
	\end{align*}
	The first two entries, corresponding to $L=7$ and $L=10$,  can also be read off from  \cref{tab:0dim_primitive_coefficients}. We have explicitly computed the symmetry factors for the graphs listed in \cref{tab:leading_graphs_count} at $L=3n-2\leq 25$, that is $n\leq 9$, and found them to agree with the values quoted above. 
	
	$S_n$ is a lower bound to $\abs{C_{2n}}$, the count of 3-connected cubic graphs, because $\abs{\Aut(G^\star)}\geq 1$. As $n$ increases, the number of symmetry factors approaches the true count: 
	\begin{align*}
		\textnormal{at $n=3$}:~\frac{\frac 7 {72}}{2} \approx 0.049, \quad \qquad \textnormal{at $n=14$}:~ \frac{\frac{180310293533225}{5376}}{35085504243}\approx 0.956.
	\end{align*}	
\end{example}

\subsubsection[Primitive graphs with 3n loops]{Primitive graphs with $3n $ loops}\label{sec:leading_2}

Completions with $3n$ loops have $3n+2$ vertices, and therefore $6n+4$ edges. By \cref{lem:TGN_bound}, the leading graphs of $3n$ loops have order $2n+1$ in $N$, hence their duals (\cref{def:dual}) have $2n+1$ vertices and $3n+2$ edges. The leading graphs are those whose cycles are all triangles except for one square. Consequently, the dual graphs  have $2n$ 3-valent vertices and a single 4-valent vertex. 

In \cref{sec:leading_1}, the analogous conditions implied that the dual graphs are \emph{simple}, that is, free of multiedges. This is not true in the case of $3n$ loops. 
By \cref{lem:no_2valent}, $\dual$~still has no 2-valent vertices and, by the same proof as \cref{thm:leading_3n+1}, there are no 2-valent subgraphs. Hence, the dual $\dual$ must be 3-edge-connected.
But this leaves the possibility that it can contain a double edge adjacent to the single 4-valent vertex, as shown in \cref{fig:4vertex_decompositions} (b). 

Furthermore, the line graph alone is not suitable to recover $G$ from $\dual$.
When $\dual$ contains a 4-valent vertex, the line graph $L(\dual)$ (\cref{def:line_graph}) contains five-valent vertices. 
Still, we can use the line graph construction for all edges that are not adjacent to the single 4-valent vertex $v$ and replace $v$ by a circuit of length four. 
There are exactly three different possibilities of an undirected circuit on four vertices, shown in \cref{fig:4vertex_decompositions} (a). 
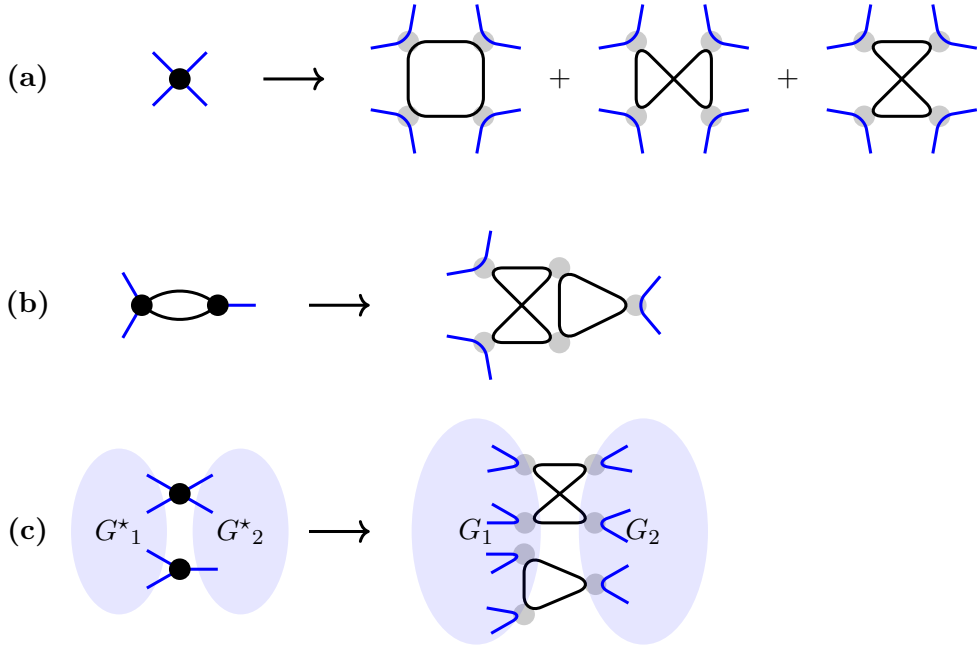
\begin{figure}[htb]
	\centering
	
	\begin{tikzpicture}
		
		\coordinate(x0) at (-3.5,0);
		\node at ($(x0) +(-2,0)$){\textbf{(a)}};
		
		\coordinate(v1) at ($(x0) $){};
		
		\draw[edge,blue](v1)-- +(45:.5);
		\draw[edge,blue](v1)-- +(135:.5);
		\draw[edge,blue](v1)-- +(225:.5);
		\draw[edge,blue](v1)-- +(315:.5);
		
		\node[vertex] at (v1){};
		
		\draw [edge,->] (-2.4,0) -- (- 1.6,0);
		
		\coordinate(x0) at (0,0);
		
		\coordinate(v0) at ($(x0) $){};
		\coordinate(v1) at ($(x0) +(45:.7)$){};
		\coordinate(v2) at ($(x0) +(135:.7)$){};
		\coordinate(v3) at ($(x0) +(225:.7)$){};
		\coordinate(v4) at ($(x0) +(315:.7)$){}; 
		
		\node[vertex, lightgray] at (v1){};
		\node[vertex, lightgray] at (v2){};
		\node[vertex, lightgray] at (v3){};
		\node[vertex, lightgray] at (v4){};
		
		\draw[edge, rounded corners=7pt ] (v1) -- (v2)--(v3)--(v4)--cycle;
		
		\draw[edge,blue, rounded corners=5pt] ($(v1)+(100:.5)$) --(v1) -- ($(v1) +(-10:.5) $);
		\draw[edge,blue, rounded corners=5pt] ($(v2)+(80:.5)$) --(v2) -- ($(v2) +(190:.5) $); 
		\draw[edge,blue, rounded corners=5pt] ($(v3)+(170:.5)$) --(v3) -- ($(v3) +(280:.5) $); 
		\draw[edge,blue, rounded corners=5pt] ($(v4)+( 10:.5)$) --(v4) -- ($(v4) +(260:.5) $); 
		
		\node at ($(x0)+ (1.5,0)$){+};
		
		\coordinate(x0) at (3,0);

		\coordinate(v0) at ($(x0) $){};
		\coordinate(v1) at ($(x0) +(45:.7)$){};
		\coordinate(v2) at ($(x0) +(135:.7)$){};
		\coordinate(v3) at ($(x0) +(225:.7)$){};
		\coordinate(v4) at ($(x0) +(315:.7)$){}; 
		
		\node[vertex,draw=lightgray,fill=lightgray] at (v1){};
		\node[vertex,draw=lightgray,fill=lightgray] at (v2){};
		\node[vertex,draw=lightgray,fill=lightgray] at (v3){};
		\node[vertex,draw=lightgray,fill=lightgray] at (v4){};
		
		\draw[edge, rounded corners=7pt ] (v1) -- (v3)--(v2)--(v4)--cycle;
		
		\draw[edge,blue, rounded corners=5pt] ($(v1)+(100:.5)$) --(v1) -- ($(v1) +(-10:.5) $);
		\draw[edge,blue, rounded corners=5pt] ($(v2)+(80:.5)$) --(v2) -- ($(v2) +(190:.5) $); 
		\draw[edge,blue, rounded corners=5pt] ($(v3)+(170:.5)$) --(v3) -- ($(v3) +(280:.5) $); 
		\draw[edge,blue, rounded corners=5pt] ($(v4)+( 10:.5)$) --(v4) -- ($(v4) +(260:.5) $); 
		
		\node at ($(x0)+ (1.5,0)$){+};
		
		\coordinate(x0) at (6,0);

		\coordinate(v0) at ($(x0) $){};
		\coordinate(v1) at ($(x0) +(45:.7)$){};
		\coordinate(v2) at ($(x0) +(135:.7)$){};
		\coordinate(v3) at ($(x0) +(225:.7)$){};
		\coordinate(v4) at ($(x0) +(315:.7)$){}; 
		
		\node[vertex,draw=lightgray,fill=lightgray] at (v1){};
		\node[vertex,draw=lightgray,fill=lightgray] at (v2){};
		\node[vertex,draw=lightgray,fill=lightgray] at (v3){};
		\node[vertex,draw=lightgray,fill=lightgray] at (v4){};
		
		\draw[edge, rounded corners=7pt ] (v1) -- (v2)--(v4)--(v3)--cycle;
		
		\draw[edge,blue, rounded corners=5pt] ($(v1)+(100:.5)$) --(v1) -- ($(v1) +(-10:.5) $);
		\draw[edge,blue, rounded corners=5pt] ($(v2)+(80:.5)$) --(v2) -- ($(v2) +(190:.5) $); 
		\draw[edge,blue, rounded corners=5pt] ($(v3)+(170:.5)$) --(v3) -- ($(v3) +(280:.5) $); 
		\draw[edge,blue, rounded corners=5pt] ($(v4)+( 10:.5)$) --(v4) -- ($(v4) +(260:.5) $);

		\coordinate(x0) at (-4,-3);
		\node at ($(x0) +(-1.5,0)$){\textbf{(b)}};
		\coordinate(v1) at ($(x0) $){};
		\coordinate(v2) at ($(x0) +(1,0)$){};
		\draw[edge,blue](v2)-- +(0:.5);
		\draw[edge,blue](v1)-- +(120:.5);
		\draw[edge,blue](v1)-- +(240:.5);
		\draw[edge,bend angle=40,bend left] (v1) to (v2);
		\draw[edge,bend angle=40,bend right] (v1) to (v2);
		\node[vertex] at (v1){};
		\node[vertex] at (v2){};
		
		\draw [edge,->] (-1.8,-3) -- (- 1,-3);

		\coordinate(x0) at (1,-3);
		
		\coordinate(v0) at ($(x0) $){};
		\coordinate(v1) at ($(x0) +(45:.7)$){};
		\coordinate(v2) at ($(x0) +(135:.7)$){};
		\coordinate(v3) at ($(x0) +(225:.7)$){};
		\coordinate(v4) at ($(x0) +(315:.7)$){}; 
		\coordinate(v5) at ($(x0) +(0:1.5)$){};
		
		\node[vertex, lightgray] at (v1){};
		\node[vertex, lightgray] at (v2){};
		\node[vertex, lightgray] at (v3){};
		\node[vertex, lightgray] at (v4){};
		\node[vertex, lightgray] at (v5){};
		
		\draw[edge, rounded corners=7pt ] (v1) -- (v2)--(v4)--(v3)--cycle;
		
		\draw[edge, rounded corners=7pt ] (v1) -- (v4)--(v5)-- cycle;
		\draw[edge,blue, rounded corners=5pt] ($(v2)+(80:.5)$) --(v2) -- ($(v2) +(190:.5) $); 
		\draw[edge,blue, rounded corners=5pt] ($(v3)+(170:.5)$) --(v3) -- ($(v3) +(280:.5) $); 
		\draw[edge,blue, rounded corners=5pt] ($(v5)+(50:.5)$) --(v5) -- ($(v5) +(-50:.5) $);

		\coordinate(x0) at (-4,-6);
		\node at ($(x0) +(-1.5,0)$){\textbf{(c)}};
		\coordinate(v1) at ($(x0)+(.5,.5) $){};
		\coordinate(v2) at ($(x0)+(.5,-.5) $){};
		
		\draw[edge,blue](v1)-- +(30:.5);
		\draw[edge,blue](v1)-- +(-30:.5);
		\draw[edge,blue](v1)-- +(150:.5);
		\draw[edge,blue](v1)-- +(210:.5);
		
		\draw[edge,blue](v2)-- +(0:.5);
		\draw[edge,blue](v2)-- +(150:.5);
		\draw[edge,blue](v2)-- +(210:.5);
		\node[vertex] at (v1){};
		\node[vertex] at (v2){};
		
		\node[ellipse ,minimum height=2.2cm , minimum width=1cm, fill=blue,fill opacity=.1, text opacity=1,text=black] at ($(x0)+(-.3,0) $) { $\dual_1$ };
		\node[ellipse ,minimum height=2.2cm , minimum width=1cm, fill=blue,fill opacity=.1, text opacity=1,text=black] at ($(x0)+(1.3,0) $) { $\dual_2$ };

		\draw [edge,->] (-1.8,-6) -- (- 1,-6);
		
		\coordinate(x0) at (1.5,-5.5);
		
		\coordinate(v0) at ($(x0) $){};
		\coordinate(v1) at ($(x0) +(40:.6)$){};
		\coordinate(v2) at ($(x0) +(140:.6)$){};
		\coordinate(v3) at ($(x0) +(220:.6)$){};
		\coordinate(v4) at ($(x0) +(320:.6)$){}; 
		\coordinate(v5) at ($(x0) +(-.47,-.8)$){};
		\coordinate(v6) at ($(x0) +(-.47,-1.6)$){};
		\coordinate(v7) at ($(x0) +(.47,-1.2)$){}; 
		
		\node[vertex, lightgray] at (v1){};
		\node[vertex, lightgray] at (v2){};
		\node[vertex, lightgray] at (v3){};
		\node[vertex, lightgray] at (v4){};
		\node[vertex, lightgray] at (v5){};
		\node[vertex, lightgray] at (v6){};
		\node[vertex, lightgray] at (v7){};
		
		\draw[edge, rounded corners=7pt ] (v1) -- (v2)--(v4)--(v3)--cycle;
		\draw[edge, rounded corners=7pt ] (v7) -- (v5)--(v6)-- cycle;
		
		\draw[edge,blue, rounded corners=5pt] ($(v2)+(150:.5)$) --(v2) -- ($(v2) +(190:.5) $); 
		\draw[edge,blue, rounded corners=5pt] ($(v3)+(150:.5)$) --(v3) -- ($(v3) +(180:.5) $); 
		\draw[edge,blue, rounded corners=5pt] ($(v5)+(180:.5)$) --(v5) -- ($(v5) +(210:.5) $); 
		\draw[edge,blue, rounded corners=5pt] ($(v6)+(170:.5)$) --(v6) -- ($(v6) +(210:.5) $); 
		
		\draw[edge,blue, rounded corners=5pt] ($(v1)+(30:.5)$) --(v1) -- ($(v1) +(-10:.5) $); 
		\draw[edge,blue, rounded corners=5pt] ($(v4)+(20:.5)$) --(v4) -- ($(v4) +(-30:.5) $); 
		\draw[edge,blue, rounded corners=5pt] ($(v7)+(30:.5)$) --(v7) -- ($(v7) +(-30:.5) $); 
		
		\node[ellipse ,minimum height=3cm , minimum width=1.7cm, fill=blue,fill opacity=.1, text opacity=1,text=black] at ($(x0)+(-1.1,-.5) $) { $ G_1$ };
		\node[ellipse ,minimum height=3cm , minimum width=1.7cm, fill=blue,fill opacity=.1, text opacity=1,text=black] at ($(x0)+(1.1,-.5) $) { $ G_2$ };

	\end{tikzpicture}

	\caption{\textbf{(a)} A 4-valent vertex in $\dual$ gives rise to one of three possible circuits. In a three-vertex-connected simple $\dual$, all three give rise to (potentially distinct) primitives $G$. \textbf{(b)} If~$\dual$ contains a double edge adjacent to a 4-valent vertex, there is only one possibility to decompose this vertex which makes $G$ primitive. \textbf{(c)} Similarly, if the 4-valent vertex of $\dual$ is a cut vertex of a 2-vertex cut, then only one of the three decompositions gives rise to a primitive~$G$, namely the one that produces four edges between the cut components. }
	\label{fig:4vertex_decompositions}
	
\end{figure}

\begin{theorem}\label{thm:leading_3n}
	The leading decompositions at $L=3n$ loops, $n>1$, arise from 3-edge-connected dual graphs $\dual$, consisting of a single 4-valent vertex $v$, and $2n$ 3-valent vertices,  where either one of the following is true:
	\begin{enumerate}
		\item  $\dual$ is simple (and therefore three-vertex connected), then all three decompositions of $v$ shown in \cref{fig:4vertex_decompositions} (a) contribute.
		\item $\dual$ is three-vertex connected and contains one double edge, adjacent to $v$. Only that decomposition of $v$ contributes which does not lead to a double edge in $G$, see (b) in \cref{fig:4vertex_decompositions}.
		\item $\dual$ is simple and has one 2-vertex cut, where  one of the cut vertices is the 4-valent vertex $v$. Then only that one decomposition of $v$ contributes that joins the two cut components of $G$ with four and not two edges, see \cref{fig:4vertex_decompositions} (c).
	\end{enumerate}
\end{theorem}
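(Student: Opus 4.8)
The plan is to follow the proof of \cref{thm:leading_3n+1} closely; the only genuinely new feature is that the dual graph now carries a single $4$-valent vertex whose three possible strandings must be analysed one at a time. I would begin by assembling the structural input, most of which is already in the discussion preceding the theorem: by \cref{lem:TGN_bound} a leading decomposition of an $L=3n$ completion $G$ (order $N_\textnormal{max}=2n+1$) has only triangular circuits apart from a single square, so its dual $\dual$ has $\abs{V_{\dual}}=2n+1$ and $\abs{E_{\dual}}=\abs{V_G}=3n+2$; since $\dual$ has no $1$- or $2$-valent vertices (\cref{lem:no_2valent}) while $\sum_u \deg(u)=2(3n+2)=3(2n+1)+1$, exactly one vertex $v$ has degree $4$ and all others degree $3$. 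Primitivity of $G$ forces, by the argument of \cref{thm:leading_3n+1}, that $\dual$ is $3$-edge-connected, hence loopless, and inspection of the edge cut isolating one side of a double edge shows $\dual$ has at most one double edge, necessarily incident to $v$. The reconstruction of $G$ is the modified line graph: a triangle for each $3$-valent vertex, one of the three $4$-cycles of \cref{fig:4vertex_decompositions} for $v$, and edges of $\dual$ becoming vertices of $G$ as in \cref{def:line_graph}; one checks once that this is $4$-regular, that the triangles together with the square form a valid decomposition realising $N_\textnormal{max}$, and that away from $v$ it coincides verbatim with the construction of \cref{thm:leading_3n+1}.

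The trichotomy then comes from a short analysis of $v$. A counting argument with the $3$-edge-connectivity inequalities shows that two $3$-valent vertices cannot form a $2$-vertex cut, so every $2$-vertex cut of $\dual$ contains $v$; the same inequalities show that at such a cut $v$ has exactly two edges on each side and none to the other cut vertex, and that a double edge away from $v$, or a second double edge or a triple edge at $v$, would yield an edge cut of size at most $2$. Hence $\dual$ is either simple and $3$-vertex-connected, or simple with a $2$-vertex cut through $v$, or simple except for a single double edge at $v$ --- the three cases of the statement.

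The core step is to decide, for a fixed $\dual$, which of the three $4$-cycles at $v$ give a primitive $G$. Write $s_1,\dots,s_4$ for the square vertices, $s_i$ being the $G$-vertex of the edge $e_i\ni v$, which also lies in the triangle $T_i$ coming from the other endpoint $t_i$ of $e_i$. A proper subgraph of $G$ of valence below $6$ not meeting the square pulls back, as in \cref{thm:leading_3n+1}, to an edge cut or a vertex cut of $\dual$ of size at most $2$, impossible in the first case and, in the other two, exactly the exploited feature. For subgraphs meeting the square I would compute valences by hand: in the simple $3$-vertex-connected case the $t_i$ are distinct and all three $4$-cycles leave every such subgraph of valence at least $6$, so all three decompositions contribute; in the double-edge case, with $e_1,e_2$ the two edges of the double edge (so $t_1=t_2=w$ and $s_1,s_2$ both in the triangle $T_w$), the two $4$-cycles making $s_1,s_2$ adjacent create a double edge $s_1 s_2$ in $G$, a valence-$4$ non-forest subgraph, while the unique $4$-cycle with $s_1,s_2$ opposite avoids it and creates no other short cut, so exactly one decomposition contributes; in the $2$-vertex-cut case, with $e_1,e_2$ the two edges of $v$ on one side $A$ of the cut, the two $4$-cycles with $s_1,s_2$ adjacent leave only two of the four square edges crossing the cut, so that the union of the triangles over $A$ has valence $4$ and separates two non-trivial subgraphs of $G$, whereas the $4$-cycle with $s_1,s_2$ opposite makes all four square edges cross and raises that valence to $6$; again exactly one decomposition contributes. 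This proves both the count of admissible decompositions and their primitivity, and the converse is immediate: any leading primitive decomposition of an $L=3n$ completion produces a dual of the stated form, and because every vertex of $\dual$ is a triangle or the square of $G$, knowing $\dual$ and the stranding of $v$ recovers every edge of $G$, exactly as in \cref{thm:leading_3n+1}.

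I expect the main obstacle to be the bookkeeping in this last step --- the precise dictionary between small edge- or vertex-separations of $\dual$ and low-valence subgraphs of $G$ in the presence of the degree-$4$ vertex --- together with the care needed in the borderline configurations where a double edge at $v$ and a $2$-vertex cut through $v$ occur simultaneously; there one simply imposes both conditions above, and the number of admissible $4$-cycles is unaffected. The hypothesis $n>1$ serves only to discard the small degenerate duals, such as the one whose line graph is $K_5$.
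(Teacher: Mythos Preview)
Your proposal is correct and follows essentially the same approach as the paper: adapt the proof of \cref{thm:leading_3n+1}, use the degree count to force one $4$-valent and $2n$ trivalent vertices in $\dual$, invoke $3$-edge-connectedness, and then analyse the three strandings of the $4$-valent vertex via the figures. In fact the paper gives no separate proof for this theorem at all --- the justification is entirely contained in the paragraphs preceding the statement and in \cref{fig:4vertex_decompositions} --- so your write-up is considerably more detailed than what the paper itself provides, particularly in the explicit valence bookkeeping for subgraphs meeting the square and in the derivation of the trichotomy from the $3$-edge-connectivity constraints.
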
 

Since  \cref{thm:leading_3n} potentially involves a sum over distinct decompositions of a given 4-valent vertex,  the number $n_0$ of leading-order graphs generated is in general larger than the number of non-isomorphic duals $\dual$. 
Algorithmically, the graphs in \cref{thm:leading_3n} (b) can be generated from 3-regular, 3-edge-connected graphs on $2n$ vertices upon replacing one of the 3-valent vertices by a 4-valent vertex with adjacent double edge. The three distinct choices of where to put the double edge amount to distinct decompositions of the same primitive $G$. We have generated all suitable $\dual$ and from them all primitive $G$ which contribute at leading order in $N$ up to $L=24$ loops, see \cref{tab:leading_graphs_count}.

\begin{example}
	The case $n=1$ is excluded from  \cref{thm:leading_3n} because the dual graph with one 4-valent and two 3-valent vertices contains two multiedges. This $\dual$ is the dual of $K_5$, and the three ways of arranging the four-cycle (\cref{fig:4vertex_decompositions} (a)) amount to the three isomorphic vertex decompositions of $K_5$ discussed in the proof of \cref{lem:leading_no_multiple}, see \cref{fig:dual_K5}. As all five vertices of $K_5$ are equivalent, one obtains in total  $3\cdot 5=15$ decompositions contributing at leading order in $N$. An explicit computation yields
	\begin{align*}
		T \left( K_5,N \right) &= 132N+96N^2+15N^3, \qquad T \left( K_5\setminus \left \lbrace v \right \rbrace ,N \right) = 66+15N.
	\end{align*}
	Notice that a similar sum over isomorphic vertices is not present when $\dual$ is  3-regular because there, all circuits have length three and swapping them results in the same decomposition. 
\end{example}

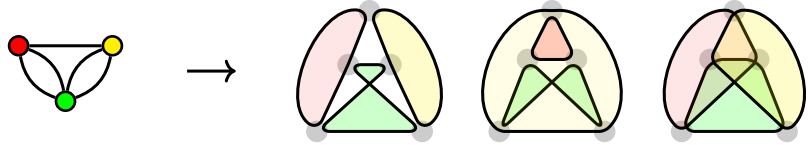
\begin{figure}[htb]
	\centering
	\begin{tikzpicture}[scale=.8]
		
		\coordinate(x0) at (-5,0);
		\node[vertex,green,draw=black](v1) at ($(x0) $){};
		\node[vertex,yellow,draw=black](v2) at ($(x0) +(50:1.2)$){};
		\node[vertex,red,draw=black](v3) at ($(x0) +(130:1.2)$){};
		
		\draw[edge] (v2) -- (v3);
		\draw[edge, bend angle=30,bend left] (v1) to (v2);
		\draw[edge, bend angle=30,bend right] (v1) to (v2);
		\draw[edge, bend angle=30,bend left] (v1) to (v3);
		\draw[edge, bend angle=30,bend right] (v1) to (v3);

		\draw[edge,->](-3,0.5)--(-2.2,0.5);

		\coordinate(x0) at (0,0);

		\coordinate(v1) at ($(x0) +(60:.7)$){};
		\coordinate(v2) at ($(x0) +(120:.7)$){};
		\coordinate(v3) at ($(x0) +(210:1)$){};
		\coordinate(v4) at ($(x0) +(330:1)$){};
		\coordinate(v5) at ($(x0) +(90:1.5)$){};
		
		\node[vertex,lightgray] at (v5){};
		\node[vertex,lightgray] at (v1){};
		\node[vertex,lightgray] at (v2){};
		\node[vertex,lightgray] at (v3){};
		\node[vertex,lightgray] at (v4){};
		
		\draw[edge,fill=green,fill opacity=.2, rounded corners=5pt] (v1) -- (v2) -- (v4) -- (v3)--cycle;
		
		\draw[edge, bend angle=70,bend left,fill=red,fill opacity=.1, rounded corners=5pt] (v3) to (v5)--(v2)--cycle;
		\draw[edge, bend angle=70,bend right,fill=yellow,fill opacity=.2, rounded corners=5pt] (v4) to (v5)--(v1)--cycle;
		
		\coordinate(x0) at (3,0);

		\coordinate(v1) at ($(x0) +(60:.8)$){};
		\coordinate(v2) at ($(x0) +(120:.8)$){};
		\coordinate(v3) at ($(x0) +(210:1)$){};
		\coordinate(v4) at ($(x0) +(330:1)$){};
		\coordinate(v5) at ($(x0) +(90:1.5)$){};
		
		\node[vertex,lightgray] at (v5){};
		\node[vertex,lightgray] at (v1){};
		\node[vertex,lightgray] at (v2){};
		\node[vertex,lightgray] at (v3){};
		\node[vertex,lightgray] at (v4){};
		
		\draw[edge,fill=green,fill opacity=.2, rounded corners=5pt] (v1) -- (v3) -- (v2) -- (v4)--cycle;
		
		\draw[edge, bend angle=70,bend left,fill=yellow,fill opacity=.1, rounded corners=2pt] (v3) to (v5) to (v4)--cycle;
		\draw[edge , fill=red,fill opacity=.2, rounded corners=5pt] (v2) to (v5)--(v1)--cycle;
		
		\coordinate(x0) at (6,0);

		\coordinate(v1) at ($(x0) +(60:.8)$){};
		\coordinate(v2) at ($(x0) +(120:.8)$){};
		\coordinate(v3) at ($(x0) +(210:1)$){};
		\coordinate(v4) at ($(x0) +(330:1)$){};
		\coordinate(v5) at ($(x0) +(90:1.5)$){};
		
		\node[vertex,lightgray] at (v5){};
		\node[vertex,lightgray] at (v1){};
		\node[vertex,lightgray] at (v2){};
		\node[vertex,lightgray] at (v3){};
		\node[vertex,lightgray] at (v4){};
		
		\draw[edge,fill=green,fill opacity=.2, rounded corners=5pt] (v1) -- (v2) -- (v3) -- (v4)-- cycle;
		
		\draw[edge, bend angle=70,bend left,fill=red,fill opacity=.1, rounded corners=3pt] (v3) to (v5)--(v1)--cycle;
		\draw[edge ,bend angle=70,bend right,fill=yellow,fill opacity=.2, rounded corners=3pt] (v4) to (v5)--(v2)--cycle;

	\end{tikzpicture}
	\caption{The unique dual $\dual$ with one 4-valent and two 3-valent vertices. The three ways of arranging its four-cycle give rise to three isomorphic three-loop graphs $G=K_5$. Equivalently, this choice can be understood as the three ways of splitting the upper vertex in $G$ according to \cref{fig:vertex_decomposition}, compare \cref{fig:leading_no_multiple_1}. }
	\label{fig:dual_K5}
\end{figure}

\subsubsection[Primitive graphs with 3n+2 loops]{Primitive graphs with $3n+2$ loops} \label{sec:leading_3}

In this case, the completed graphs $G$ have $3n+4$ vertices and $6n+8$ edges. The leading order in $N$ is $2n+2$. Consequently, the dual graphs $\dual$ have $2n+2$ vertices and $3n+4$ edges. 
These dual graphs either have two 4-valent vertices and the other $2n$ vertices 3-valent, or one vertex is 5-valent and the other $2n+1$ are 3-valent. 
As before, $\dual$ must be 3-edge-connected, but it can have a 2-vertex cut. 
These conditions imply that  $\dual$ must not contain double edges consisting of only 3-valent vertices, but there may be one or two double edges adjacent to a vertex of valence larger than three. It is straightforward to derive the particular constraints on decompositions in each of these cases, analogously to \cref{thm:leading_3n}, and to implement them in a computer program. However, it is little insightful to enumerate them here, so we merely state the (obvious) necessary condition on $\dual$. 

\begin{lemma}\label{thm:leading_3n+2}
	Let $L=3n+2$ with $n\in \mathbbm N$. The primitive graphs $G$ contributing at leading order in $N$ arise from dual graphs $\dual$ which have $2n+2$ vertices, $3n+4$ edges, no vertices of degree lower than three, and which are 3-edge-connected. 
\end{lemma}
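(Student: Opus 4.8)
The plan is to count edges and vertices exactly as in \cref{lem:TGN_bound} and \cref{thm:leading_3n}, and then rule out the two failure modes (a vertex of degree $<3$, and loss of $3$-edge-connectivity) using the primitivity of $G$ via \cref{lem:no_2valent}. First I would record the combinatorics: a completion $G$ with $L=3n+2$ loops has $L+2=3n+4$ vertices and, being $4$-regular, $2(3n+4)/2\cdot 2 = 6n+8$ edges; by \cref{lem:TGN_bound} the degree in $N$ of $T(G,N)$ is at most $\lfloor \frac{2}{3}(3n+2)+\frac{4}{3}\rfloor = 2n+2$, so a decomposition realizing the leading order must produce exactly $2n+2$ circuits. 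By \cref{dual_edges_vertices} the dual $\dual$ then has $\abs{E_{\dual}}=\abs{V_G}=3n+4$ edges and $\abs{V_{\dual}}=2n+2$ vertices. This fixes the two bullet points on the numbers of vertices and edges.

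Next I would establish the degree condition. Since the leading decomposition has $2n+2$ circuits among $3n+4$ vertices of $G$, and every circuit (in the absence of multiedges, which \cref{lem:no_2valent} forbids for a primitive $G$ with $L>1$) has length at least three, the average circuit length is $(3n+4)/(2n+2) < 3/2 + 1$ so most circuits are triangles; more importantly, a circuit of length $\ell$ in the decomposition becomes a vertex of degree $\ell$ in $\dual$, and a degree-$1$ or degree-$2$ vertex in $\dual$ would correspond to a circuit of length $\le 2$, i.e. a tadpole or a double edge in $G$ — impossible since $G$ is primitive (cyclically $6$-edge connected). Hence $\dual$ has minimum degree $\ge 3$. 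Combined with the degree sum $\sum_v \deg(v) = 2\abs{E_{\dual}} = 6n+8 = 3\abs{V_{\dual}} + 2$, this forces the degree sequence to be either ($2n$ vertices of degree $3$, two of degree $4$) or ($2n+1$ of degree $3$, one of degree $5$), as claimed in the surrounding text; for the Lemma as stated we only need minimum degree $\ge 3$.

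Finally, the $3$-edge-connectivity. Suppose $\dual$ had an edge cut of size $\le 2$ separating $\dual$ into $A$ and $B$. Translating back through \cref{def:dual}: an edge of $\dual$ is a vertex of $G$, and an edge cut of $\dual$ of size $k$ corresponds to a set of $k$ vertices of $G$ whose removal... — more precisely, I would argue that a $1$- or $2$-edge cut of $\dual$ produces a proper subgraph of $G$ (the union of circuits in $A$, reassembled) joined to the rest of $G$ through at most $4$ edge-ends, hence a proper subgraph of valence $\le 4 < 6$ that is not a tree, contradicting primitivity of $G$. This is the same mechanism used in the second half of the proof of \cref{thm:leading_3n+1} (the passage from subgraphs of valence four in $L(\dual)$ to two-edge-reducibility of $\dual$), run in the reverse direction, and I expect it to go through verbatim up to bookkeeping. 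The main obstacle is precisely making this last translation airtight when $\dual$ is not $3$-regular: the $4$- and $5$-valent vertices of $\dual$ mean the correspondence between cuts in $\dual$ and cuts in $G$ is slightly looser than in the $3n+1$ case, so one must check that a small edge cut of $\dual$ still cannot avoid creating a low-valence proper non-tree subgraph of $G$ even when the cut passes near the high-degree vertices. I would handle this by case analysis on whether the cut edges are incident to the $4$/$5$-valent vertices, but since the Lemma only asserts the \emph{necessary} conditions (not a bijection), it suffices to note that any violation of $3$-edge-connectivity yields, via the circuit-reassembly picture, a subgraph of $G$ separated by fewer than six edges and containing a cycle, contradicting that $G$ is primitive.
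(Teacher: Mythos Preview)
Your proposal is correct and follows essentially the same route the paper takes: the paper does not spell out a proof but simply points back to \cref{lem:no_2valent} for the minimum-degree condition and to ``the same proof as \cref{thm:leading_3n+1}'' for $3$-edge-connectivity, which is exactly what you do. One remark: your hedging about the $4$- and $5$-valent vertices complicating the cut argument is unnecessary. The translation is uniform in the degrees of $\dual$: a $k$-edge cut of $\dual$ singles out $k$ vertices $v_1,\dots,v_k$ of $G$, and the union of circuits on one side of the cut is a subgraph of $G$ in which every other vertex has degree $4$ while each $v_i$ has degree $2$, hence a non-tree subgraph of valence $2k$; for $k\le 2$ this violates cyclic $6$-edge-connectivity regardless of how large the circuits (i.e.\ the degrees in $\dual$) are. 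No case analysis on the location of high-degree vertices is needed.
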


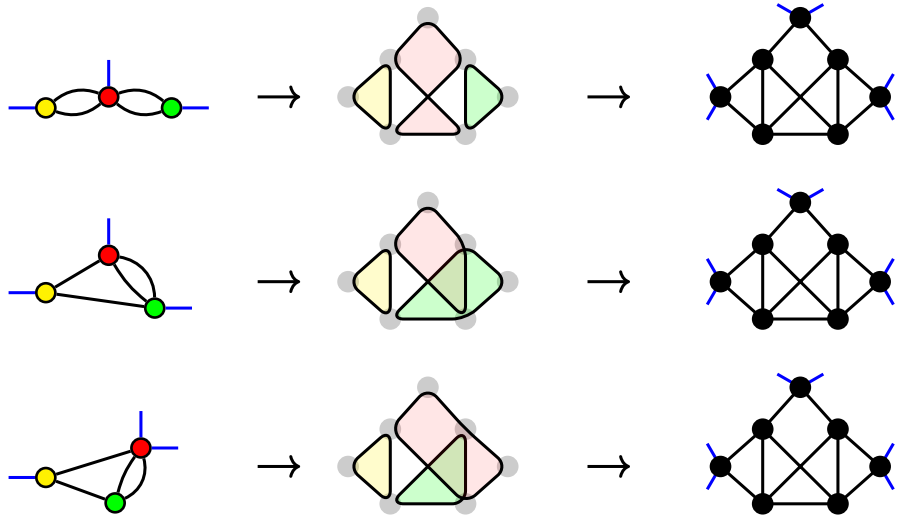
\begin{figure}[htb]
	\centering
	\begin{tikzpicture}[scale=.7]
		
		\coordinate(x0) at (-6,0);
		\node[vertex,red,draw=black](v1) at ($(x0) $){};
		\node[vertex,yellow,draw=black](v2) at ($(x0) +(190:1.2)$){};
		\node[vertex,green,draw=black](v3) at ($(x0) +(-10:1.2)$){};
		
		\draw[edge, bend angle=30,bend left] (v1) to (v2);
		\draw[edge, bend angle=30,bend right] (v1) to (v2);
		\draw[edge, bend angle=30,bend left] (v1) to (v3);
		\draw[edge, bend angle=30,bend right] (v1) to (v3);
		\draw[edge,blue] (v2) -- +(180:.7);
		\draw[edge,blue] (v3) -- +(0:.7);
		\draw[edge,blue] (v1) -- +(90:.7);
		
		\draw[edge,->](-3.2,0 )--(-2.4,0 );

		\coordinate(x0) at (0,0);
		
		\coordinate(v1) at ($(x0) +(45:1)$){};
		\coordinate(v2) at ($(x0) +(135:1)$){};
		\coordinate(v3) at ($(x0) +(225:1)$){};
		\coordinate(v4) at ($(x0) +(315:1)$){};
		\coordinate(v5) at ($(x0) +(90:1.5)$){};
		\coordinate(v6) at ($(x0) +(0:1.5)$){};
		\coordinate(v7) at ($(x0) +(180:1.5)$){};
		
		\node[vertex,lightgray] at (v1){};
		\node[vertex,lightgray] at (v2){};
		\node[vertex,lightgray] at (v3){};
		\node[vertex,lightgray] at (v4){};
		\node[vertex,lightgray] at (v5){};
		\node[vertex,lightgray] at (v6){};
		\node[vertex,lightgray] at (v7){};
		
		\draw[edge,fill=green,fill opacity=.2, rounded corners=5pt] (v1) -- (v6) -- (v4) --cycle;
		
		\draw[edge, fill=red,fill opacity=.1, rounded corners=5pt] (v2) to (v5)--(v1)-- (v3) --(v4)-- cycle;
		\draw[edge, fill=yellow,fill opacity=.2, rounded corners=5pt] (v2) to (v7)--(v3)--cycle;

		\draw[edge,->](3,0 )--(3.8,0 );
		
		\coordinate(x0) at (7,0);
		
		\coordinate(v1) at ($(x0) +(45:1)$){};
		\coordinate(v2) at ($(x0) +(135:1)$){};
		\coordinate(v3) at ($(x0) +(225:1)$){};
		\coordinate(v4) at ($(x0) +(315:1)$){};
		\coordinate(v5) at ($(x0) +(90:1.5)$){};
		\coordinate(v6) at ($(x0) +(0:1.5)$){};
		\coordinate(v7) at ($(x0) +(180:1.5)$){};
		
		\draw[edge] (v1) --(v5)--(v2)--(v4)--(v3)--(v7)--(v2)--(v3)--(v1)--(v6)--(v4)--(v1);
		\draw[edge,blue] ($(v5)+(30:.5)$) --(v5) -- ($(v5) +(150:.5) $); 
		\draw[edge,blue] ($(v6)+(60:.5)$) --(v6) -- ($(v6) +(-60:.5) $); 
		\draw[edge,blue] ($(v7)+(120:.5)$) --(v7) -- ($(v7) +(240:.5) $); 
		
		\node[vertex] at (v1){};
		\node[vertex] at (v2){};
		\node[vertex] at (v3){};
		\node[vertex] at (v4){};
		\node[vertex] at (v5){};
		\node[vertex] at (v6){};
		\node[vertex] at (v7){};

		\coordinate(x0) at (-6,-3.5);
		\node[vertex,red,draw=black](v1) at ($(x0)+(90:.5) $){};
		\node[vertex,yellow,draw=black](v2) at ($(x0) +(190:1.2)$){};
		\node[vertex,green,draw=black](v3) at ($(x0) +(-30:1.0)$){};
		
		\draw[edge] (v1) to (v2);
		\draw[edge] (v2) -- (v3);
		\draw[edge, bend angle=40,bend left] (v1) to (v3);
		\draw[edge, bend angle=10,bend right] (v1) to (v3);
		\draw[edge,blue] (v2) -- +(180:.7);
		\draw[edge,blue] (v3) -- +(0:.7);
		\draw[edge,blue] (v1) -- +(90:.7);
		
		\draw[edge,->](-3.2,-3.5)--(-2.4,-3.5);

		\coordinate(x0) at (0,-3.5);
		
		\coordinate(v1) at ($(x0) +(45:1)$){};
		\coordinate(v2) at ($(x0) +(135:1)$){};
		\coordinate(v3) at ($(x0) +(225:1)$){};
		\coordinate(v4) at ($(x0) +(315:1)$){};
		\coordinate(v5) at ($(x0) +(90:1.5)$){};
		\coordinate(v6) at ($(x0) +(0:1.5)$){};
		\coordinate(v7) at ($(x0) +(180:1.5)$){};
		
		\node[vertex,lightgray] at (v1){};
		\node[vertex,lightgray] at (v2){};
		\node[vertex,lightgray] at (v3){};
		\node[vertex,lightgray] at (v4){};
		\node[vertex,lightgray] at (v5){};
		\node[vertex,lightgray] at (v6){};
		\node[vertex,lightgray] at (v7){};
		
		\draw[edge,fill=green,fill opacity=.2, rounded corners=5pt] (v1) -- (v3) -- (v4)--(v6) --cycle;
		
		\draw[edge, fill=red,fill opacity=.1, rounded corners=5pt] (v1) -- (v5)--(v2)-- (v4) -- cycle;
		\draw[edge, fill=yellow,fill opacity=.2, rounded corners=5pt] (v2) to (v7)--(v3)--cycle;

		\draw[edge,->](3,-3.5 )--(3.8,-3.5 );
		
		\coordinate(x0) at (7,-3.5);
		
		\coordinate(v1) at ($(x0) +(45:1)$){};
		\coordinate(v2) at ($(x0) +(135:1)$){};
		\coordinate(v3) at ($(x0) +(225:1)$){};
		\coordinate(v4) at ($(x0) +(315:1)$){};
		\coordinate(v5) at ($(x0) +(90:1.5)$){};
		\coordinate(v6) at ($(x0) +(0:1.5)$){};
		\coordinate(v7) at ($(x0) +(180:1.5)$){};
		
		\draw[edge] (v1) --(v5)--(v2)--(v4)--(v3)--(v7)--(v2)--(v3)--(v1)--(v6)--(v4)--(v1);
		\draw[edge,blue] ($(v5)+(30:.5)$) --(v5) -- ($(v5) +(150:.5) $); 
		\draw[edge,blue] ($(v6)+(60:.5)$) --(v6) -- ($(v6) +(-60:.5) $); 
		\draw[edge,blue] ($(v7)+(120:.5)$) --(v7) -- ($(v7) +(240:.5) $); 
		
		\node[vertex] at (v1){};
		\node[vertex] at (v2){};
		\node[vertex] at (v3){};
		\node[vertex] at (v4){};
		\node[vertex] at (v5){};
		\node[vertex] at (v6){};
		\node[vertex] at (v7){};

		\coordinate(x0) at (-6,-7);
		\node[vertex,red,draw=black](v1) at ($(x0)+(30:.7) $){};
		\node[vertex,yellow,draw=black](v2) at ($(x0) +(190:1.2)$){};
		\node[vertex,green,draw=black](v3) at ($(x0) +(280:.7)$){};
		
		\draw[edge] (v1) to (v2);
		\draw[edge] (v2) -- (v3);
		\draw[edge, bend angle=40,bend left] (v1) to (v3);
		\draw[edge, bend angle=10,bend right] (v1) to (v3);
		\draw[edge,blue] (v2) -- +(180:.7);
		\draw[edge,blue] (v1) -- +(0:.7);
		\draw[edge,blue] (v1) -- +(90:.7);
		
		\draw[edge,->](-3.2,-7)--(-2.4,-7);

		\coordinate(x0) at (0,-7);
		
		\coordinate(v1) at ($(x0) +(45:1)$){};
		\coordinate(v2) at ($(x0) +(135:1)$){};
		\coordinate(v3) at ($(x0) +(225:1)$){};
		\coordinate(v4) at ($(x0) +(315:1)$){};
		\coordinate(v5) at ($(x0) +(90:1.5)$){};
		\coordinate(v6) at ($(x0) +(0:1.5)$){};
		\coordinate(v7) at ($(x0) +(180:1.5)$){};
		
		\node[vertex,lightgray] at (v1){};
		\node[vertex,lightgray] at (v2){};
		\node[vertex,lightgray] at (v3){};
		\node[vertex,lightgray] at (v4){};
		\node[vertex,lightgray] at (v5){};
		\node[vertex,lightgray] at (v6){};
		\node[vertex,lightgray] at (v7){};
		
		\draw[edge,fill=green,fill opacity=.2, rounded corners=5pt] (v1) -- (v3) -- (v4)--cycle;
		
		\draw[edge, fill=red,fill opacity=.1, rounded corners=5pt] (v1) -- (v5)--(v2)-- (v4) -- (v6)--cycle;
		\draw[edge, fill=yellow,fill opacity=.2, rounded corners=5pt] (v2) to (v7)--(v3)--cycle;

		\draw[edge,->](3,-7)--(3.8,-7 );
		
		\coordinate(x0) at (7,-7);
		
		\coordinate(v1) at ($(x0) +(45:1)$){};
		\coordinate(v2) at ($(x0) +(135:1)$){};
		\coordinate(v3) at ($(x0) +(225:1)$){};
		\coordinate(v4) at ($(x0) +(315:1)$){};
		\coordinate(v5) at ($(x0) +(90:1.5)$){};
		\coordinate(v6) at ($(x0) +(0:1.5)$){};
		\coordinate(v7) at ($(x0) +(180:1.5)$){};
		
		\draw[edge] (v1) --(v5)--(v2)--(v4)--(v3)--(v7)--(v2)--(v3)--(v1)--(v6)--(v4)--(v1);
		\draw[edge,blue] ($(v5)+(30:.5)$) --(v5) -- ($(v5) +(150:.5) $); 
		\draw[edge,blue] ($(v6)+(60:.5)$) --(v6) -- ($(v6) +(-60:.5) $); 
		\draw[edge,blue] ($(v7)+(120:.5)$) --(v7) -- ($(v7) +(240:.5) $); 
		
		\node[vertex] at (v1){};
		\node[vertex] at (v2){};
		\node[vertex] at (v3){};
		\node[vertex] at (v4){};
		\node[vertex] at (v5){};
		\node[vertex] at (v6){};
		\node[vertex] at (v7){};

	\end{tikzpicture}
	\caption{
		The right column shows a subgraph of some completion $G$. This subgraph has five non-isomorphic vertex decompositions into three circuits each, three of them are shown in the central column. Each of them gives rise to a distinct, non-isomorphic dual (left column). 
		On the other hand, the four- and five-valent vertices of the duals can not contribute all of their possible circuit orientations since some of them would produce non-primitive graphs.  }
	\label{fig:dual_non_isomorphic}
\end{figure}

For $L=3n+1$ (\cref{thm:leading_3n+1}), there is a one-to-one correspondence between dual graphs $\dual$ and the leading-order decompositions of primitive completions $G$, given by the line graph (\cref{def:line_graph}). 
Already for $L=3n$ (\cref{thm:leading_3n}), a single dual can give rise to more than one completion, compare \cref{fig:4vertex_decompositions} (a). In the present case $L=3n+2$, the correspondence is even weaker:
\begin{enumerate}
	\item A given dual $\dual$ can produce multiple non-isomorphic leading-order graphs $G$.
	\item A leading order graph $G$ can have more than one leading-order decomposition, that is, the leading-order coefficient of $T(G,N)$ is not necessarily unity.
	\item Not all leading-order decompositions of a fixed $G$ result in isomorphic duals $\dual$.
\end{enumerate}
\Cref{fig:dual_non_isomorphic} shows an example of the latter two effects. 

Again, we have generated all relevant $\dual$ and their corresponding primitive graphs $G$ at leading order up to $L=23$, see \cref{tab:loworder_leading_graphs_count}.
While the necessary constraints on the allowed mappings from dual vertices to circuits are tedious to spell out explicitly, a computer program that checks them is still much faster than the explicit enumeration of \emph{all} primitives and filtering for leading-order decompositions. For example, generating the leading duals at $L=17$ loops took less than two minutes, while generating and filtering all 16-loop graphs for \cref{tab:loworder_leading_graphs_count} took more than one week. 

\Cref{tab:leading_graphs_count} shows the count of graphs for all three cases $L\mod 3 \in \left \lbrace 0,1,2 \right \rbrace$.   An explicit list of these graphs is available from the first author's website\footnote{\href{https://paulbalduf.com/research}{paulbalduf.com/research}}.
We remark that the present construction of completions from their duals can be extended to subleading primitive completions by including dual graphs that have less vertices at a fixed number of edges, and observing an increasing number of restrictions on the allowed mappings from vertices to circuits.

\begin{table}[htb]
	\centering
	\begin{tblr}{ 
			vlines, 
			hline{1}={1-3}{1pt, solid,rightpos=0},
			hline{3}= {1-3}{1pt,solid,rightpos=0},
			hline{ Z}={1-3}{1pt, solid,rightpos=0},
			hline{1}={5-7}{1pt, solid,rightpos=0,leftpos=0},
			hline{3}= {5-7}{1pt,solid,rightpos=0,leftpos=0}, 
			hline{ Z}={5-7}{1pt, solid,rightpos=0,leftpos=0},
			hline{1}={9-11}{1pt, solid,rightpos=0},
			hline{3}= {9-11}{1pt,solid,rightpos=0},
			hline{ Z}={9-11}{1pt, solid,rightpos=0},
			cell{1}{1}={r=1,c=3}{c},
			cell{1}{5}={r=1,c=3}{c},
			cell{1}{9}={r=1,c=3}{c},
			rowsep=1pt,
			cells={font=\fontsize{10pt}{12pt}\selectfont },
			columns={halign=r},
			row{1}  = { rowsep=2pt, halign=c, valign=m,font = \fontsize{10pt}{12pt}\selectfont } 
		}
		$3n$ & & & & $3n+1$ & & & & $3n+2$ & & \\
		$L$ & $\lfloor N_\text{max} \rfloor$ & $n_0$ & &$L$ & $\lfloor N_\text{max} \rfloor$ & $n_0$ & &$L$ & $\lfloor N_\text{max} \rfloor$ & $n_0$  \\
		3 & 3 &  1& & 4 & 4 & 1 & &5 & 4 & 2\\
		6 & 5 & 3 & & 7 & 6 & 2 & &8 & 6 & 22 \\
		9 & 7 & 10 & & 10 & 8 & 4 & &11 & 8 & 205 \\
		12 & 9 & 57 & & 13 & 10 & 14 && 14 & 10 & 2,278 \\
		15 & 11 & 425 & & 16 & 12 & 57 &&17 & 12 & 29,840\\
		18 & 13 & 4,151 & & 19 & 14 & 341 &&20 & 14 & 440,910\\
		21 & 15 & 49,136 & & 22 & 16 & 2,828 & &23 & 16 & 7,181,335 \\
		24 & 17 & 673,350 & &25 & 18 & 30,468
	\end{tblr}
	\caption{Number $n_0$ of primitive completed graphs contributing at leading order in $N$. All these graphs have been explicitly generated from their dual; for $L=3n+1$, the count equals the number of 3-vertex connected graphs \cref{ex:cubic_graphs_count} as expected.
	}
	\label{tab:leading_graphs_count}
\end{table}

\subsubsection{Families of leading primitive graphs}\label{sec:specific_families}

The zigzag-graphs, or $(1,2)$-circulants, are famously the only irreducible infinite family of graphs in $\phi^4$ theory whose period is known analytically \cite{brown_singlevalued_2015}.

\begin{lemma}\label{lem:zigzag}
	Let $Z_L$ be the zigzag graph whose decompletion has $L$ loops. In the polynomial $T(Z_L,N)$, the coefficient of highest-order in $N$  has order $\lfloor \frac L 2 +2\rfloor$, and is $(2L+4)$ when $L$ is odd, and 2 when $L$ is even. 
\end{lemma}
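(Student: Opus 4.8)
\textbf{Proof plan for \cref{lem:zigzag}.}

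The plan is to analyze the circuit decompositions of the zigzag graph $Z_L$ directly, using the explicit geometry of the $(1,2)$-circulant. First I would fix a convenient model: the completion $\widehat Z_L$ of the zigzag is a circulant graph on $L+2$ vertices $0,1,\dots,L+1$ arranged in a cycle, where each vertex $i$ is joined to $i\pm 1$ and $i\pm 2$ (indices mod $L+2$). I want to understand which decompositions (choices, at each vertex, of one of the three pairings of its four incident edges as in \cref{fig:vertex_decomposition}) produce the maximal number of circuits, since by \cref{def:TGN} the top coefficient of $T(Z_L,N)$ is $3^{-|V_G|}$ times the number of decompositions achieving the maximal circuit count, and the degree in $N$ is that maximal circuit count. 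The natural guess, supported by the families in \cref{sec:specific_families} and by the octahedron example ($L=4$, the six-vertex zigzag) appearing in \cref{lem:leading_no_multiple}, is that the optimal decompositions peel off short circuits along the two ``rails'' of the ladder-like structure. So the first real step is a combinatorial lemma: identify the decomposition(s) that break $\widehat Z_L$ into $\lfloor L/2+2\rfloor$ circuits, and show no decomposition does better.

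For the upper bound on the circuit count, I would argue that $\widehat Z_L$ has girth $3$ (it is full of triangles $\{i,i+1,i+2\}$) but that its triangles overlap heavily — consecutive triangles share an edge — so one cannot simultaneously realize ``all triangles'' as circuits in a single decomposition. Concretely, each vertex lies in exactly two triangles and these two triangles share the ``short'' edge $i\pm 1$ at that vertex; a decomposition that turns a given triangle into a circuit forces, at each of its three vertices, a pairing incompatible with the neighboring triangle. Tracking this incompatibility around the cycle of $L+2$ vertices shows the triangular circuits must be spaced out, giving roughly $(L+2)/2$ of them plus a few leftover longer circuits — this is where the floor and the additive $2$ come from, and this is the step I expect to require the most care, because the parity of $L$ (equivalently of $L+2$) changes how the leftover strands close up. A clean way to organize this is to pass to the dual graph $\dual$ of \cref{def:dual}: a decomposition with $c$ circuits gives $\dual$ with $c$ vertices and $|E_{\dual}|=|V_G|=L+2$ edges (\cref{dual_edges_vertices}); the zigzag's structure constrains $\dual$ to be a small, almost-cubic graph, and $c\le \lfloor L/2+2\rfloor$ becomes a statement about the maximal number of vertices of such a dual. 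I would then check against \cref{lem:TGN_bound}, which gives $\lfloor \tfrac23 L+\tfrac23\rfloor$ as the absolute ceiling, and note that $\lfloor L/2+2\rfloor$ is strictly smaller for large $L$, consistent with the claim that zigzags do not contribute at leading order.

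Finally, for the exact leading coefficient I would enumerate the optimal decompositions. When $L$ is even, I expect the optimal decomposition to be essentially unique up to the cyclic symmetry, but the automorphism group of $Z_L$ (dihedral of order $2(L+2)$, roughly) acts on it, and dividing $3^{|V_G|}$ gives the stated coefficient $2$; when $L$ is odd, the ``seam'' where the two rails of triangles meet can be placed in $2L+4$ essentially different ways (a choice of starting offset together with an orientation), yielding coefficient $2L+4$. The cleanest route is: (i) show every maximal decomposition consists of a chain of edge-disjoint triangles wrapped around the circulant together with one or two short connecting circuits; (ii) count the placements of such a chain, carefully distinguishing $L$ even vs.\ odd; (iii) conclude via \cref{def:TGN} that $[N^{\lfloor L/2+2\rfloor}]\,T(Z_L,N)$ equals $3^{-(L+2)}$ times that count, and verify the arithmetic gives $2L+4$ (odd $L$) and $2$ (even $L$). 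I would sanity-check the formula at $L=3$ (the $K_{3,3}$-type zigzag, whose completion is the five-vertex... actually the $L=3$ zigzag, giving degree $\lfloor 3/2+2\rfloor=3$) and at $L=4$ (octahedron, degree $4$, even case, coefficient $2$), matching the examples already in the text. The main obstacle, as noted, is the parity bookkeeping in step (ii): getting the count $2L+4$ versus $2$ exactly right requires being scrupulous about which symmetries of $Z_L$ fix an optimal decomposition and which permute distinct ones.
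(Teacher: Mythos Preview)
Your overall strategy---find the decompositions with the most circuits by packing triangles around the circulant, then count them---is exactly the paper's approach. The paper's argument is short: triangles $\{i,i{+}1,i{+}2\}$ in a decomposition cannot share an edge (checking the forced pairing at the common vertex), so they are spaced two apart, giving at most $(L{+}2)/2$ of them; the leftover ``long'' edges $(i,i{+}2)$ at the skipped vertices close into one additional circuit. That yields $\lfloor L/2\rfloor+2$ circuits. The paper also briefly rules out square-based decompositions as giving only $\sim L/3$ circuits.

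There are two points in your plan that would not go through as written. First, your claim that ``each vertex lies in exactly two triangles'' is wrong: in the $(1,2)$-circulant, vertex $i$ lies in the three triangles $\{i{-}2,i{-}1,i\}$, $\{i{-}1,i,i{+}1\}$, $\{i,i{+}1,i{+}2\}$. This does not derail the argument, but your local incompatibility analysis should be redone with the correct picture. Second---and more seriously---your reasoning for the leading coefficient is muddled. The leading coefficient of $J(Z_L,N)$ is simply the \emph{raw count} of decompositions achieving the maximum number of circuits; automorphisms play no role, and nothing is divided by $3^{|V_G|}$ to obtain the numbers $2$ and $2L{+}4$. (The statement of the lemma is slightly loose: the numbers $2$ and $2L{+}4$ are the leading coefficients of $J$, not of $T$.) For even $L$ there are exactly two triangle packings (start the first triangle at an even or an odd vertex); for odd $L$ there is a one-edge ``defect'' that can sit at any of the $L{+}2$ positions and be resolved in two ways, giving $2(L{+}2)$. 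Drop the automorphism-quotient idea and just count placements directly.
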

\begin{proof}
	We need to identify a vertex decomposition that gives rise to the maximum number of circuits. Concretely, as many circuits as possible need to be triangles. For a zigzag, all vertices can be drawn on a circle such that they are only ever adjacent to their nearest and next-nearest neighbors. In that sense, any potential short cycle must be \enquote{local}. 
	
	Starting from any triangle, we see that the next triangle can only touch the current one at a vertex, but they can not share an edge. Hence, the ring of $(L+2)$ vertices can have at most $\frac{L+2}{2}$ triangles. Every other vertex has two edges which are not in any triangle, they form another \enquote{special} circuit, giving a total of $\frac{L}{2}+2$ circuits.

	When $L$ is even, there are exactly two distinct decompositions of this type. When $L$ is odd, there remains a \enquote{defect} of one edge which does not fit a triangle. Either the special circuit is extended to cover this edge, or one triangle is sacrificed to make a 4-circuit to cover the edge, leaving the special circuit intact. The defect can be located at any of the $(L+2)$ vertices, always giving rise to two distinct possibilities to resolve it, for a total of $2L+4$ decompositions that yield the same maximum number of circuits.

	One might suspect that another decomposition of the same, or potentially larger, number of circuits is possible based on squares, but this is not the case. Like the triangles, the squares require one long \enquote{special} circuit, and there is effectively only one square for every three vertices, so they result in a number of circuits $\sim \frac L 3$. 
\end{proof}

\begin{figure}[htb]
	\centering
	\begin{tikzpicture}[scale=.6]
		
		\coordinate(x0) at (-6,0);
		\draw[edge] (x0) circle(1.2);
		\draw[edge] (x0) circle(0.5);
		
		\node[vertex,red,draw=black](v1) at ($(x0) + (60:.5) $){};
		\node[vertex,yellow,draw=black](v2) at ($(x0) +(180:.5)$){};
		\node[vertex,green,draw=black](v3) at ($(x0) +(300:.5)$){};
		\node[vertex,red,draw=black](v4) at ($(x0) + (60:1.2) $){};
		\node[vertex,yellow,draw=black](v5) at ($(x0) +(180:1.2)$){};
		\node[vertex,green,draw=black](v6) at ($(x0) +(300:1.2)$){};
		
		\draw[edge] (v1) -- (v4);
		\draw[edge] (v2) --(v5); 
		\draw[edge] (v3) -- (v6);

		\draw[edge,->](-3.2,0 )--(-2.4,0 );

		\coordinate(x0) at (0,0);
		
		\coordinate(v1) at ($(x0) +(0:.5) $){};
		\coordinate(v2) at ($(x0) +(120:.5)$){};
		\coordinate(v3) at ($(x0) +(240:.5)$){};
		\coordinate(v4) at ($(x0) +(0:1.6) $){};
		\coordinate(v5) at ($(x0) +(120:1.6)$){};
		\coordinate(v6) at ($(x0) +(240:1.6)$){};
		\coordinate(v7) at ($(x0) +(300:1)$){};
		\coordinate(v8) at ($(x0) +(60:1)$){};
		\coordinate(v9) at ($(x0) +(180:1)$){};
		
		\node[vertex,lightgray] at (v1){};
		\node[vertex,lightgray] at (v2){};
		\node[vertex,lightgray] at (v3){};
		\node[vertex,lightgray] at (v4){};
		\node[vertex,lightgray] at (v5){};
		\node[vertex,lightgray] at (v6){};
		\node[vertex,lightgray] at (v7){};
		\node[vertex,lightgray] at (v8){};
		\node[vertex,lightgray] at (v9){};
		
		\draw[edge,fill=green,fill opacity=.2, rounded corners=5pt] (v1) -- (v3) -- (v7) --cycle;
		\draw[edge, fill=red,fill opacity=.1, rounded corners=5pt] (v2) -- (v1)--(v8) -- cycle;
		\draw[edge, fill=yellow,fill opacity=.2, rounded corners=5pt] (v2) to (v9)--(v3)--cycle;
		\draw[edge,fill=green,fill opacity=.2, rounded corners=5pt, bend angle=40,bend right] (v6) to (v4) --(v7) --cycle;
		\draw[edge,fill=red,fill opacity=.1, rounded corners=5pt, bend angle=40,bend right] (v4) to (v5) --(v8) --cycle;
		\draw[edge,fill=yellow,fill opacity=.2, rounded corners=5pt, bend angle=40,bend right] (v5) to (v6) --(v9) --cycle;

		\draw[edge,->](3,0 )--(3.8,0 );
		
		\coordinate(x0) at (7,0);
		
		\coordinate(v1) at ($(x0) +(0:.5) $){};
		\coordinate(v2) at ($(x0) +(120:.5)$){};
		\coordinate(v3) at ($(x0) +(240:.5)$){};
		\coordinate(v4) at ($(x0) +(0:1.6) $){};
		\coordinate(v5) at ($(x0) +(120:1.6)$){};
		\coordinate(v6) at ($(x0) +(240:1.6)$){};
		\coordinate(v7) at ($(x0) +(300:.9)$){};
		\coordinate(v8) at ($(x0) +(60:.9)$){};
		\coordinate(v9) at ($(x0) +(180:.9)$){};
		
		\draw[edge] (v1) --(v8)--(v2)--(v9)--(v7)--(v1);
		\draw[edge] (v4) -- (v8) -- (v5) -- (v9) -- (v6) -- (v7) --(v4);
		\draw[edge] (x0) circle(1.6);
		\draw[edge] (x0) circle(0.5);
		
		\node[smallvertex] at (v1){};
		\node[smallvertex] at (v2){};
		\node[smallvertex] at (v3){};
		\node[smallvertex] at (v4){};
		\node[smallvertex] at (v5){};
		\node[smallvertex] at (v6){};
		\node[smallvertex] at (v7){};
		\node[smallvertex] at (v8){};
		\node[smallvertex] at (v9){};

		\coordinate(x0) at (-6,-4);
		\draw[edge] (x0) circle(1.2);
		\draw[edge] (x0) circle(0.5);
		
		\node[vertex,red,draw=black](v1) at ($(x0) + (45:.5) $){};
		\node[vertex,yellow,draw=black](v2) at ($(x0) +(135:.5)$){};
		\node[vertex,green,draw=black](v3) at ($(x0) +(225:.5)$){};
		\node[vertex,blue,draw=black](v4) at ($(x0) + (315:.5)$){};
		\node[vertex,red,draw=black](va) at ($(x0) +(45:1.2)$){};
		\node[vertex,yellow,draw=black](vb) at ($(x0) +(135:1.2)$){};
		\node[vertex,green,draw=black](vc) at ($(x0) +(225:1.2)$){};
		\node[vertex,blue,draw=black](vd) at ($(x0) +(315:1.2)$){};
		
		\draw[edge] (v1) -- (va);
		\draw[edge] (v2) -- (vb); 
		\draw[edge] (v3) -- (vc);
		\draw[edge] (v4) -- (vd);

		\draw[edge,->](-3.2,-4)--+(.8,0);

		\coordinate(x0) at (0,-4);
		
		\coordinate(v1) at ($(x0) +(0:.5) $){};
		\coordinate(v2) at ($(x0) +(90:.5)$){};
		\coordinate(v3) at ($(x0) +(180:.5)$){};
		\coordinate(v4) at ($(x0) +(270:.5) $){};
		\coordinate(v5) at ($(x0) +(0:1.6)$){};
		\coordinate(v6) at ($(x0) +(90:1.6)$){};
		\coordinate(v7) at ($(x0) +(180:1.6)$){};
		\coordinate(v8) at ($(x0) +(270:1.6)$){};
		\coordinate(va) at ($(x0) +(45:1)$){};
		\coordinate(vb) at ($(x0) +(135:1)$){};
		\coordinate(vc) at ($(x0) +(225:1)$){};
		\coordinate(vd) at ($(x0) +(315:1)$){};
		
		\node[vertex,lightgray] at (v1){};
		\node[vertex,lightgray] at (v2){};
		\node[vertex,lightgray] at (v3){};
		\node[vertex,lightgray] at (v4){};
		\node[vertex,lightgray] at (v5){};
		\node[vertex,lightgray] at (v6){};
		\node[vertex,lightgray] at (v7){};
		\node[vertex,lightgray] at (v8){};
		\node[vertex,lightgray] at (va){};
		\node[vertex,lightgray] at (vb){};
		\node[vertex,lightgray] at (vc){};
		\node[vertex,lightgray] at (vd){};

		\draw[edge,fill=red,fill opacity=.1, rounded corners=5pt] (v1) -- (v2) -- (va) --cycle;
		\draw[edge, fill=yellow,fill opacity=.2, rounded corners=5pt] (v2) -- (v3)--(vb) -- cycle;
		\draw[edge, fill=green,fill opacity=.2, rounded corners=5pt] (v3) to (v4)--(vc)--cycle;
		\draw[edge, fill=blue,fill opacity=.2, rounded corners=5pt] (v4) to (v1)--(vd)--cycle;
		\draw[edge,fill=red,fill opacity=.1, rounded corners=5pt, bend angle=40,bend right] (v5) to (v6) --(va) --cycle;
		\draw[edge,fill=yellow,fill opacity=.2, rounded corners=5pt, bend angle=40,bend right] (v6) to (v7) --(vb) --cycle;
		\draw[edge,fill=green,fill opacity=.2, rounded corners=5pt, bend angle=40,bend right] (v7) to (v8) --(vc) --cycle;
		\draw[edge,fill=blue,fill opacity=.2, rounded corners=5pt, bend angle=40,bend right] (v8) to (v5) --(vd) --cycle;
		
		\draw[edge,->](3,-4)--+(.8,0);
		
		\coordinate(x0) at (7,-4);
		
		\coordinate(v1) at ($(x0) +(0:.5) $){};
		\coordinate(v2) at ($(x0) +(90:.5)$){};
		\coordinate(v3) at ($(x0) +(180:.5)$){};
		\coordinate(v4) at ($(x0) +(270:.5) $){};
		\coordinate(v5) at ($(x0) +(0:1.6)$){};
		\coordinate(v6) at ($(x0) +(90:1.6)$){};
		\coordinate(v7) at ($(x0) +(180:1.6)$){};
		\coordinate(v8) at ($(x0) +(270:1.6)$){};
		\coordinate(va) at ($(x0) +(45:.9)$){};
		\coordinate(vb) at ($(x0) +(135:.9)$){};
		\coordinate(vc) at ($(x0) +(225:.9)$){};
		\coordinate(vd) at ($(x0) +(315:.9)$){};
		
		\draw[edge] (v1) --(va)--(v2)--(vb)--(v3)--(vc)--(v4)--(vd)--cycle;
		\draw[edge] (v5) -- (va) -- (v6) -- (vb) -- (v7) -- (vc) --(v8) --(vd) --cycle;
		\draw[edge] (x0) circle(1.6);
		\draw[edge] (x0) circle(0.5);
		
		\node[smallvertex] at (v1){};
		\node[smallvertex] at (v2){};
		\node[smallvertex] at (v3){};
		\node[smallvertex] at (v4){};
		\node[smallvertex] at (v5){};
		\node[smallvertex] at (v6){};
		\node[smallvertex] at (v7){};
		\node[smallvertex] at (v8){};
		\node[smallvertex] at (va){};
		\node[smallvertex] at (vb){};
		\node[smallvertex] at (vc){};
		\node[smallvertex] at (vd){};

		\coordinate(x0) at (-6,-8);
		\draw[edge] (x0) circle(1.2);
		\draw[edge] (x0) circle(0.5);
		
		\node[vertex,red,draw=black](v1) at ($(x0) + (36:.5) $){};
		\node[vertex,yellow,draw=black](v2) at ($(x0) +(108:.5)$){};
		\node[vertex,green,draw=black](v3) at ($(x0) +(180:.5)$){};
		\node[vertex,blue,draw=black](v4) at ($(x0) + (252:.5)$){};
		\node[vertex,purple,draw=black](v5) at ($(x0) + (324:.5)$){};
		\node[vertex,red,draw=black](va) at ($(x0) +(36:1.2)$){};
		\node[vertex,yellow,draw=black](vb) at ($(x0) +(108:1.2)$){};
		\node[vertex,green,draw=black](vc) at ($(x0) +(180:1.2)$){};
		\node[vertex,blue,draw=black](vd) at ($(x0) +(252:1.2)$){};
		\node[vertex,purple,draw=black](ve) at ($(x0) +(324:1.2)$){};
		
		\draw[edge] (v1) -- (va);
		\draw[edge] (v2) -- (vb); 
		\draw[edge] (v3) -- (vc);
		\draw[edge] (v4) -- (vd);
		\draw[edge] (v5) -- (ve);

		\draw[edge,->](-3.2,-8)--+(.8,0);

		\coordinate(x0) at (0,-8);
		
		\coordinate(v1) at ($(x0) +(0:.5) $){};
		\coordinate(v2) at ($(x0) +(72:.5)$){};
		\coordinate(v3) at ($(x0) +(144:.5)$){};
		\coordinate(v4) at ($(x0) +(216:.5) $){};
		\coordinate(v5) at ($(x0) +(288:.5) $){};
		\coordinate(v6) at ($(x0) +(0:1.6)$){};
		\coordinate(v7) at ($(x0) +(72:1.6)$){};
		\coordinate(v8) at ($(x0) +(144:1.6)$){};
		\coordinate(v9) at ($(x0) +(216:1.6)$){};
		\coordinate(v10) at ($(x0) +(288:1.6)$){};
		\coordinate(va) at ($(x0) +(36:1)$){};
		\coordinate(vb) at ($(x0) +(108:1)$){};
		\coordinate(vc) at ($(x0) +(180:1)$){};
		\coordinate(vd) at ($(x0) +(252:1)$){};
		\coordinate(ve) at ($(x0) +(324:1)$){};
		
		\node[vertex,lightgray] at (v1){};
		\node[vertex,lightgray] at (v2){};
		\node[vertex,lightgray] at (v3){};
		\node[vertex,lightgray] at (v4){};
		\node[vertex,lightgray] at (v5){};
		\node[vertex,lightgray] at (v6){};
		\node[vertex,lightgray] at (v7){};
		\node[vertex,lightgray] at (v8){};
		\node[vertex,lightgray] at (v9){};
		\node[vertex,lightgray] at (v10){};
		\node[vertex,lightgray] at (va){};
		\node[vertex,lightgray] at (vb){};
		\node[vertex,lightgray] at (vc){};
		\node[vertex,lightgray] at (vd){};
		\node[vertex,lightgray] at (ve){};

		\draw[edge,fill=red,fill opacity=.1, rounded corners=5pt] (v1) -- (v2) -- (va) --cycle;
		\draw[edge, fill=yellow,fill opacity=.2, rounded corners=5pt] (v2) -- (v3)--(vb) -- cycle;
		\draw[edge, fill=green,fill opacity=.2, rounded corners=5pt] (v3) to (v4)--(vc)--cycle;
		\draw[edge, fill=blue,fill opacity=.2, rounded corners=5pt] (v4) to (v5)--(vd)--cycle;
		\draw[edge, fill=purple,fill opacity=.2, rounded corners=5pt] (v5) to (v1)--(ve)--cycle;
		\draw[edge,fill=red,fill opacity=.1, rounded corners=5pt, bend angle=40,bend right] (v6) to (v7) --(va) --cycle;
		\draw[edge,fill=yellow,fill opacity=.2, rounded corners=5pt, bend angle=40,bend right] (v7) to (v8) --(vb) --cycle;
		\draw[edge,fill=green,fill opacity=.2, rounded corners=5pt, bend angle=40,bend right] (v8) to (v9) --(vc) --cycle;
		\draw[edge,fill=blue,fill opacity=.2, rounded corners=5pt, bend angle=40,bend right] (v9) to (v10) --(vd) --cycle;
		\draw[edge,fill=purple,fill opacity=.2, rounded corners=5pt, bend angle=40,bend right] (v10) to (v6) --(ve) --cycle;
		
		\draw[edge,->](3,-8)--+(.8,0);
		
		\coordinate(x0) at (7,-8);
		
		\coordinate(v1) at ($(x0) +(0:.6) $){};
		\coordinate(v2) at ($(x0) +(72:.6)$){};
		\coordinate(v3) at ($(x0) +(144:.6)$){};
		\coordinate(v4) at ($(x0) +(216:.6) $){};
		\coordinate(v5) at ($(x0) +(288:.6) $){};
		\coordinate(v6) at ($(x0) +(0:1.6)$){};
		\coordinate(v7) at ($(x0) +(72:1.6)$){};
		\coordinate(v8) at ($(x0) +(144:1.6)$){};
		\coordinate(v9) at ($(x0) +(216:1.6)$){};
		\coordinate(v10) at ($(x0) +(288:1.6)$){};
		\coordinate(va) at ($(x0) +(36:1)$){};
		\coordinate(vb) at ($(x0) +(108:1)$){};
		\coordinate(vc) at ($(x0) +(180:1)$){};
		\coordinate(vd) at ($(x0) +(252:1)$){};
		\coordinate(ve) at ($(x0) +(324:1)$){};
		
		\draw[edge] (v1) --(va)--(v2)--(vb)--(v3)--(vc)--(v4)--(vd)--(v5) --(ve) --cycle;
		\draw[edge] (v6) -- (va) -- (v7) -- (vb) -- (v8) -- (vc) --(v9) --(vd) --(v10) --(ve) --cycle;
		\draw[edge] (x0) circle(1.6);
		\draw[edge] (x0) circle(0.6);
		
		\node[smallvertex] at (v1){};
		\node[smallvertex] at (v2){};
		\node[smallvertex] at (v3){};
		\node[smallvertex] at (v4){};
		\node[smallvertex] at (v5){};
		\node[smallvertex] at (v6){};
		\node[smallvertex] at (v7){};
		\node[smallvertex] at (v8){};
		\node[smallvertex] at (v9){};
		\node[smallvertex] at (v10){};
		\node[smallvertex] at (va){};
		\node[smallvertex] at (vb){};
		\node[smallvertex] at (vc){};
		\node[smallvertex] at (vd){};
		\node[smallvertex] at (ve){};
		
	\end{tikzpicture}
	\caption{First instances of an infinite family of duals (first column), whose corresponding completed graphs (last column) exhaust the bound \cref{lem:TGN_bound} at $L=1\mod 3$ loops. The dual on six vertices is the same as in \cref{fig:leading_dual_graphs}, but drawn differently.}
	\label{fig:triangle_family}
\end{figure}
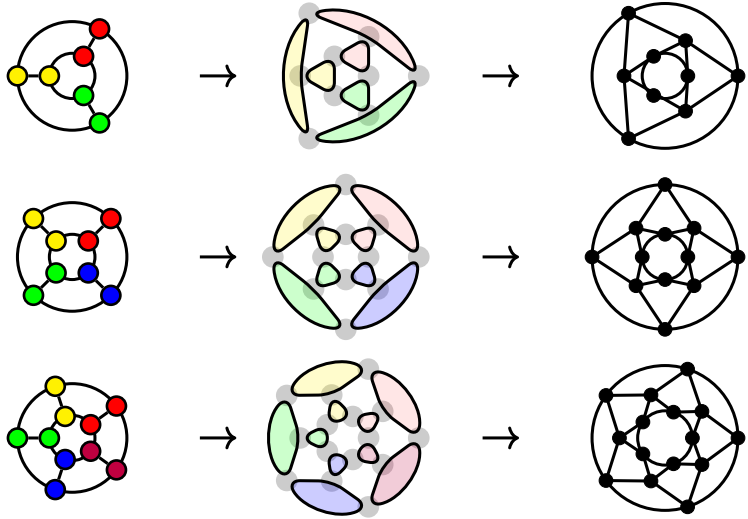

By \cref{lem:zigzag}, the degree of the zigzag coincides with the leading degree $\lfloor \frac 2 3 L + \frac 4 3\rfloor$ (\cref{lem:TGN_bound}) only for $L\in\left \lbrace 4,5,6, 8 \right \rbrace $. For other loop orders, the zigzag is not among the leading graphs at large $N$.

One can find infinite families of completions that contribute to leading order at all loops if one starts from the dual graphs (\cref{def:dual}). For the case of $L=3n+1$ loops, all we have to do is find a family of 3-connected 3-regular graphs. By \cref{thm:leading_3n+1}, the line graph of such graph on $2n$ vertices will be a primitive completion in $\phi^4$ theory at $3n+1$ loops. One such infinite family is shown in \cref{fig:triangle_family}

\subsection{Enumeration of graphs at low loop order} \label{sec:enumeration_low_loop_order}

\begin{table}[htb]
	\centering
	\begin{tblr}{ vlines, 
			vline{3}={1}{-}{solid},
			vline{3}={2}{-}{solid},
			vline{9}={1}{-}{solid},
			vline{9}={2}{-}{solid},
			hline{1}={solid},
			hline{2,Z}={solid},
			rowsep=0pt,
			cells={font=\fontsize{10pt}{12pt}\selectfont },
			columns={halign=r},
			row{1}  = { rowsep=2pt, halign=c, valign=m,font = \fontsize{10pt}{12pt}\selectfont } 
		}
		$L$ & $\lfloor N_\text{max} \rfloor$ & $n_0$ & $n_1$ & $n_2$ & $n_3$ & $n_4$ & $n_5$ &$\left \langle n_\text{max} \right \rangle_L $\\
		3 & 3 &  1& 0 & 0 & 0 &0&0 &3 \\
		4 & \textbf{4} &  1 & 0 & 0 & 0 & 0&0 &4\\
		5 & 4 &  2 & 0 & 0 & 0 &0 &0 &4 \\
		6 & 5 &  3 & 2 & 0 & 0 & 0 &0 &4.60\\
		7 & \textbf{6} &  2 & 12 & 0 & 0 & 0 &0 &5.14 \\
		8 & 6 & 22 & 27 & 0 & 0 & 0 &0 &5.45\\
		9 & 7 & 10 & 200 & 17 & 0 & 0 & 0 & 5.97 \\
		10 & \textbf{8} &  4 & 373 & 973 & 4 & 0 &0 & 6.28\\
		11 & 8 & 205 & 7,127 & 2,390 & 0 & 0 &0 &6.78\\
		12 & 9 & 57 & 11,427 & 68,354 & 1,467 &0 &0 &7.12\\
		13 & \textbf{10} & 14 & 6,850 & 364,479 & 384,170 & 130 &0 & 7.50\\
		14 & 10 & 2,278 & 499,780 & 6,129,768 & 1,003,848 & 3 &0 & 7.93\\
		15 & 11 & 425 & 324,100 & 21,776,598 & 59,847,247 &749,814 &0& 8.26 \\
		16 & \textbf{12} & 57 & 125,057 & 27,008,312 & 569,122,747 & 356,187,442 & 94,550 &  8.65
	\end{tblr}
	\caption{Number $n_k$ of (primitive) completions whose degree in $N$  is the $k$-th subleading order (\cref{def:nk}). Bold numbers indicate loop orders $L$ where $N_\text{max}$ (\cref{lem:TGN_bound}) is integer. With increasing loop order, most graphs have a degree $n_{k\geq 1}$ which is lower than $N_\text{max}$, i.e. they do not contribute to the leading order $n_0$. The last column shows the average degree (\cref{def:average_degree}). }
	\label{tab:loworder_leading_graphs_count}
\end{table}

\begin{figure}[htb]
	\begin{subfigure}{ .49 \linewidth}
		\centering
		{\small Orders of $T(g,N)$, log log plot}\\
		\includegraphics[width=\linewidth]{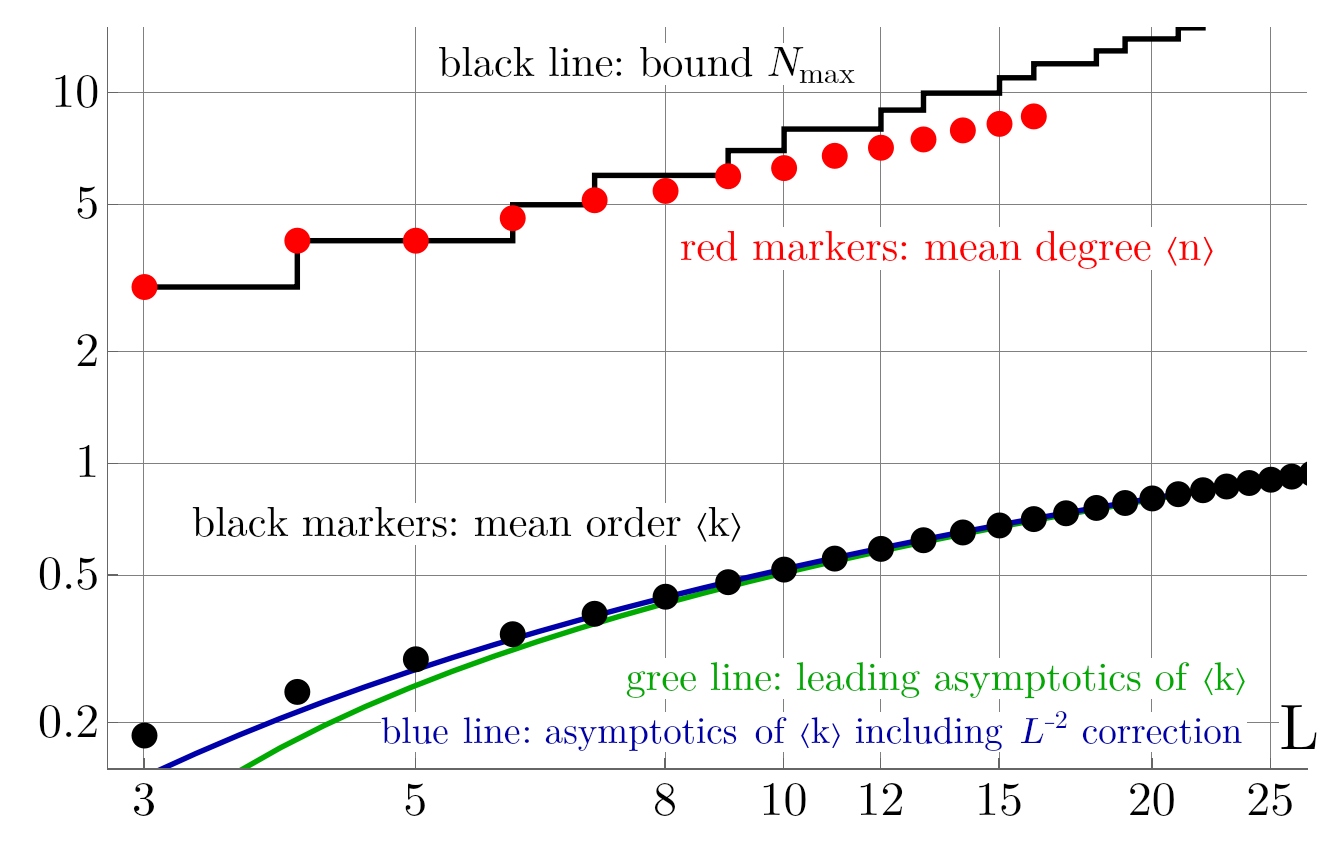}
		\subcaption{}
		\label{fig:mean_max_order}
	\end{subfigure}
		\begin{subfigure}{ .49 \linewidth}
		\centering
		{\small Largest roots of beta function}\\
		\includegraphics[width=\linewidth]{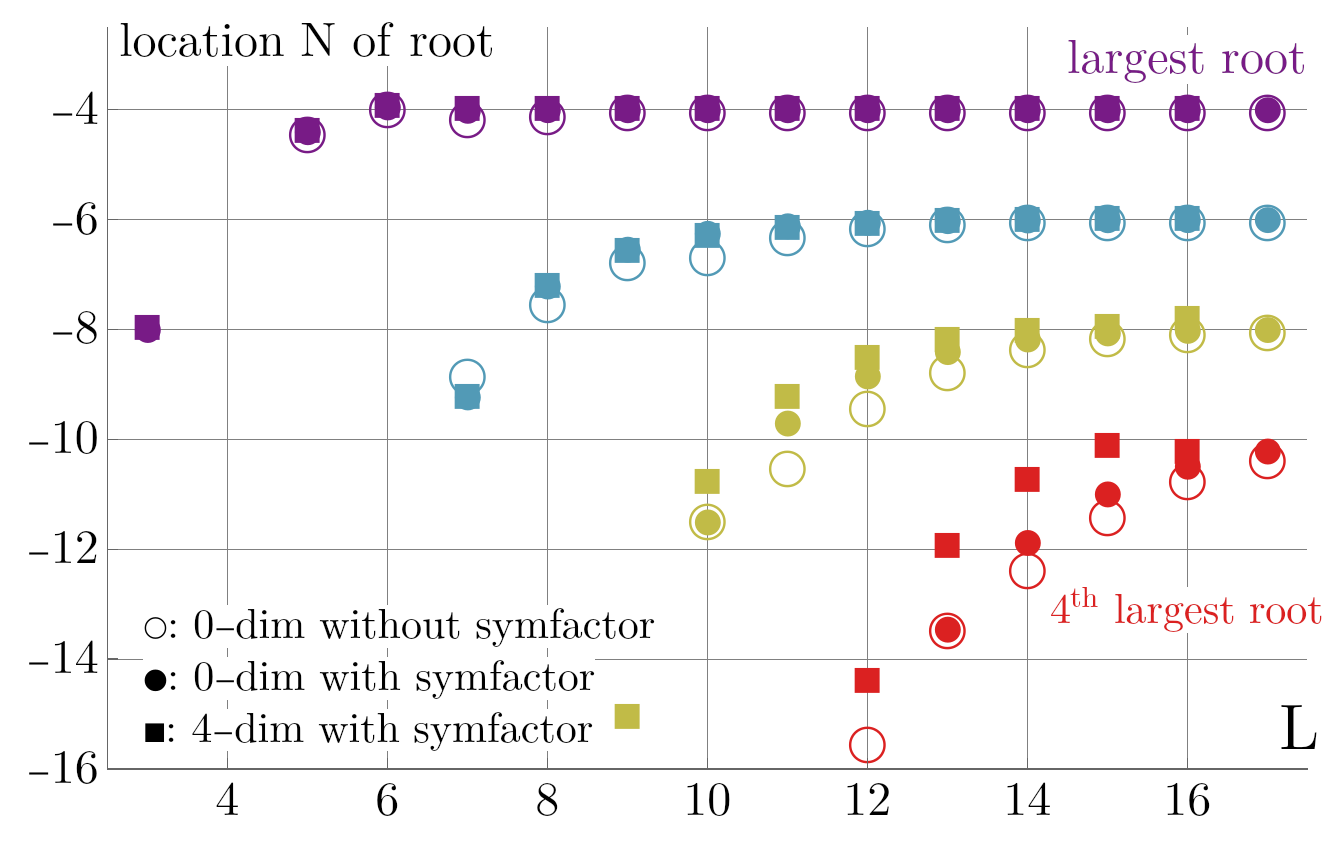}
		\subcaption{}
		\label{fig:largest_roots}
	\end{subfigure}
	
	\caption{\textbf{(a)} The mean degree $\left \langle n \right \rangle _L$ (red markers) appears to lie on a straight line in this log log plot, which suggests that it grows polynomially in $L$. The bound $N_\text{max}$ (black line) grows proportional to $L^1$ (\cref{lem:TGN_bound}). For comparison, the mean order $\left \langle k \right \rangle _L$ (\cref{def:average_order}, black markers) grows only logarithmically in $L$ (\cref{average_order_asymptotics}, blue and green lines). Compare \cref{fig:mean_order}. \textbf{(b)} Largest roots of the primitive beta function in 4-dimensions (squares, \cref{tab:coefficients}), of $p_L(N)$ (filled circles, \cref{tab:0dim_primitive_coefficients}), and of the sum of all primitive $T(G,N)$ without symmetry factor (empty circles, \cref{tab:leading_decompositions_count}). In all cases, the roots converge towards even negative integers as $L$ grows, as expected from \cref{primitive_asymptotics,beta_asymptotics}, and this convergence is rapid already at low loop orders. It indicates the presence of sums over all channels of subgraphs (\cref{lem:circuit_polynomial_zeros}). }
\end{figure}

To verify and complement the combinatorial construction of leading primitive graphs, we have enumerated all primitive graphs of $\phi^4$ theory up to including 16 loops and determined their $\O(N)$ symmetry factors $T(G,N)$. The degree in $N$ is bounded by $N_\text{max}$ (\cref{lem:TGN_bound}), we introduce
\begin{align}\label{def:nk}
	n_k:= \text{Number of graphs where $T(G,N)$ has degree $(N_\text{max}-k)$.}
\end{align}
These numbers are shown in \cref{tab:loworder_leading_graphs_count}. Specifically, $n_0$ is the number of leading graphs in~$N$, and our findings confirm the count of graphs constructed from dual graphs (\cref{tab:leading_graphs_count}). Furthermore, we observe in \cref{tab:loworder_leading_graphs_count} that, as $L$ grows, the degree of most graphs is lower than $N_\text{max}$, or equivalently, the number $n_0$ of leading graphs is increasingly smaller than the numbers of subleading graphs. This is expected from the asymptotic analysis in \cref{sec:0dim_largeN}: The large-$N$ limit is not factorially divergent, therefore, for large enough $L$, the number $n_k$ for any fixed $k$ grows less than factorially. 

In \cref{def:average_order}, we introduced the average \emph{order} in $N$ of the polynomials $T(G,N)$ and found that it grows only logarithmically in $L$ (\cref{average_order_asymptotics}). Analogously, we define the 
average \emph{degree} as 
\begin{align}\label{def:average_degree}
	\left \langle n \right \rangle_L &:= \text{Average degree in $N$ of $T(G,N)$ over all $L$-loop completions $G$}
\end{align}
Numbers are shown in \cref{tab:loworder_leading_graphs_count}. In the log-log plot in \cref{fig:mean_max_order}, the bound $N_\text{max}\sim \frac 2 3 L$ approaches a straight line with slope 1. Similarly, it seems that $\left \langle n \right \rangle _L$ approaches a straight line, but with smaller slope, which suggests that that $\left \langle n \right \rangle _L$ grows polynomially in $L$ with exponent smaller than 1.

\subsection{Non-primitive graphs}\label{sec:nonprimitive}

Knowing a bound for the order in $N$ of primitive graphs, we can use \cref{lem:factorization_T} to obtain a bound for graphs with subdivergences. Recall \cref{lem:factorization_T}: The $\O(N)$-symmetry factor is multiplicative under insertion of propagator-type subgraphs. When $G$ is a completion where $T(G,N)$ is of degree $n$, then removing one edge $e$ produces a propagator-type graph $G\setminus \left \lbrace e \right \rbrace $ where $T(G \setminus \left \lbrace e \right \rbrace ,N)=\frac{T(G,N)}{N}$ is of degree $n-1$, and removing one vertex $v$ produces a vertex-type graph $g:=G\setminus \left \lbrace v \right \rbrace $ where $T(g,N)=\frac{3T(G,N)}{N(N+2)}$ is of degree $n-2$.  

\begin{lemma}\label{lem:subgraph_bound}
	Let $g$ be a vertex-type graph that arises from inserting a primitive  decompletion $g_2$ with $L_2>1$ loops into a primitive  decompletion $g_1$ with $L_1>1$ loops. The resulting graph $g$ is a decompletion with $L_1+L_2$ loops and 
	\begin{align*}
		\textnormal{Degree in $N$ of the polynomial $T(g,N)$} \leq \frac 23 \left( L_1+L_2 \right) - \frac 43.
	\end{align*}
\end{lemma}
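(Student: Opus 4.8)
The plan is to reduce the statement to the already-proven factorization of the $\O(N)$-symmetry factor (\cref{lem:factorization_T}) together with the degree bound for genuinely primitive graphs (\cref{lem:TGN_bound}). First I would set up the bookkeeping: inserting a primitive decompletion $g_2$ with $L_2$ loops into a 4-valent vertex of $g_1$ adds $L_2$ loops and $L_2+2$ vertices, so the resulting $g$ is indeed a decompletion with $L_1+L_2$ loops, matching the claimed loop count. The key point is that the insertion operation is exactly the $\circ_{v}$ operation of \cref{def:uplus} and the surrounding discussion, applied with $g_1$ playing the role of the ambient (decompleted) graph and $g_2$ the inserted subgraph at a 4-valent vertex.

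Next I would invoke \cref{lem:factorization_T}(2) in the case $2p=4$, which gives the multiplicativity
\[
	T(g,N) = T(g_1,N)\cdot T(g_2 \cup \{v_2\},N) ,
\]
or, written in terms of decompletions, $T(g,N) = T\!\big(\melon\big)^{-1}\cdot T(g_1,N)\cdot T(g_2,N)\cdot(\text{single }(N+2j)\text{-type factor})$ — the precise form being the one stated in \cref{lem:factorization_T}, which already packages all the combinatorial factors. Taking degrees in $N$ on both sides, the degree of a product is the sum of degrees. By \cref{lem:TGN_bound}, each primitive decompletion $g_i$ (with $L_i>1$, so the lemma applies) satisfies $\deg_N T(g_i,N)\le \tfrac23 L_i-\tfrac23$. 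Adding these, and accounting for the extra factor coming from the fact that $g$ is itself a decompletion of a completion with $L_1+L_2+2$ vertices (equivalently: the multiedge factor $T(\melon)$ in the numerator of one side versus the denominator — one should check the arithmetic carefully, but it contributes a net $+\tfrac23$ or $0$ depending on how one writes it), one obtains
\[
	\deg_N T(g,N) \;\le\; \Big(\tfrac23 L_1-\tfrac23\Big)+\Big(\tfrac23 L_2-\tfrac23\Big) \;=\; \tfrac23(L_1+L_2)-\tfrac43 ,
\]
which is the claim.

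The main obstacle I anticipate is purely a matter of correctly tracking the constant terms: \cref{lem:factorization_T} is stated for the \emph{completion} $G_1 = g_1\uplus\{v\}$ and the multiedge graph $\pmelon$, whereas \cref{lem:TGN_bound} gives a $-\tfrac23$ shift for decompletions relative to the $\tfrac{\abs{E_G}}{3}$ bound for completions. So the plan is to choose one consistent description — I would phrase everything in terms of decompletions $g$, $g_1$, $g_2$ — and verify that the $(N+2j)$-type factors produced by the insertion (one factor for the reinserted 4-valent vertex, in the form $\tfrac{(N)(N+2)}{3}$ from \cref{lem:multiedge_J} with $p=2$, contributing degree $2$) are exactly compensated by the transition from "sum of two decompletion loop counts" to "one decompletion loop count," i.e. the edge-counting identity $\abs{E_{g}} = \abs{E_{g_1}}+\abs{E_{g_2}}$ combined with $\tfrac{\abs{E_G}}{3}$ being the completion bound. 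Once that arithmetic is pinned down the inequality is immediate; I would also remark (as the lemma implicitly requires) that the hypothesis $L_1,L_2>1$ is exactly what is needed for \cref{lem:TGN_bound} to apply to both factors, since that lemma fails for the one-loop fish.
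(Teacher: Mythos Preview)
Your approach is the paper's: reduce to \cref{lem:factorization_T} plus \cref{lem:TGN_bound}. The constant-term arithmetic you hedge on works out cleanly if you use \cref{lem:factorization_T}(2) in its final form $T(g_1\circ_{v_2}g_2,N)=T(g_1,N)\,T(g_2,N)$: the degrees simply add, and two copies of $-\tfrac23$ give $-\tfrac43$ with no leftover multiedge factor. (The paper routes through the completion $G_1$ first and decompletes at the end, picking up $+\tfrac43$ and then $-2$; your direct route via decompletions is equivalent and slightly shorter.)

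There is, however, a genuine gap. The factorization in \cref{lem:factorization_T}(2) holds for $g_1\circ_{v_2}g_2$, which by definition (\cref{def:uplus}) is an \emph{average over all $4!$ matchings} of the external edges and hence a formal sum of graphs, not the single graph $g$ in the lemma statement. Your sentence ``the insertion operation is exactly the $\circ_v$ operation'' elides this distinction. What is missing is the positivity argument the paper makes explicit: for any graph $G$, the coefficients of $T(G,N)$ are nonnegative (every vertex decomposition contributes a monomial $N^k$ with coefficient $+1$), so a degree bound on a positive linear combination of $T$-polynomials forces the same bound on each summand. Without that step, the factorization identity alone does not control $\deg_N T(g,N)$ for an individual insertion.
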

\begin{proof}
	The stated loop order of the joined graph trivially follows from counting the edges and vertices, and noting that $g$ is connected.

	Let $G_1=g_1\cup \left \lbrace v \right \rbrace  $ be the completion of $g_1$, and let $G:=G_1\circ_v g_2$ be the sum of graphs that arises from inserting $g_2$ in place of a vertex $v\in G_1$ in all possible ways, where $v\neq v_1$. According to \cref{lem:factorization_T},  $T(G,N)=T(G_1,N)\cdot T(g_2,N)$. By \cref{lem:TGN_bound}, the degree in $N$ of $T(G_1,N)$ is $\leq \frac 2 3 L_1 + \frac 4 3$. Similarly, for $g_2$ the degree is bounded by $\frac 2 3 L_2 -\frac 2 3$.  Hence, the degree of $T(G,N)$ is bounded by 
	\begin{align*}
		\frac 2 3 L_1 + \frac 4 3 + \frac 2 3 L_2 -\frac 2 3 = \frac 23 \left( L_1+L_2 \right) + \frac 2 3.
	\end{align*}
	This bound refers to $G$, which is a sum of graphs, but the coefficients of $T(G,N)$ are positive, therefore there can be no cancellations and the bound holds for every graph individually. The graph $g$ in the statement of the lemma is a decompletion of one of the graphs in $G$ (where the vertex $v$ is the decompletion vertex). The degree of $T(g,N)=\frac{3T(G,N)}{N(N+2)}$ is two less than the degree of $T(G,N)$, which yields the desired bound.
\end{proof}

A Feynman graph has \emph{coradical degree} $n+1$ if it can be obtained from a primitive graph by inserting primitive graphs $n$ times. It is irrelevant whether the inserted graphs are distinct, and whether they are inserted into disjoint locations or into each other. Equivalently, in the Hopf algebra of Feynman graphs \cite{kreimer_overlapping_1999}, the coradical degree is the number of times the coproduct needs to be applied such that all resulting terms contain at least one factor $\mathbbm{1}$. A primitive graph has coradical degree one, a graph with one divergent subgraph has coradical degree two, and so on. Dyson-Schwinger equations, when formulated as identities for generating function of Feynman graphs, ensure that every graph can be built by a (not necessarily unique) sequence of insertions of primitive graphs \cite{kreimer_anatomy_2006}. 

From \cref{lem:subgraph_bound}, one can derive bounds on the degree in $N$ of non-primitive graphs, and the interesting result is that these bounds can be expressed in terms of the coradical degree. The renormalized Feynman amplitude of a superficially divergent graph of coradical degree $n$ depends on the momentum scale $s$ like $\ln(s)^n$. Consequently, bounding the degree of $T(G,N)$ in terms of coradical degree $n$ implies a bound for the $N$ dependence of terms in the leading-log expansion. The precise value of these bounds depends on the class of graphs under consideration (connected/1PI/\ldots), and whether one excludes certain small graphs (tadploes/multiedges/\ldots). As an illustration, we derive just one particular bound of this type.

We call a graph \emph{fish free} if neither the underlying primitive graph, nor any of the  subgraphs inserted at any stage, are the fish graph (i.e. the 1-loop vertex correction given by a double edge, \cref{fig:fish}). Being fish free is a nontrivial assumption, this situation may arise e.g. if a modification of the Feynman rules excludes the presence of double edges, or if double edges have been analytically resummed already. Conversely, the vanishing of tadpole graphs (i.e. graphs with only one external edge, such that they do not depend on momenta) is a choice of renormalization condition that can be made at will, and which is automatic in a massless theory in dimensional regularization.

\begin{theorem}\label{thm:degree_bound}
	Assume that tadpole graphs vanish. 
	Let $g$ be a 1PI fish free $L$-loop vertex-type graph of coradical degree $n$. Then
	\begin{align*}
	\textnormal{Degree in $N$ of the polynomial $T(g,N)$} &\leq \frac 2 3 L - \frac 1 3 n.
	\end{align*}

\end{theorem}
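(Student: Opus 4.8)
## Proof proposal

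\textbf{Approach.} The plan is to induct on the coradical degree $n$, using \cref{lem:subgraph_bound} and the factorization \cref{lem:factorization_T} to control how the degree in $N$ changes under a single insertion of a primitive graph. The key bookkeeping observation is that a fish-free primitive insertion must have at least $L_i \geq 3$ loops (the smallest fish-free primitive vertex-type graph in $\phi^4$ theory is $K_5$ at three loops, since two-loop primitives don't exist and the one-loop vertex correction is the forbidden fish), and on such a building block \cref{lem:TGN_bound} gives a degree bound of $\tfrac23 L_i - \tfrac23$, which is a \emph{strict} deficit of $\tfrac13$ below the ``budget'' $\tfrac23 L_i - \tfrac13$ that the claimed inequality allots to each loop-plus-one-third-of-coradical. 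So each insertion should buy exactly the extra $-\tfrac13$ we need.

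\textbf{Key steps, in order.} First I would record the base case $n=1$: a primitive fish-free $L$-loop vertex-type graph $g$ has, by \cref{lem:TGN_bound}, degree in $N$ at most $\tfrac23 L - \tfrac23 = \tfrac23 L - \tfrac13 \cdot 1$, which is the claim. Second, for the inductive step, write $g$ of coradical degree $n$ as $g = g_1 \circ_v \gamma$ where $\gamma$ is one of the primitive subgraphs that can be ``removed last'' (i.e. a maximal forest-minimal divergent subgraph, or, dually, contract $\gamma$ to recover a graph $g_1$ of coradical degree $n-1$); here $\gamma$ is itself fish-free and primitive, hence $L_\gamma \geq 3$, and $g_1$ is 1PI, fish-free, vertex-type, with $L_{g_1} = L - L_\gamma$ loops and coradical degree $n-1$. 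Third, apply \cref{lem:factorization_T}(2): passing to completions, $T(G,N) = T(\gamma,N)\cdot T(G_1,N)$, so $\deg_N T(g,N) = \deg_N T(g_1,N) + \deg_N T(\gamma, N) - (\text{the } \tfrac43 \text{ shift from de/completion accounting})$; more cleanly, track everything at the decompletion level using that insertion of a vertex-type subgraph lowers the degree budget by $\tfrac23$ while adding $L_\gamma$ loops. Fourth, combine: by induction $\deg_N T(g_1,N) \le \tfrac23 L_{g_1} - \tfrac13(n-1)$ and by \cref{lem:TGN_bound} the $\gamma$-contribution adds at most $\tfrac23 L_\gamma - \tfrac23$, giving
\begin{align*}
\deg_N T(g,N) &\le \tfrac23 L_{g_1} - \tfrac13(n-1) + \tfrac23 L_\gamma - \tfrac23 = \tfrac23 L - \tfrac13 n.
\end{align*}
The positivity of the coefficients of $J(G,N)$ (hence of $T$) ensures no cancellation, so the bound holds for each individual graph in any sum of channels that arises.

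\textbf{Main obstacle.} The delicate point is making the decompletion/completion bookkeeping in step three airtight: \cref{lem:factorization_T} is phrased for insertion into a completion in place of a $2p$-valent vertex, and one must verify that the ``last'' primitive subgraph of a coradical-degree-$n$ graph can always be exhibited as such an insertion with the inserted piece being exactly a primitive vertex-type decompletion with $L_\gamma \ge 3$ loops — in particular that the fish-free hypothesis is genuinely inherited by both $\gamma$ and $g_1$, and that the vanishing-tadpole assumption is what legitimates treating the relevant two-loop propagator subgraph (the sunrise) consistently or excluding it. A secondary subtlety is handling the case where the subgraph removed is propagator-type rather than vertex-type (which lowers the budget only by $\tfrac13$ per the remark before \cref{lem:subgraph_bound}); one must check that the fish-free, tadpole-free 1PI hypotheses force enough structure that the worst case is still the vertex-type insertion with $L_\gamma\ge3$, so the per-insertion deficit never falls below $\tfrac13$. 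Once the combinatorial reduction is set up correctly, the arithmetic is immediate.
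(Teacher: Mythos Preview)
Your inductive scheme on the coradical degree is essentially the paper's approach, and the vertex-type part of your argument matches \cref{lem:subgraph_bound} exactly. Two arithmetic slips are worth noting: in the base case you write $\tfrac23 L - \tfrac23 = \tfrac23 L - \tfrac13\cdot 1$, which is false as an equality (though the needed inequality holds), and your displayed combination in step four actually evaluates to $\tfrac23 L - \tfrac13 n - \tfrac13$, not $\tfrac23 L - \tfrac13 n$. These slips are not fatal, but they obscure something important: if every inserted primitive were vertex-type with $L_\gamma \geq 3$, your induction would in fact prove the \emph{stronger} bound $\tfrac23 L - \tfrac23 n$ (which the paper also notes). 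The weaker bound $\tfrac23 L - \tfrac13 n$ in the theorem is there precisely to accommodate propagator-type insertions.

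That is the genuine gap in your proposal. You flag propagator insertions as a ``secondary subtlety'' but do not resolve them, and your phrasing (``the worst case is still the vertex-type insertion'') has the direction reversed: vertex-type insertions are the \emph{best} case ($\tfrac23$ deficit each), and propagator-type are the \emph{worst} ($\tfrac13$ deficit each). The paper handles this by observing that, once tadpoles vanish, the only relevant way a propagator-type divergence arises from a vertex-type graph is by joining three of the four external edges to a new vertex; this operation adds two loops but only one to the degree in $N$, and it raises the coradical degree by one, hence contributes exactly $\tfrac13$ deficit per coradical step. Combining this with the vertex-type count gives the uniform bound $\tfrac23 L - \tfrac13 n$. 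The sunrise (the unique primitive propagator in $\phi^4$) is the base instance of this mechanism, and inserting it is what makes the stated bound tight. Filling in this propagator analysis is the one missing ingredient in your proof.
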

\begin{proof}
	For insertions of vertex-type subgraphs, the result follows recursively from \cref{lem:subgraph_bound}.

	Propagator-type graphs can be obtained from vertex-type graphs by either of two operations: Either by joining two of the external edges to each other, which raises the loop order by one and the degree in $N$ by one, or by joining three external edges to a new vertex, which increases the loop order by two and the degree in $N$ by one. In both cases, the coradical degree of the resulting propagator-type graph is one larger than the one of the vertex-type subgraph. The first operation gives zero contribution if tadpoles vanish because when, during the renormalization process, the vertex-type subgraph is contracted, the resulting cograph is a tadpole and hence the product vanishes. We are left with the second operation. If $L$ is the loop order of the resulting propagator-type graph, the degree of the vertex-type subgraph is bounded by $\frac 2 3 (L-2) - \frac 2 3 (n-1)= \frac 2 3 L-\frac 2 3 n -\frac 2 3$, and the degree of the resulting propagator-type graph is bounded by
	\begin{align*}
		\frac 2 3 L - \frac 2 3 n - \frac 2 3 +1 = \frac 2 3 L - \frac 2 3 n + \frac 1 3.
	\end{align*}
	We see that every such iteration raises the degree by $\frac 1 3$, compared to the bound $\frac 2 3 L - \frac 2 3 n$. Hence, a valid bound is given by $\frac 2 3 L - \frac 1 3 n$. 
	
	When only vertex-type subgraphs are being inserted, the graph has bound $\frac 2 3 L - \frac 2 3 n$, which in particular is stronger than $\frac 2 3 L - \frac 1 3 n$. When a propagator subgraph is inserted, the $T(G,N)$ are multiplicative, too (\cref{lem:factorization_T}), consequently their degree is additive. Likewise, the coradical degree and the loop number are additive, and therefore, the resulting graph again is bounded by $\frac 2 3 L - \frac 1 3 n$.
\end{proof}

\section{$\phi^4$ theory in four dimensions} \label{sec:4d}

Everything so far has been about identifying and enumerating graphs, and understanding their combinatorial properties. 
In this last section, we take into account the numerical values of primitive graphs in $\phi^4$ theory in four spacetime dimensions, which are the Feynman periods (\cref{sec:periods}), in order to examine to what extent the qualitative combinatorial findings stay true when Feynman graphs are weighted by other factors beyond their $O(N)$- and automorphism- symmetry factors.

\subsection{Coefficients and evaluations of $\beta^\text{prim}_L$ in four dimensions } \label{sec:4d_coefficients}

For $L \leq 13$ loops, the periods $\period(G)$ of all primitive graphs had been computed in \cite{balduf_statistics_2023}\footnote{This dataset can be downloaded from \href{https://doi.org/10.5683/SP3/NLEDGH}{DOI 10.5683/SP3/NLEDGH}.}.
Inserting these values into \cref{betaprim_expansion}, 
\begin{align}\label{beta_expansion2}
	\beta^\text{prim}_L(N)&= 2\sum_{ \substack{G \text{ comp.}\\ L \text{ loops}} }  \frac{ 4!(L+2)}{\abs{\Aut(G)}} \frac{3T(G,N)}{N(N+2)} \period (G) ,
\end{align}
we obtain the coefficients in $N$ of $\beta^\text{prim}_L$. For $14 \leq L \leq 18$ loops, only non-complete uniform random samples of periods are available in \cite{balduf_statistics_2023} due to the large number of graphs. As discussed in \cref{sec:enumeration_low_loop_order}, not every graph contributes to every order in $N$, this implies that the statistical uncertainties of the large-$N$ coefficients is large. 
To improve the statistical situation, we have numerically integrated all leading-degree graphs of $L\in \left \lbrace 14,15,16 \right \rbrace $ loops using the methods explained \cite{balduf_statistics_2023}, based on the \texttt{tropical-feynman-quadrature} program \cite{borinsky_tropical_2023,borinsky_tropical_2023a}.
Consequently, for $L>13$, the subleading coefficients suffer from sampling uncertainty, but the leading ones do not. Numerical values are shown in \cref{tab:coefficients} in \cref{sec:tables}. 

\medskip 

The beta function at small positive integer values of $N$ is associated with the critical behaviour of various physical models. In \cref{tab:evaluations} in \cref{sec:tables}, we report the evaluations of $\beta^\text{prim}_L$ at integer $N$. 
Once more, the data for $L \leq 13$ was taken from \cite{balduf_statistics_2023}, and these values are also reported in small print in \cref{tab:evaluations} for larger $L$. However, for $14 \leq L \leq 17$, we additionally used the weighted sampling algorithm developed in \cite{balduf_predicting_2024} to obtain more accurate values, reported in larger print in \cref{tab:evaluations}. We did not compute new data at $L=18$ due to limited computing resources. Note that the sampling algorithm does the weighting for a specific given value of $N$, hence it needs to run multiple times to compute $\beta^\text{prim}_L$ at multiple $N$, but this is still substantially faster than inferring all the evaluations of $\beta^\text{prim}_L$ from a single uniform sample. 

As a remark, it is not trivial to relate the uncertainties in \cref{tab:coefficients} to those in \cref{tab:evaluations}, because both arise from the (numerical and statistical) uncertainties of the periods, while $T(G,N)$ are known exactly. Each column in each of the tables constitutes a different linear combination of periods.

\subsection{$N$ dependence of the beta function in zero and four dimensions} \label{sec:0dim_4dim_relation}

The mean of the 4-dimensional period is much larger than unity, and grows with the loop order, numbers had been given in \cite{balduf_statistics_2023}.  It is therefore clear that the function $2p_L(N)$ from 0-dimensional QFT is not a good approximation of the 4-dimensional  $\beta^\text{prim}_L(N)$ in terms of absolute value; however, it might still be a good approximation for the $N$ dependence. This will be investigated in the present section.

A first characteristic feature of the $N$ dependence is the presence of roots (zeros) at negative $N$. In the limit $L\rightarrow \infty$, both in zero dimensions (\cref{primitive_asymptotics}) and in four dimensions (\cref{beta_asymptotics}), these roots are asymptotically located at negative even integers $N\leq -4$.
For finite $L$, the largest roots are plotted in \cref{fig:largest_roots}. Indeed, we see that both in the 0-dimensional and in the 4-dimensional case, the roots quickly converge to negative even integers as $L$ grows, as had been observed already in  \cite{pobylitsa_superfast_2008,kompaniets_minimally_2017}. In particular, the largest roots are close to their asymptotic location already at low loop orders, long before the absolute value or the growth rate of the beta function is close to its asymptotics (compare \cref{fig:prim_asymptotics_correction,fig:0dim_primitive_ratio}).

Next, we compare the $N$ dependence of $p_L(N)$ and $\beta^\text{prim}_L(N)$ for positive $N$. To this end, recall that in \cref{def:average_period}, we introduced the $N$-dependent average $L$-loop period, weighted by symmetry factors, $\left \langle \period \right \rangle _{T/\Aut, L}(N) = \beta^\text{prim}_L(N) / 2 p_L(N)$. 
If the ratio of $p_L(N)$ and $\beta^\text{prim}_L(N)$ would not depend on $N$,
then $\left \langle \period \right \rangle _{T/\Aut, L}(N) $ would be independent of~$N$. We hence examine whether  $\left \langle \period \right \rangle  _{T/\Aut,L}(N) \overset ?\approx \left \langle \period \right \rangle _{T/\Aut,L}(1)$.  This is equivalent to asking whether
\begin{align}\label{beta_function_factorization}
	\beta^\text{prim}_L (N)~ &\overset{?}\approx ~  \left \langle \period \right \rangle _{\frac T \Aut,L}(1) \cdot 2p_L(N) =  \beta^\text{prim}_L(1) \cdot \frac{p_L(N)}{p_L(1)}.
\end{align} 
We introduce the $N$ dependent ratio between the left- and right-hand side of \cref{beta_function_factorization},
\begin{align}\label{def:beta_N_ratio}
	b_L (N)&:= \frac{\beta^\text{prim}_L(N)}{\beta^\text{prim}_L(1)}\frac{p_L(1)}{p_L(N)}= \frac{\left \langle \period \right \rangle _{\frac T \Aut, L}(N)}{\left \langle \period \right \rangle _{\frac T \Aut,L}(1)} .
\end{align} 

\begin{example}
	At $L=5$, the two primitives have almost the same $\O(N)$ symmetry factors, $T(g_1,N)= 528+186N+15N^2$ vs $T(g_2, N)=526 + 189N+14N^2$, and their periods are very close, $\period(G_1) \approx 52.01$, $\period(G_2)\approx 55.58$. This implies that the ratio $b_5(N)$ is close to unity, and the beta function at 5 loops almost factorizes (\cref{beta_function_factorization}). 
	
	Conversely, for $L=8$, 	Inserting the numerical values of \cref{tab:coefficients} for $\beta^\text{prim}_L$ and the rational numbers of \cref{tab:0dim_primitive_coefficients} for $p_L$, one finds a function that notably differs from unity,
	\begin{align*}
		b_6(N) &\approx \frac{ 13603 + 7150.5N + 1221.0N^2 + 75.809N^3 + 1.2273N^4}  {13704+7097.8N + 1179.6N^2 + 69.410N^3 + 1.0000N^4 } .
	\end{align*}
\end{example}

For finite $L$, $b_L(N)$ is a rational function which has poles at negative $N$ because the zeros of $p_L(N)$ and $\beta^\text{prim}_L(N)$ do not exactly coincide. However,  $b_L(N)$ is smooth for $N\geq -2$, and  it is unity at $N=1$ by construction. 
Plots of $b_L(N)$ for different $L$ are shown in \cref{fig:beta_N_ratio}. We observe that $b_L(N)$ grows monotonically with $N$, which indicates that there is a correlation between the $O(N)$ symmetry factor $T(G,N)$ of  a graph and the value of its period  $\period(G)$.

For finite $L$, the limit  $N \rightarrow \infty$ of $b_L(N)$ is finite, and it measures  the average of  periods which are leading at large $N$ (\cref{sec:duals_leading}), divided by the average of all periods. The limit is shown in \cref{fig:bL_limit}, it grows with $L$ with what could be a power law. 
A value of $b_L(\infty)$ larger than unity means that the average of the leading-$N$-periods is systematically larger than the average of \emph{all} periods. If $b_L(\infty)$ grows only polynomially with $L$, this means that the large-$N$ expansion of \emph{primitive} graphs is convergent not only in 0-dimensional QFT (\cref{sec:0dim_largeN}) but also in 4 dimensions.

\begin{figure}[htbp]
	\begin{subfigure}{ .49 \linewidth}
		\centering
		{\small Ratio $b_L(N)$ between 4-dim. and 0-dim. $\beta_L(N)$} \\
		\includegraphics[width=\linewidth]{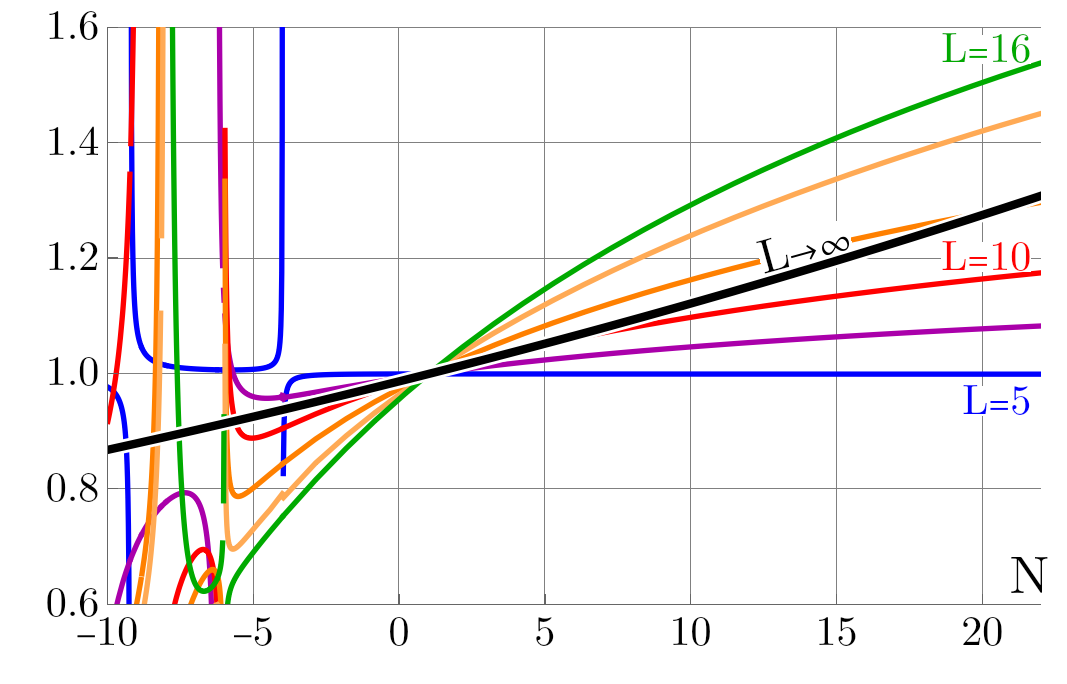}
		\subcaption{}
		\label{fig:beta_N_ratio}
	\end{subfigure}
	\begin{subfigure}{ .49 \linewidth}
		\centering
		{\small Limit $N \rightarrow \infty$ of ratio $b_L(N)$}\\
		\includegraphics[width=\linewidth]{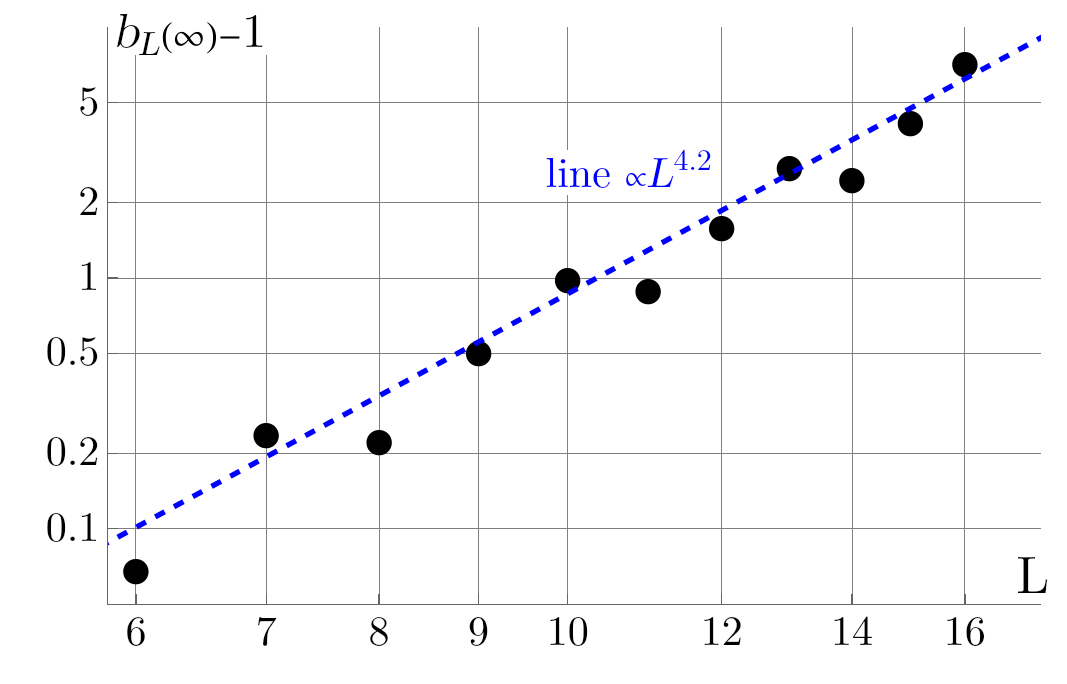}
		\subcaption{}
		\label{fig:bL_limit}
	\end{subfigure}
	
	\caption{\textbf{(a)} Ratio $b_L(N)$ (\cref{def:beta_N_ratio}) between the 4-dimensional and the 0-dimensional primitive beta function, for $L\in \left \lbrace 5,8,10,12,14,16 \right \rbrace $. For finite $L$, the ratio converges to a constant at $N\rightarrow \infty$. The limiting curve for $L\rightarrow \infty$ is shown in black, it is unbounded as $N\rightarrow \infty$. \textbf{(b)} Numerical value of the limit $b_L(\infty)$. Heuristically, the values seem to grow like $L^{4.2}$. }
\end{figure}

Asymptotically for $L\rightarrow \infty$, we know from \cref{mean_period_asymptotics} that $b_L(N) \rightarrow \delta^{N-1}$ with $\delta \approx 1.013$.  This function is shown as a black curve in \cref{fig:4dim_beta_ratio}, it does not match the curves for finite $L$ particularly well, even for $L=16$.  In this sense, $L=16$ is not in the \enquote{asymptotic regime}.  However, note that both $p_L(N)$ and $\beta^\text{prim}_L(N)$ are strongly dependent on $N$ and their values span multiple orders of magnitude  for $N\in [-2, 20]$. The ratio $b_L(N)$ is slowly varying in this domain and  takes values in $[0.8, 1.5]$, which shows that qualitatively, the $N$ dependence of the 0-dimensional $p_L(N)$ is indeed a rather good model for the $N$ dependence of the 4-dimensional $\beta^\text{prim}_L(N)$.

\subsection{Asymptotic growth rate of the beta function at large loop order}\label{sec:4d_asymptotics}

As stated in \cref{sec:primitive_beta_function}, there is a conjecture that the leading asymptotic growth of the primitive beta function should coincide with the one of the full beta function in MS, and therefore it should grow according to \cref{beta_conjecture}.
In \cite{balduf_statistics_2023} we observed a potential discrepancy between \cref{beta_conjecture} and the numerically computed \emph{primitive} beta function. In that work, limited numerical accuracy for $L>13$ loops prevented a definite conclusion. 

The asymptotic growth ratio $r_L$ (\cref{def:rL}) of the primitive beta function behaves as $r_L = a + \frac{c_s}{a} \frac 1 L + \mathcal O (L^{-2})$, where the conjectured asymptotics implies $ a=1,  c_s=5+\frac 12 N$ (\cref{beta_asymptotics_predicted}). We have numerical data for $\beta^\text{prim}_L$ for $L\leq 18$, which corresponds to $r_L$ for $L\leq 17$. We can determine the growth parameters $a$ and $c_s$ by plotting our numerical values of $r_L$ against $\frac 1 L$ and extracting slope and $y$-intersection of a linear fit. The result is shown in \cref{fig:4dim_beta_ratio}. We observe that the linear fits (red lines) result in $a \approx 0.8$, which depends only little on the value of $N$ used for the fit. The slope of the empirical fit lines is approximately described by $c_s \approx 8 + 0.8 N$. These empirical values of $a$ and $c_s$ for $L \leq 18$ loops are incompatible with the conjectured asymptotics (\cref{beta_asymptotics_predicted}), even when numerical uncertainties are taken into account. This is clearly visible from the fact that the red data points in \cref{fig:4dim_beta_ratio} do not lie on the green lines.

\begin{figure}[htb]
	\begin{subfigure}{ .49 \linewidth}
		\centering
		{\small Growth rate $r_L$ of $\beta^\text{prim}$, 4 dimensions}\\
		\includegraphics[width=\linewidth]{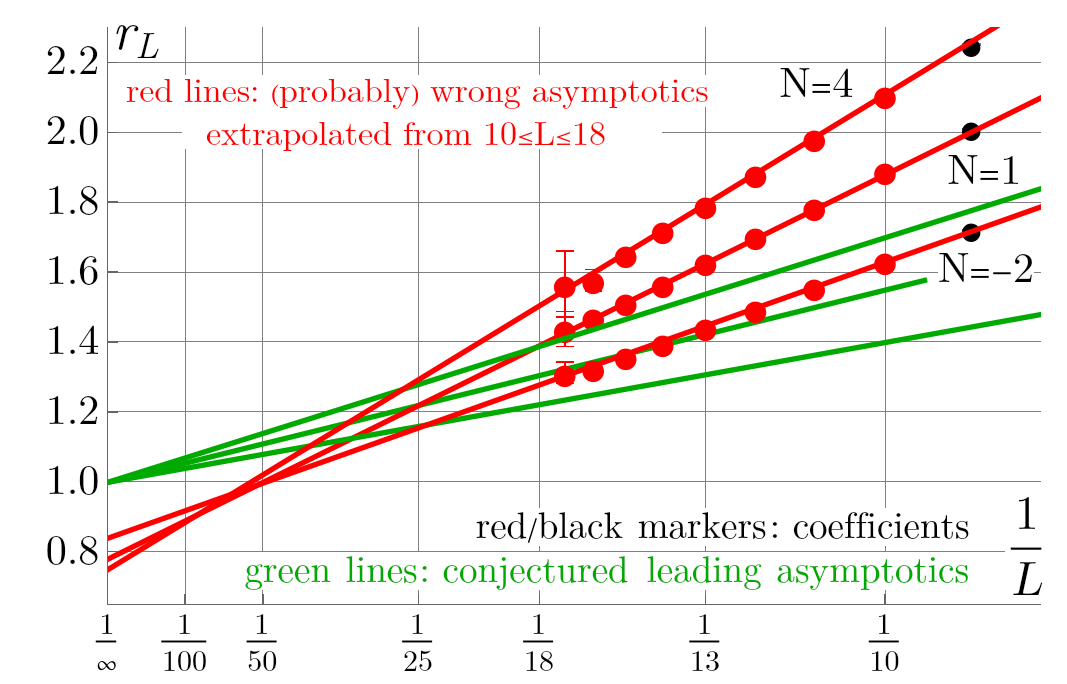}
		\subcaption{}
		\label{fig:4dim_beta_ratio}
	\end{subfigure}
	\begin{subfigure}{ .49 \linewidth}
		\centering
		{\small Growth rate $f_L$ of $\left \langle \period \right \rangle _{T/\Aut}$, 4 dimensions}\\
		\includegraphics[width= \linewidth]{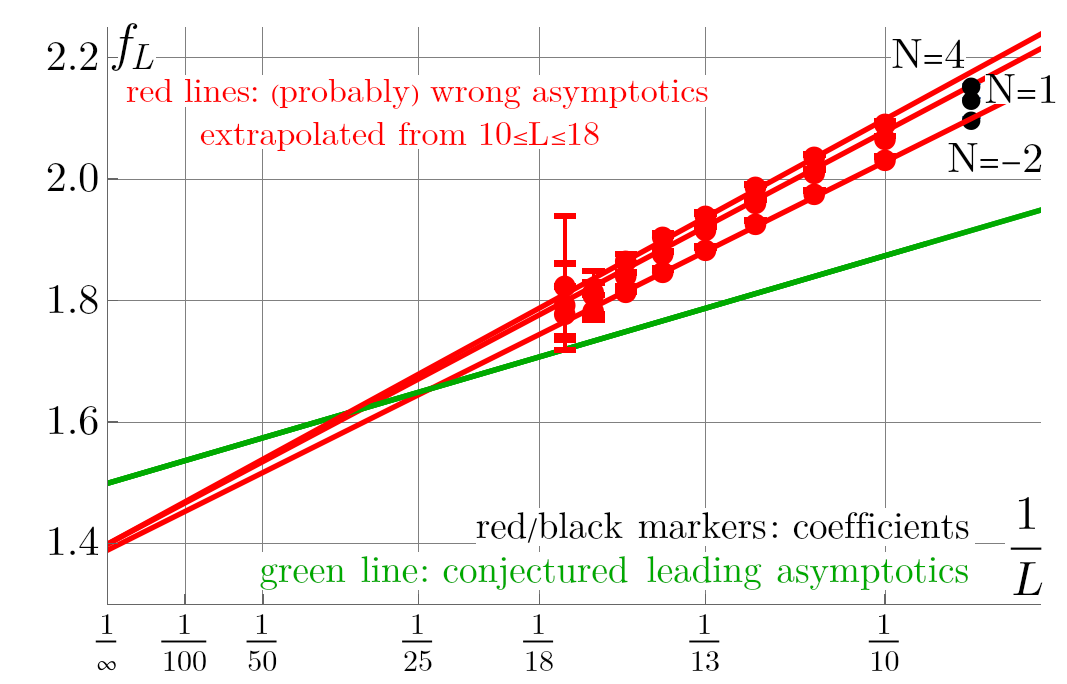}
		\subcaption{}
		\label{fig:4dim_mean_ratio}
	\end{subfigure}
	\caption{\textbf{(a)} Ratio $r_L$ according to \cref{def:rL} for the primitive beta function $\beta^\text{prim}$ in four dimensions. Green lines indicate the leading asymptotics of the full beta function in minimal subtraction, which is conjecturally equal to that of $\beta^\text{prim}$. Red lines indicate the numerically measured growth rate from $10 \leq L \leq 18$ loops. If the pattern continues, the data can be expected to reach the true asymptotics at $L\geq 25$ loops. Compare the strikingly similar situation in zero dimensions, \cref{fig:0dim_primitive_ratio}. \textbf{(b)} Growth ratio $f_L$ (\cref{def:fL}) of the average period for the 4-dimensional theory. This quantity is independent of the sum of symmetry factors $p_L(N)$. \Cref{beta_conjecture}   implies that the data should lie on the green line for large $L$,  for all values of $N$.}
\label{fig:growth_ratio}
\end{figure}

However, the pattern in \cref{fig:4dim_beta_ratio} is strikingly similar to the analogous plot in 0-dimensional QFT, \cref{fig:0dim_primitive_ratio}. In the 0-dimensional case, we have verified that the growth rate $r_L$  changes its slope at $L\geq 25$ loops, and starts to converge towards the correct limiting value (green lines), while it appears to converge towards a too low limit when only $10 \leq L \leq 18$ is extrapolated (red lines). Remarkably, even in the 4-dimensional case (\cref{fig:4dim_beta_ratio}), the extrapolated red lines intersect the conjectured asymptotics at $L\approx 25$. In view of these observations, we conclude that the growth ratio that can be determined from the numerical data for the primitive beta function up to 18 loops is probably not the true asymptotic growth ratio at $L\rightarrow \infty$. We conjecture that this ratio will change when higher order terms are included, and a numerically reliable value will emerge only upwards of $L\approx 25$ loops. 

We have now seen that both the 0-dimensional $p_L(N)$, and the 4-dimensional $\beta_L^\text{prim}(N)$, grow at a ratio $r_L(N)$ that is incompatible with the leading-order asymptotics when $L<25$. This effect would be trivial if all periods had the same value (independent of $L$), since then $\beta_L^\text{prim}(N)$ would be a constant multiple of $p_L(N)$, and necessary show the same phenomena regarding large-order asymptotics and $N$-dependence. However, we know that the periods on average grow exponentially with $L$ (asymptotically at $L\rightarrow \infty$), and that their distribution within a fixed $L$ contains notable outliers \cite{balduf_statistics_2023}. Now there are two scenarios: It could be that the growth of the average period is already very close to its leading asymptotics, and that the distribution within a loop order is independent of the $O(N)$ symmetry factor $T(G,N)$. In that case, one would conclude that the phenomena observed around $L\approx 25$ in \cref{fig:0dim_primitive_ratio,fig:4dim_beta_ratio} are caused by the symmetry factors $T(G,N)$ as encoded in $p_L(N)$. Or, it could be that there is a correlation between value of the period and properties of $T(G,N)$, or that the value of the period itself changes its growth rate around $L\approx 25$. 

To investigate this, we compute the average period per graph, weighted by the symmetry factor $\frac{T(G,N)}{\abs{\Aut(G)}}$, as defined in \cref{def:average_period}. 
The growth ratio $f_L$ of this average (\cref{def:fL}) is expected to behave, asymptotically, as $f_L \sim \frac 3 2 + \frac{15}{4}\frac 1 L$. In a Domb-Sykes plot where the $x$-axis is $\frac 1 L$, we hence expect a linear function whose slope is $\frac {15}4$, independent of $N$. This plot is shown in   \cref{fig:4dim_mean_ratio}, we see that the numerical values of $f_L$ scale almost linearly with $\frac 1 L$, and their slope and $y$-intersect is almost independent of $N$. Nevertheless, the fit lines from $10\leq L \leq 18$ suggest an asymptotic growth ratio of $f_L\rightarrow 1.4$, which is incompatible with the expectation $f_L\rightarrow \frac 3 2$. Moreover, these fit lines intersect the expected leading asymptotic around $L\approx 25$, suggesting once more that the average of the period at fixed loop order is reliably described by its leading-order asymptotics only upwards of $L\approx 25$. We hence conclude that the similarity between \cref{fig:0dim_primitive_ratio} and \cref{fig:4dim_beta_ratio} is not \emph{only} caused by the fact that both quantities contain the same set of $O(N)$ symmetry factors $T(G,N)$, but instead, the mean of the periods itself shows a similar effect of growing at non-asymptotic rates for $L\leq 25$. 

One might wonder if the similarity of these two effects really is just a statistical artefact: We know from  \cite{balduf_predicting_2024} that the numerical value of the period of an individual graph is strongly correlated with its $\O(N)$ symmetry factor $T(G,N)$. It would thus be conceivable that, since the periods in $\left \langle \period \right \rangle_{T/\Aut} $ are weighted proportionally to their $\frac{T(G,N)}{\abs{\Aut(G)}}$, this correlation could be responsible for the observed similarity between the growth of $\left \langle \period \right \rangle_{T/\Aut} $ and $p_L$. This is not the case. Firstly, at $N=1$ the factor $T(G,N)$ is unity for all graphs and hence does not influence the relative weighting, while the pattern in \cref{fig:4dim_mean_ratio} persists at $N=1$. Secondly, the correlation changes sign for different values of $N$\footnote{Empirically, the correlation is roughly ~ $\ln \period(G) \approx \frac{-c_0 }{(N-1)}+\frac{c_1}{(N-1)} \ln T(G,N)$, for $c_0 \sim 150$, $c_1 \sim 10$.}, but the observed growth rate $f_L$ of $\left \langle \period \right \rangle_{T/\Aut} $ in \cref{fig:4dim_mean_ratio} is almost independent of $N$. We thus conclude that the correlation between $T(G,N)$ and $\period(G)$ can not be the reason for the fact that both the sum of $T(G,N)$, and the sum of $\period(G)$, display a similar pattern of reaching their leading large-order growth rates only upwards of $25$ loops.

\section*{Acknowledgements}

We thank Karen Yeats and Erik Panzer for comments and discussions. 

PHB did parts of this work while affiliated with the University of Waterloo and Perimeter Institute. Research at Perimeter
Institute is supported in part by the Government of Canada through the Department of Innovation, Science and Economic Development and by the Province of Ontario through the Ministry of Colleges and Universities.
Work at the University of Oxford was funded through Royal Society grant URF/R1/201473.

JT has been funded by the German research foundation (DFG, grant number 418838388) and in the last stage by the European Union (ERC, GE4SPDE, 101045082), as well as by Germany’s Excellence Strategy EXC 2044–390685587, Mathematics M\"unster: Dynamics–Geometry–Structure.
 
\appendix

\section{Combinatorial properties of the circuit partition polynomial} \label{sec:combinatorial_properties}

In \cref{sec:symmetric_theory}, we stated several facts about the circuit partition polynomial $J(G,N)$ (\cref{def:circuit_partition_polynomial}), and the $\O(N)$-symmetry factor $T(G,N)$ (\cref{def:TGN}). The present section contains derivations and further examples. 

One of the crucial properties of the polynomials $T(G,N)$ and $J(G,N)$ is that they factorize  under insertion of subgraphs (\cref{lem:factorization_T}). To see why this is the case, we first give a proof for a special case of this statement, namely for the completion (\cref{def:completion_decompletion}) of a vertex-type (i.e. 4-valent) graph in $\phi^4$ theory. 

\begin{figure}[htb]
	\centering
	\begin{tikzpicture}
		\coordinate(x0) at (0,0);
		
		\coordinate(v0) at ($(x0) + (-.7,0) $){};
		\coordinate(v1) at ($(x0) +(.6,.9)$){};
		\coordinate(v2) at ($(x0) +(.5, .3)$){};
		\coordinate(v3) at ($(x0) +(.5,-.3)$){};
		\coordinate(v4) at ($(x0) +(.6, -.9)$){}; 
		
		\node[vertex, lightgray] at ($(v0)+(.2,0)$){};
	  \node at ($(v0)+ (-.2,0)$){$v$};
		
		\draw[edge,blue, rounded corners=5pt] ($(v1) $) ..controls +(1, .5) and +(1,-.2) .. ($(v2) $); 
		\draw[edge,blue, rounded corners=5pt] ($(v3) $) ..controls +(1, .3) and +(1.5,-.8) .. ($(v4) $); 
		\node[ellipse, draw=blue,edge, blue, minimum height=.7cm , minimum width= .5cm] at ($(x0)+(1.6,.2) $) { };
		\node[ellipse ,minimum height=3cm , minimum width=1.7cm, fill=blue,fill opacity=.1, text opacity=1,text=black] at ($(x0)+(1.1,0) $) { $ g$ };
		
		\draw[edge,black,bend angle =5,bend right, rounded corners=12pt] (v1) to (v0) to (v2);
		\draw[edge,black,bend angle =5,bend right, rounded corners=12pt] (v3) to (v0) to (v4);
		
		\node[anchor=center] at ($(x0)+ (.5,-2)$){$N^2 \cdot J(g,N)$};

		\coordinate(x0) at (5,0);
		
		\coordinate(v0) at ($(x0) + (-.7,0) $){};
		\coordinate(v1) at ($(x0) +(.6,.9)$){};
		\coordinate(v2) at ($(x0) +(.5, .3)$){};
		\coordinate(v3) at ($(x0) +(.5,-.3)$){};
		\coordinate(v4) at ($(x0) +(.6, -.9)$){}; 
		
		\node[vertex, lightgray] at ($(v0)+(.2,0)$){};
		\node at ($(v0)+ (-.2,0)$){$v$};
		
		\draw[edge,blue, rounded corners=5pt] ($(v1) $) ..controls +(1, .5) and +(1,-.2) .. ($(v2) $); 
		\draw[edge,blue, rounded corners=5pt] ($(v3) $) ..controls +(1, .3) and +(1.5,-.8) .. ($(v4) $); 
		\node[ellipse, draw=blue,edge, blue, minimum height=.7cm , minimum width= .5cm] at ($(x0)+(1.6,.2) $) { };
		\node[ellipse ,minimum height=3cm , minimum width=1.7cm, fill=blue,fill opacity=.1, text opacity=1,text=black] at ($(x0)+(1.1,0) $) { $ g$ };
		
		\draw[edge,black,bend angle =5,bend right, rounded corners=12pt] (v1) to (v0) to (v3);
		\draw[edge,black,bend angle =5,bend right, rounded corners=12pt] (v2) to (v0) to (v4);
		
		\node[anchor=center] at ($(x0)+ (.5,-2)$){$N \cdot J(g,N)$};
		
		\coordinate(x0) at (10,0);
		
		\coordinate(v0) at ($(x0) + (-.7,0) $){};
		\coordinate(v1) at ($(x0) +(.6,.9)$){};
		\coordinate(v2) at ($(x0) +(.5, .3)$){};
		\coordinate(v3) at ($(x0) +(.5,-.3)$){};
		\coordinate(v4) at ($(x0) +(.6, -.9)$){}; 
		
		\node[vertex, lightgray] at ($(v0)+(.2,0)$){};
	\node at ($(v0)+ (-.2,0)$){$v$};
		
		\draw[edge,blue, rounded corners=5pt] ($(v1) $) ..controls +(1, .5) and +(1,-.2) .. ($(v2) $); 
		\draw[edge,blue, rounded corners=5pt] ($(v3) $) ..controls +(1, .3) and +(1.5,-.8) .. ($(v4) $); 
		\node[ellipse, draw=blue,edge, blue, minimum height=.7cm , minimum width= .5cm] at ($(x0)+(1.6,.2) $) { };
		\node[ellipse ,minimum height=3cm , minimum width=1.7cm, fill=blue,fill opacity=.1, text opacity=1,text=black] at ($(x0)+(1.1,0) $) { $ g$ };
		
		\draw[edge,black,bend angle =5,bend right, rounded corners=8pt] (v1) to (v0) to (v4);
		\draw[edge,black,bend angle =5,bend right, rounded corners=13pt] (v2) to (v0) to (v3);
		
		\node[anchor=center] at ($(x0)+ (.5,-2)$){$N \cdot J(g,N)$};

	\end{tikzpicture}
	\caption{Proof of \cref{lem:vertex_decomposition_proportional}. We fix an arbitrary decomposition in the vertex-type graph $g$. This decomposition amounts to a choice to connect the four external edges into pairs, indicated by blue arcs, and zero or more circuits in $g$. The completion vertex $v$ allows for three distinct decompositions, drawn as black lines. One of them (first drawing) produces two circuits and hence a factor of $N^2$, the remaining two decompositions produce only one circuit and hence a factor $N$. Hence, the circuit partition polynomial of the completion is $J(G,N)=(N^2+2N)\cdot J(g,N)$. }
	\label{fig:proof:decomposition}
\end{figure}
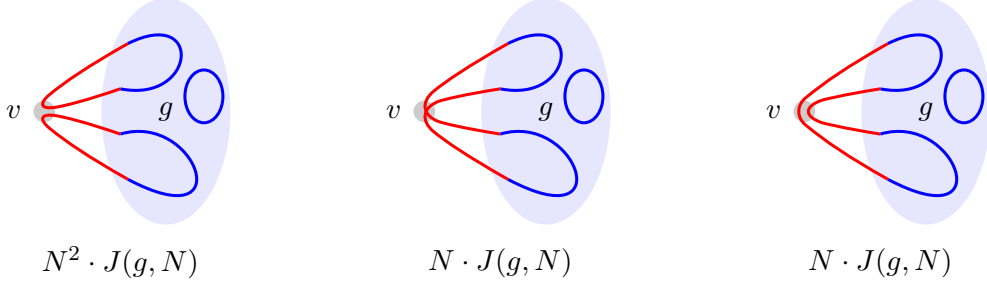

\begin{lemma}[\cite{kleinert_critical_2001} Section~6]\label{lem:vertex_decomposition_proportional}
	Let $g$ be a 4-valent graph and $G    =g\uplus \left \lbrace v \right \rbrace $ its completion. Then the circuit partition polynomials satisfy 
\begin{align*}
	J(G,N)= N(N+2) \cdot J(g,N).
\end{align*}
\end{lemma}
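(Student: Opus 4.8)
The statement concerns the circuit partition polynomial $J$ under the completion operation $G = g \uplus \{v\}$ for a vertex-type (4-valent) graph $g$. My plan is to compare, decomposition by decomposition, the circuits appearing in $g$ alone with those appearing in the completed graph $G$. The key observation is that a decomposition of $G$ is exactly a decomposition of $g$ together with one of the three choices of how to decompose the new 4-valent vertex $v$ (as in \cref{fig:vertex_decomposition}), and conversely. So I would fix an arbitrary decomposition $D$ of $g$; this determines a pairing of the four external edges of $g$ (the edges that will attach to $v$) into two arcs, together with some number $m = m(D)$ of internal circuits of $g$. The contribution of $D$ to $J(g,N)$ is $N^{m}$.

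Next I would analyze what happens when the three decompositions of $v$ are added in turn. The four half-edges at $v$ are joined, in $g$, by the two blue arcs coming from $D$; adding a decomposition at $v$ means adding two more (red) arcs connecting these same four endpoints. Two arcs at four labelled endpoints plus the two fixed arcs give a set of closed loops, and a short case check shows: one of the three pairings at $v$ closes the two blue arcs into two separate loops (contributing $N^2$ relative to $g$'s internal circuits), and the other two pairings each merge the two blue arcs into a single loop (contributing $N^1$). None of these choices affects the internal circuits of $g$. Hence the decomposition $D$ of $g$ gives rise to exactly three decompositions of $G$, contributing $N^{m}\cdot N^2$, $N^{m}\cdot N$, and $N^{m}\cdot N$ respectively, i.e. $N^{m}(N^2 + 2N)$ in total. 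This is precisely the picture in \cref{fig:proof:decomposition}.

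Summing over all decompositions $D$ of $g$ then yields
\[
	J(G,N) = \sum_{D \text{ decomp.\ of } g} N^{m(D)}(N^2+2N) = (N^2+2N)\, J(g,N) = N(N+2)\, J(g,N),
\]
which is the claim. The only subtle point — and the one I would write out carefully — is the combinatorial case analysis showing that among the three ways to pair the four endpoints at $v$, exactly one creates two loops with the pre-existing arcs and the other two create one loop. This is a fixed finite check (the structure at $v$ together with $D$'s arcs is a 4-endpoint matching problem, equivalently counting cycles in the union of two perfect matchings on $4$ points), so it is routine, but it is where all the content of the lemma sits; everything else is bookkeeping. I would also note that this is the $p=2$ case of the general factorization \cref{lem:factorization_J}/\cref{lem:factorization_T} stated in the main text, and that the same argument localizes at $v$ regardless of the rest of $g$, which is why no global hypotheses on $g$ are needed.
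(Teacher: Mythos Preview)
Your proof is correct and follows essentially the same approach as the paper: fix a decomposition of $g$, observe that the three decompositions of the completion vertex $v$ close the two external arcs into either two circuits (one case) or one circuit (two cases), giving a universal factor $N^2+2N=N(N+2)$, and then sum over all decompositions of $g$. The paper's argument is identical in structure and even uses the same figure you reference.
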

\begin{proof}
	Every possible decomposition of $g$ amounts to a choice to connect the four external edges into two pairs, plus a choice to join all remaining edges into circuits. Now consider the completion $G:=g\uplus \left \lbrace v \right \rbrace $. $G$ has no external edges, therefore, every possible vertex decomposition results in circuits only (and no \enquote{open ends}). The sum over all possible decompositions, assigning the value $N$ to every circuit, is by definition the circuit partition polynomial $J(G,N)$ (\cref{def:circuit_partition_polynomial}).
	
	On the other hand, every decomposition of $G$ amounts to choosing one of the three possible decompositions of the vertex $v$ together with a choice of decomposition of $g$. Since every decomposition of $g$ is, especially, a choice of joining the four external edges of $g$ into pairs, of the three decompositions of $v$ exactly one produces two circuits, and the other ones produce one circuit each, regardless of which particular decomposition of $g$ was chosen. This is shown in \cref{fig:proof:decomposition}. These three summands hence contribute $(N^2+N+N)=N(N+2)$ to the circuit partition polynomial $J(G,N)$. Since this is true for \emph{every} decomposition of $g$, we conclude that $J(G,N) = N(N+2) \cdot J(g,N)$. 
\end{proof}

A special case of \cref{lem:vertex_decomposition_proportional} occurs if the 4-valent graph $g$ is a single vertex. Then, $J(g,N)=3$ 
and we obtain the 3-loop multiedge graph $\melon$ as the completion of $g$ with
\begin{align}\label{eq:melon polynomial}
 J(\melon,N)&= 3 N(N+2),
\end{align}
which is consistent with the $L=3$ case of \eqref{eq:bubble-ring} in \cref{ex:fish_chain}. If we transform $J$ into $T$ (\cref{def:TGN}), the factor 3 in \cref{eq:melon polynomial} is cancelled, and 
\cref{lem:vertex_decomposition_proportional} can   be written as 
\begin{align}
T(G,N) &= T \left( \melon, N \right)
T(g,N), \qquad  T(\melon,N)=  \frac{ N(N+2)}{3}.
\end{align}
The other way round, starting from a vacuum graph $G$, we obtain the $\O(N)$ symmetry factor of any decompletion $G\setminus \left \lbrace v \right \rbrace $ as 
\begin{align}\label{decompletion_TGN}
	T \left( G \setminus \left \lbrace v \right \rbrace , N \right) &= 
	\frac{T(G,N)}{T(\melon,N)} 
\end{align}
This
in particular implies that all decompletions $g$ of a vacuum graph $G$ share the same $\O(N)$~symmetry factor $T(g,N)$, even if they are non-isomorphic graphs. 

\begin{example}\label{ex:TGN_example_completion}
The vacuum graph $G$ shown in \cref{fig:completion} has
\begin{align*}
	T(G,N) &= \frac{1}{3^7}(1056 N + 900 N^2 + 216 N^3 + 15 N^4)
 = \frac{1}{3^6}N(N+2)(N+8)(5N+22).
\end{align*}
Each of the (non-isomorphic) decompletions shown in \cref{fig:completion} has, by explicit calculation,
\begin{align*}
T \left( G\setminus \left \lbrace v \right \rbrace , N \right) &= \frac{1}{729}\left( 176+62N+5N^2 \right) 
= \frac{1}{3^6}(N+8)(5N+22).
\end{align*}
This is the expected result from \cref{decompletion_TGN}.
\end{example}

To generalize \cref{lem:vertex_decomposition_proportional} to the full statement of \cref{lem:factorization_T}, we need decompositions of $(2p)$-valent vertices, where $p$ is an integer. 
A 2-valent vertex ($p=1$) has only a single decomposition, and for $p=2$ we obtain the three choices shown in \cref{fig:vertex_decomposition}. In general, the decomposition of a $(2p)$-valent vertex is the sum over all $(2p-1)!! = (2p-1)(2p-3)(2p-5)\cdots 1$  choices to match the $(2p)$ adjacent edges into pairs.

\begin{lemma}\label{lem:completion_general}
	Let $p$ be a positive integer and let $g$ be a $(2p)$-valent graph   with completion $G=g \uplus \left \lbrace v \right \rbrace $ (\cref{def:completion_decompletion}). Then
	\begin{align*}
	J(G,N) &= \Big( \prod \nolimits_{j=0}^{p-1}(N+2j) \Big) \cdot J(g,N).
	\end{align*} 
\end{lemma}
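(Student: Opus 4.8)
The plan is to generalize the argument of \cref{lem:vertex_decomposition_proportional} from $4$-valent to $(2p)$-valent completion vertices, by tracking how many circuits are produced when the completion vertex $v$ is decomposed, for each fixed decomposition of the subgraph $g$. First I would fix an arbitrary decomposition $D$ of $g$. Since $g$ is $(2p)$-valent, $D$ consists of some circuits entirely inside $g$, together with a perfect matching $M$ of the $2p$ external edges of $g$ into $p$ arcs (paths through $g$). In the completion $G = g \uplus \{v\}$, a full decomposition of $G$ is the choice of $D$ together with a choice of one of the $(2p-1)!!$ pairings of the $2p$ edges at $v$. The circuits inside $g$ contribute a fixed factor to every term, so the whole content of the lemma is the identity
\begin{align*}
\sum_{\substack{\text{pairings }P\\ \text{at }v}} N^{\#\text{circuits formed by }M\cup P} = \prod_{j=0}^{p-1}(N+2j),
\end{align*}
which then multiplies $J(g,N)$ (summed over all $D$) to give $J(G,N)$.

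The key combinatorial step is to evaluate that sum, which depends only on the fixed matching $M$ on $2p$ points and a summed matching $P$ on the same $2p$ points; the number of circuits is the number of cycles of the permutation $M\cup P$, equivalently the number of connected components of the $2$-regular multigraph $M\cup P$ on $2p$ vertices. Without loss of generality I can relabel the $2p$ edge-endpoints so that $M$ pairs $(1,2),(3,4),\ldots,(2p-1,2p)$ — i.e. contract each arc of $M$ to get $p$ "super-vertices" — so the question becomes: over all perfect matchings $P$ of $2p$ labelled points grouped into $p$ pairs, sum $N^{c(P)}$ where $c(P)$ is the number of cycles of the resulting $2$-regular graph on the $p$ super-vertices (loops allowed). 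I would prove $\sum_P N^{c(P)} = \prod_{j=0}^{p-1}(N+2j)$ by induction on $p$: remove super-vertex number $1$; the two points in its pair are matched by $P$ either to each other (giving a loop: factor $N$, and the remaining sum is the $p-1$ case, contributing $N\prod_{j=0}^{p-2}(N+2j)$), or to points lying in $2$ or in $1$ other super-vertices. Careful bookkeeping shows the matchings that attach super-vertex $1$ into an existing cycle without creating a new component contribute a combined factor $2(p-1)$ to the $(p-1)$-case sum, and those reaching two distinct super-vertices merge components. Collecting terms yields exactly $(N + 2(p-1))\prod_{j=0}^{p-2}(N+2j) = \prod_{j=0}^{p-1}(N+2j)$. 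An alternative, perhaps cleaner, route is to recognize $\sum_P N^{c(P)}$ as (up to normalization) the known generating function for cycles in the pairing model / Brauer algebra, or simply to note it equals $J(\pmelon,N)/(2p-1)!!\cdot(2p-1)!!$ for the $(2p-1)$-loop multiedge and invoke \cref{lem:multiedge_J} directly — since the multiedge is precisely the completion of the single $(2p)$-valent vertex with $J(g,N)=(2p-1)!!$, this special case already records the needed sum.

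Once the sum is evaluated, I would finish exactly as in \cref{lem:vertex_decomposition_proportional}: because the factor $\prod_{j=0}^{p-1}(N+2j)$ produced by decomposing $v$ is \emph{independent} of which decomposition $D$ of $g$ was chosen, it pulls out of the sum over decompositions of $g$, giving $J(G,N) = \big(\prod_{j=0}^{p-1}(N+2j)\big) J(g,N)$. The main obstacle is the induction establishing the pairing identity $\sum_P N^{c(P)} = \prod_{j=0}^{p-1}(N+2j)$ — specifically, correctly accounting for the multiplicities ($2(p-1)$ worth of matchings that attach a super-vertex into an existing cycle) so that the recursion closes. If I instead quote \cref{lem:multiedge_J} for the single-vertex case and observe that the per-$D$ count of circuits at $v$ matches that computation verbatim (the matching $M$ plays the role of the $p$ edges of the "top" vertex of the multiedge), this obstacle is bypassed entirely and the proof reduces to a two-line reduction. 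I would present the self-contained inductive version but remark on the multiedge shortcut.
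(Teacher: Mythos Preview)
Your proposal is correct and follows essentially the same approach as the paper: fix a decomposition of $g$ (reducing to a matching $M$ of the external edges), show that summing over the $(2p-1)!!$ decompositions of $v$ yields a factor $\prod_{j=0}^{p-1}(N+2j)$ independent of $M$, and pull this factor out of the sum over decompositions of $g$. The paper's induction is organized slightly more simply---it peels off only the single half-edge labelled $1$ at $v$ (giving the recursion $(N+2p-2)f(p-1)$ directly) rather than removing the whole super-vertex and tracking both $P(1)$ and $P(2)$---but the argument is the same; your remark that the identity is already encoded in \cref{lem:multiedge_J} is also valid, though in the paper's logical order that lemma is derived \emph{from} the present one.
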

\begin{proof}
	We use induction. For $p=1$, the completion vertex $v$ is 2-valent and hence has only a single decomposition. This decomposition joins the two external edges of $g$ and gives rise to one new circuit, so $J(G,N)=N\cdot J(g,N)$. For $p=2$, the statement is \cref{lem:vertex_decomposition_proportional}. Assume that the statement is true for $g$ with $2p-2$ external edges. 
	
	Fix any one decomposition of $g$. It implies a  matching of the $2p$ external edges of $g$ into pairs. Without loss of generality, assume that edge 1 is matched to edge 2 within $g$ as shown in \cref{fig:proof_completion_general}. The decompositions of the completion vertex $v$ are pairs, too. Consider the edge that is matched to edge 1. If it is edge 2, a new circuit arises, giving rise to a factor $N$. If it is any one of the $2p-2$ other external edges, no circuit is created and we obtain a factor of $2p-2$ by summing over these cases, see \cref{fig:proof_completion_general}. In all cases, the remaining graph has $2p-2$ unmatched edges, and its completion can be obtained from the induction hypothesis. We thus have $J(G,N)  = (N+2( p-1)) \cdot \big( \prod_{j=0}^{p-2}(N+2j) \big) \cdot J(g,N)$ and the claim follows.
\end{proof}

\begin{figure}[htb]
	\centering
	\begin{tikzpicture}[scale=.8]
		\coordinate(x0) at (0,0);
		
		\coordinate(v0) at ($(x0) + (-.7,0) $){};
		\coordinate(v1) at ($(x0) +(.6,1.2)$){};
		\coordinate(v2) at ($(x0) +(.5, .8)$){};
		\coordinate(v3) at ($(x0) +(.3, .2)$){};
		\coordinate(v4) at ($(x0) +(.3, -.2)$){}; 
		\coordinate(v5) at ($(x0) +(.5,-.8)$){};
		\coordinate(v6) at ($(x0) +(.6, -1.2)$){};
 
		\node[vertex, lightgray] at ($(v0)+(.2,0)$){};
		\node at ($(v0)+ (-.2,0)$){$v$};
		
		\draw[edge,blue, rounded corners=5pt] ($(v1) $) ..controls +(1, .5) and +(1,-.2) .. ($(v2) $); 
		\draw[edge,blue, rounded corners=5pt] ($(v3) $) ..controls +(.3, .3) and +(1.5,-.8) .. ($(v4) $); 
		\draw[edge,blue, rounded corners=5pt] ($(v5) $) ..controls +(1, .3) and +(1 ,-.6) .. ($(v6) $); 
		\node[ellipse, draw=blue,edge, blue, minimum height=.7cm , minimum width= .5cm] at ($(x0)+(1.6,.2) $) { };
		\node[ellipse ,minimum height=3cm , minimum width=1.7cm, fill=blue,fill opacity=.1, text opacity=1,text=black] at ($(x0)+(1.1,0) $) { $ g$ };
		
		\draw[edge,black ,bend angle =5,bend right, rounded corners=10pt] (v1) to (v0) to (v2);

		\node[anchor=center] at ($(x0)+ (1,-2.2)$){$N  $};

		\coordinate(x0) at (5,0);
		
			\coordinate(v0) at ($(x0) + (-.7,0) $){};
		\coordinate(v1) at ($(x0) +(.6,1.2)$){};
		\coordinate(v2) at ($(x0) +(.5, .8)$){};
		\coordinate(v3) at ($(x0) +(.3, .2)$){};
		\coordinate(v4) at ($(x0) +(.3, -.2)$){}; 
		\coordinate(v5) at ($(x0) +(.5,-.8)$){};
		\coordinate(v6) at ($(x0) +(.6, -1.2)$){};
		
		\node[vertex, lightgray] at ($(v0)+(.2,0)$){};
		\node at ($(v0)+ (-.2,0)$){$v$};
		
		\draw[edge,blue, rounded corners=5pt] ($(v1) $) ..controls +(1, .5) and +(1,-.2) .. ($(v2) $); 
		\draw[edge,blue, rounded corners=5pt] ($(v3) $) ..controls +(.3, .3) and +(1.5,-.8) .. ($(v4) $); 
		\draw[edge,blue, rounded corners=5pt] ($(v5) $) ..controls +(1, .3) and +(1 ,-.6) .. ($(v6) $); 
		\node[ellipse, draw=blue,edge, blue, minimum height=.7cm , minimum width= .5cm] at ($(x0)+(1.6,.2) $) { };
		\node[ellipse ,minimum height=3cm , minimum width=1.7cm, fill=blue,fill opacity=.1, text opacity=1,text=black] at ($(x0)+(1.1,0) $) { $ g$ };
		
		\draw[edge,black ,bend angle =5,bend right, rounded corners=6pt] (v1) to (v0) to (v3);
		
		\node[anchor=center] at ($(x0)+ (1,-2.2)$){$1 $};
		
		\node at (8.5,0){$\ldots$};
		
		\coordinate(x0) at (11,0);
		
			\coordinate(v0) at ($(x0) + (-.7,0) $){};
		\coordinate(v1) at ($(x0) +(.6,1.2)$){};
		\coordinate(v2) at ($(x0) +(.5, .8)$){};
		\coordinate(v3) at ($(x0) +(.3, .2)$){};
		\coordinate(v4) at ($(x0) +(.3, -.2)$){}; 
		\coordinate(v5) at ($(x0) +(.5,-.8)$){};
		\coordinate(v6) at ($(x0) +(.6, -1.2)$){};
		
		\node[vertex, lightgray] at ($(v0)+(.2,0)$){};
		\node at ($(v0)+ (-.2,0)$){$v$};
		
		\draw[edge,blue, rounded corners=5pt] ($(v1) $) ..controls +(1, .5) and +(1,-.2) .. ($(v2) $); 
		\draw[edge,blue, rounded corners=5pt] ($(v3) $) ..controls +(.3, .3) and +(1.5,-.8) .. ($(v4) $); 
		\draw[edge,blue, rounded corners=5pt] ($(v5) $) ..controls +(1, .3) and +(1 ,-.6) .. ($(v6) $); 
		\node[ellipse, draw=blue,edge, blue, minimum height=.7cm , minimum width= .5cm] at ($(x0)+(1.6,.2) $) { };
		\node[ellipse ,minimum height=3cm , minimum width=1.7cm, fill=blue,fill opacity=.1, text opacity=1,text=black] at ($(x0)+(1.1,0) $) { $ g$ };
		
		\draw[edge,black,bend angle =5,bend right, rounded corners=3pt] (v1) to (v0) to (v6);
		
		\node[anchor=center] at ($(x0)+ (1,-2.2)$){$1$};
		
		\draw[tikzbrace] (-1,-2.6) -- node[below=2mm] {total factor $N+1+\ldots +1= N+(2p-2)$, remaining $g$ has $2(p-1)$ external edges.} (13,-2.6);

	\end{tikzpicture}
	\caption{Induction step of \cref{lem:completion_general}. Fix an arbitrary decomposition in the $(2p)$-valent graph $g$. This implies a choice to connect the $(2p)$ external edges into pairs, indicated by blue arcs. In the decompositions of $v$, there are $(2p-1)$ ways to connect the first external edge, shown as black args. In the first case, a factor $N$ arises (left), but not in the $(2p-2)$ remaining cases.  }
	\label{fig:proof_completion_general}
\end{figure}

When the graph $g$ in \cref{lem:completion_general} is itself a $2p$-valent vertex, the completion $G=\pmelon$ is a $(2p-1)$-loop multiedge. A $(2p)$-valent vertex has $(2p-1)!!$ decompositions, consequently the circuit partition polynomial of $J(G,N)$ must have $\left( (2p-1)!! \right) ^2$ terms. The product in \cref{lem:completion_general} has only $(2p-1)!!$ factors, therefore every factor occurs $(2p-1)!!$ times. We obtain \cref{lem:multiedge_J}, $	J(\pmelon,N) = (2p-1)!!\prod_{j=0}^{p-1}(N+2j)$.

\begin{example}
Joining two fish graphs (each of which has four external edges) produces 24 summands. Eight of them are isomorphic to a ring of four double edge graphs, and 16 are isomorphic to another five-loop graph, as shown in \cref{fig:joining_fish}. Upon identifying  isomorphic graphs and dividing by $4!=24$, we obtain   $\fish\uplus\fish = \frac 1 3 G_1 + \frac 2 3 G_2$.
\end{example}

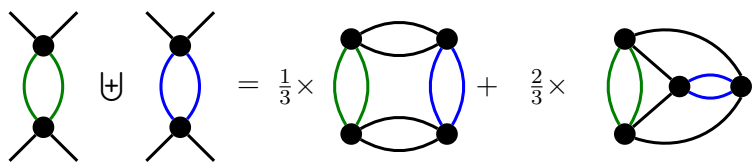
\begin{figure}[htb]
	\centering 
	\begin{tikzpicture}[scale=.9]
	
		\coordinate(x0) at (-3.5,.5);
		\node[vertex](v1) at ($(x0) +(0,-.6)$){};
		\node[vertex](v2) at ($(x0) +(0,.6)$){};
		\draw[edge,darkgreen,bend angle=40,bend left] (v1) to (v2);
		\draw[edge,darkgreen,bend angle=40,bend right] (v1) to (v2);
		\draw[edge] (v2) -- +(135:.7);
		\draw[edge] (v1) -- +(225:.7);
		\draw[edge] (v2) -- +(45:.7);
		\draw[edge] (v1) -- +(-45:.7);
		
		\node at ($(x0)+(1,0)$) {$\biguplus$};
		
		\coordinate(x0) at (-1.5,.5);
		\node[vertex](v1) at ($(x0) +(0,-.6)$){};
		\node[vertex](v2) at ($(x0) +(0,.6)$){};
		\draw[edge,blue,bend angle=40,bend left] (v1) to (v2);
		\draw[edge,blue,bend angle=40,bend right] (v1) to (v2);
		\draw[edge] (v2) -- +(135:.7);
		\draw[edge] (v1) -- +(225:.7);
		\draw[edge] (v2) -- +(45:.7);
		\draw[edge] (v1) -- +(-45:.7);
		
		\node at ($(x0)+(1,0)$) {$=$};

		\coordinate(x0) at (1,.5);
		\node at ($(x0)+(-.8,0)$) {$\frac 13\times$};
		\node[vertex](v1) at ($(x0) +(0,-.7)$){};
		\node[vertex](v2) at ($(x0) +(0,.7)$){};
		\node[vertex](v3) at ($(x0) +(1.4,-.7)$){};
		\node[vertex](v4) at ($(x0) +(1.4,.7)$){};
		\draw[edge,darkgreen,bend angle=30,bend left] (v1) to (v2);
		\draw[edge,darkgreen,bend angle=30,bend right] (v1) to (v2);
		\draw[edge,blue,bend angle=30,bend left] (v3) to (v4);
		\draw[edge,blue,bend angle=30,bend right] (v3) to (v4);
		\draw[edge,bend angle=30,bend left] (v1) to (v3);
		\draw[edge,bend angle=30,bend right] (v1) to (v3);
		\draw[edge,bend angle=30,bend left] (v2) to (v4);
		\draw[edge,bend angle=30,bend right] (v2) to (v4);

		\coordinate(x0) at (5,.5);
		\node at ($(x0)+(-1.5,0)$) {$+\quad \frac 23\times$};
		\node[vertex](v1) at ($(x0) +(0,-.7)$){};
		\node[vertex](v2) at ($(x0) +(0,.7)$){};
		\node[vertex](v3) at ($(x0) +(.8,0)$){};
		\node[vertex](v4) at ($(x0) +(1.7,0)$){};
		\draw[edge,darkgreen,bend angle=30,bend left] (v1) to (v2);
		\draw[edge,darkgreen,bend angle=30,bend right] (v1) to (v2);
		\draw[edge,blue,bend angle=30,bend left] (v3) to (v4);
		\draw[edge,blue,bend angle=30,bend right] (v3) to (v4);
		\draw[edge,bend angle=40,bend right] (v1) to (v4.south);
		\draw[edge,bend angle=40,bend left] (v2) to (v4.north);
		\draw[edge] (v1) to (v3);
		\draw[edge] (v2) to (v3);

	\end{tikzpicture}
	\caption{Joining two fish graphs in all possible ways (\cref{def:uplus}) gives rise to eight copies of $G_1$ and 16 copies of $G_2$. Dividing by $4!$ yields the shown multiplicities. }
	\label{fig:joining_fish}
\end{figure}

\Cref{lem:completion_general} holds not only for completions $g\uplus \left \lbrace v \right \rbrace $, but also for joining two arbitrary $(2p)$-valent graphs $g_1 \uplus g_2$ according to \cref{def:uplus}.  This becomes obvious once one realizes that it is equivalent to sum over all decompositions of the completion vertex $v$ (as done in the proof of \cref{lem:completion_general}), or to fix one decomposition of $v$ and instead sum over all permutations of the external edges of $g$. Stated more formally: 
\begin{lemma}\label{lem:permutations_external_structure}
	The sum over all channels, and over all vertex decompositions, of a $(2p)$-valent graph $g$  has the same external structure (i.e. the same pattern of external edges being joined into pairs) as the sum over all decompositions of a $(2p)$-valent vertex. The constant of proportionality is the circuit partition polynomial $J(g,N)$. 
\end{lemma}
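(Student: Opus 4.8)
\textbf{Proof proposal for \cref{lem:permutations_external_structure}.}

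The plan is to exploit the same bookkeeping idea already used in the proof of \cref{lem:completion_general}, but now applied symmetrically to both graphs being joined. Recall that a decomposition of a $(2p)$-valent graph $g$ is a choice of decomposition of each internal vertex; such a choice automatically induces a perfect matching of the $2p$ external half-edges into pairs (together with some closed circuits). Summing over all decompositions of $g$, weighting each closed circuit by $N$, gives by definition $J(g,N)$, and grouping the terms by their induced external matching expresses this sum as $\sum_{\pi} c_\pi(N) \cdot (\text{external pattern } \pi)$, where $\pi$ ranges over the $(2p-1)!!$ perfect matchings of $2p$ points and $c_\pi(N)$ is the weighted count of decompositions of $g$ realizing $\pi$.

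The key step is to show that $c_\pi(N)$ is independent of $\pi$. First I would observe that the claim is really a statement about the \emph{channels} in $g_1 \uplus g_2$: joining $g$ (playing the role of $g_1$) to a second $(2p)$-valent graph $g_2$ in all $(2p)!$ ways is, by \cref{def:uplus}, the average over all matchings $\rho$ between the external legs of $g$ and those of $g_2$. For a fixed decomposition of $g$ with induced external matching $\pi$ and a fixed decomposition of $g_2$ with induced external matching $\sigma$, the resulting closed graph has a number of extra circuits determined purely by how $\pi$, $\rho$ and $\sigma$ interleave. Averaging over all $\rho$ with $\pi$ and $\sigma$ held fixed, one gets a quantity that depends on $\pi$ and $\sigma$ only through conjugacy data, and since all perfect matchings are conjugate under the symmetric group $S_{2p}$ acting on the external legs, this average is in fact independent of the choice of $\pi$. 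Concretely, I would argue that for \emph{any} two perfect matchings $\pi, \pi'$ of the $2p$ legs there is a permutation $\tau \in S_{2p}$ with $\tau \cdot \pi = \pi'$; relabelling the external legs of $g$ by $\tau$ is a bijection on decompositions of $g$ that sends the block of decompositions inducing $\pi$ onto the block inducing $\pi'$, preserving the number of internal circuits (the relabelling does not touch internal structure). Hence $c_\pi(N) = c_{\pi'}(N) =: c(N)$, so the external-leg sum for $g$ equals $c(N)$ times the uniform sum over all external patterns.

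It remains to identify the constant. Comparing with the case $g = v$ a single $(2p)$-valent vertex: a vertex has exactly $(2p-1)!!$ decompositions, each inducing a distinct external matching $\pi$ with multiplicity one and no internal circuits, so for the vertex $c(N) = 1$ and the ``sum over all decompositions of a $(2p)$-valent vertex'' is precisely $\sum_{\pi}(\text{external pattern }\pi)$. Therefore the sum over all channels and vertex decompositions of $g$ has the same external structure as that of the bare vertex, with constant of proportionality $c(N)$. Finally, to see $c(N) = J(g,N)$ one closes off the external legs: completing $g$ (joining all external legs to a new $(2p)$-valent vertex $v_0$, averaged as in \cref{def:uplus} if $g$ is a sum of channels) turns every external pattern into a fixed number of circuits; matching a given internal decomposition of $g$ realizing pattern $\pi$ against the $(2p-1)!!$ decompositions of $v_0$ produces $\sum_j \prod_{k=0}^{p-1}(N+2k)/(2p-1)!!$-type contributions, and by \cref{lem:completion_general} the total over all decompositions of the completion is $\big(\prod_{j=0}^{p-1}(N+2j)\big) J(g,N)$. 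Since the same completion applied to the bare vertex yields $\big(\prod_{j=0}^{p-1}(N+2j)\big)\cdot 1$ scaled by $c(N)$, we read off $c(N) = J(g,N)$, as claimed. The main obstacle I anticipate is making the interleaving/circuit-counting argument in the averaging step fully rigorous — i.e.\ checking carefully that relabelling external legs is genuinely circuit-preserving and that the $S_{2p}$-transitivity on perfect matchings is being applied to the right set — but this is a finite combinatorial verification rather than a deep difficulty, and the completion comparison gives an independent cross-check on the constant.
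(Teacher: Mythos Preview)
Your proof has a genuine gap. You define $c_\pi(N)$ as the weighted count of decompositions of a \emph{fixed} graph $g$ that induce the external matching $\pi$, and then claim $c_\pi(N)=c_{\pi'}(N)$ via an $S_{2p}$-relabelling of external legs. But relabelling the external legs does not give a bijection on decompositions of the \emph{same} labelled graph $g$; it is a bijection from decompositions of $g$ to decompositions of the relabelled graph $g^\tau$, i.e.\ a different channel. So what you actually prove is $c_\pi^{\,g}=c_{\tau\pi}^{\,g^\tau}$, not $c_\pi^{\,g}=c_{\pi'}^{\,g}$. The fish graph is an explicit counterexample: grouping its nine decompositions (cf.\ \cref{fig:fish}) by external pattern gives $c_{\text{side-arcs}}=N+4$, $c_{\text{parallel}}=2$, $c_{\text{crossing}}=2$, which are not equal (though they do sum to $J=N+8$).

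What the lemma actually asserts is about the sum over \emph{channels} as well, and it is only after that sum that the coefficients become equal. The paper's proof makes this completely transparent by exchanging the order of summation: fix a single decomposition $\zeta$ of $g$ (which determines some external matching $\pi_\zeta$), and \emph{then} sum over all permutations of the external legs. This inner sum turns the single pattern $\pi_\zeta$ into the full symmetric sum over all $(2p-1)!!$ matchings, with the same multiplicity for every $\zeta$. Summing over $\zeta$ then gives $J(g,N)$ times the symmetric sum, and no separate identification of the constant via completion is needed. Your averaging-over-$\rho$ remark is heading in this direction, but the ``concretely'' reformulation in terms of $c_\pi$ for a single $g$ is where it goes wrong; if you instead apply your $S_{2p}$-relabelling to the channel-summed coefficients $\tilde c_\pi := \sum_\tau c_\pi^{\,g^\tau}$, the argument does go through (and coincides with the paper's).
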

\begin{proof}
	According to \cref{def:circuit_partition_polynomial}, the circuit partition polynomial is a sum over all decompositions of the vertices of $g$. 
	Let $\zeta$ be one of these decompositions. $\xi$ implies a partition  of the external edges of $g$ into pairs, as illustrated in \cref{fig:fish}.
	If we join $g$ to another $(2p)$-valent graph $g_2$, the operation $g \uplus g_2$ involves a sum over all permutations of the external edges of $g$. This sum results in an induced sum for each of the decompositions~$\zeta$. Instead of first summing over all decompositions, and then summing each term over the permutations, we may   exchange the order of summation. Regardless of the details of the individual decomposition $\zeta$, the sum over its external edges is always the same: It is a symmetric sum over all ways of pairing the external edges. But the latter is exactly what it means to decompose a single vertex, see \cref{fig:vertex_decomposition}, so we have
	\begin{align*}
		\sum_{\substack{\text{permutations of}\\ \text{external edges of }g}} \sum_{\substack{\text{decompositions }\zeta \\ \text{of permuted }g}} N^{\# \text{circuits}} &= \sum_{\substack{\text{decompositions }\zeta\\ \text{of }g}} \sum_{\substack{\text{permutations of}\\\text{external edges of }\zeta}} N^{\# \text{circuits}} \\
		&= \Big( \sum_{\substack{\text{decompositions }\zeta\\ \text{of }g}} N^{\# \text{circuits}} \Big)  \cdot \sum_{\substack{\text{decompositions }\\\text{of a $(2p)$-valent vertex}}}
	\end{align*}
	The factor in parentheses is the circuit partition polynomial of $g$. 
\end{proof}

\begin{figure}[htb]
	\centering 
	\begin{tikzpicture}[scale=.8]

		\coordinate(x0) at (-4,-2);
		\node[smallvertex](v1) at ($(x0) +(-.8,0)$){};
		\node[smallvertex](v2) at ($(x0) +(0,0)$){};
		\draw[edge,bend angle=40,bend left] (v1) to (v2);
		\draw[edge,bend angle=40,bend right] (v1) to (v2);
		\draw[edge] (v1) -- +(135:.5);
		\draw[edge] (v1) -- +(225:.5);
		\draw[edge] (v2) -- +(45:.5);
		\draw[edge] (v2) -- +(-45:.5);
		
		\coordinate(x0) at (-2.5,-2);
		\node at ($(x0) + (-.6,0)$){$+ $};
		\node[smallvertex](v1) at ($(x0) +(0,.4)$){};
		\node[smallvertex](v2) at ($(x0) +(0,-.4)$){};
		\draw[edge,bend angle=40,bend left] (v1) to (v2);
		\draw[edge,bend angle=40,bend right] (v1) to (v2);
		\draw[edge] (v1) -- +(135:.5);
		\draw[edge] (v2) -- +(225:.5);
		\draw[edge] (v1) -- +(45:.5);
		\draw[edge] (v2) -- +(-45:.5);
		
		\coordinate(x0) at (-1,-2);
		\node at ($(x0) + (-.6,0)$){$+ $};
		\node[smallvertex](v1) at ($(x0) +(0,.4)$){};
		\node[smallvertex](v2) at ($(x0) +(0,-.4)$){};
		\draw[edge,bend angle=10,bend left] (v1) to (v2);
		\draw[edge,bend angle=50,bend right] (v1) to (v2);
		\draw[edge, bend angle=20, bend left] (v1) to +(-70:1.2);
		\draw[edge] (v2) -- +(225:.5);
		\draw[edge] (v1) -- +(135:.5);
		\draw[edge, bend angle=20, bend right] (v2) to +(70:1.2);
		
		\draw[->, line width=.3mm] (x0)++(1,0) -- ++(.8,0);
		
		\node (v2) at (3.5,-2){};
		\node at ($(v2) + (-1.3,0)$) {$8\times \left(\rule{0em}{7mm}\right.$};
		\draw[edge,blue] (v2)+(.3,.3) to[bend angle=60,bend right] ($(v2)+(.3,-.3) $);
		\draw[edge,blue] (v2)+(-.3,.3) to[bend angle=60,bend left] ($(v2)+(-.3,-.3) $);
	
		\node (v3) at (4.8,-2){};
		\node at($(v3) + (-.6,0)$){$+$};
		\draw[edge,blue] (v3)+(.3,.3) to[bend angle=60,bend left] ($(v3)+(-.3,.3) $);
		\draw[edge,blue] (v3)+(.3,-.3) to[bend angle=60,bend right] ($(v3)+(-.3,-.3) $);
		
		\node (v4) at (6.1,-2){};
		\node at ($(v4) + (-.6,0)$){$+$};
		\draw[edge,blue] (v4)+(.3,.3) -- ($(v4)+(-.3,-.3) $);
		\draw[edge,blue] (v4)+(-.3,.3) -- ($(v4)+(-.1,.1) $);
		\draw[edge,blue] (v4)+(.3,-.3) -- ($(v4)+(.1,-.1) $);
		\node at ($(v4) + (1,0)$) {$\left.\rule{0em}{7mm}\right) $};

		\node at (1.3,-4) {$+1 \times~ $};
		\draw[edge] (2.5,-4) circle(.4);
		
		\node (v2) at (4.5,-4){};
		\node at ($(v2) + (-.9 ,0)$) {$ \left(\rule{0em}{7mm}\right.$};
		\draw[edge,blue] (v2)+(.3,.3) to[bend angle=60,bend right] ($(v2)+(.3,-.3) $);
		\draw[edge,blue] (v2)+(-.3,.3) to[bend angle=60,bend left] ($(v2)+(-.3,-.3) $);
		
		\node (v3) at (5.8,-4){};
		\node at($(v3) + (-.6,0)$){$+$};
		\draw[edge,blue] (v3)+(.3,.3) to[bend angle=60,bend left] ($(v3)+(-.3,.3) $);
		\draw[edge,blue] (v3)+(.3,-.3) to[bend angle=60,bend right] ($(v3)+(-.3,-.3) $);
		
		\node (v4) at (7.1,-4){};
		\node at ($(v4) + (-.6,0)$){$+$};
		\draw[edge,blue] (v4)+(.3,.3) -- ($(v4)+(-.3,-.3) $);
		\draw[edge,blue] (v4)+(-.3,.3) -- ($(v4)+(-.1,.1) $);
		\draw[edge,blue] (v4)+(.3,-.3) -- ($(v4)+(.1,-.1) $);
		\node at ($(v4) + (1,0)$) {$\left.\rule{0em}{7mm}\right) $};

	\end{tikzpicture}
	\caption{
	By \cref{lem:permutations_external_structure}, the sum over all vertex decompositions, and additionally over all channels, is proportional to the decomposition of a single vertex (highlighted in blue). The proportionality factor is  $J(g,N)$, which is $(N+8)$ for the fish graph, compare \cref{fig:fish}.}
	\label{fig:fish_example2}
\end{figure}

\Cref{fig:fish_example2} illustrates \cref{lem:permutations_external_structure}. From the viewpoint of renormalization theory, \cref{lem:permutations_external_structure} asserts that quantum corrections to a $(2p)$-valent vertex, at every loop order, can be interpreted as a redefinition of that vertex (and \emph{not} as introduction of a new $(2p)$-valent vertex with different tensor structure). In that sense, \cref{lem:permutations_external_structure} is the analogue of what would be a Ward- or Slavnov-Taylor identity in a gauge theory: It guarantees that counterterms do not break the symmetry of the theory. For a (local) gauge symmetry, such identities involve kinematic dependences, but this is not the case for the present (global) $\O(N)$ symmetry. 

Combining \cref{lem:completion_general} with \cref{lem:permutations_external_structure}, we see that circuit partition polynomials factorize not only upon adjoining a completion vertex $v$, but whenever two graphs are joined in all possible ways according to \cref{def:uplus}.

\begin{lemma}\label{lem:factorization_J}
	Let $p$ be a positive integer. Denote by $J(G,N)$  the circuit partition polynomial (\cref{def:circuit_partition_polynomial}), and by $\uplus$ the joining operation (\cref{def:uplus}).
	\begin{enumerate}
		\item Let $g_1$ be a $(2p)$-valent graph with completion $G_1=g_1 \uplus \left \lbrace v \right \rbrace $ (\cref{def:completion_decompletion}). Then
		\begin{align*}
			J(G,N) &= \Big( \prod \nolimits_{j=0}^{p-1}(N+2j) \Big) \cdot J(g_1,N).
		\end{align*} 
		\item Let $g_2$ be another $(2p)$-valent graph (possibly isomorphic to $g_1$). Then
		\begin{align*}
			J(g_1 \uplus g_2,N)&= \Big( \prod \nolimits_{j=0}^{p-1}(N+2j) \Big)\cdot J(g_1,N)\cdot J(g_2,N).
		\end{align*} 
	\end{enumerate}	
\end{lemma}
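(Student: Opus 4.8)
The plan is to obtain part (1) for free and to derive part (2) from part (1) combined with \cref{lem:permutations_external_structure}, i.e.\ by "collapsing" one of the two joined graphs to a single vertex.

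For part (1): the assertion $J(g_1\uplus\{v\},N)=\big(\prod_{j=0}^{p-1}(N+2j)\big)\,J(g_1,N)$ is literally the statement of \cref{lem:completion_general}, already proved above by induction on $p$. So nothing new is needed; I would only restate it here, emphasizing that it is the base case for part (2) and that $g_1\uplus\{v\}$ is a genuine single graph (all $(2p)!$ matchings onto one vertex yield isomorphic graphs), so the lemma applies verbatim to it. The mechanism is that summing over the ways of pairing the $2p$ legs at the completion vertex $v$ contributes the factor $\prod_{j=0}^{p-1}(N+2j)$, and this contribution is independent of the decomposition chosen inside $g_1$.

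For part (2): by \cref{def:uplus} and linearity of $J$ in the formal sum, $J(g_1\uplus g_2,N)=\tfrac{1}{(2p)!}\sum_M J(G_M,N)$, where $M$ runs over the $(2p)!$ matchings of the external legs and $G_M$ is the resulting vacuum graph. Expanding each $J(G_M,N)$ as a sum over decompositions of all vertices of $G_M$, I would first fix a decomposition $\zeta_2$ of the vertices of $g_2$ and then exchange the order of summation, leaving an inner double sum over the matchings $M$ and over the decompositions of the vertices of $g_1$. By \cref{lem:permutations_external_structure}, for each fixed $\zeta_2$ this inner sum reproduces exactly the external structure of a single $(2p)$-valent vertex joined to $g_2$ (with $\zeta_2$ in place), with constant of proportionality $J(g_1,N)$. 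Re-summing over $\zeta_2$ then gives $J(g_1\uplus g_2,N)=J(g_1,N)\cdot J\!\left(g_2\uplus\{v\},N\right)$, i.e.\ $J(g_1,N)$ times the circuit partition polynomial of the completion $G_2$ of $g_2$; applying part (1) to $G_2$ yields $J(G_2,N)=\big(\prod_{j=0}^{p-1}(N+2j)\big)\,J(g_2,N)$, and combining the two identities produces the claim $J(g_1\uplus g_2,N)=\big(\prod_{j=0}^{p-1}(N+2j)\big)\,J(g_1,N)\,J(g_2,N)$.

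The step I expect to be the main obstacle is the bookkeeping in the application of \cref{lem:permutations_external_structure} inside the larger graph $g_1\uplus g_2$: one must check that the symmetrization over the $(2p)!$ matchings implicit in $\uplus$ (together with the normalizing $\tfrac{1}{(2p)!}$) is precisely the symmetrization over leg permutations appearing in \cref{lem:permutations_external_structure}, that each pairing of legs is overcounted by its stabilizer in the same way on both sides, and that "collapsing $g_1$ to a vertex" genuinely commutes with the still-pending decomposition of the vertices of $g_2$ — which is why it is essential to fix $\zeta_2$ first and re-sum only afterwards. Once this is arranged, all remaining circuit counting is local and the two cited lemmata supply everything else, so no new combinatorial input is required.
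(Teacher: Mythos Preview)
Your proposal is correct and follows exactly the route the paper takes: part (1) is literally \cref{lem:completion_general}, and part (2) is obtained by invoking \cref{lem:permutations_external_structure} to replace $g_1$ by a single $(2p)$-valent vertex (picking up the factor $J(g_1,N)$) and then applying part (1) to the completion of $g_2$. The paper in fact does not spell out a separate proof for this lemma at all; it only states in the sentence preceding the lemma that ``Combining \cref{lem:completion_general} with \cref{lem:permutations_external_structure}'' gives the factorization, so your write-up is already more detailed than what appears in the text.
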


The union $g_1 \uplus g_2$ of graphs (\cref{def:uplus}) can equivalently be interpreted as first completing $G_2=g_2 \uplus \left \lbrace v \right \rbrace $, and then inserting $g_1$ in place of the completion vertex $v$. This allows for one more level of generalization of \cref{lem:factorization_J}, namely, one can insert into a vertex of a non-completed graph, and obtain another non-completed graph (whereas the operation $\uplus$ always produces completions). The appropriate factorization formula for that case can be obtained by first completing $g_2\cup \left \lbrace v \right \rbrace $, then inserting into $v_2\in g_2$, where $v_2\neq v$, and finally removing $v$ from the resulting completion. If all $J(G,N)$ are transformed into $T(G,N)$, this finally proves  \cref{lem:factorization_T} from \cref{sec:symmetric_theory}. 
 \Cref{lem:circuit_polynomial_zeros} is then an immediate corollary, upon recalling \cref{lem:permutations_external_structure}.

\begin{example}
Consider the two possibilities of merging two fish graphs. A fish graph has three distinct channels, shown in \cref{fig:fish_example2}. Merging the two graphs produces the two non-isomorphic graphs shown in \cref{fig:joining_fish}. One of the three possibilities results in a five-loop \enquote{ring} graph $G_1$ of four fish, whose $\O(N)$-symmetry factor has been computed in \cref{ex:fish_chain},
\begin{align*}
	T(G_1,N)&= \frac{N(N+2)(N^2+6N+20)}{81}.
\end{align*}
The remaining two channels produce isomorphic graphs. Decomposing any of their vertices results in a three-loop multiedge with one tadpole and two copies of four-loop ring graphs. Together, the $\O(N)$-symmetry factor of these graphs is
\begin{align*}
	T(G_2,N) &= \frac 13 \frac{N+2}{3}N\frac{N+2}{3} + \frac 2 3 \frac{N(N+2)(N+8)}{27}= \frac{N(N+2)(5N+22)}{81}.
\end{align*}
By adding these two terms, we confirm \cref{lem:factorization_T},
Indeed, we find
\begin{align*}
	\frac 1 3 T(G_1,N) + \frac 2 3 T(G_2,N) = \frac{N(N+2)(N+8)^2}{243} 
 &= \frac{N(N+2)}{3} \frac{N+8}{9} \frac{N+8}{9} \nonumber \\
&=T(\melon,N)\, T(g,N)\, T(g,N).
\end{align*}
\end{example}

\begin{landscape}

	\section{Tables} \label{sec:tables}

	\begin{table}[htb]
		\centering
		\begin{tblr}{ vlines, 
				vline{2}={1}{-}{solid},
				vline{2}={2}{-}{solid},
				hline{1}={solid},
				hline{2,Z}={solid},
				rowsep=0pt,
				cells={font=\fontsize{8pt}{10pt}\selectfont, mode=math },
				columns={halign=r,colsep=2pt},
				row{1}  = { rowsep=2pt, halign=c, valign=m,font = \fontsize{10pt}{12pt}\selectfont } 
			}
			L & N^0 & N^1 & N^2 & N^3 & N^4 & N^5 & N^6 & N^7 & N^8 & N^9 & N^{10}\\
			1 & 2.66667(1) & 0.333335(2) & & & & \\
			3 & 11.75344(3) & 2.671236(3) & & & & & &   \\
			4 & 95.2440(3) & 28.16354(8) & 1.024129(3) & & & & & \\
			5 &  1.226284(7)^3 & 438.766(3) & 33.1116(2) & & & & & \\
			6 &  16.49034(6)^3  & 6.87280(3)^3 & 751.562(3) & 16.06524(5)   & & & &  \\
			7 & 240.5389(6)^3  &  113.6758(3)^3 & 16.03469(4)^3 & 672.775(2)  & 2.59286(2)  & & &  \\
			8 & 3.739433(7)^6 & 1.965620(4)^6  & 335.6488(7)^3 &  20.83929(5)^3 & 337.3748(8)  & & &  \\
			9 &  61.4643(2)^6  & 35.41073(7)^6 & 7.03789(2)^6 & 572.453(2)^3  &  16.79838(5)^3 & 88.9654(4)  &  &  \\
			10 & 1.061828(1)^9  & 662.7130(6)  & 149.2471(2)^6  & 14.88098(2)^6  & 634.029(1)^3  & 8.68937(2)^3 & 9.51925(9) &  \\
			11 & 19.197780(8)^9 & 12.857492(6)^9 & 3.217478(2)^9 & 377.0775(2)^6 & 20.96349(1)^6 & 481.1632(3)^3 & 2.822003(4)^3 & \\
			12 & 362.13909(8)^9 & 258.20912(6)^9 & 70.73210(2)^9 & 9.465852(2)^9 & 645.2312(2)^6 & 20.914424(6)^6 & 254.8850(2)^3 & 523.581(2) \\
			13 & 7.111467(2)^{12} & 5.361932(2)^{12} & 1.5889419(4)^{12} & 237.73172(6)^9 & 19.076888(5)^9 & 797.4419(3)^6 & 15.244803(7)^6 & 93.2975(2)^3 & 42.1726(2) \\
			14 & 145.3(3)^{12} & 115.2(3)^{12} & 36.57(9)^{12} & 6.02(2)^{12} & 553(3)^9 & 28.2(2)^6 & 743(9)^6 & 8.3(3)^6  & 22.5672(1)^3 \\
			15 & 3.068(8)^{15} & 2.543(8)^{15} & 858(3)^{12} & 153.3(7)^{12} & 15.8(1)^{12} & 943(9)^9 & 31.5(5)^9 &  0.53(2)^9 & 3.5(5)^6 &  3.246120(8)^3 \\
			16 & 67.0(6)^{15} & 57.7(6)^{15} &  20.5(3)^{15} &  3.92(6)^{15} & 441(8)^{12} & 29.7(7)^{12} & 1.16(4)^{12} & 24(2)^9 & 0.23(4)^9 & >9(1)^3 & 209.836(3) 
		\end{tblr}
		\caption{Coefficients of $N$ of the primitive beta function in four dimensions. The notation $12.34(5)^6$ is a shorthand for $(12.34 \pm 0.05)\cdot 10^6$, where the uncertainty is one standard deviation. The leading coefficients for $L\geq 11$ result from new numerical integration, all other values from the data set of \cite{balduf_predicting_2024}. For $L=16$, the random sample of  \cite{balduf_predicting_2024} does not include any graphs that contribute at order $N^9$ or $N^10$. The value $9(1)^3$ at $N^9$ is the contribution at $N^9$ from those graphs that contribute also at $N^{10}$, it is a lower bound for the coefficient at $N^9$ since it leaves out those graphs that contribute at $N^9$, but not at $N^{10}$. For all $L\leq 15$, the bounds obtained by the leading-order graphs are weaker than the numerical values obtained from random sampling. \newline We confirm the observation of \cref{sec:enumeration_low_loop_order}, namely that, despite the presence of terms of high order in $N$, the coefficients of $N^0$ or $N^1$ are numerically larger by far.}
		\label{tab:coefficients}
	\end{table}

\end{landscape}

\begin{landscape}

	\begin{table}[htb]
		\centering
		\begin{tblr}{ vlines, 
				vline{2}={1}{-}{solid},
				vline{2}={2}{-}{solid},
				hline{1}={solid},
				hline{2,Z}={solid},
				rowsep=0pt,
				cells={font=\fontsize{8pt}{10pt}\selectfont,valign=m, mode=math},
				cell{14-19}{2-11}={ font=\tiny\selectfont},
				columns={halign=r,colsep=2pt},
				cell{14-16}{4-11}={mode=text},
				cell{17}{7}={mode=text},
				row{1}  = { rowsep=4pt, halign=c, valign=m,font = \fontsize{10pt}{12pt}\selectfont } 
			}
			L & N=-6 & N=-4 & N=-2 & N=-1 & N=0 & N=1 & N=2 & N=3& N=4 & N=5 \\
			1 & 0.66666(1) & 1.33334(1) & 2.00001(2) & 2.33335(2) & 2.66668(2) & 3.00002(2) & 3.33335(3) & 3.66668(3) &4.00003(4) & 4.33336(2) \\
			3 & -4.273978(9)  & 1.068494(3) &6.41097(2) & 9.08220(2) & 11.75344(3) & 14.42467(4) & 17.09591(4) & 19.76715(5) & 22.43838(5) & 25.10962(6) \\
			4 & -36.8686(1) & -1.024129(3) & 43.0134(2) & 68.1046(2) & 95.2440(3) & 124.4316(4)& 155.6676(5) & 188.9517(5) & 224.2842(6) & 261.6649(7)\\
			5 & -214.291(2) & 1.00792(9) & 481.199(3) & 820.630(5) & 1.126284(7)^3 & 1.69816(1)^3&2.23626(2)^3 & 2.84059(2)^3 & 3.51113(2)^3 & 4247.90(3)\\
			6 & -1.160331(4)^3 &  -4.0506(2) & 5.62246(2)^3 & 10.35304(4)^3 & 16.49035(6)^3 & 24.13078(8)^3 & 33.3707(2)^3 & 44.3066(2)^3 & 57.0347(2)^3 &71.6516(3)^3 \\
			7 & -6.22640(2)^3 & -3.235(2) & 71.9853(2)^3 & 142.2276(4)^3 & 240.5389(6)^3 & 370.9248(9)^3 & 537.453(2)^3 & 744.253(2)^3 & 995.519(3)^3 & 1.295503(4)^6 \\
			8 & -34.9763(2)^3 & -11.99(1) & 989.470(2)^3 & 2.088956(3)^6 & 3.739425(6)^6 & 6.06186(1)^6 & 9.18536(2)^6 & 13.2471(3)^6 & 18.39233(3)^6 & 24.77452(5)^6\\
			9 & -206.792(3)^3 & -8.6(2) & 14.48076(3)^6 &32.53576(6)^6 & 61.4644(2)^6 & 104.5024(2)^6 & 165.2889(4)^6 & 247.8765(5)^6 & 356.7427(7)^6 & 496.799(1)^6\\
			10 & -1.26654(2)^6 & -6.8(5) & 224.2097(2)^6 & 534.1068(5)^6 & 1.061829(1)^9 & 1.889313(2)^9 & 3.113716(3)^9 & 4.848456(5)^9 & 7.22427(7)^9 & 10.39026(1)^9\\
			11 & -7.89133(7)^6 & 31(2) & 3.656288(2)^9 & 9.201175(4)^9 & 19.197780(8)^9 & 35.67128(2)^9 & 61.1029(3)^9 & 98.72568(4)^9 & 152.11128(7)^9 & 225.7068(1)^9\\
			12 & -49.5188(5)^6 & 84(9) & 62.59314(2)^9 & 165.82079(4)^9 & 362.13909(8)^9 & 701.2126(2)^9 & 1.2482219(3)^{12} & 2.0864662(5)^{12} & 3.3201518(7)^{12} & 5.077369(2)^{12} \\
			13 & -312.076(7)^6 & 110 (111) & 1.1221927(3)^{12} & 3.1190391(8)^{12}  & 7.111465(2)^{12} & 14.319958(4)^{12} & 26.424680(7)^{12} & 45.66681(2)^{12} & 74.96131(2)^{12} & 118.02171(3)^{12} \\
			14& 0(2)^9 & -25(16)^6 & {$21.04(3)^{12}$ \\ \fontsize{8pt}{10pt}\selectfont $\bm{21.03(1)^{12}}$} & {{\tiny $61.2(1)^{12}$} \\ \fontsize{8pt}{10pt}\selectfont$\bm{61.12(1)^{12}}$} & {$145.3(3)^{12}$ \\ \fontsize{8pt}{10pt}\selectfont $\bm{145.11(7)^{12}}$} & {{\tiny$303.6(7)^{12}$}\\ \fontsize{8pt}{10pt}\selectfont$\bm{303.30(4)^{12}}$} &  {{\tiny $580(2)^{12}$}  \\ \fontsize{8pt}{10pt}\selectfont$\bm{579.3(1)^{12}}$ } & {{ $1.0337(3)^{15}$} \\ \fontsize{8pt}{10pt}\selectfont $\bm{1.0336(3)^{15}}$} & { $1.750(5)^{15} $ \\ \fontsize{8pt}{10pt}\selectfont $\bm{1.746(1)^{15}}$} & { $2.834(9)^{15} $ \\ \fontsize{8pt}{10pt}\selectfont $\bm{2.831(2)^{12}}$} \\
			15 & -38(27)^9 & 100(240)^6 &  { $410.7(8)^{12}$ \\ \fontsize{8pt}{10pt}\selectfont $\bm{411.47(6)^{12}}$} & {$1.244(3)^{15}$ \\ \fontsize{8pt}{10pt}\selectfont $\bm{1.2470(2)^{15}}$} & {$3.08(2)^{15}$ \\ \fontsize{8pt}{10pt}\selectfont $\bm{3.081(5)^{15}}$} & {$6.64(2)^{15}$ \\ \fontsize{8pt}{10pt}\selectfont $\bm{6.655(1)^{15} }$} & {$13.10(5)^{15}$ \\ \fontsize{8pt}{10pt}\selectfont $\bm{13.131(9)^{15}}$} & {$24.09(9)^{15}$ \\ \fontsize{8pt}{10pt}\selectfont $\bm{24.14(3)^{15}}$} & {$ 41.9(2)^{15}$ \\ \fontsize{8pt}{10pt}\selectfont $\bm{ 42.03(1)^{15}}$}& {$ 69.7(4)^{15}$ \\ \fontsize{8pt}{10pt}\selectfont $\bm{ 69.91(4)^{15}}$} \\
			16 & 0(2)^{12} & 0(8)^9 & {$8.34(5)^{15}$ \\ \fontsize{8pt}{10pt}\selectfont $\bm{ 8.39(2)^{15}}$}& {$26.2(2)^{15}$ \\ \fontsize{8pt}{10pt}\selectfont $\bm{ 26.44(3)^{15}}$} & {$67.0(6)^{15}$ \\ \fontsize{8pt}{10pt}\selectfont $\bm{ 67.5(2)^{15}}$} & {{ $150(2)^{15} $}\\$\fontsize{8pt}{10pt}\selectfont\bm{151.14(6)^{15} }$}& {$304(4)^{15}$ \\ \fontsize{8pt}{10pt}\selectfont $\bm{ 306.9(8)^{15}}$} & {$574(7)^{15}$\\ \fontsize{8pt}{10pt}\selectfont $\bm{ 583(1)^{15}}$}& {$1.03(2)^{18} $ \\\fontsize{8pt}{10pt}\selectfont $\bm{1.043(4)^{18}}$} & {$1.75(3)^{18} $ \\\fontsize{8pt}{10pt}\selectfont $\bm{1.781(3)^{18}}$} \\
			17 & 0(23)^{12} & 0(120)^9 & 178(2)^{15} & 580(6)^{15} & 1.53(2)^{18} & {$3.53(5)^{18} $ \\ \fontsize{8pt}{10pt}\selectfont $\bm{ 3.553(9)^{18}}$} & 7.4(2)^{18} & 14.4(3)^{18} & 26.3(5)^{18} & 46.0(2)^{18}\\
			18 & -1.1(9)^{15} & 0(3)^{12} & 3.97(8)^{18} & 13.4(4)^{18} & 37(2)^{18} & 87(3)^{18} & 187(8)^{18} & 370(20)^{18} & 700(40)^{18} & 1.25(8)^{21}
		\end{tblr}
		\caption{Evaluations of the $L$-loop coefficients $\beta^\text{prim}_L(N)$ of the primitive beta function at specific values of $N$. For $N \geq -2$, the coefficient $\beta^\text{prim}_L(N)$ grows monotonically with $N$. For $L>1$, $\beta^\text{prim}_L(N)$ has a zero close to $N=-4$, this is reflected by a comparably small function value at $N=-4$. The notation $0(a)^b$ indicates that the value is indistinguishable from zero with an uncertainty of $a\cdot 10^b$. \\
			For $L\geq 14$, the values arise from uniform non-complete samples of \cite{balduf_statistics_2023}, hence their uncertainty is notably larger than for $L \leq 13$. These values are written in small print. In particular, for $L\in \left \lbrace 17,18 \right \rbrace $, the samples are so small that the uncertainty estimate might be too optimistic. Bold font indicates that a value has been obtained with the importance sampling algorithm of \cite{balduf_predicting_2024}, these are still non-complete samples, but more accurate than the uniform sample (given above the bold value). }
		\label{tab:evaluations}
	\end{table}

	\begin{table}[p]
		\centering
		\begin{tblr}{ vlines, 
				vline{2}={1}{-}{solid},
				vline{2}={2}{-}{solid},
				vline{Y}={1}{-}{solid},
				vline{Y}={2}{-}{solid},
				hline{1}={solid},
				hline{2,Z}={solid},
				rowsep=0pt,
				cells={font=\fontsize{9pt}{12pt}\selectfont,mode=math },
				columns={halign=r},
				row{1}  = { rowsep=2pt, halign=c, valign=m,font = \fontsize{10pt}{12pt}\selectfont } 
			}
			L& N^0 &  N^1 & N^2 & N^3 & N^4 & N^5 & N^6 & N^7 & \left \langle k\right \rangle _L \\
			1 & \frac 12 & \frac 1{16} & & & & &  & &0.111 \\
			3 & \frac{11}{4} & \frac{5}{8} & & & & & & & 0.185 \\
			4 & \frac{93}{4} & \frac{55}{8} & \frac 14 & & & & & & 0.243 \\
			5 &  340 & \frac{973}{8} & \frac{147}{16} & & & & & & 0.297 \\
			6 & 5500 & \frac{18259}{8} & \frac{3949}{16} & 5 & & & & & 0.347 \\
			7 & \frac{405351}{4} & \frac{189811}{4} & 656 & \frac{8395}{32} & \frac 78 & & & & 0.394  \\
			8 & \frac{8297865}{4} & \frac{4297721}{4} & 178559 & \frac{336221}{32} & \frac{1211}{8} & & & & 0.438  \\
			9 & 46581140 & \frac{105141971}{4}& \frac{40380769}{8} & \frac{12386715}{32} & \frac{654445}{64} & 44 & & & 0.480 \\
			10 & 1136329932 & \frac{2761781725}{4} & \frac{1193546199}{8} & \frac{446298325}{32} & \frac{34315811}{64} & \frac{49435}{8} & 5 & &0.519 \\
			11 & \frac{119592596531}{4} & \frac{155018200259}{8} & \frac{9243683059}{2} & \frac{16159964821}{32} & \frac{405018107}{16} & \frac{63513385}{128} & \frac{72137 }{32} & & 0.556 \\
			12 & \frac{3374570327461}{4} & \frac{4630005232849}{8} & \frac{600700832091 }{4} & \frac{596312684627}{32} & \frac{18283668245}{16} & \frac{4077699355}{128} & \frac{19881095}{64} & \frac{3619}{8} & 0.590
		\end{tblr}
		\caption{For every loop order $L$, this table shows the coefficients of $N$ dependence of $p_L(N)$ according to \cref{primitive_generating_function}, multiplied by $\frac{3^{L+1}}{4!}$. This is the sum of the circuit partition polynomials $J(g,N)$ of primitive decompletions $g$, weighted by their automorphism symmetry factor (where external vertices are fixed, or equivalently a factor $4!$ is included). Although the degree of these polynomials grows with $L$ (see \cref{lem:TGN_bound}), the  average order $\left \langle k \right \rangle _L$ (\cref{def:average_order})   does not even reach unity for $L\leq 12$.}
		\label{tab:0dim_primitive_coefficients}
	\end{table}
	
		\begin{table}[p]
		\centering
		\begin{tblr}{ vlines, 
				vline{2}={1}{-}{solid},
				vline{2}={2}{-}{solid},
				vline{Y}={1}{-}{solid},
				vline{Y}={2}{-}{solid},
				hline{1}={solid},
				hline{2,Z}={solid},
				rowsep=0pt,
				cells={font=\fontsize{8pt}{10pt}\selectfont, mode=math },
				columns={halign=r},
				row{1}  = { rowsep=1pt, halign=c, valign=m,font = \fontsize{10pt}{12pt}\selectfont },
				stretch=.7
			}
			L & N^0 &  N^1 & N^2 & N^3 & N^4 & N^5 & N^6& N^7 & \left \langle k\right \rangle _L \\
			1 & 8 & 1 & & & & & & & 0.111 \\
			3 & 66 & 15 & & & & & & & 0.185 \\
			4 & 186 & 55& 2 & & & & & & 0.243\\
			5 & 1,054 &375 & 29 & & & & & & 0.297 \\
			6 & 7,564 &3,048 &317 &  6 & & & & &0.338\\
			7 & 59,986 & 27,839 &3,858 & 169 & 2 & & & &0.393 \\
			8 & 601,600 & 308,644 & 51,092 & 3081 & 50 & & & & 0.436 \\
			9 & 7,973,840 & 4,491,030 & 868,191 & 68,983 & 2,065 & 14 & & &0.481 \\
			10 & 136,783,626 & 83,167,949 & 18,098,205 & 1,734,427 & 71,781 & 1,046 & 4 & & 0.521 \\
			11 & 2,831,713,102 & 1,840,518,187 & 442,488,075 & 49,304,820 & 2,588,273 & 56,598 & 347 & & 0.558\\
			12 & 68,493,277,652 & 47,153,165,436 & 12,325,920,197 & 1,552,897,804 & 98,221,630 & 2,916,283 & 32,440 & 73 & 0.593
		\end{tblr}
		\caption{Analogous to \cref{tab:0dim_primitive_coefficients}, but disregarding automorphism symmetry factors.  Note the substantial numerical difference with respect to \cref{tab:0dim_primitive_coefficients} at small $N$, indicating a large influence of automorphisms. }
		\label{tab:leading_decompositions_count}
	\end{table}

	\begin{table}[htbp]
		\centering
		\begin{tblr}{ vlines, 
				vline{2}={1}{-}{solid},
				vline{2}={2}{-}{solid}, 
				hline{1}={solid},
				hline{2,Z}={solid},
				rowsep=4pt,
				cells={font=\fontsize{8pt}{9pt}\selectfont , mode=math},
				columns={halign=c},
				row{1}  = { rowsep=2pt, halign=c, valign=m,font = \fontsize{10pt}{12pt}\selectfont } 
			}
			\text{Quantity}&  \text{Prefactor} & u^0 & u^1 & u^2 & u^3 \\
			\mathcal A^{\frac 32}_\hbar \partial_j^0 W\big|_{j=0} & \frac{\hbar^{-\frac{N-3}{2}} 3 ^{\frac{N-1}{2}}}{\sqrt{2 \pi}\Gamma \!\left( \frac N2 \right) } & 1 & - \frac{ N^2-2N+4}{12} & \frac{96-104N+24N^2-8N^3+N^4}{288} & \frac{N^6 + 18 N^5 - 124 N^4 + 600 N^3 - 3568 N^2 + 6432 N - 5760}{10368} \\
			\mathcal A^{\frac 32}_\hbar \partial_j^2 W\big|_{j=0} & \frac{ \hbar^{-\frac{N+1}{2}} 3^{\frac{N-1}{2}}  }{2\sqrt{2 \pi}\Gamma \!\left( \frac {N+2}2 \right) } & 6 & - \frac{N(N+2)}{3} &  \frac{N(N+2) (N^2 -2N-12)}{48} & \frac{N(N+2)(N^4 - 8 N^3 - 16 N^2 + 200 N + 480) }{1728} \\ 
			\mathcal A^{\frac 32}_\hbar \partial_j^4 W\big|_{j=0} & \frac{ \hbar^{-\frac{N+5}{2}} 3^{\frac{N-1}{2}}  }{\frac 43 \sqrt{2 \pi}\Gamma \!\left( \frac {N+4}2 \right) } & 36 & \scalemath{.8}{-3(N+2)(N+4)} &  \frac{(N+2)(N^3+6N^2 -12N-32 )}{8} & \frac{(N+2)(N^5+4N^4-64N^3+8N^2 +960N+1152) }{288}\\
			A^{\frac 3 2}_\hbar \partial^2_\varphi G\big|_{ 0}  & \frac{\hbar^{-\frac{N+1}{2}} 3^{\frac{N-1}{2}} }{2 \sqrt{2\pi} ~\Gamma  ( \frac{N+2}{2} ) } &6 & \frac{(N+2)(N+4)} {2} &  \frac{ (N+2)(N+4)(N^2 +2N-12 )}{48} &   \frac{(N+2)(N+4)(N^4 -40N^2 +72N+ 768)}{1728}   \\
			A^{\frac 3 2}_\hbar \partial^4_\varphi G\big|_{ 0}  & \frac{ \hbar ^{-\frac{N+5}{2}} 3^{\frac{N-1}{2}} }{ \frac 4 3 \sqrt{2\pi} ~\Gamma  ( \frac{N+4}{2} ) }  & 36 & \scalemath{.8}{-3(N+2)(N+12)} & \frac{N(N+2)(100+22N+N^2)}{8} & \frac{-(N+2)(4992-192N+8N^2+176N^3+28N^4+N^5)}{288} \\
			\mathcal A^{\frac 32}_\hbar \hbar_\ren  & \frac{\hbar^{-\frac{N+3}{2}} 3^{\frac{N-1}{2}} }{ \frac 4 3 \sqrt{2\pi} ~\Gamma  ( \frac{N+4}{2} ) } &
			36 &  \scalemath{.8}{-3(N+2)(N+8) } & -\frac{(N+2)(N+4)(12-10N-N^2) }{8} & \frac{ (N+2)(N+4)(1056-120N-40N^2+12N^3+N^4)}{288} \\
			\mathcal A^{\frac 32}_{\hbar_\ren} \hbar  & \frac{ \hbar_\ren^{-\frac{N+3}{2} } 3^{\frac{N-1}{2} } }  {\frac 4 3 \sqrt{2\pi} \Gamma \! \left( \frac{N+4}{2} \right) } e^{-\frac{12+3N}{4}} & 
			-36 &  \frac{-9N^2+60N+336}{2} & -\frac{(N+4)(9N^3-172N^2+552N+2368)}{32} & -\frac{(N+4)(27N^5-792N^4+6088N^3+3872N^2-144256 N-363008)}{2304} \\
			\mathcal A^{\frac 32}_{\hbar_\ren} z^{(4)} & \frac{\hbar_\ren^{-\frac{N+5}{2} } 3^{\frac{N-1}{2} }}  {\frac 4 3 \sqrt{2\pi} \Gamma \! \left( \frac{N+4}{2} \right) } e^{-\frac{12+3N}{4}} &
			-36 & -\frac{9N^2-12N-240}{2} & -\frac{9N^4-40N^3-488N^2-416N+1664}{32}  & -\frac{27N^6-252N^5-1880N^4+960N^3+21376N^2-59904N-419840}{2304} 
		\end{tblr}
		\caption{Asymptotic growth coefficients of 1PI Green function, renormalized coupling and counterterms computed in \cref{sec:0dim_asymptotics}, see \cref{sec:4d_coefficients}. Here $u=\hbar$ or $u=\hbar_\ren$, depending on the argument as indicated by $\mathcal A^{\frac 32}_u$. Setting $N=1$, the series reproduce \cite{borinsky_renormalized_2017}. Notice that the prefactor and leading terms of $\partial^4_j W$, $\partial^4_\varphi G$, and $z^{(4)}$ coincide up to $e^{-\frac{15+3N}{4}}$. This means that to asymptotic leading order, the number of connected-, 1PI-, and primitive vertex type graphs is the same up to $e^{-\frac{15+3N}{4}}$. The distinction between these classes results in a subleading change in their number.}
		\label{tab:primitive_asymptotic_coefficients}
	\end{table}

\end{landscape}

\begin{multicols}{2}
	\printbibliography
\end{multicols}

\end{document}